\documentclass[11pt]{article}
\usepackage{graphicx} 
\usepackage[T1]{fontenc}
\usepackage{mathpazo,bbm}
\DeclareSymbolFont{calletters}{OMS}{cmsy}{m}{n}
\DeclareSymbolFontAlphabet{\mathcal}{calletters}

\usepackage{todonotes}

\usepackage{pdfpages}

\usepackage{amssymb,amsmath,amsthm, bbm}
\usepackage{thm-restate}
\usepackage[dvipsnames]{xcolor}
\usepackage{graphicx}
\usepackage{setspace}
\usepackage{xspace}
\usepackage{multirow}
\usepackage{array}
\usepackage{complexity}
\usepackage[linesnumbered,ruled,vlined]{algorithm2e}
\SetKwInput{KwInput}{Input}                
\SetKwInput{KwOutput}{Output}
\SetKwFor{RepTimes}{repeat}{times}{end}
\SetKwComment{Comment}{/* }{ */}

\usepackage{dirtytalk}
\usepackage{tikz}
\usepackage{hyperref}
\usepackage[capitalize,nameinlink]{cleveref}

\usepackage[shortlabels]{enumitem}

\usepackage[margin=1in]{geometry}
\usepackage{braket}
\usepackage{ wasysym }
\usepackage{mathtools}

\hypersetup{colorlinks=true,urlcolor={MidnightBlue},linkcolor={MidnightBlue},
citecolor={MidnightBlue}}

\theoremstyle{plain}
\newtheorem{theorem}{Theorem}[section]
\newtheorem{corollary}[theorem]{Corollary}
\newtheorem{proposition}[theorem]{Proposition}
\newtheorem{lemma}[theorem]{Lemma}

\newtheorem{definition}[theorem]{Definition}

\theoremstyle{remark}
\newtheorem{remark}[theorem]{Remark}
\theoremstyle{plain}
\newclass{\DNF}{DNF}
\newclass{\DNFs}{DNFs}
\newclass{\ACzero}{AC^0}
\newclass{\TCzero}{TC^0}
\newclass{\Logspace}{L}

\newcommand{\F}{\mathbb{F}} 
\renewcommand{\C}{\mathbb{C}} 
\newcommand{\Z}{\mathbb{Z}} 

\renewcommand{\Pr}{\mathop{\bf Pr\/}}

\newcommand{\abs}[1]{\left|#1\right|}

\newcommand{\bigparen}[1]{\left ( #1 \right )}
\newcommand{\ceil}[1]{\lceil #1 \rceil}

\newcommand{\floor}[1]{\lfloor #1 \rfloor}

\newcommand{\norm}[1]{\| #1 \| }
\newcommand{\bignorm}[1]{\Bigl \| #1 \Bigr \| }

\newcommand{\curly}[1]{\{#1\}}

\newfunc{\MAJ}{MAJ}
\newfunc{\pmaj}{promise\text{-}MAJ}
\newfunc{\MUX}{MUX}
\newfunc{\NAE}{NAE}
\newfunc{\OR}{OR} 
\newfunc{\AND}{AND}
\newfunc{\XOR}{XOR}
\newfunc{\Tribes}{Tribes}
\newfunc{\LocalCorrect}{LocalCorrect}

\newfunc{\sgn}{sgn} 

\newfunc{\spar}{sparsity}
\newfunc{\rank}{rank}
\newfunc{\spn}{span}
\newfunc{\basis}{B}
\newfunc{\quasipoly}{quasipoly}
\newfunc{\Bias}{Bias}

\newfunc{\DT}{DTdepth} 
\newfunc{\DTs}{DTsize} 

\newcommand{\tildeO}{{\widetilde O}}

\renewcommand{\tilde}{\widetilde}

\newcommand{\bits}{\{0,1\}}

\newcommand{\calA}{\mathcal{A}}

\newcommand{\calC}{\mathcal{C}}
\newcommand{\calD}{\mathcal{D}}

\newcommand{\calF}{\mathcal{F}}

\newcommand{\calI}{\mathcal{I}}

\newcommand{\calO}{\mathcal{O}}

\newcommand{\calW}{\mathcal{W}}

\newfunc{\Parity}{PARITY}

\newfunc{\Dict}{Dict}
\newfunc{\Corr}{Corr}
\newfunc{\avg}{avg}
\newfunc{\smooth}{smooth}
\newfunc{\dist}{\calD}

\newclass{\ETH}{ETH}

\renewcommand{\epsilon}{\varepsilon}

\newcommand{\indicator}{\mathbbm{1}}

\newcommand{\tr}{\mathrm{Tr}} 

\newfunc{\DISJ}{DISJ}
\newfunc{\search}{SEARCH}

\newfunc{\Th}{Th}
\newfunc{\coll}{Coll}

\newcommand{\pisucc}{\Pi_{\mathrm{succ}}}
\newcommand{\maxmix}{\pi_{\textrm{mix}}}

\newcommand{\easierdp}[3]{#1^{(#2, #3)}}

\newcommand{\range}{\mathsf{Y}}
\newcommand{\domain}{\mathsf X}

\newcommand{\oracle}{\calO}
\newcommand{\alg}{\calA}

\newcommand{\madv}{\textsc{Madv}}

\newcommand{\relation}{\mathsf R}
\newcommand{\inputspace}{\mathsf D}

\newcommand{\outputspace}{Y}
\newcommand{\inputreg}{\calI}
\newcommand{\workreg}{\calW}

\newcommand{\domainsize}{\abs{\inputspace}}
\newcommand{\identity}{\,\mathbb{I}}

\newcommand{\op}[2]{\ket{#1}\!\bra{#2}}
\newcommand{\ip}[2]{\langle #1 | #2 \rangle}
\newcommand{\pibad}{\Pi_\text{bad}}

\newcommand{\addAdvMat}{\Gamma^{\pm}}
\newcommand{\multAdvMat}{\Gamma}

\newcommand{\progress}{W}

\newcommand{\addProgress}{\delta^\pm}
\newcommand{\addStop}{\tau^\pm}

\newcommand{\fixedtargets}{\mathcal{S}}

\newcommand{\Shaltiel}{\textrm{Sh}}

\newcommand{\crefdefpart}[2]{%
  \hyperref[#2]{\namecref{#1}~\labelcref*{#1},~\ref*{#2}}%
}

\newcommand{\blank}[1]{}

\newcommand{\negAdvBound}{\textsc{Adv}^{\pm}}

\newcommand{\easierlistdp}[3]{#1^{L_{(#2, #3)}}}

\newcommand{\listofout}{\mathcal{L}}

\newcommand{\listbadspace}{V'_{\text{bad}}}

\newcommand{\lambdapmtemp}{\lambda'}
\newcommand{\vsub}[1]{v_{#1}}

\title{Strong Selective and List-Decoding Direct Product Theorems for Quantum Query Complexity 
}
\author{
    Paul Beame\thanks{Research supported by NSF grant  CCF-2422205.}\\Computer Science \& Engineering\\University of Washington \
    \and 
    Niels Kornerup\thanks{Research supported by the LDRD Program at Sandia National Laboratories. Sandia is managed and operated by
NTESS under DOE NNSA contract DE-NA0003525.}\\ Sandia National Laboratories
    \and
    Michael Whitmeyer$^*$
    \\Computer Science \& Engineering\\University of Washington
}
\date{September 2026}

\allowdisplaybreaks

\begin{document}

\maketitle
\thispagestyle{empty}
\begin{abstract}
    Strong direct-product theorems for the quantum query complexity of specific functions have been known for nearly two decades.
    These were extended, using various versions of the quantum multiplicative adversary method, to broader classes of functions and relations, culminating in a strong direct-product theorem for the quantum query complexity of computing arbitrary functions and, more generally, for the quantum query complexity of state generation.
    The proofs of these culminating results use a version
    of the multiplicative adversary method that does not seem applicable to the computation of general relations. 
    Standard strong direct-product theorems apply when the quantum algorithm is required to produce answers to every question it is given and its correctness is measured with respect to all of them, though an equivalent generalization, threshold direct-product
    theorems, only requires that algorithms be correct on a substantially larger fraction of these answers than could be obtained by random guessing.

    We focus on two substantially stronger generalizations
    of strong direct-product theorems. 
    The first strengthening, strong \emph{selective} direct-product theorems, apply to algorithms that can adaptively choose, based on what they learn from their queries, which questions out of a
    potentially much larger list they want to answer.
    Even if the original problem involves a function, this selective direct-product problem is fundamentally relational and cannot be reduced to the computation of a direct-product function.
    This generalization is a natural one that arises
    in the context of proving time-space tradeoff lower bounds for quantum query algorithms.
    \begin{itemize}
        \item We prove a strong selective direct-product theorem for the quantum query complexity of computing arbitrary functions.
        \item We obtain this by showing that a relational formulation of the
        multiplicative adversary method satisfies a strong selective direct property and that an equivalent formulation is 
        strong enough to be able to capture lower bounds
        proved using the negative-weights adversary method for function computation.  This was not previously known even without the requirement of
        selectivity.
        \item We also provide direct proofs of such bounds for specific problems such as search, or, and parity that feature elegant arguments for the single copy lower bounds.
    \end{itemize}
    The second further generalization of strong selective
    direct-product theorems (which also generalizes threshold direct product problems) is 
    \emph{list-decoding} direct product problems recently introduced by Ben-David and Blais (independent of our consideration of selective direct product problems) in the context
    of classical random query complexity.  
    This allows an algorithm to produce a large list of possible output vectors for each input and counts the algorithm as correct if the true answer is in the
    list.  
    Ben-David and Blais proved that classical
    randomized complexity for arbitrary Boolean functions satisfies
    a strong list-decoding direct-product property.
    \begin{itemize}
        \item We prove a full quantum strong list-decoding direct-product analogue of this theorem for all partial Boolean-valued functions, and show that such a theorem cannot hold for relations.
        \item We obtain this by proving that versions of the multiplicative adversary argument with no bias below the eigenvalue threshold give us strong list-decoding direct-product theorems and then
        giving a reduction from negative-weights adversaries for Boolean-valued functions to multiplicative adversaries with exactly that property.
        \item We also give direct proofs of strong list-decoding direct product theorems for the quantum query complexity
        of search, or, and parity.
    \end{itemize}
\end{abstract}
\setcounter{page}{1}

\section{Introduction}

Direct sum and direct product theorems have had major importance in the
study of complexity in many models of computation.
The fundamental question they focus on is the way that the computational complexity of a task of computing a function $f$ (or relation $\relation$) changes as one requires that outputs of $f$ (resp. $\relation$) be produced for many independent inputs; we use $f^{\otimes k}$ (resp. $\relation^{\otimes k}$) to
denote the problem for $k$ independent inputs.
\emph{Direct sum} theorems focus on how the resources required to solve
$f^{\otimes k}$ grow as a
function of $k$ -- ideally showing that this requires an $\Omega(k)$ factor more resources than solving a single instance.\footnote{A direct sum theorem for communication complexity of relations (circuit depth) would yield a proof that $\P \not\subseteq \NC^1$~\cite{DBLP:journals/cc/KarchmerRW95}.
Not all models of computation have ideal direct sum theorems for all functions: The direct sum of $n$ copies of matrix-vector product 
 $f_A(x)=Ax$ over $\F^n_2$ is just matrix multiplication $AX$ and therefore has circuits of size at most $n^\omega \leq n^{2.38}$ but by a counting argument are there are matrices $A$ such that computing $Ax$ requires circuit size $n^2/\log n$~\cite{DBLP:journals/siamcomp/BarakBCR13}.}
\emph{Direct product} theorems focus on how the answer quality for solving
$f^{\otimes k}$ (resp. $\relation^{\otimes k}$) changes -- ideally showing that probability
of being correct declines by an $\Omega(k)$ power from the probability
of solving a single instance.
\emph{Strong direct product} theorems give the best of both, yielding ideal direct sum and direct product properties simultaneously.\footnote{Even direct product theorems that are not strong can have many important applications; for example, Raz's parallel repetition theorem~\cite{DBLP:journals/siamcomp/Raz98} is a direct product theorem for two-player games that has led to many improved hardness of approximation results \cite{raz2p1rsurvey}.
In the case that the function $f$ is Boolean, \emph{XOR lemmas}~(e.g.,~\cite{DBLP:conf/focs/Yao82a}) which require computation of $f^{\oplus k}$ which is a single bit that is the parity of the $k$ outputs of the function $f$ have also been very important.  These are closely related to direct product theorems
and also have many applications, but are not our focus.}

We particularly focus on quantum query complexity.

\paragraph{Strong direct-product theorems in quantum query complexity and  multiplicative adversary methods}
Direct products in quantum computation were first considered in earnest by
Aaronson~\cite{Aar05} in the context of search.
Klauck, \v{S}palek, and de Wolf~\cite{KSdW07} strengthened Aaronson's bound for search and proved a strong direct product theorem for $\OR_n$.
By reduction, this also yielded direct product theorems for quantum query complexity of functions based on their
\emph{block sensitivity}\footnote{If the number of queries for an algorithm for the problem $f^{\otimes k}$ is at most
some $\alpha k$ times the block sensitivity of $f$, which is a complexity measure that is at least the cube root of the quantum query complexity of $f$, then the probability of its
success is exponentially small in $k$. (See, e.g. \cite{DBLP:phd/ndltd/David17} for a survey.)}~\cite{DBLP:journals/cc/NisanS94,DBLP:journals/cc/NisanS94}.
Their work also leveraged these strong direct product theorems to prove the first non-trivial quantum time-space tradeoff lower bounds.

Klauck, \v{S}palek, and de Wolf proved their quantum strong direct product theorems by combining the polynomial method with extremal properties of Chebyshev polynomials.
Alternative ideas for achieving strong direct-product properties
were given by Ambainis, \v{S}palek, and de Wolf~\cite{ASdW09}, whose
basic ideas were generalized by
\v{S}palek~\cite{Spa08} with his \emph{multiplicative adversary method}. 
\v{S}palek showed that this new method could be used to derive strong direct-product lower bounds for relations; indeed, the method itself seems to have been 
developed expressly with this purpose in mind.
\cite{DBLP:conf/coco/AmbainisMRR11} extended \v{S}palek's ideas and
derived a somewhat sharper form of strong direct product theorem for function computation and quantum state generation (but not relational computation) based on a modified form of the multiplicative adversary method. 
Around the same time,
Sherstov~\cite{sherstovSQDPT} showed that the polynomial method satisfies a strong direct product theorem in general, thus proving a tight result for any function $f$ where the polynomial method is tight, subsuming the quantum direct product theorems in \cite{KSdW07}).

Finally, Lee and Roland~\cite{LeeRolandSQDPT} leveraged (1) a proof~\cite{DBLP:conf/focs/LeeMRSS11} that an extension, known as the \emph{negative-weights adversary method}, by H\/{o}yer, Lee, and \v{S}palek~\cite{HLS07} of Ambainis' \emph{additive adversary method}~\cite{Amb02} yields asymptotically optimal quantum query lower bounds for function computation and (2) a refinement of a simulation of
the extended additive adversary method by the multiplicative adversary method introduced in~\cite{DBLP:conf/coco/AmbainisMRR11}, in order to derive 
a general quantum strong direct product theorem for function computation.
Lee and Roland also gave a direct method based on properties of semi-definite programs that gave a general strong direct product theorem for quantum state generation, improving on a similar result in \cite{DBLP:conf/coco/AmbainisMRR11}, from which they derived their bounds for function (but not relational) computation.
Recently, in the context of relating these approaches to recording
query methods, Jeffery and Zur~\cite{DBLP:conf/icalp/JefferyZ26} state a relational version of the multiplicative adversary method that retains the initial conditions of \cite{Spa08} but
replaces the progress bound and target by the stronger ones of~\cite{DBLP:conf/coco/AmbainisMRR11}.\footnote{They also define a related method that they call the ladder adversary method and claim to prove a strong direct product theorem for this method. However, their proof of this strong direct product theorem (contained in the arXiv version~\cite{JZ25-arxiv}) contains a bug. The final line in the proof of their Theorem 6.1 assumes that the most multiplicative progress can always be made in the first $10T/k$ steps.} However, they do not show that their method captures
the universal function lower bounds derived by the negative-weights adversary method or that it satisfies a strong direct product theorem; we establish both of these facts, in part by giving a new equivalent formulation of the progress bound.
Together, these involve a variety of subtly different versions of what
is commonly referred to by a single
phrase: ``the multiplicative adversary method". 

\paragraph{Strong direct-product theorems in classical randomized query complexity}

The story of classical direct-product theorems is also
relevant.
Direct product lower bounds for classical computation date back at least to work of Impagliazzo, Raz, and Wigderson~\cite{DBLP:conf/coco/ImpagliazzoRW94}, who showed nontrivial direct product theorems for both randomized decision tree (i.e. query) complexity and communication complexity.
With classical computation, Yao's minimax lemma~\cite{DBLP:conf/focs/Yao77} implies that complexity for randomized computation and is equally captured by the success of deterministic algorithms under arbitrary distributions, termed the \emph{distributional complexity} with respect
to those distributions.
Since then, the use of distributional complexity has been the 
cornerstone of analysis of classical randomized algorithms.
However, 
Shaltiel~\cite{DBLP:journals/cc/Shaltiel03} gave an example of a Boolean function
and an associated distribution to show that 
distributional query complexity does not in general satisfy a strong direct-product property and therefore randomized query complexity cannot satisfy an optimal strong direct product property with probability decay
that matches the obvious upper bound.

Though this was a bad example, Klauck, \v{S}palek, and De Wolf, in the same paper discussed earlier~\cite{KSdW07}, were able to prove specific strong direct product theorems for classical randomized query complexity for $\OR_n$ and $\search_n$ with improved parameters relative to their strong direct product theorems for quantum computation. 
Their result, like their version for quantum query complexity, is limited by the block sensitivity $bs(f)$ and is only a strong-direct product result
if the randomized query complexity is $\Omega(bs(f))$.

Around the same time as some of the quantum strong direct product results we have discussed, Drucker~\cite{DBLP:journals/cc/Drucker12} was finally able to prove a fully general strong direct product theorem for classical query complexity for all functions, even for the distributional version of Shaltiel's function
(and input distribution) that takes into account
some necessary losses, and consequently for all of randomized query complexity.
The parameters necessarily are weaker relative to the
obvious algorithms, largely because of examples like Shaltiel's.
Drucker's method relies on analyzing deterministic algorithms via
Yao's lemma by appropriately scaling the bounds based on the tradeoffs between error and complexity in the single copy case.
Blanc, Koch, Strassle, and Tan~\cite{DBLP:conf/coco/BlancKST24} extended the range of strong direct product theorems for classical query complexity to the case of distributional computing using the expected numbers of queries and error rather than the worst-case numbers of queries and error.
Very recently, Ben-David and Blais~\cite{DBLP:conf/focs/BenDavidB25} produced a stronger form of classical
strong direct-product theorem, but we will delay discussing their work until after our work on selective
direct product problems, which
was completed independently of theirs.

\paragraph{Selective direct products}
We focus on an especially strong form of direct product theorem, which we call a strong \emph{selective} direct product theorem.  
For this problem, instead of being given $k$ inputs to a relation $\relation$ (or function $f$), there are $m$ inputs given, and the task is to produce answers for any $k$ of these inputs of the solver's choice, which can depend on the vector of inputs and randomness.
This is formally defined as follows:

\begin{definition}[Selective Direct Product]
    \label{def:easierdp}
    Given $m$ independently chosen inputs of a relation $\relation\subseteq \domain^n \times \range$, we use $\easierdp{\relation}{k}{m}$ to denote the \emph{selective direct product} relation on $(\domain^n)^m \times([m] \times \range)^k$, where
    $$\big((x_1,\ldots, x_m), \left ((i_1,y_1),\ldots, (i_k, y_k)\right )\big) \in \easierdp{\relation}{k}{m}$$
    iff $(x_{i_j},y_j) \in \relation$ for all $j\in[k]$ and $i_1\neq i_2 \neq \cdots \neq i_k$.
\end{definition}

In principle, for an algorithm, solving this selective version is an easier task than simply computing $k$ independent copies of the relation, since the algorithm can now pick and choose which copies are easiest, or which copies it can correlate best.
It is, however, a \emph{harder} task than correctly computing at least $k$ out of $m$ copies, as we require the algorithm to \emph{identify} which subproblems it has correctly computed. 
If we only required at least $k$ copies of the relation to be computed correctly, this would correspond to what is known in the literature as a \emph{threshold direct product} problem.
Indeed, it is known from a work of Unger~\cite{DBLP:conf/focs/Unger09} and related work (see also \cite{DBLP:journals/joc/ImpagliazzoJK09,DBLP:conf/approx/ImpagliazzoK10,DBLP:conf/coco/CleveSUU07}) that an ``optimal'' strong direct product theorem with $p' = p^k$ implies a threshold direct product theorem.
(Drucker~\cite{DBLP:journals/cc/Drucker12} gives a very general 
framework for similar kinds of problems that are equivalent to strong direct-product theorems.)
For large values of $m$, correctly computing at least $k$ copies of the function can become trivial via random guessing, whereas selectively computing $k$ copies may still be difficult.  
See~\cref{rem:selective-vs-threshold} for an example
of this contrast shows up in our results.

We can also see an example of the difference between ordinary strong direct-product theorems and the selective version if we consider a
function closely related to Shaltiel's
bad example, but with a more benign distribution:
Consider the Boolean function $f_\Shaltiel$ given by 
$f_\Shaltiel(x_1, \ldots, x_n) =  x_1 (x_2 \oplus \ldots \oplus x_n)$, which is essentially the same as Shaltiel's function, but with the uniform
distribution on inputs rather than Shaltiel's distribution that was biased on the first bit.
Under the uniform distribution, a single copy requires a linear number of queries to beat success probability 3/4 and has expected
complexity $\Omega(n)$.
This yields a strong direct-product result for
$f_\Shaltiel$ in distributional complexity with minimal loss.

However, the selective direct-product problem
$f_\Shaltiel^{(k,m)}$ has a trivial $O(k)$
query algorithm under the uniform distribution when $m$ is at least some constant factor larger than $k$, say at least $3k$:   
The algorithm queries the first bits of the copies of $f_\Shaltiel$ in order, stopping when it has found $k$ of them to be 0 and outputting the value 0 where 0's have been found.
This means that there is no strong selective direct-product theorem for $f_\Shaltiel$ under the uniform distribution, despite the fact that it does have a strong direct-product theorem under the uniform distribution.

Quantum adversary methods avoid such bad
examples for a simple reason:
Informally, these methods track the maximum progress that can be made per query towards solving the problem on the given input distribution.
For $f_\Shaltiel$, 
the uniform distribution would be a bad choice to use with quantum adversary methods, since a query on the first index would provide too much progress; the single-copy lower bound under the uniform distribution would only
be $1$ since a single query could determine the answer.

\paragraph{Strong quantum selective direct-product theorems}

We first prove strong selective direct product theorems for a few specific functions and relations entirely from first principles. These include the promise problem $\search_n$ function which takes as input
$x\in \bits^n$ with $|x|=1$ and outputs the unique index $i$ such that $x_i = 1$, the $\OR_n$ problem, and the $\Parity_n$ problem.

\begin{restatable}{theorem}{selectivesearch}
    \label{thm:search-sqdpt}
    Let $0< \beta\leq 1/2$ be such that $2^{2(\beta+H_2(\beta))/(1-\beta)}\le n$.
    Every quantum query algorithm solving $\easierdp{\search_n}{k}{m}$ with $T \leq \frac{\beta}{24} k \sqrt{n}$ queries has success probability at most $9\cdot 2^{-2\beta k}$.
\end{restatable}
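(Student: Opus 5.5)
We will prove the bound with a multiplicative-adversary style potential built directly for $\search_n^{\otimes m}$, rather than by appealing to a general theorem. The hard distribution is uniform over $m$-tuples $(x_1,\ldots,x_m)$, each $x_i$ a weight-one string in $\bits^n$. Let $\mathcal{H}_{\inputreg}$ be the superposition input register, spanned by $\ket{a_1,\ldots,a_m}$ with $a_i\in[n]$. We purify the algorithm: the input register starts in the uniform superposition, and each query acts as a controlled phase oracle on the queried block. For each block $i$, let $\Pi_0^{(i)}$ project onto the uniform superposition $\ket{u}$ in block $i$, and let $\Pi_1^{(i)}=I-\Pi_0^{(i)}$. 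For a parameter $\lambda>1$, we define the weight operator
\[
\Gamma=\bigotimes_{i=1}^m \bigl(\Pi_0^{(i)}+\lambda\,\Pi_1^{(i)}\bigr),
\]
and track the progress $W_t=\tr(\Gamma\rho_t)$, where $\rho_t$ is the reduced state on the input register after $t$ queries. Initially $W_0=1$.

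\textbf{Step 1 (progress per query).} We bound $W_{t+1}\le(1+c(\lambda-1)/\sqrt n)\,W_t$ for a small absolute constant $c$. A single query touches only one block, and in that block the phase oracle moves at most $O(1/\sqrt n)$ amplitude out of $\ket u$. The other blocks' factors commute with the query, so the single-block computation suffices, summed over the superposition of queried blocks. The natural choice is $\lambda=2^{c'\beta}$-type scaling, or more simply $\lambda$ such that $\lambda-1\approx$ a constant. Then $T\le \frac{\beta}{24}k\sqrt n$ queries give
\[
W_T\le \exp\!\bigl(c(\lambda-1)T/\sqrt n\bigr)\le 2^{O(\beta k)}.
\]

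\textbf{Step 2 (success forces large weight).} Success requires naming $k$ distinct indices $i_j$ and the correct $a_{i_j}$. Conditioned on the output, the input register must be concentrated on the set of $(a_1,\ldots,a_m)$ consistent with it, i.e. with $k$ fixed coordinates. For a fixed selection $S$ of $k$ blocks and fixed answers, a state whose blocks in $S$ are point masses $\ket{a}$ has overlap only $1/n$ with $\ket u$ in each of those blocks, so its $\Gamma$-weight is about $\lambda^{k}(1-1/n)^k$. The selectivity (the adaptive choice of $S$) is handled by a union/counting argument. The total success probability is bounded through the operator inequality that the success projector, after the output measurement, lies under a combination of the $\Pi_1$-weights: states with fewer than $(1-\beta)k$ of the selected blocks "excited" succeed only with probability about $n^{-\beta k}$. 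The hypothesis $2^{2(\beta+H_2(\beta))/(1-\beta)}\le n$ is exactly what makes the binomial factor $\binom{k}{\beta k}\le 2^{H_2(\beta)k}$, times $n^{-\beta k}$-type terms, at most $2^{-2\beta k}$.

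\textbf{Step 3 (combining).} Split the final state into a low-excitation part (fewer than $(1-\beta)k$ excited blocks among those output) and a high-excitation part. The high part has probability at most
\[
W_T\,\lambda^{-(1-\beta)k}\le 2^{-2\beta k}
\]
once $\lambda$ is chosen appropriately (e.g. $\lambda=4$ with the constant $24$). The low part succeeds with probability at most $2^{-2\beta k}$ by Step 2. Cauchy–Schwarz-type cross terms between the two parts account for the constant $9=(1+2)^2$.

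\textbf{Main obstacle.} The main obstacle is Step 2 under selectivity. The output set $S$ depends on the state, so the excited-block count must be measured relative to a data-dependent subset of the $m$ blocks. We would first define, for each output $(S,y)$, the projector onto "at least $(1-\beta)k$ blocks of $S$ excited" and bound it by the full-register operator $\lambda^{-(1-\beta)k}\Gamma$. This works because $\Gamma$ dominates every sub-tensor-product on $S$, since $\lambda\ge1$, so the bound is uniform over $S$ and no union bound over $\binom mk$ choices of $S$ is needed.
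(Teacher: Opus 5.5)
Your potential-function idea is sound, but the split in Steps~2--3 is on the wrong side, and with that split the argument cannot give the stated bound. You cut at $(1-\beta)k$ excited selected blocks, so your low part only guarantees more than $\beta k$ \emph{unexcited} selected blocks. For $\search_n$ the worst case over that subspace can be computed exactly. $\Pi_{K,\tau}$ is rank one on the blocks of $K$, and each $\ket{\tau_i}$ has squared overlap $1/n$ with $\ket{u}$. Hence the largest success probability of any state with fewer than $(1-\beta)k$ excited blocks in $K$ is $\Pr[\mathrm{Bin}(k,1/n)>\beta k]$, which is roughly $\binom{k}{\beta k}n^{-\beta k}$ at best. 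Making this at most $2^{-2\beta k}$ needs $n^{\beta}\ge 2^{2\beta+H_2(\beta)}$, i.e.\ $n=\Omega(1/\beta)$. The hypothesis, by contrast, is equivalent to $n^{(1-\beta)/2}\ge 2^{\beta+H_2(\beta)}$. The factor $1-\beta$ there shows it is calibrated to $(1-\beta)k$ unexcited blocks, each costing $n^{-1/2}$ in amplitude. For example, $\beta=1/10$ and $n=3$ satisfy the hypothesis, yet $\Pr[\mathrm{Bin}(k,1/3)>k/10]\to 1$. So bounding the low part by its worst case over that subspace, as your Step~2 does, yields nothing. Your assertion that the hypothesis is ``exactly'' what makes $\binom{k}{\beta k}n^{-\beta k}$ small is false for small $\beta$.

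The repair is to cut where the paper does: take the low part to be at most $\lfloor\beta k\rfloor$ excited blocks, the paper's $\Pi_{\le\lfloor\beta k\rfloor}$.
\begin{itemize}
\item \textbf{Low part.} At least $(1-\beta)k$ selected blocks are then unexcited. The low part has amplitude at most $\binom{k}{\le\lfloor\beta k\rfloor}n^{-(1-\beta)k/2}\le 2^{-\beta k}$, by \cref{lemma:low-succ-small-ell} or by your rank-one computation. This is exactly what the hypothesis provides.
\item \textbf{High part.} The cost moves here. Your domination $\sum_w \Pi_w\otimes P^{K(w)}_{>\beta k}\preceq \lambda^{-\beta k}(\identity\otimes\Gamma)$ is correct and avoids any union bound over $K$. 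But it now gives only $\lambda^{-\beta k}W_T$, so $\lambda$ must be fairly large and the per-query constant matters.
\item \textbf{Per-query constant.} The exact two-dimensional single-block computation gives $\norm{\Gamma^{1/2}\oracle\,\Gamma^{-1/2}}^2\le \exp\bigl(\tfrac{2}{\sqrt n}(\sqrt\lambda-1/\sqrt\lambda)\bigr)$. With $T\le \beta k\sqrt n/24$ and $\lambda=e^4$, this yields $\lambda^{-\beta k}W_T\le e^{-3.3\beta k}\le 2^{-2\beta k}$, and a total success probability of at most $4\cdot 2^{-2\beta k}$.
\end{itemize}

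Repaired this way, your route is a genuine alternative to the paper's. A single Markov-type operator inequality for $\Gamma=\bigotimes_i(\Pi_0+\lambda\Pi_1)$ replaces two pieces of the paper's argument. One is the three-term recurrence $\Delta_{t,\ell}\le\Delta_{t-1,\ell}+\frac{2}{\sqrt n}(\Delta_{t-1,\ell-1}+\Delta_{t-1,\ell+1})$. The other is the resulting sum $\sum_{\ell>\beta k}\binom{T}{\ell}(4/\sqrt n)^{\ell}$ (\cref{lem:delta-bound}). In effect it is the multiplicative-adversary argument of \cref{sec:selective-from-mult-adv}, specialized to this explicit $\Gamma$.
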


\begin{remark} If $n\ge 64$ then the second constraint on $\beta$ is automatically satisfied by any $\beta\le 1/2$.
\end{remark}

\begin{restatable}{theorem}{selectiveOR}
    \label{thm:selectiveOR}
    Let $0 < \beta$ be such that $3\beta+2H_2(\beta) \leq 1$. 
    Every quantum query algorithm solving $\easierdp{\OR_n}{k}{m}$ with $T \leq \frac{\beta}{16} k \sqrt{n}$ queries has success probability at most $9 \cdot 2^{-2\beta k}$.
\end{restatable}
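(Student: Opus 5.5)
We plan to prove \cref{thm:selectiveOR} by the same first-principles multiplicative-adversary argument used for \cref{thm:search-sqdpt}, choosing a hard distribution for $\OR_n$ that embeds search. On each of the $m$ copies we put the distribution $\mu = \tfrac12 \delta_{0^n} + \tfrac12 \cdot \mathrm{Unif}\{e_i : i\in[n]\}$. On this distribution, answering $\OR$ correctly on a copy amounts to distinguishing the all-zeros input from a uniformly hidden single $1$. The state is an input register in superposition $\sum_x \sqrt{\mu^{\otimes m}(x)}\ket{x}$, purified against the algorithm's workspace. For a single copy we use the progress operator $\Gamma_1$, which is diagonal in a Fourier-type basis of $\spn\{\ket{0^n},\ket{e_1},\dots,\ket{e_n}\}$. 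It has eigenvalue $1$ on the ``symmetric'' vector $\frac{1}{\sqrt2}(\ket{0^n}+\frac1{\sqrt n}\sum_i\ket{e_i})$, which is the initial state. It has a larger eigenvalue $\lambda=1+\Theta(1/\sqrt n)$ on the complementary directions. The multi-copy operator is $\Gamma=\Gamma_1^{\otimes m}$, so $\Gamma$ has eigenvalue $\lambda^{w}$ on the subspace where exactly $w$ copies are ``excited''.

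First we show the single-copy query bound: one oracle query changes $\langle\Gamma_1\rangle$ by at most a factor $1+O(1/\sqrt n)$. The query acts on the input register only through phases $(-1)^{x_j}$, and only $\ket{e_j}$ acquires a phase, so a query moves amplitude $O(1/\sqrt n)$ between eigenspaces. Tensorization then gives the $m$-copy bound. A query touches one copy, so $\Gamma$ grows by at most a factor $c=1+\frac{C}{\sqrt n}$ per query, and after $T$ queries $\langle \Gamma\rangle\le c^{T}\le 2^{O(T/\sqrt n)}$. With $T\le \frac{\beta}{16}k\sqrt n$ this is at most $2^{\beta k}$ after fixing constants.

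Second, and this is the main obstacle, we prove that success forces large progress even though the algorithm selects which $k$ copies to answer. We decompose the final state by the output $(i_1,y_1),\dots,(i_k,y_k)$. Let $\Pi_S$ be the projector onto states where the copies in $S$, with $|S|=k$, are correctly answered. We claim that the mass of a correct-output state lying in the low-eigenvalue space $\{w< \beta k\}$ restricted to $S$ is small. The reason is that the unexcited component on a copy is the symmetric vector, which has overlap exactly $1/2$ with each answer. This lets us bound correctness on $S$ with few excitations by $2^{-k}\binom{k}{\le\beta k}$-type quantities, i.e. $2^{-(1-H_2(\beta))k}$ up to a factor per allowed excitation. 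The selectivity is handled by a union bound over the output choice only through the state decomposition. Because the output register is measured, $\sum_S\|\Pi_S\psi\|^2\le 1$, so we avoid a $\binom mk$ loss. We then apply Markov's inequality with $\langle\Gamma\rangle\le 2^{\beta k}$ to the high-eigenvalue part $\{w\ge \beta k\}$, which carries weight at most $2^{\beta k}/\lambda^{\beta k}$-type bounds after rescaling. Here we would try setting $\lambda$ so that $\lambda^{\beta k}=2^{3\beta k}$ and tune $\Gamma_1$ accordingly.

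Finally we combine the two pieces via the standard inequality $\sqrt{p}\le \|\Pi_{\mathrm{low}}\text{-part}\|+\|\Pi_{\mathrm{high}}\text{-part}\|$. This gives success probability at most $\big(2^{-\beta k}+2^{(H_2(\beta)+\beta)k/2}\cdot 2^{-k/2}\big)^2$ up to constants. The condition $3\beta+2H_2(\beta)\le 1$ is exactly what makes the second term at most $2^{-\beta k}$. Squaring and collecting constants (at most $3^2$) gives $9\cdot 2^{-2\beta k}$. The delicate parts are the precise eigenvalue bookkeeping that yields the constant $16$, and the argument that selective outputs do not incur a combinatorial union bound. For the latter we would mirror whatever output-register decomposition was used for \cref{thm:search-sqdpt}.
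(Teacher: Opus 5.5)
\paragraph{The proposal does not close as written, because of the choice of $\lambda$.} Your potential-function route can be made to work, but only with a constant $\lambda$ and an explicit per-query computation you have not done.

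With $\Gamma_1=\Pi_{V^0}+\lambda\Pi_{V^1}$ and $\lambda=1+\Theta(1/\sqrt n)$, Markov gives $\norm{\Pi_{\ge\beta k}\ket{\psi_T}}^2\le \tr[\Gamma\rho^T_\inputreg]\,\lambda^{-\beta k}$. Here $\lambda^{\beta k}=e^{\Theta(\beta k/\sqrt n)}$, and the per-query factor is then really only $1+\Theta(1/n)$. So the high-excitation part is at best $e^{-O(\beta k/\sqrt n)}$, which is not exponentially small in $k$.

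Your fallback $\lambda^{\beta k}=2^{3\beta k}$ forces $\lambda=8$. The per-query claim must then be re-derived, because the growth factor depends on $\lambda$ and increases with it. The reflection $\oracle_{j,1}=\identity-2\ket{e_j}\bra{e_j}$ acts nontrivially only on the plane spanned by $\ket{\chi_0}$ and $\ket{e_j}$. Write $a=\braket{\chi_0|e_j}=1/\sqrt{2n}$ and $\gamma=2a\sqrt{1-a^2}\le\sqrt{2/n}$. A $2\times 2$ computation gives $c=\norm{\Gamma_1^{1/2}\oracle_{j,1}\Gamma_1^{-1/2}}^2=\mu$ with $\sinh(\tfrac12\ln\mu)=\gamma\sinh(\tfrac12\ln\lambda)$. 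Hence $\ln c\le\gamma(\sqrt\lambda-1/\sqrt\lambda)$, which is $\Theta(\sqrt{\lambda/n})$ for large $\lambda$.

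So there is a genuine tradeoff, and you must check that it closes. At $\lambda=8$ one gets exactly $\ln c\le 3.5/\sqrt n$. Tensorizing as in Part (i) of the proof of \cref{thm:selective-from-multiplicative-adversary}, for $T\le\beta k\sqrt n/16$ this gives $\ln\tr[\Gamma\rho^T_\inputreg]\le 7\beta k/32$. Markov then yields $\norm{\Pi_{\ge\lceil\beta k\rceil}\ket{\psi_T}}\le 2^{-1.3\beta k}$. With this computation supplied, your high part is fine.

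\paragraph{The rest is sound, with two corrections.}

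Your low-part bound $\binom{k}{\le\beta k}2^{-(1-\beta)k}$ on the squared norm is correct. It is in fact sharper than \cref{lemma:low-succ-small-ell-or}. To get it you must split each workspace block by $A=K(w)\setminus Q$ and use Cauchy--Schwarz together with orthogonality of the pieces for distinct $A$, not the triangle inequality. It only needs $H_2(\beta)+3\beta\le 1$, so the stated hypothesis is sufficient rather than ``exactly'' what is needed.

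Selectivity should be phrased as block-diagonality of $\pisucc$ in the workspace basis, where each $w$ fixes $K(w)$ and $\tau(w)$. It should not be phrased as $\sum_S\norm{\Pi_S\psi}^2\le 1$, since those projectors overlap.

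\paragraph{Comparison with the paper.} The paper uses no potential function. It tracks $\Delta_{t,\ell}=\norm{\Pi_\ell\ket{\psi_t}}$ through the three-term amplitude recurrence of \cref{lem:delta-recurrence}. It solves this to $\binom{t}{\ell}(4/\sqrt{2n})^\ell$, reusing the $\search_{2n}$ analysis via the $2n$-dimensional purification, and then sums over $\ell>\beta k$.

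Your route replaces the recurrence and the tail sum with one two-dimensional eigenvalue computation plus Markov. It is essentially the strategy of \cref{thm:selective-from-multiplicative-adversary} with an explicit two-level $\Gamma$, splitting directly into low- and high-eigenvalue parts. It is shorter and gives better constants, with success probability at most $4\cdot 2^{-2\beta k}$. The paper's level-by-level amplitude bounds are more elementary and are reused directly in the list-decoding proofs.
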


\begin{remark}
    The condition on $\beta$ in \cref{thm:selectiveOR} holds for any $0<\beta\le 0.07562$.
\end{remark}

\cref{thm:search-sqdpt,thm:selectiveOR} are qualitatively tight: Using Grover's search we can find $k$ 1's in our input with $O(k\sqrt n)$ queries and success probability at least exponentially large in $k$.

\begin{remark}
    \label{rem:selective-vs-threshold}
    We highlight the difference between threshold direct product statements and selective direct product statements with the following setting of parameters: 
    Let $m = n$ and $k = n/100$ in \cref{thm:selectiveOR}.
     Without any queries at all, via simple random guessing, with probability $1-o(1)$ we could correctly compute $\approx n/2$ copies of $\OR_n$ which is much larger than $k$. 
    It is then impossible to prove any sort of nontrivial threshold direct product theorem for this regime of parameters.
    On the other hand, \cref{thm:selectiveOR} shows that 
    any algorithm \emph{selectively} computing $n/100$ copies of $\OR_n$ with  $o(n^{1.5})$ queries can only succeed with probability that is exponentially small in $n$.
\end{remark}

The following selective direct product theorem for $\Parity_n$ is also tight up to constants: With $kn/2$ queries, an algorithm can solve the first $k$ copies with no error at all.

\begin{restatable}{theorem}{selectiveparity}
    \label{thm:selectiveparity} 
    Every quantum query algorithm solving $\easierdp{\Parity_n}{k}{m}$ with $T \leq kn/4$ queries has success probability at most $2^{k ( 2H_2(p) - 1 + p)}$, where $H_2(\cdot)$ is the binary entropy function and $p= 2T/(kn)$.
    In particular, if $T \leq kn/32$, then our success probability is at most $2^{-k/4}$.
\end{restatable}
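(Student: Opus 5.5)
Throughout I take the $m$ inputs uniform on $(\bits^n)^m$, since a success bound under any fixed input distribution bounds worst-case success. The approach is to decompose the final state by the degree of each copy's query dependence. I then charge successful outputs on copies with small degree to pure guessing, and count the copies with large degree against the query budget. Fix an algorithm with $T$ queries. By the standard polynomial-method fact, its final state can be written as a sum over \emph{query transcripts}: each basis amplitude is a multilinear polynomial in the $mn$ input bits of total degree at most $T$. More usefully, I expand the state in the Fourier (phase-oracle) basis. I use the phase oracle $O_x\ket{i,j,b}=(-1)^{b x_{i,j}}\ket{i,j,b}$ and write the final state as
\begin{equation*}
\ket{\psi_x}=\sum_{S\subseteq [m]\times[n],\,|S|\le T}\chi_S(x)\ket{\phi_S},
\end{equation*}
where $\chi_S(x)=(-1)^{\sum_{(i,j)\in S}x_{i,j}}$. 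This comes from expanding each query as a sum of an identity part and a phase part, in the spirit of the recording-query picture. Here the $\ket{\phi_S}$ do not depend on $x$ and satisfy $\sum_S\norm{\phi_S}^2=1$. This orthogonality holds if I instead use Zhandry's compressed/recording oracle on uniform $x$: the joint state of the algorithm register and the Fourier-recorded database is $\sum_S \ket{\phi_S}\ket{S}$, where $S$ is the set of recorded positions, with $|S|\le T$.

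The key steps, in order, are as follows.

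\textbf{Step 1.} In the recording formulation, after $T$ queries the database register is supported on sets $S$ with $|S|\le T$. The number of copies $i$ that are \emph{fully recorded}, i.e.\ all $n$ bits of $x_i$ appear in $S$, is therefore at most $T/n$. More finely, I call a copy \emph{heavily queried} if at least $n/2$ of its bits are recorded. The number of heavily queried copies is at most $2T/n = pk$.

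\textbf{Step 2.} For any output $((i_1,y_1),\dots,(i_k,y_k))$, measured jointly with the database, consider the components where a set $J$ of the selected indices are not fully recorded. On those components, the parity of $x_{i_j}$ for $j\in J$ is uniformly random and independent of everything in the algorithm's register. This holds because an unrecorded bit in the Fourier basis is in the uniform state, so $\Parity(x_{i_j})$ is a fresh unbiased bit. Hence the conditional success probability is $2^{-|J|}$. To turn this into a bound without cross terms, I use the standard recording-oracle lemma: the success probability is at most $\bigl(\sum_S \norm{\Pi_S\phi_S}\bigr)^2$-type quantities. Alternatively, I use the cleaner bound that the square root of the success probability is at most the sum over "patterns" of the square roots of the pattern probabilities.

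\textbf{Step 3.} I union bound over which subset $F\subseteq[k]$ of the selected positions is fully recorded. Only $|F|\le pk$ is possible, and otherwise the guessing factor $2^{-(k-|F|)}$ applies. The number of such subsets is at most $2^{kH_2(p)}$. Combining the amplitudes with Cauchy–Schwarz gives
\begin{equation*}
\sqrt{\Pr[\text{success}]}\le \sum_{|F|\le pk}2^{-(k-|F|)/2}\le 2^{kH_2(p)}\,2^{-k(1-p)/2},
\end{equation*}
and squaring yields $2^{k(2H_2(p)-1+p)}$, which is exactly the claimed form. The factor of $2$ on $H_2$ fits the amplitude-level union bound followed by squaring. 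For $T\le kn/32$ we have $p\le 1/16$, so $2H_2(1/16)-1+1/16\approx 0.674-0.9375<-1/4$, giving the second claim.

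The main obstacle is Step 2 combined with the selective choice of indices. The indices $i_j$ are chosen adaptively, so I cannot fix $J$ in advance. To handle this, I first let the algorithm's output register determine the indices. I then decompose the database state by the projector onto "the copies $i_1,\dots,i_k$ have recorded set $F$". The commuting projectors make the pattern decomposition legitimate, and the correctness projector, restricted to an unrecorded copy, has norm $2^{-1/2}$ per copy. That is the bound I would verify carefully. The fully-recorded count must use full recording (all $n$ bits), giving $|F|\le T/n\le pk/2$. The weaker count $pk$ in the statement leaves slack, which suggests the authors likely count copies with at least $n/2$ queries in a non-recording hybrid argument. Either route gives the bound.
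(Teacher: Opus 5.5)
Your architecture is the paper's: purify the uniform input as $\ket{+}^{\otimes mn}$, expand in the Hadamard basis as $\sum_S \ket{\phi_S}\ket{-}_S\ket{+}_{\overline S}$ with $|S|\le T$, fix the output via the work register, split by which selected copies are heavy, pay $\binom{k}{\le pk}$ by the triangle inequality, and guess on the rest. Your counting reproduces the paper's $\binom{k}{\le \lfloor 2T/n\rfloor}^2 2^{-k+\lfloor 2T/n\rfloor}$. The gap is Step~2. It is exactly the step you said you would ``verify carefully,'' and as justified it is false.

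Your claim is that a copy that is not fully recorded has uniformly random parity. This holds for each single basis state $\ket{-}_S\ket{+}_{\overline S}$, but not for superpositions over different $S$. The parity observable is off-diagonal in the Hadamard basis: $Z^{\otimes n}\ket{-}_{S_i}\ket{+}_{[n]\setminus S_i}=\ket{+}_{S_i}\ket{-}_{[n]\setminus S_i}$, so it sends the recorded set $S_i$ to its complement. Components with complementary recorded sets therefore interfere. Concretely, for $n=2$, $\frac{1}{\sqrt2}(\ket{-}\ket{+}+\ket{+}\ket{-})=\frac{1}{\sqrt2}(\ket{00}-\ket{11})$ has no fully recorded copy, yet its parity is even with certainty. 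More generally, $\lceil n/2\rceil$ queries compute $\Parity_n$ exactly. Equivalently, for $n\ge 2$ the parity projector restricted to the ``not fully recorded'' subspace (the orthogonal complement of $\ket{-}^{\otimes n}$) has norm $1$, not $2^{-1/2}$. That subspace has dimension $2^n-1$, so it must meet the $2^{n-1}$-dimensional even-parity subspace. Your closing remark is therefore backwards: the fully-recorded route with count $T/n$ fails, and the $n/2$ threshold is forced, not slack.

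The missing ingredient is \cref{prop:superposition-parity-dp}. Let $A$ be the set of light selected copies, and suppose every component of the superposition has $|S_i|<n/2$ for each $i\in A$. Write $\Pi_\tau=2^{-|A|}\sum_{y\in\bits^A}(-1)^{\tau\cdot y}\bigotimes_{i\in A}(Z^{\otimes n})_i^{y_i}$. Each term with $y\neq 0$ complements some $S_i$ into a set of size $>n/2$, giving a vector orthogonal to $\psi$. Hence $\norm{\Pi_\tau\psi}^2=2^{-|A|}\norm{\psi}^2$ exactly, for arbitrary superpositions and arbitrary correlation with the remaining registers. Take ``heavy'' to mean $|S_i|\ge n/2$, as you did define it; since $|S|\le T$, there are at most $2T/n=pk$ heavy copies. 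With this, your Steps~2--3 go through. Without it, the argument ``an unrecorded bit is uniform'' does not explain why the threshold is $n/2$ rather than $n$ or even $1$.
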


These two theorems rely on simple and direct single-copy lower bound proofs for $\search_n$, $\OR_n$, and $\Parity_n$, that are inspired by the recording query method~\cite{Zha19}, seem to be new, and may be of independent interest.   

Going well beyond strong selective direct product theorems for these specific functions, we show two very general theorems:

\begin{theorem}[Informal]
\label{thm:selective-relations}
   The relational multiplicative adversary method applied to any relation $\relation$ and distribution $\dist$
   satisfies a strong selective direct product property.
\end{theorem}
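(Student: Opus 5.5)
The plan is to work with a relational multiplicative adversary consisting of a Hermitian weight matrix $\Gamma \succeq \identity$ on the input space $\inputspace$ (rows and columns indexed by inputs), together with the distribution $\dist$, which gives the initial state $\ket{\dist}=\sum_x\sqrt{\dist(x)}\ket{x}$ with $\bra{\dist}\Gamma\ket{\dist}=1$. The method has three parameters. The first is a per-query progress factor $\lambda$: for every query oracle $O_i$, the potential $W_t=\tr(\Gamma\rho_t)$ of the reduced input-register state satisfies $W_{t+1}\le\lambda W_t$. The second and third are an eigenvalue threshold $\lambda_0$ and a success bound $\eta$: the projector $\Pi_{\mathrm{bad}}$ onto eigenvectors with eigenvalue below $\lambda_0$ satisfies that every measurement outcome $y$ succeeds (meaning $(x,y)\in\relation$) with weight at most $\eta$ on that subspace. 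Call this the relational success condition. The single-copy conclusion is then that $T$ queries give success at most roughly $\eta+\lambda^{T}/\lambda_0$.

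For the $m$-fold product I take $\Gamma^{\otimes m}$ and $\ket{\dist}^{\otimes m}$. Since query $t$ touches one copy $i_t$, the oracle acts as $O_{i_t}\otimes\identity$ on the other blocks. Using $\Gamma\succeq\identity$, I will show that a query on copy $i$ multiplies the potential by at most $\lambda$, so $W_T\le\lambda^T$. Here "potential" is measured by $\Gamma^{\otimes m}$, though this is not quite right for the threshold argument. Following Ambainis et al., I will instead use a finer decomposition. For each $S\subseteq[m]$, let $P_S$ project onto the subspace where exactly the copies in $S$ are in the high-eigenvalue space ($\ge\lambda_0$) of $\Gamma$ and the rest are in the low space. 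Then
\[
\textstyle\sum_S \lambda_0^{|S|}\,\tr(P_S\rho_T)\le W_T\le\lambda^T .
\]
Hence the total mass on sets with $|S|\ge \ell$ is at most $\lambda^T/\lambda_0^{\ell}$. With $T=\beta k\log\lambda_0/\log\lambda$ and $\ell=2\beta k$, this is $\lambda_0^{-\beta k}$.

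The selective step comes next. The final measurement produces $(i_1,y_1),\dots,(i_k,y_k)$, and I decompose the success projector as $\sum_{\vec i,\vec y}M_{\vec i,\vec y}\otimes\Pi^{\mathrm{succ}}_{\vec i,\vec y}$, where $\Pi^{\mathrm{succ}}$ is diagonal in the input basis. On the component $P_S\rho_T$ with $|S|<\ell$, at least $k-\ell$ of the selected indices lie outside $S$. On each such copy the state is in the low subspace, where success probability is at most $\eta$ for any fixed answer. Since the projectors factor over copies, the product bound gives success at most $\eta^{k-\ell}$ per fixed $(\vec i,\vec y)$. Cross terms between different $P_S$ are handled with the usual Cauchy–Schwarz/triangle-inequality trick at the amplitude level, costing a factor $2^m$ or, better, $\binom{k}{\ell}\le 2^{kH_2(\beta)}$ if I only project the $k$ selected coordinates. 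The selected indices are determined by the measurement outcome, not fixed in advance, so for each outcome I will apply the decomposition only to the coordinates $\{i_1,\dots,i_k\}$. The total is $\sum_{\vec i,\vec y}\bra{\psi}M\otimes\Pi\ket{\psi}$, and the POVM sums to identity, so the sum over outcomes costs nothing beyond the per-outcome bound. This gives a total of roughly
\[
2^{O(kH_2(\beta))}\bigl(\eta^{(1-2\beta)k}+\lambda_0^{-\beta k}\bigr).
\]
That is exponentially small when $\eta<1$, $\lambda_0>1$, and $\beta$ is a small constant, matching the shape of \cref{thm:search-sqdpt}.

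The main obstacle is the step that uses the adversary's low-eigenspace bound for an adaptively chosen subset of copies. The bound must hold for the operator $\Pi_{\mathrm{low}}^{\otimes(\text{unselected-high})}$ sandwiching a success operator that is correlated with the workspace, and this requires the relational success condition to be an operator inequality $\Pi_{\mathrm{bad}}\,\Pi^{\mathrm{succ}}_y\,\Pi_{\mathrm{bad}}\preceq\eta\,\identity$, which tensorizes, rather than a bound on a single state. I would first try to phrase the single-copy method in exactly this operator form, so that tensorization and summation over the POVM are immediate.
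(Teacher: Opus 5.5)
Your argument is correct, but it is organized differently from the paper's proof of \cref{thm:selective-from-multiplicative-adversary}.

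\textbf{The paper's route.} The paper never bounds the success probability by hand. It checks that $(\Gamma^{\otimes m},\lambda^{\ell},\eta^{k/8})$ with $\ell=\lceil\delta k\rceil$ is itself a valid relational multiplicative adversary for the relation $\easierdp{\relation}{k}{m}$. Two facts do this.
\begin{itemize}
\item The per-query rate is unchanged, because a query to copy $j$ conjugates to $\identity\otimes\Gamma^{1/2}\oracle_{i,p}\Gamma^{-1/2}\otimes\identity$.
\item Every eigenvector of $\Gamma^{\otimes m}$ with eigenvalue at most $\lambda^{\ell}$ has fewer than $\ell$ factors in the high space. This yields $\norm{\Pi^{\relation,m}_{K,\tau}\Pi_{\Gamma^{\otimes m},\le\lambda^{\ell}}}^2\le\binom{k}{<\ell}\eta^{k-\ell+1}\le\eta^{k/8}$ for every output $(K,\tau)$.
\end{itemize}
It then applies \cref{prop:mult-adv-query-bound} as a black box, choosing $\kappa$ so that the terminal target is the $\delta k$-th power of the single-copy target.

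\textbf{Your route.} You use $\Gamma^{\otimes m}$ only as a potential and split by the number $|S|$ of copies in the high space. You bound the high mass by the Markov-type inequality coming from $\Gamma^{\otimes m}\succeq(\Pi_{\Gamma,<\lambda_0}+\lambda_0\Pi_{\Gamma,\ge\lambda_0})^{\otimes m}$, and bound success on the low part directly. In effect you re-derive the method's terminal step inside the product, in the style of the paper's $\search_n$ and $\OR_n$ proofs in \cref{sec:specific}.

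\textbf{What each buys.} The paper's route is modular: it shows the method itself certifies the product bound, and it reuses the single-copy terminal inequality unchanged. Yours is self-contained and makes the handling of adaptivity explicit. It also gives $T\ge\beta k\log_c\lambda_0\ge\beta k\cdot\madv^{\dist}_{\varepsilon,\lambda_0,\eta}(\relation)$ for every $\varepsilon$, so it avoids the paper's auxiliary restriction $\varepsilon\ge 1/2-\eta-\sqrt{2\eta}$.

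\textbf{Points to fix in the write-up.}
\begin{itemize}
\item \emph{Per-query rate.} For $\Gamma^{\otimes m}$ this does not follow from $\Gamma\succeq\identity$. It follows from the operator form $\oracle_{i,p}^\dagger\Gamma\oracle_{i,p}\preceq c\,\Gamma$, which tensorizes with the positive factors on the other copies and so also covers states entangled across copies. The inequality $\Gamma\succeq\identity$ is what you need for the Markov step.
\item \emph{Placement of the binomial factor.} It must multiply only the $\eta$-term. The high-mass bound $c^T\lambda_0^{-\ell}$ is over all $m$ copies and needs no union bound, so the correct total is $\bigl(\sqrt{\binom{k}{<\ell}\eta^{k-\ell+1}}+\sqrt{c^T\lambda_0^{-\ell}}\bigr)^2$. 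Your displayed $2^{O(kH_2(\beta))}(\eta^{(1-2\beta)k}+\lambda_0^{-\beta k})$ is not small when $\lambda_0$ is near $1$ (e.g.\ $\lambda_0\approx1.356$ or $\lambda_0=19/10$ in the paper's applications), since $2^{kH_2(2\beta)}\lambda_0^{-\beta k}\ge 1$ whenever $\lambda_0\le 1/(4\beta^2)$.
\item \emph{Choice of $\beta$.} You must take $\beta$ small as a function of $\eta$, so that $2^{H_2(2\beta)}\eta^{1-2\beta}<1$. This is the analogue of the paper's condition $\eta\le 2^{-H_2(\delta)/(7/8-\delta)}$.
\item \emph{The ``main obstacle'' is not one.} The relational hypothesis $\norm{\Pi_{\relation^{-1}(y)}\Pi_{\Gamma,\le\lambda}}^2\le\eta$ already is the operator inequality $\Pi_{\Gamma,\le\lambda}\Pi_{\relation^{-1}(y)}\Pi_{\Gamma,\le\lambda}\preceq\eta\,\Pi_{\Gamma,\le\lambda}$. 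That is exactly why the paper uses the \v{S}palek/Jeffery--Zur condition rather than a target-state condition. Also, $\pisucc$ is block diagonal in the measured workspace $w$, with $(K(w),\tau(w))$ fixed on each block. Hence, writing $\Pi_{<\ell}=\sum_{|S|<\ell}P_S$, you get $\norm{\pisucc(\identity\otimes\Pi_{<\ell})\ket{\psi_T}}^2\le\max_{K,\tau}\norm{\Pi^{\relation,m}_{K,\tau}\Pi_{<\ell}}^2$. Grouping by $u=S\cap K$ and applying Cauchy--Schwarz over the orthogonal pieces gives $\sum_{|u|<\ell}\eta^{k-|u|}$, with no $2^m$ loss.
\end{itemize}
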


\begin{theorem}
 \label{thm:selectivedp-from-additive}
    Every function satisfies a quantum strong selective direct product property; namely, 
    there are constants $\gamma >0$ and $k_0 \geq 1$ such that for all
    sufficiently small constant $\varepsilon>0$, 
    for any function $f$ and any integers $k_0\le k<m$,  
  $Q_{1-(1-\varepsilon)^{\gamma k}}(\easierdp{f}{k}{m})$
    is $\Omega( k\cdot Q_{\varepsilon}(f))$.
\end{theorem}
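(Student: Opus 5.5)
Our plan is to obtain \cref{thm:selectivedp-from-additive} by combining \cref{thm:selective-relations} with a reduction from the negative-weights adversary to the relational multiplicative adversary, following the Lee--Roland strategy. Fix $f$ and a small constant $\varepsilon$. By the tightness of the negative-weights adversary for functions~\cite{DBLP:conf/focs/LeeMRSS11}, there is a matrix $\addAdvMat$ with $\negAdvBound(f)=\Omega(Q_\varepsilon(f))$. We then convert $\addAdvMat$ into a multiplicative adversary matrix $\multAdvMat$ for $f$, viewed as a relation. Concretely, we take $\multAdvMat=\identity+\eta\,\addAdvMat/\norm{\addAdvMat}$ for a suitable small constant $\eta$, possibly after adding a multiple of the projector onto the principal eigenvector so that $\multAdvMat\succeq 0$. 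This gives three properties: the initial state has $\multAdvMat$-weight $1$; each query increases the potential by at most a factor $1+O(1/\negAdvBound(f))$; and any final state with success probability $1-\varepsilon$ must have potential above a threshold $\lambda>1$ determined by $\eta$ and $\varepsilon$.

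Step one is the single-copy statement. We show that the relational multiplicative adversary with this $\multAdvMat$ yields a bound of the form: any $T$-query algorithm succeeds with probability at most $1-\Omega(1)$ unless $T=\Omega(\negAdvBound(f))$. The quantitative form should be that success probability is bounded by roughly $\langle$weight on eigenspaces above $\lambda\rangle$, with potential at most $(1+c/\negAdvBound)^T$. We will use the paper's equivalent formulation of the progress bound, which replaces commutator-norm conditions with a max-eigenvalue condition on $\multAdvMat^{-1/2}(O_x\multAdvMat O_x^\dagger)\multAdvMat^{-1/2}$, to verify the per-query bound from $\norm{\addAdvMat\circ D_i}\le\norm{\addAdvMat}/\negAdvBound$.

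Step two invokes \cref{thm:selective-relations}. Tensoring gives $\multAdvMat^{\otimes m}$, whose progress per query is still bounded by the same factor, since each query touches one coordinate. The selective output condition requires that the final state project onto a subspace where $k$ chosen coordinates each lie in their ``high'' eigenspaces. The theorem then bounds the success probability by something like $\binom{m}{k}$-free $(1-\varepsilon)^{\gamma k}$, when $T\le c\,k\,\negAdvBound$. Rescaling constants then gives $Q_{1-(1-\varepsilon)^{\gamma k}}(\easierdp{f}{k}{m})=\Omega(k\,Q_\varepsilon(f))$.

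The main obstacle is the conversion in step one, specifically ensuring that the threshold $\lambda$ and the error $\varepsilon$ interact correctly. The difficulty is that a bad output must correspond to low $\multAdvMat$-eigenspaces with a gap independent of $f$, so that the selective theorem's constants $\gamma,k_0$ are universal. We would first try the Ambainis--Magnin--R\"otteler--Roland construction: restrict $\addAdvMat$ to its positive-part principal direction, and bound the overlap of correct-output projectors with eigenspaces above $\lambda$ via the standard output condition $\norm{\addAdvMat\circ F}$ small for $F_{xy}=[f(x)\ne f(y)]$ complement.
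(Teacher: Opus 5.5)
Your overall route is the paper's: tightness of the negative-weights adversary, an affine conversion of $\addAdvMat$ into a multiplicative adversary matrix, then the tensor-power selective theorem \cref{thm:selective-from-multiplicative-adversary} plus Yao's lemma. However, the conversion step, which you yourself flag as the main obstacle, is mis-specified and left unproved.

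\textbf{The sign of your conversion is backwards.} Since $\addAdvMat\ket{\dist}=\norm{\addAdvMat}\ket{\dist}$, your $\multAdvMat=\identity+\eta\,\addAdvMat/\norm{\addAdvMat}$ has $\ket{\dist}$ as its \emph{top} eigenvector. So $\tr[\multAdvMat\rho^t_\inputreg]$ starts at the top of the spectrum and the hypotheses of \cref{prop:mult-adv} (smallest eigenvalue $1$, attained at $\ket{\dist}$) fail. Adding a multiple of $\op{\dist}{\dist}$ only moves one direction and does not help. The directions in which every fiber $f^{-1}(y)$ has bounded overlap are the \emph{high} eigenspaces of $\addAdvMat$, and these must become the \emph{low} eigenspaces of $\multAdvMat$. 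You need $\multAdvMat=(\identity-a\addAdvMat)/(1-a)$ for a constant $a\in(0,1)$. The paper takes $a=1/2$, i.e.\ $\multAdvMat=2\identity-\addAdvMat$, with spectrum in $[1,3]$ and $\multAdvMat\ket{\dist}=\ket{\dist}$.

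\textbf{The key threshold condition is never established.} To invoke \cref{thm:selective-from-multiplicative-adversary} you must verify the relational condition $\norm{\Pi_{f^{-1}(y)}\Pi_{\multAdvMat,\le\lambda}}^2\le\eta$ for every $y$. Here $\lambda>1$ and $\eta<\min\{1-\varepsilon,\,2^{-H_2(\delta)/(7/8-\delta)}\}$ must be independent of $f$. Your plan to ``try'' the AMRR construction does not give this. AMRR only control $\tr[\Pi_{\multAdvMat,<\lambda}\rho^\odot]$ for the single coherent target state, which is weaker and does not apply to relational computation, and the selective problem is a relation. Restricting to the principal direction, or assuming ``$\norm{\addAdvMat\circ F}$ small'', is not the mechanism either.

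The missing idea is \cref{prop:lambda-thresh-trace,prop:lambda-thresh-norm}:
\begin{itemize}
\item Because $\addAdvMat_{x,x'}=0$ exactly when $f(x)=f(x')$, every state $\sigma_y$ supported on a single fiber $f^{-1}(y)$ has $\tr[\addAdvMat\sigma_y]=0$.
\item Splitting $\sigma_y$ along $\Pi_{\addAdvMat,\ge\lambda^\pm}$ and using that the spectrum lies in $[-1,1]$ gives $\tr[\Pi_{\addAdvMat,\ge\lambda^\pm}\sigma_y]\le 1/(1+\lambda^\pm)$.
\item Maximizing over \emph{all} such $\sigma_y$, not just $\rho^\odot$, turns this into $\norm{\Pi_{f^{-1}(y)}\Pi_{2\identity-\addAdvMat,\le\lambda}}^2\le 1/(3-\lambda)$.
\item With $\lambda=3-(4/(1-\varepsilon))^{1/3}$ and $\eta=((1-\varepsilon)/4)^{1/3}\approx 0.608$ at $\varepsilon=1/10$, you then check that the constant-factor loss is $\zeta(\varepsilon)<58$, and that $\delta=0.123$ satisfies $2^{-H_2(\delta)/(7/8-\delta)}\ge\eta$.
\end{itemize}

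\textbf{A smaller inaccuracy.} The selective theorem does not ask that $k$ coordinates lie in high eigenspaces. It sets $\lambda'=\lambda^{\lceil\delta k\rceil}$ and union-bounds over patterns with fewer than $\delta k$ high coordinates inside the chosen set $K$. This is exactly why an $f$-independent margin $\eta<1$ is essential.
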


Our proof of \cref{thm:selectivedp-from-additive} follows by showing that any quantum query lower bound for functions proven with the negative-weights adversary method in \cite{HLS07}, which is known to be tight for quantum query complexity \cite{DBLP:conf/focs/Reichardt09, DBLP:conf/soda/Reichardt11a, DBLP:conf/focs/LeeMRSS11},
can be converted into an asymptotically matching lower bound using the Jeffery and Zur's relational formulation of the multiplicative adversary method~\cite{DBLP:conf/icalp/JefferyZ26}.
While a similar conversion is outlined in \cite{LeeRolandSQDPT}, they only perform the reduction for zero error computation algorithms and derived a form of the multiplicative adversary method that seems fundamentally incompatible with what we need.  
Indeed, while both \cite{DBLP:conf/coco/AmbainisMRR11} and \cite{LeeRolandSQDPT} give strong direct product theorems for function computation, it is unclear whether
they can yield nice characterizations for the general computation of relations, which is essential for us since selective direct products inherently are relations with many correct answers.

We also show that lower bounds for selective direct product relations automatically imply bounds for a
slightly relaxed version, which we term the \emph{$(1-\delta)$-selective direct product
relation} associated with a relation $f$: it relaxes the condition on the output list $(i_1,y_1),\ldots,(i_k,y_k)$  produced on each input and only requires that $(x_j,y_j)\in f^{(i_j)}$ for at least $(1-\delta)\,k$ values of $j\in [k]$.
These are essentially threshold versions of our strong selective direct product properties. 

\paragraph{List-decoding direct products}

In work noted earlier, Ben-David and Blais~\cite{DBLP:conf/focs/BenDavidB25} very recently
introduced an even stronger notion than selective
direct products, \emph{list-decoding direct products}, primarily in the context of Boolean functions, but
the notion has a natural extension to more general computation.
Like selective direct products, the number of input 
coordinates $m$ can be arbitrarily large compared
with their parameter $k$ that is the analog of the number of answers that the algorithm must produce in the selective
direct product case:
Consider the single-copy function $f:
\domain\rightarrow\range$. 
In the list-decoding
direct-product problem with parameter $k$,
for each input $x\in \domain^m$, the algorithm
is allowed to produce a list of possible answers $\listofout(x)\subset \range^m$ of size at most
$\abs{\range}^{m-k}$; the algorithm is correct on input $x$ if $f^{\otimes m}(x)\in \listofout(x)$. 
Clearly, the fact that this list-decoding version is
more general than the selective one is immediate since there are at most $\abs{\range}^{m-k}$ answers that are consistent with any $k$ answers to specific questions.

Using a careful charging mechanism and appropriate
distributions, Ben-David and Blais prove a classical strong list-decoding 
direct product theorem for all Boolean functions $f$:
There is a constant factor $c>0$, such that any randomized decision-tree solving the list-decoding direct-product problem for $f$ with parameter $k$ with a number of queries that is at most $ck$ times the number
of queries required to compute $f$ with
probability at least $2/3$ can succeed with probability that is at most exponentially small in $k$.  
Of course,
this proves a classical strong selective direct-product theorem, a fact that they note using different terminology from ours. 

\paragraph{Quantum strong list-decoding direct-product theorems}
We first prove quantum strong list-decoding direct product theorems for a few specific functions from first principles.

\begin{theorem}
\label{thm:list-decoding-specific}
    There are constants $c, c''>0$ such that for every positive integer $k \leq m$,
    any quantum algorithm solving the
    list-decoding direct-product problem with parameter $k$ for either (i)
    $\search_n$ or (ii) $\OR_n$ making at most $c k\sqrt{n}$ queries, or
    (iii) $\Parity_n$ making at most $ck n$ queries has
    success probability at most $2^{-c''k}$.
\end{theorem}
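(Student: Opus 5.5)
We would prove \cref{thm:list-decoding-specific} with the same recording-query-style progress measure that gives the single-copy bounds behind \cref{thm:search-sqdpt,thm:selectiveOR,thm:selectiveparity}, rather than through the selective theorems. A list-decoding guarantee is not implied by a selective one, so we cannot simply reduce to them. The useful shape of the selective proofs is this: with $T$ queries, the final state has most of its weight on a subspace in which only a few coordinates carry real information. In the list-decoding setting we replace the step ``the algorithm is correct on its $k$ chosen coordinates'' by a step bounding the probability that $f^{\otimes m}(x)$ lands in any list of size $|\range|^{m-k}$.

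\textbf{Step 1: a distribution and a decomposition of the final state.} For $\search_n$ we take $x$ uniform over weight-one strings in each coordinate. For $\OR_n$ we take a mixture: each coordinate is $0^n$ or a uniformly random weight-one string with probability $1/2$ each. For $\Parity_n$ we take the uniform distribution. In each case we purify the input and track, per coordinate, a ``learned'' versus ``unlearned'' label, as in the recording-query analysis of the single-copy bounds. The single-copy argument gives a per-query amplitude growth of $O(1/\sqrt n)$ for search and $\OR$, and $O(1/n)$-type growth for parity. Summed over the $m$ coordinates, we aim to show that the state after $T$ queries has all but $2^{-\Omega(k)}$ of its weight, measured in amplitude after a union over subsets, on components where at most $s=\beta k$ coordinates are learned, provided $T\le ck\sqrt n$ (resp.\ $ckn$).

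This is the same combinatorial bound used in the selective theorems: there are $\binom{T}{s}$ ways to choose which queries record a learned coordinate, each choice contributing a factor $(c'/\sqrt n)^{s}$. The condition relating $\beta$ and $H_2(\beta)$ then appears exactly as in \cref{thm:search-sqdpt}.

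\textbf{Step 2: list-decoding on the good component.} Fix a component in which a set $S$ of at most $\beta k$ coordinates is learned. Conditioned on the algorithm's view, the answers in the remaining coordinates are independent and nearly uniform. For search and parity the single-copy answers are exactly uniform when a coordinate is unlearned. For $\OR$ the answer is a bit, and the unlearned branch still has posterior bounded away from $0$ and $1$ under our mixture, which must be checked. Any list $\listofout$ of size $|\range|^{m-k}$ then captures the true answer vector with probability at most $|\range|^{m-k}\cdot q^{\,m-|S|}$, where $q$ is the maximal posterior probability of a single unlearned answer. This is at most $2^{-(1-\beta)k\log(1/q)+O(1)}$, and for $\OR$ some care with $\log_2$ of the ratio is needed.

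\textbf{Step 3: combining and the main obstacle.} We turn the Step 2 bound into a bound on success probability by writing the final projector onto ``output list contains $f^{\otimes m}(x)$'' and splitting the state into good and bad components. The cross terms contribute at most the square root of the product of the two weights, so squaring gives the $2^{-c''k}$ bound.

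We expect the main obstacle to be Step 2 for a general list. The list may depend adaptively on everything measured, and the ``near-uniformity'' must hold coherently, in amplitude rather than probability, so that the counting of list elements against unlearned coordinates does not lose a factor depending on $m$. We would first try to use the exact symmetry of the recording-query decomposition: unlearned registers sit in a fixed uniform superposition independent of the work register. This makes the overlap with $\listofout$ an exact counting quantity. For $\OR$, where the answer is a bit and the prior is not uniform, the constants in the list-size comparison become delicate, and we would tune the mixture weights to keep $q\le 1/2+o(1)$.
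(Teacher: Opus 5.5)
Your Steps 1 and 3 match the paper. The high-level part $\sum_{\ell\ge\ell_0}\|\Pi_\ell\ket{\psi_T}\|$ is bounded exactly as in the selective proofs via \cref{lem:delta-bound}, which aggregates over all subsets of a given size and is already independent of $m$. The good and bad parts are then combined by the triangle inequality.

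The gap is in Step 2. The low part $\Pi_{\le\ell}\ket{\psi_T}$ is not one component with a fixed learned set $S$. It is a coherent superposition $\sum_{|S|\le\ell}\phi_S$ over all $\binom{m}{\le\ell}$ subsets $S\subseteq[m]$. Your counting correctly gives $\|\Pi_{\listofout}\phi_S\|^2\le|\range|^{|S|-k}\|\phi_S\|^2$ for each fixed $S$. However, $\Pi_{\listofout}$ destroys the orthogonality of the $\phi_S$, so combining them by the triangle inequality or Cauchy--Schwarz costs a factor $\binom{m}{\le\ell}$. That forces roughly $\ell\lesssim k/\log_{|\range|}(m/k)$, and yields only the $\log_{|\range|}(m/k)$-degraded result the paper says a union bound gives. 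In the selective case only the pattern of $V^1$'s inside the $k$ output coordinates mattered; here every coordinate is relevant.

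Your heuristic that the unlearned answers are independent and nearly uniform given the algorithm's view is exactly what fails for a superposition. Take $|\range|=2$ and $\phi=v_0^{\otimes m}+\epsilon m^{-1/2}\sum_i v_1^{(i)}\otimes v_0^{\otimes[m]\setminus\{i\}}$. The induced output weights are $2^{-m}\bigl(1+\epsilon m^{-1/2}\sum_i(-1)^{y_i}\bigr)^2$, a collective tilt built from many tiny $V^1$ pieces on different coordinates. Your proposed fix, the exact uniformity of unlearned registers, handles one $S$ at a time and does not control these cross terms.

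The missing idea is a second-moment (4-norm) bound that treats the whole low-level subspace at once. This is \cref{cor:list-dec-no-loss-key-lemma}: every $\phi\in\bigoplus_{|u|\le\ell}V^u$ satisfies $\sum_{y'\in\range^m}\|\Pi_{(f^{\otimes m})^{-1}(y')}\phi\|^4\le\max(16,3|\range|)^\ell|\range|^{-m}\|\phi\|^4$.

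The paper proves this by a Bonami-style induction over coordinates. It uses $\Pi_{V^0}\Pi_{f^{-1}(y)}\Pi_{V^0}=\Pi_{V^0}/|\range|$ (\cref{lem:bad-space-output-proj-interaction}). It also uses an asymmetric fourth-power expansion (\cref{lem:vector-sum-norm-fourth-power-ineq}) in which the cross term $\sum_y\phi_{0,y}^\dagger\phi_{1,y}=0$ drops out.

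Cauchy--Schwarz over the list then gives $\|\Pi_{\listofout}\phi\|^2\le|\listofout|^{1/2}\bigl(\sum_{y'}\|\Pi_{y'}\phi\|^4\bigr)^{1/2}\le\bigl(\max(16,3|\range|)^\ell/|\range|^k\bigr)^{1/2}$, which is independent of $m$. With $\ell_0=\lceil k/8\rceil$ this combines with \cref{lem:delta-bound} to give the theorem.

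A minor point: for $\OR_n$ no tuning of the mixture is needed. The paper's $2n$-coordinate purification makes each unlearned coordinate give each answer probability exactly $1/2$.
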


The proofs of these theorems use second-moment method inequalities for the 
probabilities of individual events that could contribute to the overall
algorithmic success.
This involves bounding 4-norm properties of the underlying amplitudes with an argument similar to Bonami's hypercontractive inequality for low degree functions~\cite{Bonami1970}.
The use of hypercontractivity in analyzing quantum complexity is not new, going back at least to to~\cite{DBLP:journals/siamcomp/GavinskyKKRW08,DBLP:conf/focs/Ben-AroyaRW08}; see~\cite{Mon12} for a decade-old survey.

A critical property of these lower bounds is that the bounds for the single copy case apply under a natural input distribution under which,
\emph{a priori}, a random guess of the output value will succeed with probability
exactly $1/\abs{\range}$.
As we discuss in detail once we have formalized the definitions, it is easy to see that strong list-decoding theorems are impossible without
this property of the input distribution, which also rules out its application to relations that are not functions on their domain $\inputspace$.

This has an interesting interaction with multiplicative adversary methods.
There is a parameter $\eta\ge 1/\abs{\range}$ where the method makes
no claim about the query cost of any quantum algorithm with success
probability at most $\eta$, corresponding to subspaces of the adversary
matrix with eigenvalue at most $\lambda$. 
The
same consideration discussed above rules out using the method to prove a list-decoding
theorem unless $\eta=1/\abs{\range}$.
We show that this is the only limitation by proving
that a list-decoding direct
product theorem holds for every function for which the relational multiplicative adversary method yields a 
single-copy bound with $\eta=1/\abs{\range}$.
The proof of this theorem relies on the same second-moment argument based on bounding
the 4-norms of amplitudes of events that could contribute to algorithmic
success.

However,
the reduction from the negative-weights adversary method to the multiplicative adversary method provided in \cite{DBLP:conf/coco/AmbainisMRR11} gives a value of $\eta > 1/\abs{\range}$ while the reduction in \cite{LeeRolandSQDPT} does not use the parameter $\eta$ at all.
Thus, in order to have any hope of proving strong list-decoding direct product theorems from negative-weights adversary arguments, a stronger reduction is required.

We prove precisely such a sharper reduction for every Boolean-valued function to the relational multiplicative adversary method.
This yields the following theorem that a strong quantum list-decoding direct product theorem holds for all Boolean-valued functions:

\begin{theorem}
    \label{thm:list-decoding-boolean-valued}
    Every Boolean output function satisfies a quantum strong list-decoding direct product property; namely there are constants $\gamma > 0$ and $k_0 \geq 1$ such that for all sufficiently small constant $\varepsilon > 0$, any function $f: \inputspace \to \{0,1\}$ where $\inputspace \subseteq \domain^n$, and any integers $k_0 \leq k < m$, $Q_{1-(1-\varepsilon)^{\gamma k}}(\easierlistdp{f}{k}{m})$ is $\Omega(k \cdot Q_\varepsilon(f))$.
\end{theorem}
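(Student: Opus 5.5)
The plan has two stages. First, reduce the negative-weights adversary bound for a Boolean $f$ to a relational multiplicative adversary whose threshold is exactly $\eta = 1/\abs{\range} = 1/2$. Second, apply the general list-decoding direct product theorem for multiplicative adversaries with no bias below the eigenvalue threshold, as announced in the introduction.

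\textbf{Stage one: the single-copy bound.} By \cite{DBLP:conf/focs/LeeMRSS11}, $Q_\varepsilon(f) = \Theta(\negAdvBound(f))$. So fix an optimal negative-weights adversary matrix $\addAdvMat$ with spectral norm $1$, principal eigenvector $\delta$, and $\max_j \|\addAdvMat \circ D_j\| \le 1/\negAdvBound(f)$.

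Following the \cite{DBLP:conf/coco/AmbainisMRR11} and \cite{LeeRolandSQDPT} template, I would form $\multAdvMat = \identity + \alpha \addAdvMat$ for a small constant $\alpha$. Since $\|\addAdvMat\|\le 1$, $\multAdvMat$ is positive definite, and its principal eigenvalue is $1+\alpha$ on $\delta$. The multiplicative progress per query is $1 + O(\alpha/\negAdvBound(f))$, because conjugation by a query operator changes $\multAdvMat$ by at most $\alpha \cdot O(\|\addAdvMat\circ D_j\|)$.

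The new ingredient for Boolean $f$ is symmetrization. Split the input distribution $\delta^2$ into its restrictions to $f^{-1}(0)$ and $f^{-1}(1)$. Reweight the starting state so that each class has mass exactly $1/2$. This is possible because an optimal adversary matrix may be taken bipartite between $f^{-1}(0)$ and $f^{-1}(1)$, and for a bipartite matrix the principal eigenvector puts equal mass on the two sides.

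With this balanced distribution, the success projector for outputting $b$ must be shown to lie, up to a small error term, inside eigenspaces of $\multAdvMat$ with eigenvalue at least $\lambda$. The success probability mass lying in eigenvalue-$<\lambda$ subspaces should then be at most $1/2$, with no additive slack. The symmetry $\delta_0 \leftrightarrow \delta_1$ is what should give exactly $1/2$ rather than $1/2 + \Omega(1)$.

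Concretely, I would split the final state into components supported on $\spn(\delta_0+\delta_1)$, on $\spn(\delta_0-\delta_1)$, and on the rest. The components along $\delta_0-\delta_1$ carry the algorithm's bias, and these have eigenvalue pushed up. The remaining components are unbiased by construction.

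\textbf{Stage two: the direct product.} Having established a single-copy relational multiplicative adversary bound with $\eta = 1/2$, threshold $\lambda = 1 + \Theta(\alpha)$, and per-query progress $1+O(\alpha/Q_\varepsilon(f))$, I would invoke the general theorem that such adversaries satisfy a strong list-decoding direct product property. That theorem tensorizes $\multAdvMat^{\otimes m}$ and bounds the success of a list of size $2^{m-k}$ by the second-moment/4-norm argument.

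With $T \le c k Q_\varepsilon(f)$ queries, the total progress is at most $(1+O(\alpha/Q))^{T} \le e^{O(c\alpha k)}$. For $k$ coordinates to sit above the threshold one needs progress $\lambda^{\Omega(k)}$. Choosing $c$ small relative to $\log\lambda/\alpha$ then gives success at most $(1-\varepsilon)^{\gamma k}$, after matching constants and using $k\ge k_0$ to absorb the lower-order factors.

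\textbf{Main obstacle.} I expect the hardest step to be obtaining exactly $\eta = 1/2$ in stage one, since the \cite{DBLP:conf/coco/AmbainisMRR11} conversion loses an additive constant there. My first attempt would be to handle bounded-error $\varepsilon$ through the error-tolerant form of the negative-weights bound, namely the $\|\addAdvMat\circ D_j\|$ constraint together with the condition $\langle \delta|\addAdvMat|\delta\rangle$ large. The proof then needs to show that the small-eigenvalue part of the $f$-balanced final state yields success at most $1/2 + O(\sqrt{\varepsilon'})$ for arbitrarily small $\varepsilon'$, and that this slack can be rescaled away by shrinking $\lambda$.
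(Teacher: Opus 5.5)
Your two-stage architecture matches the paper's: convert a negative-weights witness for Boolean $f$ into a relational multiplicative adversary with $\eta=1/2$ exactly, then apply \cref{thm:generic-list-decoding}. Stage two is fine. The gap is in stage one, which is the entire Boolean-specific content of the argument.

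\textbf{The sign and the kernel of $\addAdvMat$.} With $\multAdvMat=\identity+\alpha\addAdvMat$, the start state $\delta$ sits at the \emph{top} eigenvalue $1+\alpha$. \cref{prop:mult-adv} needs $\multAdvMat\ket{\dist}=\ket{\dist}$ at the smallest eigenvalue. The paper uses $\multAdvMat=2\identity-\addAdvMat$; any renormalized $\identity-\alpha\addAdvMat$ would also do.

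More importantly, the condition $\|\Pi_{f^{-1}(b)}\Pi_{\multAdvMat,\le\lambda}\|^2\le 1/2$ concerns the \emph{entire} low-eigenvalue subspace of $\multAdvMat$, not the start state. So reweighting $\delta$ is beside the point. Your decomposition along $\delta_0\pm\delta_1$, with ``the rest is unbiased by construction'', is false. Write $\addAdvMat=\begin{bmatrix}0&M\\M^\dagger&0\end{bmatrix}$. Its kernel contains vectors $(u,0)$ with $M^\dagger u=0$ or $(0,w)$ with $Mw=0$; these exist, e.g., whenever $|f^{-1}(0)|\neq|f^{-1}(1)|$. They lie in ``the rest'' and are entirely one-sided. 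They are harmless only because of their eigenvalue: under $2\identity-\addAdvMat$ they have eigenvalue $2$, so they are excluded for every $\lambda<2$. Under your sign they have eigenvalue $1$, so they lie below any threshold $\lambda>1$, forcing $\eta=1$.

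\textbf{The missing lemma.} What you need is the spectral fact of \cref{lem:bool-add-lambda-eta}: every vector in the span of eigenvectors of $\addAdvMat$ with eigenvalue $\ge\lambda^\pm>0$ has exactly half its squared norm on each of $f^{-1}(0)$ and $f^{-1}(1)$. This requires two things.
\begin{enumerate}
\item Each such eigenvector is balanced: $\mu\|v_0\|^2=v_0^\dagger Mv_1=\mu\|v_1\|^2$.
\item For orthogonal eigenvectors $v^i,v^j$ with positive eigenvalues, the one-sided pieces are orthogonal, since $(\lambda_i+\lambda_j)(v_0^i)^\dagger v_0^j=0$.
\end{enumerate}
Balance of individual eigenvectors alone would not control linear combinations. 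Equivalently, these eigenvectors are $(u_i,w_i)/\sqrt2$ from an SVD of $M$. The result is $\eta=1/2$ exactly for every $\lambda<2$, with no error term, so the paper can take $\lambda=19/10$ and lose only a factor $<14$.

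\textbf{Why the fallback fails.} Settling for $\eta=1/2+O(\sqrt{\varepsilon'})$ cannot work. The $4$-norm argument needs $\Pi_{V^0}\Pi_{f^{-1}(y)}\Pi_{V^0}=\Pi_{V^0}/2$ exactly. For a product of single-copy states each with bias $\varepsilon'$, one gets $\sum_{y'}\|\Pi_{y'}\phi\|^4=(1/2+2\varepsilon'^2)^m$. The Cauchy--Schwarz step then yields only $2^{-k/2}(1+4\varepsilon'^2)^{m/2}$, which is useless since $m$ is arbitrary. Driving the slack to $0$ via $\lambda\to1$ does not rescue it either. In \cref{prop:lambda-thresh-norm}, $\eta=1/(3-\lambda)$, and taking $\lambda\to1$ sends the progress target $\ln(1+(\lambda-1)\zeta)$ to $0$. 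That is exactly the limit that confined \cite{DBLP:conf/coco/AmbainisMRR11} to a direct sum result.
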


\paragraph{Application to quantum time-space tradeoff lower bounds}

One of the original motivations of Klauck, \v{S}palek, and de Wolf in~\cite{KSdW07} for proving quantum strong direct product theorems was their application to quantum time-space tradeoff lower bounds. 
By showing how to embed independent copies of an $\OR_n$ function into the problems of sorting, Boolean matrix-vector product, and Boolean matrix multiplication, they were able to prove the first quantum time-space tradeoff lower bounds.
They extended a classical method for proving time space tradeoffs introduced by Borodin and Cook~\cite{BC82} to quantum computation.
The so-called Borodin-Cook method breaks computation into short segments
on which more than $k$ outputs (where $k$ is closely related to the space bound) could only be produced with a probability that is exponentially in $k$.
The combined large query lower bound and exponential decay of
success probability given by strong direct product theorems make them a natural fit for this method.

However, there is a weakness to the lower bounds produced
by the use of traditional strong direct product theorems in this context:  The lower bound only holds if the set of
$k$ coordinates that have their output values produced in a segment is fixed, independent of the input. 
Following~\cite{bkw:qmatrix-journal}, we call algorithms like this that have fixed time-steps for the coordinates,  
\emph{output-oblivious} algorithms.
This output-oblivious restriction is natural in circuit models where one thinks of each output wire (occurring after fixed timesteps) as being associated a specific coordinate, but not if each output
consists of the pair of both the name of the coordinate
and its value or if an algorithm is allowed to dynamically choose when to produce outputs, as would be typical in RAM algorithms.
Other quantum time-space tradeoff lower bounds that
also only apply to output-oblivious algorithms are the
alternative lower bound proof for sorting~\cite{HM21} based on recording-query arguments, the sharper lower bound for Boolean matrix multiplication~\cite{bkw:qmatrix-journal} based on a better embedding of the direct product of $\OR_n$, and
the bound for solving systems of integer linear inequalities~\cite{ASdW09}
based on other strong direct product results.

Our results making quantum strong-direct product theorems selective allow us to bound the success probability of quantum query algorithms that get to choose, in an input-dependent way, which output coordinates will have their
answers produced in any given short segment.
However, this alone is not sufficient to extend the output-oblivious time-space tradeoff lower bounds for sorting
and Boolean matrix-multiplication to general algorithms
because for those problems, set of output coordinates
in a segment also determines how the direct product problem is embedded in the original function.

Nonetheless, 
We give a proof-of-concept example showing the value of selectivity in proving general quantum time-space tradeoff lower bounds:  
The \emph{matrix mapping} problem is defined for a fixed matrix $A$ and the goal is to compute the matrix product $FA\tilde F^T$ where the input consists of a pair $F,\tilde F$ of $n\times n$ function matrices. 
Abrahamson~\cite{DBLP:journals/jcss/Abrahamson91} proved a tight classical time-space tradeoff 
of $T = \Omega(n^3/S)$ for matrix mapping for matrices $A$ with distinct entries.
A nearly tight output-oblivious quantum time-space tradeoff lower bound of $T=\Omega(n^{2.5}/S)$ follows easily from strong direct product theorem for $\search_n$, but the case of general quantum algorithms
was open and does not follow from prior techniques.  
Our strong selective direct product theorem for $\search_n$ lets us derive a fully general
quantum time-space tradeoff lower bound for matrix mapping.
Using the general transformation of~\cite{DBLP:journals/toct/BeameK25}, we also immediately obtain a cumulative quantum memory lower bound of $\Omega(n^{2.5})$ for matrix mapping.

\paragraph{Organization}
\cref{sec:prelims} gives an overview of basic notation, definitions, and relevant prior work including some detailed technical and historical background on the additive and multiplicative adversary methods in \cref{sec:adv-methods}.

In \cref{sec:our-mult-adv-and-reduction}, we give a detailed proof that the relational multiplicative adversary method gives lower bounds for functions that are within a constant factor of those produced by the negative-weights adversary method.
In \cref{sec:neg-to-mult-bool} we give a sharpened version of this connection in the case of functions with Boolean-valued outputs that yields multiplicative adversaries with nicer properties.
In \cref{sec:universal-mult} we derive the resulting universality results for the relational multiplicative adversary method (\cref{thm:mult-optimal,thm:bool-mult-bound-is-tight}) from the known universality results for the negative-weights adversary method.

We prove our results for strong selective
direct product theorems and their applications in \cref{sec:strong-select}.
In \cref{sec:specific} we present strong selective direct product theorems for $\search$, $\OR$, and $\Parity$ as a warm-up to our more general bound.
These theorems are all proven directly from first principles and follow the same high-level structure we will use later.
\cref{sec:selective-from-mult-adv} details how to achieve a strong selective direct product theorem from any multiplicative adversary lower bound (\cref{thm:selective-from-multiplicative-adversary}, the precise version of our \cref{thm:selective-relations}).
We then show how, thanks to \cref{thm:mult-optimal}, this immediately gives us strong selective direct product theorems for any function in \cref{sec:strong-selective-prod-from-query-bound}.
In particular, we prove \cref{thm:selective-allfuncs-precise}, which is a precise version of \cref{thm:selectivedp-from-additive}.
Next, in \cref{sec:threshold-selective}, we show how to leverage selective direct product theorems to obtain $(1-\delta)$-selective direct product theorems.
In \cref{sec:time-space-tradeoffs} we exhibit our example of how to use selective direct product theorems to prove time-space tradeoffs in the general quantum query model.

\Cref{sec:list-dec} contains our work on list-decoding direct product theorems.
We start \cref{sec:list-dec} by formally defining list-decoding direct products and pointing out some key limitations about generic strong list-decoding direct product theorems.
Next in \cref{sec:list-specific} we present strong list-decoding direct product theorems for $\search, \OR,$ and $\Parity$ described in \cref{thm:list-decoding-specific} as a warm-up for our multiplicative adversary derived bounds.
In \cref{sec:list-dec-mult} we prove \cref{thm:generic-list-decoding}, which shows how to convert any sufficiently strong multiplicative adversary lower bound into a strong list-decoding direct product theorem.
Finally in \cref{sec:list-decoding-for-all-boolean-out} we show how this implies a strong list-decoding direct product theorem for all Boolean-valued functions (\cref{thm:list-allbool-precise}, the precise version of \cref{thm:list-decoding-boolean-valued}).

\cref{sec:single-copy-search-or} gives simplified single copy lower bounds on the problems we consider in \cref{sec:specific}, and \cref{sec:list-dec-lemma} gives a key lemma we need for our proofs in \cref{sec:list-dec}.

\section{Preliminaries}
\label{sec:prelims}

\subsection{Binomial coefficient bounds}

We use $\binom{n}{\leq k}$ to denote the sum $\sum_{i=0}^k \binom{n}{i}$. 
We use the standard definition for binary entropy $H_2(p) := p \log(1/p) + (1-p)\log(1/(1-p))$, where $\log$ is assumed to be base 2 unless otherwise indicated.
We will need the following standard upper bounds on $\binom{n}{\leq k}$.

\begin{proposition}
    \label{prop:binom-tail}
    $\binom{n}{\leq k} \leq  2^{n H_2(k/n)}$ for all $k \leq n/2$ and $\binom{n}{\leq k} \leq \bigparen{\frac{en}{k}}^k$ for all $k \leq n$.
\end{proposition}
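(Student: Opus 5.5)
Both inequalities are standard, and I will prove each by a short generating-function (Chernoff-type) argument rather than by estimating individual binomial coefficients.

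\textbf{First bound.} Fix $k \le n/2$ and set $p = k/n \le 1/2$. If $k=0$ both sides equal $1$, so assume $0<p\le 1/2$. By the binomial theorem,
\[
1 = (p + (1-p))^n = \sum_{i=0}^n \binom{n}{i} p^i (1-p)^{n-i} \;\ge\; \sum_{i=0}^k \binom{n}{i} p^i (1-p)^{n-i}.
\]
Since $p \le 1/2$, the ratio $p/(1-p)$ is at most $1$. Hence for every $i \le k$,
\[
p^i (1-p)^{n-i} = (1-p)^n \bigl(p/(1-p)\bigr)^i \;\ge\; (1-p)^n \bigl(p/(1-p)\bigr)^k = p^k (1-p)^{n-k}.
\]
Substituting this into the previous display gives $\binom{n}{\le k}\, p^k (1-p)^{n-k} \le 1$. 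Taking logarithms base $2$ with $k = pn$,
\[
\log \binom{n}{\le k} \le -pn\log p - (1-p)n \log(1-p) = n H_2(p),
\]
which is the first inequality.

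\textbf{Second bound.} Fix $1 \le k \le n$ (the case $k=0$ is trivial) and set $x = k/n \in (0,1]$. Since every term is nonnegative and $x^{i-k} \ge 1$ for $i \le k$,
\[
\binom{n}{\le k} \le \sum_{i=0}^k \binom{n}{i} x^{i-k} \le x^{-k} \sum_{i=0}^n \binom{n}{i} x^{i} = x^{-k} (1+x)^n.
\]
Using $1+x \le e^x$ gives $(1+x)^n \le e^{xn} = e^k$. Therefore
\[
\binom{n}{\le k} \le \left(\frac{n}{k}\right)^k e^k = \left(\frac{en}{k}\right)^k,
\]
which is the second inequality.

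The only point needing care is the monotonicity step in the first bound. It requires $p \le 1/2$, which is exactly why that bound is stated only for $k \le n/2$. Beyond that, both arguments are routine.
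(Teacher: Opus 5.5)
Your proof is correct, and the first bound is exactly the paper's argument: expand $1=(p+(1-p))^n$, keep the terms with $i\le k$, and use $p\le 1/2$ to lower-bound each kept term by $p^k(1-p)^{n-k}$. For the second bound the paper instead uses $\binom{n}{\ell}\le n^\ell/\ell!$ and $(n/k)^\ell\le (n/k)^k$ to get $(n/k)^k\sum_{\ell\ge 0}k^\ell/\ell!=(en/k)^k$. This is the same idea as your step of inserting the factor $(n/k)^{k-i}\ge 1$; the paper just finishes with the exponential series where you use $(1+k/n)^n\le e^k$.
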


\blank{
\begin{proof}
    For the first part of the bound, note that for $p \leq 1/2$ we have 
    \begin{align*}
        1 &\geq \sum_{i = 0}^k p^i (1-p)^{n-i} \binom{n}{k}\geq p^k (1-p)^{n-k} \binom{n}{\leq k},
    \end{align*}
    which, plugging in $p = k/n$, tells us that 
    \begin{displaymath}
        \binom{n}{\leq k} \leq (n/k)^k (n/(n-k))^{n-k} 
        = 2^{ k \log (n/k) + (n-k) \log (n/(n - k))} \\
        = 2^{n\left (\frac{k}{n}\log(\frac{n}{k}) + \frac{n-k}{n}\log (\frac{n}{n-k})\right )} = 2^{n H_2(k/n)}. 
    \end{displaymath}
    For the second bound, we have that
    \begin{align*}
        \binom{n}{\leq k} &= \sum_{\ell = 0}^k \frac{n!}{\ell! (n-\ell)!}
        \leq \sum_{i=0}^k \frac{n^\ell}{\ell!} 
        = \sum_{\ell=0}^k \frac{n^\ell k^\ell}{ k^\ell \ell!} 
        \leq \bigparen{\frac{n}{k}}^k 
        \sum_{\ell=0}^\infty \frac{k^\ell}{\ell!} 
        = \bigparen{\frac{en}{k}}^k. \qedhere
    \end{align*}
\end{proof}
}

\subsection{Quantum computation of relations}

We review standard definitions of quantum query
algorithms with phase oracles and
how these definitions apply to computing relations -- in particular, on how these definitions apply to selective direct-product relations.
We postpone discussion of list-decoding direct-product relations until \cref{sec:list-dec}.

We consider quantum query algorithms that seek to compute $k$ correct answers given $m$ copies of some target relation $\relation \subseteq \inputspace \times \range$ where $\inputspace \subseteq \domain^n$ for some finite $\domain$.
We use $d$ to denote $|\domain|$, the size of the input space in each coordinate.
For example, one of the problems we focus on is $\relation = \search_n \subseteq \bits^n \times [n]$, though we will
actually analyze its promise version on
$\inputspace\subset\bits^n$ consisting of the set of strings of Hamming weight 1, which is a function.

In general, we will have some relation-dependent hard distribution $\calD$ supported on some subset $\inputspace$ of $\domain^n$. 
For (selective) direct product theorems, we will be given $m$ inputs, each independently drawn from $\calD$.
An input $x \in \inputspace$ drawn from $\dist$ is represented by the \emph{mixed state} $\rho_{\dist} = \sum_{x \in \inputspace} \dist(x) \op{x}{x}$ stored in an input register $\inputreg$ that is not directly accessible to the algorithm.
In practice, it is convenient to consider the input to be in a \emph{purification} of this mixed state, encoding it as $\ket{\dist} = \sum_{x \in \inputspace} \sqrt{\dist(x)} \ket{x}$.
Importantly, a quantum algorithm can never distinguish between these different input models;
when considering the algorithm's state without this input register, it will always be exactly the same whether  the input is represented as $\ket{\dist}$ or $\rho_\dist$.

While quantum algorithms do not have direct access to the input register $\inputreg$, they have complete control over a workspace register $\calW$.
We will often think of $\calW$ as consisting of an index $i$ to query,\footnote{When considering a direct product problem over $m$ copies, we will often either treat $i$ as $in + j$ or simply write it as a tuple $(i,j)$, where $i\in [m]$ now denotes which copy we are querying, and $j\in[n]$ denotes the index we query within that copy.} a phase $p$ for that query, and remaining workspace $w$, which is used to remember information between queries and produce output values.
Similar to \cite{Amb02, Zha19,HM21}, we use a general (phase) oracle operator $\oracle$ that allows the quantum algorithm to interact with a purified input register initialized to the state $\ket{\calD}_\inputreg$.

We define $\oracle$ to perform the linear mapping
\begin{equation*}\oracle \ket{i,p,w}_\calW\ket{x}_\inputreg := \omega_d^{p\, x_i} \ket{i,p,w}\ket{x},\end{equation*}
where $\omega_d$ is a primitive $d$-th root of unity
and the use of $x_i$ in its exponent is interpreted via some fixed mapping identifying $\domain$ with the elements of $\mathbb Z_d$; we keep that mapping implicit to avoid notational clutter.
We will sometimes fix the index $i$ and phase $p$ and consider $\oracle_{i,p}:= \sum_x \omega_d^{px_i} \op {x}{x} $. Note that $\oracle = \bigoplus_w \bigoplus_{i}\bigoplus_{p} \oracle_{i,p}$.

We will often drop the subscripts $\calW, \calI$ for convenience when it is clear from context.
A $T$-query quantum query algorithm $\calC$ is specified using input independent unitaries $U_0, \ldots, U_T$ acting only on $\calW$.
More precisely, the joint state of the input and quantum query algorithm at the end of the computation is given by
\begin{math}\ket{\psi_T} = U_T \oracle \ldots U_1\oracle U_0 (\ket{0, 0, 0}_\calW \otimes \ket{\calD}_\inputreg)\end{math}.
Due to the deferred measurement principle, we can assume that a quantum query algorithm delays all measurements until after the final query without loss of generality.\footnote{This assumption is often taken for granted; however, when analyzing space-bounded quantum computation one has to be somewhat more careful, as delaying measurements can increase qubit costs. 
As we will see later in this section, we interact with space bounded computation in a way where the deferred measurement principle still holds when proving lower bounds.}
This means that we can assume that our quantum algorithm is always in a pure state.
We use $\ket{\psi_t}$ to denote the state of our algorithm after the $t$-th query to the input, and this state can be expressed as $\ket{\psi_t} = \sum_{i,p,w,x} \alpha_{i,p,w,x} \ket{i,p,w}\ket{x}$.
Because quantum query algorithms cannot directly interact with their input registers and oracle calls only apply phases, we note that $\ket{\psi_t} = \sum_{x} \sqrt{\dist(x)}\ket{\psi_t^x} \ket{x}$, where each $\ket{\psi_t^x}= \sum_{i,p,w} \frac{\alpha_{i,p,w,x}}{\sqrt{\dist(x)}} \ket{i,p,w}$ is the state the algorithm would be in (with a global phase) on input $\ket{x}$.

When it would otherwise be ambiguous, we use the subscript $\rho^t_\calW$ to denote the partial trace of $\ket{\psi_t}$ that traces out the input registers, and we use $\rho^t_\inputreg = \sum_{x,x'} \sqrt{\dist(x)\dist(x')}\ip{\psi_t^{x'}}{\psi_t^x}\op{x}{x'} $ to denote the partial trace of $\ket{\psi_t}$ that traces out the worktape registers.

The output of the quantum query algorithm is determined by measuring the work register of $\ket{\psi_T}$ in the standard basis and applying some input-independent post-processing function $q$ to interpret the result as an output $q(w)\in\range$.
We say that an algorithm computes $\relation$ if, on input $x$, it produces some $y\in \range$ such that $(x,y) \in \relation$.
We use $Q_{\varepsilon}(\relation)$ to denote the quantum query complexity of computing $\relation$ with probability at least $1-\varepsilon$ on every input.
We use $Q_{\varepsilon}^\dist(\relation)$ to denote the quantum query complexity of computing $\relation$ on an input $x \sim \dist$ with probability at least $1-\varepsilon$.

    Given a relation $\relation$ defined over $\inputspace\times\range$, for each $y \in \range$ we define $\Pi_{\relation^{-1}(y)}$ to be the projection on $\inputspace$ given by
    $$\Pi_{\relation^{-1}(y)}=\sum_{x\in \relation^{-1}(y)}
        \ket{x}\bra{x}.$$
We then define $\pisucc^\relation$ for a single copy of relation $\relation$ as
\begin{equation}
    \pisucc^\relation = \sum_{w} \Pi_w \otimes \Pi_{\relation^{-1}(q(w))} = \sum_{\substack{i,p,w,x \\ (x,q(w))\in\relation}} \op{i,p,w,x}{i,p,w,x}. \label{eq:pisucc-R}
\end{equation}

By definition, the selective direct product relation $\easierdp{\relation}{k}{m}$ can be viewed as a set of pairs
$(x,y)$ with $x\in \inputspace^m$ and $y=(K,\tau)$ for $K\subseteq \binom{[m]}{k}$ and $\tau\in \range^K$ such that $x_i\in \relation^{-1}(\tau_i)$
for all $i\in K$.
For any pair $(K,\tau)$ with $K\subseteq [m]$ and $\tau\in \range^K$, we define projection $\Pi^{\relation,m}_{K,\tau}$ on $\inputspace^m$ by
\begin{equation}
\Pi^{\relation,m}_{K,\tau} = \identity_{\inputspace^{[m]\setminus K}}\otimes \bigotimes_{i\in K} \Pi_{\relation^{-1}(\tau_i)}.
\end{equation}
For any algorithm for this selective direct product problem, 
the input-independent post-processing function $q$ of the worktape contents $w$ yields a pair $q(w)=(K(w),\tau(w))$
with $K(w)\in \binom{[m]}{k}$ and $\tau(w)\in \range^{K(w)}$. 

In general, we can define the projector 
\begin{equation}\label{eq:output-k}
\pisucc^{\easierdp{\relation}{k}{m}} =\sum_w \Pi_w \otimes \Pi^{R,m}_{K(w),\tau(w)} =\sum_{\substack{i,p,w,x \\ \text{s.t. } x_i \in \relation^{-1}(\tau(w)_i)\text{ for all }i\in K(w)}}\ket{i,p,w,x}\bra{i,p,w,x}.
\end{equation}
When the problem of interest is clear from context, we often drop the superscript and simply write $\pisucc$.
The probability that the circuit produces a correct output is then given by $\norm{\pisucc \ket{\psi_T}}^2$.

\subsection{Space-bounded quantum computation}

A small portion of our paper involves
space-bounded quantum query algorithms.
Our definition of these algorithms is similar to
those of~\cite{HM21,DBLP:journals/toct/BeameK25,bkw:qmatrix-journal}.
Like unrestricted quantum query algorithms, space-bounded quantum query algorithms start in the all $\ket{0}$ state and cycle between applying input queries $\oracle$, and arbitrary input-independent unitaries $U_t$.
However, in contrast with unrestricted quantum query algorithms, which could postpone all measurements to the end of the computation without loss of generality, it is important that space-bounded quantum query algorithms are allowed to make intermediate measurements on their state after applying each $U_t$, as shown in \cref{fig:quantum-circuit}.
Adopting the notation of \cite{DBLP:journals/toct/BeameK25,bkw:qmatrix-journal}, we will consider the set of consecutive $\calO$, $U_t$ and measurement as layer $L_t$.

The space of layer $L_t$ is the number of qubits that are passed from layer $L_t$ to $L_{t+1}$ and is denoted $S_t$.
We define the space of the algorithm as the maximum space of any layer and the time (query complexity) as the total number of layers.
Thus the space needed to store the input and output is not included in this model.

Intermediate measurements enable quantum query algorithms to produce parts of their output early and discard unnecessary ancillary qubits.
Therefore there are two natural models we consider in this context.
\begin{enumerate}
    \item \textbf{General Quantum Query Model:} In this model, we assume that the quantum query algorithm has a dedicated register containing a Boolean flag and a potential output $(i,y_i) \in [m] \times \range$.
    After each query $\oracle$ and subsequent unitary operation $U_t$, the flag register is measured in the standard basis.
    Should the outcome $1$ be obtained, the output register (a subset of the qubits on the work tape that can be layer dependent) is measured in the standard basis, discarded, and optionally interpreted as an output pair $(i,y_i)$ which is written to a write-only tape.
    Otherwise, the algorithm produces no output during this layer.
    After each oracle query the algorithm is allowed to introduce additional ancillary qubits initialized to $\ket{0}$.

    In this model, there is no required structure on the output order, and indeed, the output order may depend on the input.
    This model has been previously considered in \cite{HM21, DBLP:journals/toct/BeameK25,bkw:qmatrix-journal}.

    \item \textbf{Output-oblivious Model:} This is a more restricted model of computation where the choice of when to produce each output value is independent of the input.
    In this \emph{output-oblivious} model, quantum circuits do not have a flag register.
    Instead, on predefined layers the quantum circuit measures an output register in the standard basis and interprets the result as an element of $\range$ corresponding to a fixed output index.
    This output-oblivious ordering restricts the set of allowed algorithms, making it easier to prove quantum time-space tradeoffs.
    This model has been most studied in the context of time-space tradeoffs for sorting and Boolean matrix multiplication problems \cite{KSdW07, bkw:qmatrix-journal}.
\end{enumerate}

Since the work of Klauck et al.~\cite{KSdW07},
the following result of Aaronson~\cite{Aar05} has been a key tool for obtaining time-space tradeoff lower bounds for quantum computation.

\begin{proposition}[\cite{Aar05}]\label{prop:quant-union}
Let $\calC$ be a quantum circuit, $\rho$ be an $S$-qubit (possibly mixed) state, and $\maxmix$ be the $S$-qubit maximally mixed state. If $\calC$ starting in initial state $\rho$ produces some output $z$ with probability $p$, then $\calC$ starting in state $\maxmix$ will produce $z$ with probability at least $p/2^{S}$.
\end{proposition}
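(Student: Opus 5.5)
The natural route is the standard one for Aaronson's result: write $\maxmix$ as a convex combination that contains $\rho$ with weight at least $2^{-S}$.

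\textbf{Step 1: decompose $\maxmix$.} Let $\rho=\sum_j \lambda_j \op{v_j}{v_j}$ be a spectral decomposition with orthonormal eigenvectors $\ket{v_1},\ldots,\ket{v_{2^S}}$ and $\lambda_j\ge 0$, $\sum_j\lambda_j=1$. Because the maximally mixed state is the identity scaled by $2^{-S}$ in every orthonormal basis,
\begin{equation*}
\maxmix=\frac{1}{2^S}\sum_{j=1}^{2^S}\op{v_j}{v_j}.
\end{equation*}
Each $\lambda_j\le 1$, so $\frac{1}{2^S}\op{v_j}{v_j}\succeq \frac{\lambda_j}{2^S}\op{v_j}{v_j}$ for every $j$. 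Summing over $j$ gives the operator inequality $\maxmix \succeq 2^{-S}\rho$. Hence
\begin{equation*}
\sigma:=\frac{\maxmix-2^{-S}\rho}{1-2^{-S}}
\end{equation*}
is a valid density matrix, and $\maxmix = 2^{-S}\rho+(1-2^{-S})\sigma$.

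\textbf{Step 2: apply linearity.} Running $\calC$, including any intermediate measurements, and then observing output $z$ is described by a completely positive map followed by a POVM element $E_z\succeq 0$. The probability of $z$ is therefore a linear functional $\Pr[z\mid\tau]=\tr(E_z\,\Phi(\tau))$ of the initial state $\tau$. Applying this to the decomposition from Step 1,
\begin{equation*}
\Pr[z\mid\maxmix]=2^{-S}\Pr[z\mid\rho]+(1-2^{-S})\Pr[z\mid\sigma]\ \ge\ 2^{-S}p,
\end{equation*}
since $\Pr[z\mid\sigma]\ge 0$.

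The only point that needs some care is Step 2: checking that the circuit model described above really gives output probabilities that are linear in the initial state. Here "produces $z$" may mean a whole transcript of outputs obtained from intermediate flag measurements. The composition of the layers' instruments still yields a single effect $E_z$, so linearity holds. The positivity of $\sigma$ in Step 1 is immediate from $\lambda_j\le 1$.
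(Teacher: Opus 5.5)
Your proof is correct. The paper gives no proof of its own and simply cites \cite{Aar05}, and your argument is exactly the standard proof of that cited result: the operator inequality $\maxmix \succeq 2^{-S}\rho$ makes $\maxmix$ a mixture containing $\rho$ with weight $2^{-S}$, and you combine this with linearity and nonnegativity of $\tau\mapsto\tr(E_z\Phi(\tau))$. One minor point: for $S=0$ your normalization by $1-2^{-S}$ is undefined. You can avoid this by not normalizing and using directly that $\tr\bigl(E_z\Phi(\maxmix-2^{-S}\rho)\bigr)\ge 0$, since $\maxmix-2^{-S}\rho\succeq 0$.
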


\begin{corollary}
    \label{cor:quant-union-oracle-reg}
    Let $\ket{\psi_t}=\sum_x \sqrt{\calD( x)}\ket{\psi^{ x}_t}_\calW\ket { x}_\calI$ be the state of a quantum algorithm after $t$ quantum queries under input distribution $\dist$.
    If $\calC$ starting in initial state $\ket{\psi_t}$ produces output $z$ with probability $p$, then if we replace the worktape register with $\maxmix$ and feed it into $\calC$, then $z$ is produced with probability at least $2^{-S} p$.
\end{corollary}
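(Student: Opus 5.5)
The plan is to reduce the corollary directly to \cref{prop:quant-union} by viewing the purified input register $\calI$ as part of the circuit rather than part of the $S$-qubit state. Concretely, the state $\ket{\psi_t}=\sum_x \sqrt{\calD(x)}\ket{\psi^x_t}_\calW\ket{x}_\calI$ lives on $\calW\otimes\calI$, where $\calW$ is the $S$-qubit worktape. The continuation $\calC$ (unitaries $U_{t+1},\dots$, oracle calls $\oracle$ acting jointly on $\calW\otimes\calI$, intermediate and final measurements on $\calW$) is a fixed quantum circuit on $\calW\otimes\calI$. The subtlety is that in the corollary only $\calW$ is replaced by $\maxmix$, while $\calI$ must still hold the input distribution. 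So I do not apply the proposition to the joint state; I apply it to a circuit whose own first step prepares the input register.

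First, I define the circuit $\calC'$ acting on $S$ qubits. It appends a fresh register $\calI$ prepared in the mixed state $\rho_\calD=\sum_x\calD(x)\op{x}{x}$, which is a fixed, input-independent preparation in the simulation. It then runs $\calC$. Applied to an $S$-qubit state $\sigma$, $\calC'$ realizes $\calC$ on $\sigma\otimes\rho_\calD$.

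Second, I need the correct starting state $\rho$ for $\calC'$. The joint state $\ket{\psi_t}$ is generally entangled between $\calW$ and $\calI$, so it is not of the form $\sigma\otimes\rho_\calD$. To handle this, I decompose over inputs. Because oracle calls are diagonal in $x$ and no operation ever changes $\calI$, the behavior of $\calC$ on $\ket{\psi_t}$ equals the $\calD$-average over $x$ of $\calC$ on $\ket{\psi_t^x}\otimes\ket{x}$. This is the same observation the paper makes that the algorithm cannot distinguish the purified input from $\rho_\calD$; here I use that $\calI$ is effectively measured in the computational basis. Hence $p=\sum_x\calD(x)p_x$, where $p_x$ is the probability of producing $z$ from $\ket{\psi^x_t}\ket{x}$.

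Third, for each fixed $x$, I define $\calC_x$ as the $S$-qubit circuit that appends $\ket{x}$ and runs $\calC$. Applying \cref{prop:quant-union} to $\calC_x$ with $\rho=\op{\psi_t^x}{\psi_t^x}$ gives that $\calC_x$ started from $\maxmix$ produces $z$ with probability at least $p_x/2^S$. Averaging over $x\sim\calD$ then shows that running $\calC$ on $\maxmix\otimes\rho_\calD$ produces $z$ with probability at least $\sum_x\calD(x)p_x/2^S=2^{-S}p$. By the same indistinguishability, the same bound holds with $\calI$ in $\ket{\calD}$. This is exactly the claim.

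The main obstacle is the second step: justifying that the entangled purification can be replaced by the classical mixture over $x$. The required invariance is that every later operation is block-diagonal in the $\calI$ computational basis and every measured outcome depends only on $\calW$. Given that, the output distribution depends only on the block-diagonal part $\sum_x\calD(x)\op{\psi^x_t}{\psi^x_t}\otimes\op{x}{x}$ of the joint state. I would verify this in one line by noting that $\calC$ commutes with the pinching map on $\calI$.
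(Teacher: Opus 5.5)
Your argument is correct and is essentially the reasoning the paper leaves implicit: the paper states the corollary without proof, as a direct consequence of \cref{prop:quant-union} together with its earlier remark that the algorithm cannot distinguish $\ket{\dist}$ from $\rho_\dist$. That amounts to what you do: dephase $\calI$, apply the proposition to the $S$-qubit circuit for each fixed input $x$, and average over $x\sim\dist$. Your preliminary circuit $\calC'$ is never used and can be dropped, and the per-$x$ step is equivalent to the one-line operator inequality $\sum_x \dist(x)\op{\psi^x_t}{\psi^x_t}\otimes\op{x}{x} \preceq 2^S\,\maxmix\otimes\rho_\dist$.
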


Importantly our method of proving time-space tradeoffs in \cref{sec:time-space-tradeoffs}, known as the Borodin Cook method \cite{BC82}, lets us prove query lower bounds on unitary algorithms (i.e. without intermediate measurements) and apply them to our space-bounded computation model.
More specifically we prove time-space tradeoffs by stitching together unitary query algorithms that are so short that they can only produce $\tildeO(S)$ correct outputs with a probability that is exponentially small in $S$.
Thus we can use \cref{cor:quant-union-oracle-reg} to argue that even with a $2^S$ factor amplification in their success probability caused by passing information between such algorithms, they still cannot produce more than $\tildeO(S)$ correct outputs each.
As $S$ grows, the number of such programs needed to produce all outputs goes down.
Hence, this proof strategy presents a tradeoff lower bound between time and space.
Since the space bound is applied in this technique only on the boundary between short programs, we can prove that such programs are bad at producing outputs in a unitary query model using the differed measurement principle and then apply these bounds to our space-bounded model containing intermediate measurements.

\begin{figure}
    \centering
    \includegraphics[width=.65 \textwidth]{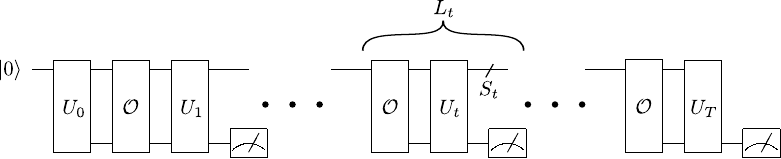}
    \caption{A general quantum query algorithm with $T$ queries.}
    \label{fig:quantum-circuit}
\end{figure}

\subsection{Quantum adversary methods}\label{sec:adv-methods}

Quantum adversary methods are a general collection of
methods that analyze quantum algorithms based on
\emph{adversary matrices} $\Gamma$ defined over the
domain $\inputspace$ that can be used to measure the
rate of progress of quantum query algorithms from
the initial state under distribution $\dist$ when the
workspace
is uncorrelated with the input to the desired states when the workspace is very highly correlated with a
desired relation or function value associated with the input.

\paragraph{Additive Adversary Method}
Beginning with the work of Ambainis~\cite{Amb02}, the \emph{additive adversary method} refers to many related quantum query lower bound proof techniques \cite{Amb02,DBLP:journals/jcss/Ambainis06,DBLP:journals/tcs/Zhang05,DBLP:conf/coco/BarnumSS03, DBLP:journals/siamcomp/LaplanteM08} based on the difficulty of distinguishing inputs that map to different outputs.
These approaches share a fundamental limitation due to a `certificate complexity barrier' preventing them from proving any lower bound stronger than $\sqrt{C_0(f)C_1(f)}$, where $C_0$ and $C_1$ are the zero and one certificate complexities of $f$ respectively \cite{DBLP:journals/tcs/Zhang05,DBLP:journals/toc/SpalekS06}.
In fact, \cite{DBLP:journals/toc/SpalekS06} showed that all of these lower bound methods are equivalent. 

The \emph{negative weights adversary method} introduced by Hoyer, Lee, and \v{S}palek~\cite{HLS07} extends these methods by using a stronger condition that measuring the final state of a quantum algorithm computing a function $f$ must actually determine the output with good probability after measurement.
They do so by extending the spectral version of the adversary method in \cite{DBLP:conf/coco/BarnumSS03} to witness matrices $\addAdvMat$ that can have negative (or even complex) entries, which turned out to be strictly more powerful for proving lower bounds.

\begin{definition}
    We use $D_i$ to denote the $\abs{\inputspace} \times \abs{\inputspace}$ matrix where $(D_{i})_{x,x'} = \indicator_{x_i \neq x'_i}$.
    We use $J$ to denote the all 1's $\abs{\inputspace} \times \abs{\inputspace}$ matrix.
\end{definition}

Here we present a re-normalized version of the negative weights adversary method in \cite{HLS07} featuring a slightly tighter dependence on $\varepsilon$ for non-Boolean outputs due to an improved analysis in subsequent work (e.g. \cite{DBLP:conf/coco/AmbainisMRR11}).

\begin{restatable}[\cite{HLS07}, normalized]{proposition}{addAdvMethod}\label{prop:add-adv-normalized}
Let $f: \inputspace \to \range$ be a function and $\dist$ be a distribution over $\inputspace \subseteq \domain^n$.
Let $\addAdvMat$ be a $\abs{\inputspace} \times \abs{\inputspace}$ Hermitian matrix such that $\addAdvMat_{x,x'} = 0$ whenever $f(x) = f(x')$, $\norm{\addAdvMat}=1$, and $\addAdvMat\ket{\dist} = \ket{\dist}$.
Let $\alg$ be a quantum algorithm that computes $f$ on distribution $\dist$ with probability at least $(1-\varepsilon)$ in $T$ queries and for $t = 0, \ldots, T$ let $\rho^t$ be the density matrix given by the state of $\alg$ after $t$ steps.
Then for $\progress_{\addAdvMat}^t = \tr[\addAdvMat\rho^t_{\inputreg}]$, where $\inputreg$ denotes the input registers of $\alg$, we have:
\begin{enumerate}
    \item\label{item:add-init} $\progress_{\addAdvMat}^0 = 1$
    \item\label{item:add-prog} $\progress_{\addAdvMat}^t - \progress_{\addAdvMat}^{t+1} \leq \addProgress(\addAdvMat) :=2 \max_i \norm{\addAdvMat \circ D_i}$
    \item\label{item:add-finish} $\progress_{\addAdvMat}^T \leq \addStop(\varepsilon)$
\end{enumerate}
Where $\addStop(\varepsilon) = 2\sqrt{\varepsilon(1-\varepsilon)} + \varepsilon$. Moreover if $\range = \{0,1\}$ then $\addStop(\varepsilon)$ can be tightened to $2\sqrt{\varepsilon(1-\varepsilon)}$.
Putting this together we can conclude that
\begin{math}\displaystyle
    T \geq \frac{1-\addStop(\varepsilon)}{ \addProgress(\addAdvMat)}.
\end{math}
\end{restatable}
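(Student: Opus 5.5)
We prove the three items in turn and then combine them; throughout, $\rho^t_\inputreg = \sum_{x,x'}\sqrt{\dist(x)\dist(x')}\ip{\psi_t^{x'}}{\psi_t^x}\op{x}{x'}$, so $\progress^t_{\addAdvMat}=\sum_{x,x'}\addAdvMat_{x',x}\sqrt{\dist(x)\dist(x')}\ip{\psi_t^{x'}}{\psi_t^x}$.

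\textbf{Initial condition.} Before any query, $\ket{\psi^x_0}=\ket{0,0,0}$ for every $x$. Hence $\rho^0_\inputreg=\op{\dist}{\dist}$, and $\progress^0_{\addAdvMat}=\bra{\dist}\addAdvMat\ket{\dist}=\ip{\dist}{\dist}=1$ because $\addAdvMat\ket\dist=\ket\dist$.

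\textbf{Progress bound.} The unitaries $U_t$ act only on $\calW$ and leave $\rho_\inputreg$ unchanged, so only the oracle matters. Decompose the state before the query by the query register: $\ket{\psi}=\sum_{i,p}\ket{i,p}\ket{\phi_{i,p}}$, and set $\sigma_{i,p}=\tr_{\calW}\op{\phi_{i,p}}{\phi_{i,p}}$, so that $\sum_{i,p}\tr\sigma_{i,p}=1$. The query maps $\rho_\inputreg=\sum_{i,p}\sigma_{i,p}$ to $\sum_{i,p}\oracle_{i,p}\sigma_{i,p}\oracle_{i,p}^\dagger$. Therefore
\begin{equation*}
\progress^t-\progress^{t+1}=\sum_{i,p}\tr\bigl[(\addAdvMat-\oracle_{i,p}^\dagger\addAdvMat\oracle_{i,p})\sigma_{i,p}\bigr].
\end{equation*}
Entrywise, $(\addAdvMat-\oracle^\dagger_{i,p}\addAdvMat\oracle_{i,p})_{x,x'}=\addAdvMat_{x,x'}(1-\omega_d^{p(x'_i-x_i)})$. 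This vanishes unless $x_i\ne x'_i$, so the matrix equals $(\addAdvMat\circ D_i)\circ M$ with $M_{x,x'}=1-\omega_d^{p(x'_i-x_i)}$. Writing $M=J-\oracle^\dagger_{i,p}J\oracle_{i,p}$ expresses the difference as $(\addAdvMat\circ D_i)-\oracle_{i,p}^\dagger(\addAdvMat\circ D_i)\oracle_{i,p}$, whose norm is at most $2\norm{\addAdvMat\circ D_i}$. Summing over $(i,p)$ with weights $\tr\sigma_{i,p}$ gives item 2.

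\textbf{Final condition.} This is the main step and follows the HLS argument. For each output $y$ let $\ket{\psi^x_{T,y}}$ be the component of $\ket{\psi_T^x}$ on worktape states $w$ with $q(w)=y$. Since $\addAdvMat_{x,x'}=0$ when $f(x)=f(x')$,
\begin{equation*}
\progress^T=\sum_{x,x'}\addAdvMat_{x,x'}\sqrt{\dist(x)\dist(x')}\sum_y\ip{\psi^{x}_{T,y}}{\psi^{x'}_{T,y}}.
\end{equation*}
For a pair with $f(x)\neq f(x')$, split each inner sum into terms with $y=f(x)$, $y=f(x')$, or neither. Define vectors $u_x=\sqrt{\dist(x)}\ket{\psi^x_{T,f(x)}}$ (the correct part) and $v_x$ (the incorrect part), so that $\sum_x\norm{u_x}^2\ge1-\varepsilon$ and $\sum_x\norm{v_x}^2\le\varepsilon$. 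Each inner product with $y=f(x)$ pairs the correct part of $x$ with an incorrect part of $x'$, and symmetrically for $y=f(x')$; the remaining terms pair incorrect with incorrect.
\begin{itemize}
\item The two cross terms are bounded using $\norm{\addAdvMat}\le1$ as a bilinear form on the stacked vectors $(u_x)$ and $(v_{x'})$, giving $2\sqrt{(1-\varepsilon)\varepsilon}$. Carrying out this Cauchy--Schwarz step with the block structure of the $y$-components is the delicate part; I would write $\progress^T=\sum_y\tr[\addAdvMat\circ G_y]$ with Gram matrices $G_y$ and apply $|\langle a,\addAdvMat b\rangle|\le\norm a\norm b$ per $y$, then Cauchy--Schwarz over $y$.
\item The incorrect--incorrect terms contribute at most $\varepsilon$.
\item In the Boolean case every $y$ equals $f(x)$ or $f(x')$, so the incorrect--incorrect terms vanish and $\addStop=2\sqrt{\varepsilon(1-\varepsilon)}$.
\end{itemize}
The statement assumes success probability $1-\varepsilon$ on average over $\dist$, so the averaged norms above suffice. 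One also needs $\varepsilon\le$ some threshold so that the optimization over how the error is distributed is maximized at the full error level; I would handle this monotonicity check explicitly.

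\textbf{Conclusion.} Telescoping item 2 over the $T$ queries gives $1-\progress^T\le T\,\addProgress(\addAdvMat)$. Combining with item 3 yields $T\ge(1-\addStop(\varepsilon))/\addProgress(\addAdvMat)$.
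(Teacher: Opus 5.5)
Your proof is correct and is the standard negative-weights argument of \cite{HLS07}, adapted to error averaged over $\dist$. The paper gives no proof of its own here: it imports the statement from \cite{HLS07}, crediting the sharper non-Boolean constant to \cite{DBLP:conf/coco/AmbainisMRR11}. Your per-query step is the same computation the paper carries out in part~2 of the proof of \cref{thm:add-to-mult}. Two details should be pinned down when you write it out.

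\textbf{The cross terms.} These are not literally a bilinear form in $(u_x)_x$ and $(v_{x'})_{x'}$, because the piece of $x'$ paired with $u_x$ is specifically its $f(x)$-component. The per-output split you suggest is what works. For each $y$, let $a^{(y)}_x=\sqrt{\dist(x)}\ket{\psi^x_{T,y}}$ when $f(x)=y$ and $b^{(y)}_x=\sqrt{\dist(x)}\ket{\psi^x_{T,y}}$ when $f(x)\neq y$, with both zero otherwise.
\begin{itemize}
\item The omitted pairs all have $\addAdvMat_{x,x'}=0$. So one cross term equals $\sum_y\langle a^{(y)},(\addAdvMat\otimes\identity)b^{(y)}\rangle$.
\item By Cauchy--Schwarz over $y$, its absolute value is at most $\sum_y\norm{a^{(y)}}\norm{b^{(y)}}\le\sqrt{P_cP_e}$. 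Here $P_c$ is the success probability and $P_e=1-P_c$ the error probability.
\item The other cross term is its complex conjugate, since $\addAdvMat$ is Hermitian.
\item The incorrect--incorrect part is $\sum_y\langle b^{(y)},(\addAdvMat\otimes\identity)b^{(y)}\rangle$, which is at most $P_e$ in absolute value.
\end{itemize}

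\textbf{The threshold.} The argument gives $\progress^T_{\addAdvMat}\le 2\sqrt{P_e(1-P_e)}+P_e$ with $P_e\le\varepsilon$.

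In the general case this yields item~3 for every $\varepsilon$. The bound is increasing for $P_e\le(5+\sqrt5)/10$. Once $\varepsilon\ge 1/5$ it is at least $1$, which already dominates $\progress^T_{\addAdvMat}$.

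In the Boolean case you need $\varepsilon\le 1/2$, and item~3 is actually false beyond that. By the argument of \cref{lem:bool-add-lambda-eta}, the condition $\addAdvMat\ket\dist=\ket\dist$ forces $\dist$ to put mass exactly $1/2$ on each output value. So a zero-query constant guess succeeds with probability $1/2\ge 1-\varepsilon$. Yet for it $\progress^T_{\addAdvMat}=\progress^0_{\addAdvMat}=1>2\sqrt{\varepsilon(1-\varepsilon)}$. This is harmless for the paper, which only applies the proposition with $\varepsilon\le 1/3$ or $\varepsilon<1/2$.
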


\begin{definition}
    We write $\negAdvBound_{\varepsilon,\dist}(f)$ for the
    largest lower bound achievable using \cref{prop:add-adv-normalized} under distribution $\dist$ and error parameter $\varepsilon$ and we write $\negAdvBound_\varepsilon(f)=\max_{\dist} \negAdvBound_{\varepsilon,\dist}(f)$.
\end{definition}

Importantly, this extension of the adversary method to negative weights actually makes the method asymptotically tight for sufficiently small error.
In \cite{DBLP:conf/focs/Reichardt09}, Reichardt shows that optimal query lower bounds based on the negative weights adversary method can be converted into quantum query algorithms using span programs with only logarithmic loss in terms of the number of queries.
The log factor was later removed by Reichardt in \cite{DBLP:conf/soda/Reichardt11a}, giving a result that is tight for Boolean functions.
This bound still contained a log factor loss for larger input or output alphabets, which was finally removed in \cite{DBLP:conf/focs/LeeMRSS11}.

\begin{restatable}{proposition}{negWeightTight}\emph{(\cite{DBLP:conf/focs/LeeMRSS11})}
    \label{prop:reichardt-adv-upperbound}
    For any function $f:\inputspace \to \range$ where $\inputspace \subseteq \domain^n$, we have that $Q_{1/10}(f) = O(\negAdvBound_{1/10}(f))$.
\end{restatable}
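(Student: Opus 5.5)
The statement is the tightness theorem of \cite{DBLP:conf/focs/LeeMRSS11}, so my plan is to reduce it to their result as usually stated and then sketch their algorithm. The first step is to relate our normalized, distribution-dependent quantity $\negAdvBound_{1/10}(f)$ to the standard general adversary bound
\[
\mathrm{ADV}^\pm(f)=\max_{\Gamma}\frac{\norm{\Gamma}}{\max_i\norm{\Gamma\circ D_i}},
\]
where $\Gamma$ ranges over nonzero Hermitian matrices with $\Gamma_{x,x'}=0$ whenever $f(x)=f(x')$. I will show $\negAdvBound_{1/10}(f)=\Omega(\mathrm{ADV}^\pm(f))$. Take an optimal $\Gamma$ and, replacing it by $-\Gamma$ if necessary, assume its top eigenvalue is $\norm{\Gamma}$. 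Let $\ket{\delta}$ be a principal eigenvector. Since $\Gamma$ is Hermitian, I will argue that $\ket\delta$ can be chosen with nonnegative real entries up to phases, after conjugating $\Gamma$ by a diagonal unitary. This conjugation preserves both the zero pattern and the norms $\norm{\Gamma\circ D_i}$.

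Next set $\dist(x)=\abs{\delta_x}^2$ and $\addAdvMat=\Gamma/\norm{\Gamma}$. Then $\norm{\addAdvMat}=1$ and $\addAdvMat\ket\dist=\ket\dist$, so \cref{prop:add-adv-normalized} gives
\[
\frac{1-\addStop(1/10)}{2\max_i\norm{\Gamma\circ D_i}/\norm{\Gamma}} .
\]
Because $\addStop(1/10)=2\sqrt{0.09}+0.1=0.7<1$, this is $\Omega(\mathrm{ADV}^\pm(f))$. So it suffices to prove $Q_{1/10}(f)=O(\mathrm{ADV}^\pm(f))$.

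For the upper bound I will use SDP duality. The dual of $\mathrm{ADV}^\pm$ (a filtered $\gamma_2$ norm) provides vectors $\ket{u_{x,j}},\ket{v_{y,j}}$ satisfying
\[
\sum_j\indicator_{x_j\neq y_j}\ip{u_{x,j}}{v_{y,j}}=\indicator_{f(x)\neq f(y)},
\qquad
\max_x\sum_j\norm{u_{x,j}}^2,\ \max_y\sum_j\norm{v_{y,j}}^2\le W,
\]
where $W=\mathrm{ADV}^\pm(f)$. From these I will build two reflections. The first is an input-independent reflection about the span of vectors formed from the $u$'s. The second is an input-dependent reflection that needs $O(1)$ queries. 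Following Reichardt and Lee et al., I will then compare two cases. On inputs with $f(x)=a$, a specific starting vector is close to a $1$-eigenvector of the product $U_x$ of the two reflections. Otherwise, by the \emph{effective spectral gap lemma}, that vector has little weight on eigenphases below $\Theta(1/W)$. Running phase estimation on $U_x$ to precision $\Theta(1/W)$ then uses $O(W)$ queries. For non-Boolean $\range$, I will follow their technique of encoding the output value in the vectors, which avoids a $\log\abs{\range}$ overhead. Finally, constant-factor amplification brings the error down to $1/10$.

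The main obstacle is the effective spectral gap lemma together with the non-Boolean output case. The lemma requires showing that the projection of $\Pi_B\ket{w}$ onto eigenphases of $(2\Pi_A-1)(2\Pi_B-1)$ smaller than $\Theta$ has norm at most $\Theta\norm{w}/2$ whenever $\Pi_A\ket w=0$. To prove it I would use Jordan's lemma on the two projections, reducing to $2\times2$ blocks where the bound is a direct trigonometric check. Making the state construction work for general $\range$ without logarithmic loss is exactly the delicate contribution of \cite{DBLP:conf/focs/LeeMRSS11}, and I would follow their construction there.
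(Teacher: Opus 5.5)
Your proposal is correct and follows the same route as the paper. The paper simply cites \cite{DBLP:conf/focs/LeeMRSS11} for $Q_{1/3}(f)=O(\mathrm{ADV}^\pm(f))$ and remarks that moving to error $1/10$ and to the normalized distributional form of $\negAdvBound$ changes things only by constant factors. Your diagonal-unitary conjugation, which gives $\negAdvBound_{1/10}(f)\ge 0.15\,\mathrm{ADV}^\pm(f)$ since $\addStop(1/10)=0.7$, together with plurality-vote amplification from $1/3$ to $1/10$, makes that remark explicit.
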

\begin{remark}
    In \cite{DBLP:conf/focs/LeeMRSS11} the authors use $Q_{1/3}$ and the zero error version of $\negAdvBound$. For error $\varepsilon = 1/10$ on both quantities, the values all only change by constant factors.
    Note that we cannot pick $\varepsilon = 1/3$ since \cref{prop:add-adv-normalized} gives a negative query lower bound for non-Boolean valued functions with that choice of error.
\end{remark}

\paragraph{Multiplicative Adversary Methods}
While the negative weights adversary method is known to be tight, it only provides meaningful bounds in the low error regime; when $\varepsilon \geq 1/5$ the method already gives a meaningless lower bound for non-Boolean outputs.
\v{S}palek~\cite{Spa08} defined a new \emph{multiplicative adversary method} that can give meaningful lower bounds for significantly larger values of $\varepsilon$.
Where the negative weights adversary method tracks the \emph{additive} change in the progress measure $\progress$ per step of the algorithm, the multiplicative method instead tracks the \emph{multiplicative} change in $\progress$ after each step.
We use $\Pi_{\Gamma, < \lambda}$ to denote the projection onto the eigenspaces of $\Gamma$ with eigenvalues less than $\lambda$ and similarly define $\Pi_{\Gamma, > \lambda}, \Pi_{\Gamma, \leq \lambda}$, and $\Pi_{\Gamma, \geq \lambda}$

The following is \v{S}palek's original formulation of the method:
\begin{proposition}[Multiplicative Adversary, \cite{Spa08}]\label{prop:spa-adv-method}
    Let $f: \inputspace \to \range$ be a function and $\dist$ be a distribution over $\inputspace \subseteq \domain^{n}$.
    Let $\Gamma$ be a $\abs{\inputspace} \times \abs{\inputspace}$ positive-definite matrix with smallest eigenvalue $1$ such that $\Gamma \ket{\dist} = \ket{\dist}$.
    Let $\lambda \in (1,\norm{\Gamma}]$ and an $\eta$ satisfy $\norm{\Pi_{f^{-1}(y)} \Pi_{\Gamma, < \lambda}}^2 \leq \eta$ for all $y \in \range$.
    Let $A$ be a quantum algorithm that computes $f$ on input distribution $\dist$ with probability at least $\eta + 4 \zeta$ in $T$ queries and for $t = 0, \ldots, T$ let $\rho^t$ be the density matrix given by the state of $A$ after $t$ steps.
    Then for $W^t = \tr[\Gamma \rho_\inputreg^t]$ where $\inputreg$ denotes the input registers of $A$, we have:
    \begin{enumerate}
        \item $W^0 = 1$
        \item $W^{t+1}/W^t \leq \max_{i,p} \norm{\oracle^\dagger_{i,p} \Gamma \oracle_{i,p} \Gamma^{-1}}$
        \item $W^T \geq \zeta^2 \lambda$.
    \end{enumerate}
    Hence, for $c = \max_{i,p} \norm{\oracle^\dagger_{i,p} \Gamma \oracle_{i,p} \Gamma^{-1}}$ we have
    \begin{displaymath}
        Q_{1-\eta - 4 \zeta}(f) \geq \max_{\Gamma, \lambda} \log_c (\zeta^2 \lambda).
    \end{displaymath}
\end{proposition}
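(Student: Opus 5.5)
We prove the three properties in order and then combine them. Throughout, write $\ket{\psi_t}=\sum_x\sqrt{\dist(x)}\ket{\psi_t^x}\ket{x}$, so that $\rho^t_\inputreg=\sum_{x,x'}\sqrt{\dist(x)\dist(x')}\ip{\psi_t^{x'}}{\psi_t^x}\op{x}{x'}$.

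\textbf{Initial value.} At $t=0$ the workspace is $\ket{0,0,0}$ for every $x$, so $\rho^0_\inputreg=\op{\dist}{\dist}$. Since $\Gamma\ket{\dist}=\ket{\dist}$, we get $W^0=\bra{\dist}\Gamma\ket{\dist}=1$.

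\textbf{Progress per query.} The unitaries $U_t$ act only on $\calW$, so they leave $\rho_\inputreg$ unchanged. Only the oracle step matters. Decompose the workspace by basis states $\ket{i,p,w}$. Then $\rho_\inputreg=\sum_{i,p,w}\sigma_{i,p,w}$ with each $\sigma_{i,p,w}\succeq 0$, and after the query $\rho'_\inputreg=\sum\oracle_{i,p}\sigma_{i,p,w}\oracle_{i,p}^\dagger$. It therefore suffices to show, for every PSD $\sigma$,
\[
\tr[\Gamma\oracle_{i,p}\sigma\oracle_{i,p}^\dagger]\le c\,\tr[\Gamma\sigma],
\]
and then sum over $(i,p,w)$. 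Since $\Gamma\succ0$, set $\tau=\Gamma^{1/2}\sigma\Gamma^{1/2}$. The left side becomes $\tr[\Gamma^{-1/2}\oracle^\dagger_{i,p}\Gamma\oracle_{i,p}\Gamma^{-1/2}\,\tau]$, which is at most $\norm{\Gamma^{-1/2}\oracle^\dagger\Gamma\oracle\Gamma^{-1/2}}\tr\tau$. This operator is similar to $\oracle^\dagger\Gamma\oracle\Gamma^{-1}$ and is PSD, so its norm equals the spectral radius of $\oracle^\dagger\Gamma\oracle\Gamma^{-1}$, which is at most $\norm{\oracle^\dagger\Gamma\oracle\Gamma^{-1}}\le c$. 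Since $\tr\tau=\tr[\Gamma\sigma]$, the claim follows.

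\textbf{Final value.} This is the main obstacle. Let $\Pi=\Pi_{\Gamma,<\lambda}$ and $\bar\Pi=I-\Pi$, and write $\ket{\psi_T}=(I\otimes\Pi)\ket{\psi_T}+(I\otimes\bar\Pi)\ket{\psi_T}=\ket{a}+\ket{b}$. Success is $\norm{\pisucc\ket{\psi_T}}^2$ with $\pisucc=\sum_w\Pi_w\otimes\Pi_{f^{-1}(q(w))}$.

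First bound the contribution of $\ket{a}$:
\[
\norm{\pisucc\ket a}^2=\sum_w\norm{(\Pi_w\otimes\Pi_{f^{-1}(q(w))}\Pi)\ket{\psi_T}}^2\le\eta\sum_w\norm{(\Pi_w\otimes I)\ket{\psi_T}}^2=\eta,
\]
using the hypothesis $\norm{\Pi_{f^{-1}(y)}\Pi}^2\le\eta$ for each fixed $y$. By the triangle inequality, $\sqrt{\eta+4\zeta}\le\norm{\pisucc\ket{\psi_T}}\le\sqrt\eta+\norm{\ket b}$. Hence $\norm{\ket b}\ge\sqrt{\eta+4\zeta}-\sqrt\eta$, so $\norm{\ket b}^2\ge(\sqrt{\eta+4\zeta}-\sqrt\eta)^2$. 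Using $\eta\le1$ and concavity of the square root, this gives $\norm{\ket b}\ge 4\zeta/(2\sqrt{\eta+4\zeta}+\dots)\ge\zeta$ after routine estimation; the loose constant $4$ is there exactly to absorb this.

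Next, $\Gamma$ commutes with $\bar\Pi$ and $\Gamma\succeq I$, so $\Gamma\succeq\lambda\bar\Pi$. Therefore $W^T=\tr[\Gamma\rho^T_\inputreg]\ge\lambda\tr[\bar\Pi\rho^T_\inputreg]=\lambda\norm{\ket b}^2\ge\lambda\zeta^2$. The work in this step is checking that the cross terms between $\ket a$ and $\ket b$ are handled by the triangle inequality rather than by expanding them.

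\textbf{Combining.} By the first two properties, $W^T\le c^T$. By the third, $W^T\ge\zeta^2\lambda$. Hence $T\ge\log_c(\zeta^2\lambda)$. Since this holds for every algorithm with success probability at least $\eta+4\zeta$, and for every admissible choice of $\Gamma$ and $\lambda$, maximizing over $\Gamma,\lambda$ gives the stated bound on $Q_{1-\eta-4\zeta}(f)$.
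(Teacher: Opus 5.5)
Your proof is correct. The paper does not prove this proposition; it quotes it from \cite{Spa08} as background. So there is no in-paper argument to compare against, but your steps are consistent with the paper's own formulation.

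In the per-query step, you actually bound the ratio by $\norm{\Gamma^{-1/2}\oracle_{i,p}^\dagger\Gamma\oracle_{i,p}\Gamma^{-1/2}}=\norm{\oracle_{i,p}^\dagger\Gamma^{1/2}\oracle_{i,p}\Gamma^{-1/2}}^2$. This is exactly the progress quantity of \cref{prop:mult-adv} (cf.\ \cref{rem:matnorm-trace-equal-cond}). Passing from it to \v{S}palek's $c$ via ``spectral radius $\le$ operator norm'' is a valid relaxation.

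In the final-value step, you could keep the contribution of the range of $\Pi_{\Gamma,<\lambda}$ (where $\Gamma\succeq\identity$) instead of discarding it. With success probability $s\ge\eta$ this gives $W^T\ge\norm{\ket a}^2+\lambda\norm{\ket b}^2=1+(\lambda-1)\norm{\ket b}^2\ge 1+(\lambda-1)(\sqrt{s}-\sqrt{\eta})^2$. That is the stronger terminal value of \cref{prop:mult-adv}, and $\zeta^2\lambda$ is a weakening of it.

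There are a few small cleanups to make:
\begin{itemize}
\item Your ``routine estimation'' should simply read $\sqrt{\eta+4\zeta}-\sqrt{\eta}=4\zeta/(\sqrt{\eta+4\zeta}+\sqrt{\eta})\ge 2\zeta$. This uses $\eta+4\zeta\le 1$ (a success probability is at most $1$), not merely $\eta\le 1$, and it tacitly assumes $\zeta>0$.
\item To pass from the distributional bound to $Q_{1-\eta-4\zeta}(f)$, say explicitly that an algorithm with worst-case error $1-\eta-4\zeta$ has average success at least $\eta+4\zeta$ under $\dist$.
\item The final step needs $c>1$ for $\log_c$ to make sense. Note that $c\ge 1$ always, by taking phase $0$, so only the degenerate case $c=1$ must be excluded.
\end{itemize}
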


Ambainis, Magnin, Roetteler, and Roland~\cite{DBLP:conf/coco/AmbainisMRR11} simplify and improve on the multiplicative adversary method in a few key ways.\footnote{they also generalize the multiplicative adversary method to apply to quantum state generation, although this direction is not relevant to our use case.}
They assume that a quantum query is always either \emph{computing} (i.e. mapping $\ket{i}\ket{0}\ket{x} \to \ket{i}\ket{x_i}\ket{x}$) or \emph{uncomputing} (i.e. mapping $\ket{i}\ket{x_i}\ket{x} \to \ket{i}\ket{0}\ket{x}$) at a given time step.
Since any (phase or xor) quantum query can be simulated by using a computing query followed by an uncomputing query, they can assume this without loss of generality by doubling the allowed number of queries.
Under this assumption, the authors are able to give a cleaner expression for the per step multiplicative progress without needing to explicitly invoke the query oracle.
They also assume without loss of generality that their quantum algorithm ends in a state containing only the final output value in one register and all other work qubits are zeroed out, which they call \emph{coherent computation}.
This lets them define a target state 
\begin{displaymath}
    \rho^\odot =  \sum_{\substack{x,x'\\f(x) = f(x')}} \sqrt{\dist(x)\dist(x')} \op{x}{x'}
\end{displaymath}
for computing a function $f: \inputspace \to \range$ corresponding to the density matrix where $f$ has been coherently computed on all inputs (under the input distribution $\dist$).
With this notion of a target state, they are able to loosen the condition on $\lambda$ and $\eta$ to $\tr[  \Pi_{\Gamma, < \lambda}\rho^\odot] \leq \eta$.
In fact, they prove that their condition on $\lambda$ and $\eta$ is implied by the condition in \cite{Spa08} that $\norm{\Pi_{f^{-1}(y)} \Pi_{\Gamma, < \lambda}}^2 \leq \eta$.
Intuitively the \cite{Spa08} version of the condition on $\lambda$ and $\eta$ says that any input state supported on eigenspaces of $\Gamma$ with eigenvalue smaller than $\lambda$ corresponds to a distribution over inputs where the maximum probability of any fixed output value is at most $\eta$.
On the other hand, the \cite{DBLP:conf/coco/AmbainisMRR11} version of the corresponding condition says that the probability that the target state $\rho^\odot$ is observed to have energy at most $\lambda$ (with respect to $\Gamma$) is at most $\eta$.
In addition to this generalization of the condition on $\lambda$ and $\eta$, \cite{DBLP:conf/coco/AmbainisMRR11} improve on the bound in part (3) of \cref{prop:spa-adv-method} by replacing $\zeta^2 \lambda$ with $1+(\lambda-1)(\sqrt{\kappa} - \sqrt{\eta})^2$ where $\kappa$ is the target success probability.
They do this by requiring $\rho_\inputreg^T$ to be sufficiently close to the target state $\rho^\odot$.
Putting these changes together, they get the following multiplicative adversary method.

\begin{proposition}[Multiplicative Adversary, \cite{DBLP:conf/coco/AmbainisMRR11}]\label{prop:AMRR-adv-method}
    Let $f:\inputspace \to \range$ be a function, and $\dist$ be any distribution over $\inputspace \subseteq \domain^n$.
    Let $\Gamma$ be a $\abs{\inputspace} \times \abs{\inputspace}$ positive-definite matrix with smallest eigenvalue $1$ such that $\Gamma \ket{\dist} = \ket{\dist}$.
    Let $\lambda \in (1,\norm{\Gamma}]$ and an $\eta$ satisfy $\tr[\Pi_{\Gamma, < \lambda} \rho^\odot] \leq \eta$.
    Let $A$ be a quantum algorithm that coherently computes $f$ on input distribution $\dist$ with probability at least $1-\varepsilon \geq \eta$ in $T$ queries and for $t = 0, \ldots, T$ let $\rho^t$ be the density matrix given by the state of $A$ after $t$ steps.
    Then for $W^t = \tr[\Gamma \rho_\inputreg^t]$ where $\inputreg$ denotes the input registers of $A$, we have:
    \begin{enumerate}
        \item $W^0 = 1$
        \item $W^{t+1}/W^t \leq \max\{\norm{(\Gamma \circ (J-D_x))^{1/2} \Gamma^{-1/2}}^2,\norm{\Gamma^{1/2}(\Gamma \circ (J-D_x))^{-1/2}}^2 : \forall x \in [n]\}$
        \item $W^T \geq 1 + (\lambda - 1)(\sqrt{1-\varepsilon} - \sqrt{\eta})^2$.
    \end{enumerate}
    Hence, for $c = \max\{\norm{(\Gamma \circ (J-D_x))^{1/2} \Gamma^{-1/2}}^2,\norm{\Gamma^{1/2}(\Gamma \circ (J-D_x))^{-1/2}}^2 : \forall x \in [n]\}$ we have
    \begin{displaymath}
        2\cdot Q^\dist_{\varepsilon}(f) \geq \log_c(1 + (\lambda - 1)(\sqrt{1-\varepsilon} - \sqrt{\eta})^2).
    \end{displaymath}
    The factor of $2$ in this bound comes from the computing query/uncomputing query assumption.
\end{proposition}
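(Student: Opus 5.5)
The plan is to establish the three items in order and then combine them, working throughout with the purified input $\ket{\dist}$ and the reduced state $\rho^t_\inputreg$. For coherent computation, assume (at the cost of doubling $T$) that each query step is either computing or uncomputing. For a computing query, the map on the input register induced by a fixed index $x$ (here $x\in[n]$ indexes a query position) acts on $\rho_\inputreg$ as a decoherence along that coordinate. After the query, the part of the state where the query register holds $\ket{x}$ with value register freshly written satisfies
\begin{displaymath}
\rho_\inputreg' = \rho_\inputreg\circ(J-D_x).
\end{displaymath}
Conversely, an uncomputing query takes a state of this block form back to an arbitrary state on that block.

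\textbf{Item 1.} This is immediate. The initial reduced state is $\op{\dist}{\dist}$ and $\Gamma\ket{\dist}=\ket{\dist}$, so $W^0=\bra{\dist}\Gamma\ket{\dist}=1$. Input-independent unitaries on $\calW$ do not change $\rho_\inputreg$, so $W$ changes only at queries.

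\textbf{Item 2.} Decompose $\rho_\inputreg=\sum_x \sigma_x$ according to the queried index; each $\sigma_x$ is PSD. For a computing step we need
\begin{displaymath}
\tr[\Gamma(\sigma_x\circ(J-D_x))]\le c\,\tr[\Gamma\sigma_x].
\end{displaymath}
Using $\tr[\Gamma(\sigma\circ M)]=\tr[(\Gamma\circ M)\sigma]$ for real symmetric $M$, this reduces to $\tr[(\Gamma\circ(J-D_x))\sigma]\le c\,\tr[\Gamma\sigma]$. That in turn follows from the operator inequality $\Gamma\circ(J-D_x)\preceq \norm{(\Gamma\circ(J-D_x))^{1/2}\Gamma^{-1/2}}^2\,\Gamma$. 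For an uncomputing step the roles swap: the state before the step is supported on block-diagonal form, and
\begin{displaymath}
\tr[\Gamma\sigma]\le \norm{\Gamma^{1/2}(\Gamma\circ(J-D_x))^{-1/2}}^2\tr[(\Gamma\circ(J-D_x))\sigma].
\end{displaymath}
Here $\Gamma\circ(J-D_x)$ is positive definite, since it is a Schur product with a PSD block matrix that keeps the diagonal. Summing over $x$ gives the ratio bound.

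\textbf{Item 3.} This is the main obstacle. Coherent success probability $1-\varepsilon$ means $F(\rho^T_\inputreg,\rho^\odot)\ge\sqrt{1-\varepsilon}$, with fidelity in the non-squared convention. I would purify both states and write $\ket{\phi}$ for a purification of $\rho^T$. Split $\ket{\phi}=P_{<\lambda}\ket{\phi}+P_{\ge\lambda}\ket{\phi}$, and note that the overlap with the target on the low-energy part is at most $\sqrt{\tr[\Pi_{\Gamma,<\lambda}\rho^\odot]}\le\sqrt\eta$. The triangle inequality then gives $\norm{\Pi_{\Gamma,\ge\lambda}\ket\phi}\ge\sqrt{1-\varepsilon}-\sqrt\eta$. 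Since $\Gamma\succeq \identity+(\lambda-1)\Pi_{\Gamma,\ge\lambda}$, we get $W^T\ge1+(\lambda-1)(\sqrt{1-\varepsilon}-\sqrt\eta)^2$. The care needed is in making the overlap bound rigorous via Uhlmann's theorem, choosing purifications so that the projector acts only on the input register.

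Finally, chaining items 1 and 2 over the $2T$ query steps gives $W^T\le c^{2T}$, and comparing with item 3 yields the stated bound on $2Q^\dist_\varepsilon(f)$.
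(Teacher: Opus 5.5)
The paper does not prove this proposition; it only quotes it from Ambainis--Magnin--Roetteler--Roland. Your argument is a correct reconstruction of the standard proof, and the three items are sound:
\begin{itemize}
\item The block update $\sigma_x\mapsto\sigma_x\circ(J-D_x)$ for a computing step is right.
\item The identity $\tr[\Gamma(\sigma\circ(J-D_x))]=\tr[(\Gamma\circ(J-D_x))\sigma]$, combined with $\Gamma\circ(J-D_x)\preceq c\,\Gamma$ for computing steps and $\Gamma\preceq c\,(\Gamma\circ(J-D_x))$ for uncomputing steps, gives the ratio bound.
\item The Uhlmann and Cauchy--Schwarz bound $\abs{\bra{\phi}(\Pi_{\Gamma,<\lambda}\otimes\identity)\ket{\chi}}\le\tr[\Pi_{\Gamma,<\lambda}\rho^\odot]^{1/2}\le\sqrt{\eta}$, followed by $\Gamma\succeq\identity+(\lambda-1)\Pi_{\Gamma,\geq\lambda}$, gives item~3.
\end{itemize}
One phrasing fix for the uncomputing step: say precisely that the pre-query block equals $\sigma_x\circ(J-D_x)$, where $\sigma_x$ is the post-query block. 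This holds exactly in the compute--use--uncompute simulation, because the ancilla contains exactly the queried input value.

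What you should make explicit is what the final inequality bounds. Item~3 rests on defining coherent success $1-\varepsilon$ as $F(\rho^T_\inputreg,\rho^\odot)\ge\sqrt{1-\varepsilon}$. So the concluding inequality bounds the query count of a coherent algorithm meeting that fidelity condition. It is not a bound on the ordinary measured-success $Q^\dist_\varepsilon(f)$ from the preliminaries, and read that way it can fail.

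There are two reasons:
\begin{itemize}
\item Measured success does not control this fidelity. Input-dependent phases on a correct output can destroy it.
\item The AMRR condition can put $\eta$ below the trivial guessing probability. Take $\inputspace=\{0,1\}$, $f(x)=x$, $\dist=(0.9,0.1)$, and $\Gamma=\op{\dist}{\dist}+\mu(\identity-\op{\dist}{\dist})$ with $1<\lambda\le\mu$. Then $\tr[\Pi_{\Gamma,<\lambda}\rho^\odot]=0.82$, yet the zero-query guess $0$ succeeds with probability $0.9$.
\end{itemize}
Converting an ordinary algorithm into a coherent one by compute--copy--uncompute costs another factor of $2$ in queries. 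It also only guarantees fidelity $1-\varepsilon$, not $\sqrt{1-\varepsilon}$. So your last step should state that $T$ is the query count of the coherent algorithm $A$. With that reading your proof is complete.
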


While this is generally stronger than the original proof in \cite{Spa08} in the circumstances envisaged in~\cite{DBLP:conf/coco/AmbainisMRR11}, there are some limitations to this version of the method.
The idea of a fixed target state is fundamentally incompatible with relational computation, as such computation must admit multiple valid output states.
Since selective direct products are relations, this version of the multiplicative adversary method cannot be used to prove the bounds we want!
Moreover, while \cite{DBLP:conf/coco/AmbainisMRR11} produce a black-box reduction from the negative weights adversary method to their version of the multiplicative adversary method, the resulting converted bound is fundamentally incompatible with their strong direct product theorem for multiplicative adversaries; the conversion requires taking the limit of $\lambda \to 1$, which results in $\kappa \to 1$.
This means that \cite{DBLP:conf/coco/AmbainisMRR11} were only able to prove a direct sum theorem for quantum query complexity.

This problem was resolved by Lee and Roland~\cite{LeeRolandSQDPT}, who presented a tighter reduction from the additive adversary method to their version of the multiplicative adversary method, which is expressed as a semi-definite program.
Both their additive and multiplicative adversary methods bound the number of queries required for $\rho_\inputreg^T$ to exactly equal some state $\sigma$ and then minimize over all such bounds for states $\sigma$ where $\calF(\sigma, \rho^{\odot}) \geq \sqrt{\kappa}$ and $\calF(A,B) := \tr[\sqrt{\sqrt{A}B\sqrt{A}}]$ denotes the fidelity between $A$ and $B$.
This gives the a lower bound on the number of queries needed to compute $f$ with probability $\kappa$ under input distribution $\dist$.

\begin{proposition}[Multiplicative Adversary, \cite{LeeRolandSQDPT,DBLP:conf/stacs/MagninR13}]\label{prop:LR-adv-method}
    Let $f: \inputspace \to \range$ be a function, and $\dist$ be any distribution over $\inputspace \subseteq \domain^n$.
    Then the number of quantum queries $T$ needed for a quantum algorithm to generate state $\sigma$ is given by:
    \begin{align*}
        2 \cdot T \geq \max_{c, \Gamma \succeq 0} & \log_c(\tr[\Gamma\sigma ])\\
        \text{s.t. } &\tr[\Gamma \op{\dist}{\dist}] =1\\
         & c^{-1} \Gamma \preceq \Gamma \circ (J-D_i) \preceq c \Gamma \text{ for all } i \in [n]
    \end{align*}
    where $\Gamma$ is maximized over $\abs{\inputspace}\times \abs{\inputspace}$ positive definite matrices.
    Again the factor of $2$ loss comes from using the computing/uncomputing oracles assumption.
\end{proposition}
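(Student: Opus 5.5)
The plan is to track the progress measure $\progress^t = \tr[\Gamma\rho^t_\inputreg]$ along the run of the algorithm. I would show that it starts at $1$, is multiplied by at most $c$ per (computing or uncomputing) query, and equals $\tr[\Gamma\sigma]$ at the end. Then $\tr[\Gamma\sigma]\le c^{2T}$, which is exactly $2T\ge\log_c\tr[\Gamma\sigma]$. This has to hold for every feasible pair $(c,\Gamma)$, so the bound holds for the maximum.

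\emph{Setup.} Following \cite{DBLP:conf/coco/AmbainisMRR11}, I would replace each phase query by a computing query $\ket{i}\ket{0}\ket{x}\mapsto\ket{i}\ket{x_i}\ket{x}$ followed by an uncomputing query $\ket{i}\ket{x_i}\ket{x}\mapsto\ket{i}\ket{0}\ket{x}$, interleaved with input-independent unitaries. This gives at most $2T$ queries and accounts for the factor $2$.

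\emph{Initial value.} The initial state is $\ket{0}_\workreg\ket{\dist}_\inputreg$, so $\rho^0_\inputreg=\op{\dist}{\dist}$ and $\progress^0=\tr[\Gamma\op{\dist}{\dist}]=1$ by the first constraint.

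\emph{Unitaries.} An input-independent unitary $U_t$ acts only on $\workreg$, so it leaves $\rho_\inputreg$ and hence $\progress$ unchanged.

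\emph{Computing query.} Write the pre-query state as $\sum_i\ket{i}\ket{\phi_i}$, where $\ket{\phi_i}$ lives on the remaining work registers and $\inputreg$. Let $\rho_i$ be the reduced input density of the $i$-th branch, so that $\rho_\inputreg=\sum_i\rho_i$ with each $\rho_i\succeq0$. After the computing query the branch $i$ carries $\ket{x_i}$ in a fresh register. Tracing it out kills exactly the entries $(x,x')$ with $x_i\ne x'_i$, so the new reduced state is $\sum_i\rho_i\circ(J-D_i)$. Since $J-D_i$ is real symmetric, $\tr[\Gamma(\rho_i\circ(J-D_i))]=\tr[(\Gamma\circ(J-D_i))\rho_i]$. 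The constraint $\Gamma\circ(J-D_i)\preceq c\Gamma$ then gives $\tr[(\Gamma\circ(J-D_i))\rho_i]\le c\,\tr[\Gamma\rho_i]$, and summing over $i$ gives $\progress^{t+1}\le c\,\progress^t$.

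\emph{Uncomputing query.} This is the step I expect to be the main obstacle, because the pre-query state need not have the special form produced by a computing query, so the previous identity does not apply directly. I would argue via invertibility. Let $\ket{\phi}$ be the pre-query state and $\ket{\phi'}=\mathcal{U}\ket{\phi}$ the post-query state, where $\mathcal{U}$ is the uncomputing map. Applying the computing oracle to $\ket{\phi'}$ returns $\ket{\phi}$, on the relevant subspace where the answer register holds the queried value.

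Decompose $\ket{\phi'}=\sum_i\ket{i}\ket{\phi'_i}$ and let $\rho'_i$ be the reduced input density of branch $i$. Applying the computing-query identity to $\ket{\phi'}$ shows that the pre-query reduced state is $\sum_i\rho'_i\circ(J-D_i)$. The lower constraint $c^{-1}\Gamma\preceq\Gamma\circ(J-D_i)$ then gives
\[
\progress^t=\sum_i\tr[(\Gamma\circ(J-D_i))\rho'_i]\ge c^{-1}\sum_i\tr[\Gamma\rho'_i]=c^{-1}\progress^{t+1}.
\]
The care needed here is to verify that the answer register outside the consistent subspace contributes nothing. Such components are untouched, or at least mapped isometrically and block-diagonally in $i$ and in the register value. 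For those one writes the state before and after in the same Hadamard-product form and applies the same pair of inequalities.

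\emph{Conclusion.} Chaining these steps, $\tr[\Gamma\sigma]=\progress^{2T}\le c^{2T}\progress^0=c^{2T}$, which gives the claimed bound.
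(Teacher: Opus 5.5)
Your argument is correct and is essentially the standard proof from \cite{DBLP:conf/coco/AmbainisMRR11,LeeRolandSQDPT}, which the paper cites rather than reproves. A computing query sends $\rho_\inputreg=\sum_i\rho_i$ to $\sum_i\rho_i\circ(J-D_i)$, an uncomputing query is its inverse, and the upper and lower halves of $c^{-1}\Gamma\preceq\Gamma\circ(J-D_i)\preceq c\Gamma$ each cap the growth of $\tr[\Gamma\rho_\inputreg]$ at a factor $c$ in the two cases. The extra care you anticipate for components outside the consistent subspace is unnecessary, and your sketched justification would not hold for, e.g., the XOR extension of the oracle, where cross terms between different answer-register values destroy the Hadamard-product form. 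In the compute--phase--uncompute simulation the answer register is exactly $\ket{0}$ before each computing query and exactly $\ket{x_i}$ before each uncomputing query, because the intermediate phase gate is diagonal in that register, so no such components ever arise.
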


Lee and Roland prove a tighter reduction from the additive to multiplicative adversary method than that found in \cite{DBLP:conf/coco/AmbainisMRR11}, which they also show is compatible with their strong direct product theorem.
Unlike in \cite{DBLP:conf/coco/AmbainisMRR11}, the strong direct product theorem in \cite{LeeRolandSQDPT} is proven by showing that $T = \alpha m\cdot \madv_0(f)$ queries on input distribution $\dist$ are needed to enter a state $\rho_\inputreg^T$ such that $\calF(\rho_\inputreg^T, \rho^{\odot}) \geq \kappa^{k/2}$.
While this approach works for the function case, it does not clearly work for relational computation, as there is no single target state (or even a finite set of target states) analogous to $\rho^{\odot}$.

Jeffery and Zur~\cite{DBLP:conf/icalp/JefferyZ26} presented a version of the multiplicative adversary method that can be seen as a hybrid between those in \cite{Spa08} and \cite{DBLP:conf/coco/AmbainisMRR11}.
Their version uses the condition on $\lambda$ and $\eta$ from \cite{Spa08} with the stronger terminal value from \cite{DBLP:conf/coco/AmbainisMRR11}.
Importantly, they do this so that their multiplicative adversary method can be applied to relational computation (which they refer to as search).

\begin{proposition}[Multiplicative Adversary, \cite{DBLP:conf/icalp/JefferyZ26}]\label{prop:mult-adv}
    Let $\relation \subseteq \inputspace \times \range$ be a relation, $\dist$ be any distribution over $\inputspace \subseteq \domain^n$.
    Let $\Gamma$ be a $\abs{\inputspace} \times \abs{\inputspace}$ positive-definite matrix with smallest eigenvalue $1$ such that $\Gamma \ket{\dist} = \ket{\dist}$.
    Let $\lambda \in (1,\norm{\Gamma}]$ and an $\eta$ satisfy\footnote{In \cite{DBLP:conf/icalp/JefferyZ26} they use $\Pi_{\Gamma, <\lambda}$ instead of $\Pi_{\Gamma, \leq \lambda}$. Since $\Gamma$ has a finite number of eigenvalues, these requirements on $\lambda,\eta$ are interchangeable.} $\norm{\Pi_{\relation^{-1}(y)} \Pi_{\Gamma, \leq \lambda}}^2 \leq \eta$ for all $y \in \range$.
    Let $A$ be a quantum algorithm that computes $\relation$ on input distribution $\dist$ with probability at least $1-\varepsilon \geq \eta$ in $T$ queries and for $t = 0, \ldots, T$ let $\rho^t$ be the density matrix given by the state of $A$ after $t$ steps.
    Then for $W^t = \tr[\Gamma \rho_\inputreg^t]$ where $\inputreg$ denotes the input registers of $A$, we have:
    \begin{enumerate}
        \item $W^0 = 1$
        \item $W^{t+1}/W^t \leq \max_{i,p} \norm{\oracle^\dagger_{i,p} \Gamma^{1/2} \oracle_{i,p} \Gamma^{-1/2}}^2 = \max_{i,p,\rho} \tr[\multAdvMat \oracle_{i,p}^\dagger\, \rho\, \oracle_{i,p}]/\tr[\multAdvMat \rho]$
        \item $W^T \geq 1 + (\lambda-1)(\sqrt{1-\varepsilon} - \sqrt{\eta})^2$.
    \end{enumerate}
    Hence, for $c=\max_{i,p,\rho} \frac{\tr[\multAdvMat \oracle_{i,p}^\dagger \,\rho\, \oracle_{i,p}]}{\tr[\multAdvMat \rho]} =\max_{i,p} \norm{\oracle^\dagger_{i,p} \Gamma^{1/2} \oracle_{i,p} \Gamma^{-1/2}}^2$ we have 
    \begin{equation*}
    Q_{\varepsilon}^\dist(\relation)\ge \log_c (1+(\lambda-1) (\sqrt{1-\varepsilon} - \sqrt{\eta})^2).
    \end{equation*}
\end{proposition}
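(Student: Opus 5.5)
The plan is to verify the three properties in turn, working with the purified state $\ket{\psi_t}=\sum_x\sqrt{\dist(x)}\ket{\psi^x_t}\ket{x}$ and its reduced input state $\rho^t_\inputreg$.

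\textbf{Property 1.} Before any query the workspace is $\ket{0,0,0}$, uncorrelated with the input. Hence $\rho^0_\inputreg=\op{\dist}{\dist}$, and since $\Gamma\ket{\dist}=\ket{\dist}$ we get $W^0=\bra{\dist}\Gamma\ket{\dist}=1$.

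\textbf{Property 2.} The unitaries $U_t$ act only on $\calW$, so they leave $\rho_\inputreg$ unchanged; only the oracle can change $W$. Decompose the pre-query state by the query basis states $\ket{i,p,w}$ of $\calW$. Then $\rho_\inputreg=\sum_{i,p,w}\sigma_{i,p,w}$, where each $\sigma_{i,p,w}$ is the positive semidefinite input-side block conditioned on $\ket{i,p,w}$. The cross terms between distinct $(i,p,w)$ vanish under the partial trace over $\calW$, since $\oracle$ is block diagonal in $(i,p,w)$. After the query the reduced state is $\sum\oracle_{i,p}\sigma_{i,p,w}\oracle_{i,p}^\dagger$. (Whether one writes $\oracle$ or $\oracle^\dagger$ here is immaterial, since $p\mapsto -p$ swaps them.) Each summand satisfies
\[\tr[\Gamma\oracle_{i,p}\sigma\oracle_{i,p}^\dagger]\le c\,\tr[\Gamma\sigma].\]
Summing gives $W^{t+1}\le cW^t$, which is the mediant inequality. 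For the identity between the two expressions for $c$, substitute $\tau=\Gamma^{1/2}\sigma\Gamma^{1/2}$. Then the ratio is
\[\tr[\Gamma^{-1/2}\oracle^\dagger\Gamma\oracle\Gamma^{-1/2}\tau]/\tr[\tau],\]
and its maximum over PSD $\tau$ is $\norm{\Gamma^{1/2}\oracle\Gamma^{-1/2}}^2$. This is the same as $\norm{\oracle^\dagger\Gamma^{1/2}\oracle\Gamma^{-1/2}}^2$ because $\oracle$ is unitary.

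\textbf{Property 3.} This is the main obstacle. Let $\Pi=\Pi_{\Gamma,\le\lambda}$. Because the smallest eigenvalue of $\Gamma$ is $1$,
\[W^T\ge \tr[\rho_\inputreg]+(\lambda-1)\tr[(I-\Pi)\rho_\inputreg]=1+(\lambda-1)(1-\tr[\Pi\rho^T_\inputreg]).\]
It therefore suffices to show $\tr[\Pi\rho^T_\inputreg]\le 1-(\sqrt{1-\varepsilon}-\sqrt\eta)^2$. To do this, measure the worktape in the standard basis. Write $\ket{\psi_T}=\sum_w\ket{w}\ket{\phi_w}$ with unnormalized input vectors $\ket{\phi_w}$ and $p_w=\norm{\phi_w}^2$. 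Set $s_w=\norm{\Pi_{\relation^{-1}(q(w))}\phi_w}^2/p_w$, so that the success probability is $\sum_wp_ws_w\ge1-\varepsilon$, and set $e_w=\norm{\Pi\phi_w}^2/p_w$. The hypothesis $\norm{\Pi_{\relation^{-1}(y)}\Pi}^2\le\eta$ says the largest principal angle between the ranges of $\Pi$ and $\Pi_{\relation^{-1}(y)}$ has cosine at most $\sqrt\eta$. The standard two-subspace geometric lemma then gives, for each unit vector,
\[\sqrt{s_w}\le\sqrt{e_w}\sqrt\eta+\sqrt{1-e_w}.\]
Equivalently, $\sqrt{1-e_w}\ge\sqrt{s_w}-\sqrt{\eta}$ whenever $s_w\ge\eta$. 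I would prove this by decomposing $\phi_w$ into its components in $\mathrm{range}(\Pi)$ and its complement, and applying the triangle inequality.

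Finally average with weights $p_w$. By concavity of the square root, $\sum p_w\sqrt{1-e_w}\le\sqrt{1-\tr[\Pi\rho]}$ and $\sum p_w\sqrt{s_w}\le\sqrt{1-\varepsilon}$, but these two bounds point in opposite directions. The fix is to use Cauchy–Schwarz on the per-$w$ inequality instead. Write it as
\[\sqrt{s_w}\le(\sqrt{e_w},\sqrt{1-e_w})\cdot(\sqrt\eta,1).\]
Viewed as vectors in $\ell_2$ weighted by $p_w$, Cauchy–Schwarz gives
\[\sqrt{1-\varepsilon}\le\sqrt{\textstyle\sum p_ws_w}\le\sqrt{\eta}\sqrt{e}+\sqrt{1-e},\qquad e=\tr[\Pi\rho^T_\inputreg],\]
via a Minkowski-type bound. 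Solving this for $1-e$, using $1-\varepsilon\ge\eta$, yields $\sqrt{1-e}\ge\sqrt{1-\varepsilon}-\sqrt\eta$, which is what was needed. The query lower bound then follows by combining the three properties: $c^T\ge W^T\ge 1+(\lambda-1)(\sqrt{1-\varepsilon}-\sqrt\eta)^2$.
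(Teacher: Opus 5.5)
Your proof is correct. The paper never proves this proposition; it imports it from Jeffery and Zur. The only piece the paper verifies is the equality of the two formulas for $c$, in \cref{rem:matnorm-trace-equal-cond}. There it uses a Rayleigh-quotient argument over pure states with the substitution $\ket{\phi}=\Gamma^{-1/2}\ket{\psi}$, which is the same change of variables as your $\tau=\Gamma^{1/2}\sigma\Gamma^{1/2}$, written for density matrices. Your other steps therefore give a self-contained proof that the paper leaves out:
\begin{itemize}
\item the block decomposition by $(i,p,w)$ plus the mediant inequality for Property 2;
\item the operator bound $\Gamma\succeq\identity+(\lambda-1)(\identity-\Pi)$ with $\Pi=\Pi_{\Gamma,\le\lambda}$, reducing Property 3 to an upper bound on $e=\tr[\Pi\rho^T_\inputreg]$.
\end{itemize}

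The averaging step in Property 3 can be done more directly on the whole state, as the paper does for its concrete bounds in \cref{sec:specific} and \eqref{eqn:good-bad-split}:
\begin{equation*}
\norm{\pisucc\ket{\psi_T}}\le\norm{\pisucc(\identity\otimes\Pi)\ket{\psi_T}}+\norm{(\identity\otimes(\identity-\Pi))\ket{\psi_T}}\le\sqrt{\eta e}+\sqrt{1-e}.
\end{equation*}
The first term is bounded because $\pisucc$ is block diagonal in $w$ and $\norm{\Pi_{\relation^{-1}(q(w))}\Pi}^2\le\eta$. This is exactly what your per-$w$ triangle inequality plus weighted Minkowski computes (equivalently, Cauchy--Schwarz on the cross term). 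The direct version avoids the normalizations $s_w,e_w$ and the detour through concavity.

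Two cosmetic slips. First, in the concavity remark you discard, Jensen gives $\sum_w p_w\sqrt{s_w}\le\sqrt{\sum_w p_w s_w}$, not $\le\sqrt{1-\varepsilon}$, since the success probability is at least $1-\varepsilon$. Second, $\norm{\Pi_{\relation^{-1}(y)}\Pi}$ is the cosine of the smallest principal angle (the largest cosine), not the largest angle. Neither affects the argument you actually use.
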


\begin{remark}\label{rem:matnorm-trace-equal-cond}
The expression $\max_{i,p,\rho} \tr[\multAdvMat \oracle_{i,p}^\dagger\, \rho\, \oracle_{i,p}]/\tr[\multAdvMat \rho]$ for the per step progress does not appear in \cite{DBLP:conf/icalp/JefferyZ26}.
The equivalence of these quantities follows from the following:
\begin{align*}
    \max_{i,p} \norm{\oracle_{i,p}^\dagger \multAdvMat^{1/2} \oracle_{i,p} \multAdvMat^{-1/2}}^2 &= \max_{i,p} \norm{\multAdvMat^{1/2} \oracle_{i,p} \multAdvMat^{-1/2}}^2 \qquad \textrm{since $\oracle^\dagger_{i,p}$ is unitary}\\
    &= \max_{i,p}\norm{(\multAdvMat^{1/2} \oracle_{i,p} \multAdvMat^{-1/2})^\dagger \multAdvMat^{1/2} \oracle_{i,p} \multAdvMat^{-1/2}} \qquad \textrm{since $\norm{M}^2 = \norm{M^\dagger M}$}\\
    &= \max_{i,p}\norm{\multAdvMat^{-1/2} \oracle_{i,p}^\dagger \multAdvMat \oracle_{i,p} \multAdvMat^{-1/2}}\\
    &= \max_{i,p,\ket{\psi}} \bra{\psi}\multAdvMat^{-1/2} \oracle_{i,p}^\dagger \multAdvMat \oracle_{i,p} \multAdvMat^{-1/2} \ket{\psi} / \ip{\psi}{\psi},\\
    \noalign{\textrm{since the largest Rayleigh quotient of Hermitian $M$ is the largest eigenvalue of $M$}}
    &= \max_{i,p,\ket{\phi}} \bra{\phi} \oracle_{i,p}^\dagger \multAdvMat \oracle_{i,p}  \ket{\phi} / \bra{\phi}\multAdvMat\ket{\phi} \qquad \textrm{setting $\ket{\phi} = \multAdvMat^{-1/2}\ket{\psi}$}\\
    &= \max_{i,p,\ket{\phi}} \tr[\multAdvMat \oracle_{i,p}  \op{\phi}{\phi} \oracle_{i,p}^\dagger] / \tr[\multAdvMat\op{\phi}{\phi}]\qquad \textrm{by the cyclic property of trace}\\
    &= \max_{i,p,\rho} \tr[\multAdvMat \oracle_{i,p}  \rho \oracle_{i,p}^\dagger] / \tr[\multAdvMat \rho] \qquad \textrm{maximized by pure $\rho$}\\
    &= \max_{i,p,\rho} \tr[\multAdvMat \oracle^\dagger_{i,p}  \rho \oracle_{i,p}] / \tr[\multAdvMat \rho] \qquad \textrm{since $\oracle^\dagger_{i,-p} = \oracle_{i,p}$}
\end{align*}
\end{remark}

\cref{prop:mult-adv} immediately yields the following multiplicative adversary lower bound for quantum query complexity:

\begin{restatable}[Multiplicative Adversary Bound, \cite{DBLP:conf/icalp/JefferyZ26}]{proposition}{multiplicative}
\label{prop:mult-adv-query-bound}
   Let $\relation \subseteq \inputspace \times \range$ be a relation and $\dist$ be a distribution over $\inputspace\subseteq \domain^n$.
    Let $\Gamma$ be a $\domainsize \times \domainsize$ positive-definite matrix with smallest eigenvalue $1$ such that $\Gamma \ket{\dist} = \ket{\dist}$.
    Let $\lambda \in (1,\norm{\Gamma}]$ and an $\eta > 0$ satisfy $\norm{\Pi_{\relation^{-1}(y)} \Pi_{\Gamma,\leq \lambda}}^2 \leq \eta$ for all $y \in \range$.
     Then, for $c=\max_{i,p,\rho} \frac{\tr[\multAdvMat \oracle_{i,p}^\dagger \,\rho\, \oracle_{i,p}]}{\tr[\multAdvMat \rho]} =\max_{i,p} \norm{\oracle^\dagger_{i,p} \Gamma^{1/2} \oracle_{i,p} \Gamma^{-1/2}}^2$ and for $\varepsilon < 1-\eta$
    we have 
    \begin{equation*}
    Q_{\varepsilon}^\dist(\relation)\ge \log_c (1+(\lambda-1) (\sqrt{1-\varepsilon} - \sqrt{\eta})^2).
    \end{equation*}
\end{restatable}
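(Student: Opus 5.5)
The plan is to derive \cref{prop:mult-adv-query-bound} from \cref{prop:mult-adv} by a counting argument on the progress measure $W^t = \tr[\Gamma\rho^t_\inputreg]$. I will also sketch why the three properties hold, so the bound does not rest only on citing \cite{DBLP:conf/icalp/JefferyZ26}.

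Fix an algorithm $A$ that makes $T$ queries and computes $\relation$ on $\dist$ with success probability at least $1-\varepsilon>\eta$.

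\textbf{Initial value.} With the workspace in $\ket{0}$ and the input register in $\ket{\dist}$, we have $\rho^0_\inputreg=\op{\dist}{\dist}$. Since $\Gamma\ket{\dist}=\ket{\dist}$, it follows that $W^0=1$.

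\textbf{Per-step progress.} Write $\ket{\psi_t}=\sum_{i,p,w}\ket{i,p,w}\otimes\ket{v_{i,p,w}}$. The unitaries $U_t$ act only on $\calW$, so they leave $\rho_\inputreg$ unchanged. The oracle maps each $\ket{v_{i,p,w}}$ to $\oracle_{i,p}\ket{v_{i,p,w}}$. Hence
\[
W^{t+1}=\sum_{i,p,w}\bra{v_{i,p,w}}\oracle^\dagger_{i,p}\Gamma\oracle_{i,p}\ket{v_{i,p,w}}\le c\sum_{i,p,w}\bra{v_{i,p,w}}\Gamma\ket{v_{i,p,w}}=c\,W^t.
\]
The inequality uses the Rayleigh-quotient form of $c$ from \cref{rem:matnorm-trace-equal-cond}, together with the fact that $\oracle_{i,-p}=\oracle_{i,p}^\dagger$ makes the two orderings equivalent. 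Iterating gives $W^T\le c^T$.

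\textbf{Final value.} This is the main step. Let $P=\Pi_{\Gamma,>\lambda}$. Because $\Gamma\succeq \identity$, and $\Gamma\succeq\lambda$ on the range of $P$, we have $\Gamma\succeq \identity+(\lambda-1)P$. Therefore $W^T\ge 1+(\lambda-1)\tr[P\rho^T_\inputreg]$, and it suffices to show
\[
\tr[P\rho^T_\inputreg]\ge(\sqrt{1-\varepsilon}-\sqrt\eta)^2 .
\]
Write the final state as $\sum_w\ket{w}\ket{u_w}$, so that $\sum_w\norm{\Pi_{\relation^{-1}(q(w))}u_w}^2\ge 1-\varepsilon$. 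For each $w$, decompose $u_w=Pu_w+(I-P)u_w$ and use the triangle inequality:
\[
\norm{\Pi_{\relation^{-1}(q(w))}u_w}\le\norm{Pu_w}+\norm{\Pi_{\relation^{-1}(q(w))}\Pi_{\Gamma,\le\lambda}}\,\norm{u_w}\le\norm{Pu_w}+\sqrt\eta\,\norm{u_w}.
\]
Squaring, summing over $w$, and applying Cauchy--Schwarz with $\sum_w\norm{u_w}^2=1$ gives
\[
\sqrt{1-\varepsilon}\le\Big(\sum_w\norm{Pu_w}^2\Big)^{1/2}+\sqrt\eta .
\]
Since $\sum_w\norm{Pu_w}^2=\tr[P\rho^T_\inputreg]$ and $\sqrt{1-\varepsilon}>\sqrt\eta$, we obtain the required lower bound on $\tr[P\rho^T_\inputreg]$.

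\textbf{Conclusion.} Combining the steps,
\[
c^T\ge W^T\ge 1+(\lambda-1)(\sqrt{1-\varepsilon}-\sqrt\eta)^2 .
\]
Taking $\log_c$ (note $c>1$ whenever a nontrivial bound arises) gives $T\ge\log_c\bigl(1+(\lambda-1)(\sqrt{1-\varepsilon}-\sqrt\eta)^2\bigr)$. This holds for every algorithm achieving error $\varepsilon$ on $\dist$, so it bounds $Q^\dist_\varepsilon(\relation)$.

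The only delicate point is the final-value step. There one must handle the input-dependent projector $\Pi_{\relation^{-1}(q(w))}$ separately for each $w$, which is why the hypothesis $\norm{\Pi_{\relation^{-1}(y)}\Pi_{\Gamma,\le\lambda}}^2\le\eta$ is required uniformly over all $y\in\range$.
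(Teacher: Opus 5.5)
Your proposal is correct and follows the paper's route. The paper obtains this bound by noting that the three properties of \cref{prop:mult-adv} ($W^0=1$, growth by at most a factor $c$ per query, and $W^T\ge 1+(\lambda-1)(\sqrt{1-\varepsilon}-\sqrt{\eta})^2$) combine to give $c^T\ge W^T$. You additionally prove those properties instead of importing them from Jeffery and Zur, and your arguments are sound: the orthogonal decomposition over $\ket{i,p,w}$ with $\oracle_{i,-p}=\oracle_{i,p}^\dagger$, the operator inequality $\Gamma\succeq \identity+(\lambda-1)\Pi_{\Gamma,>\lambda}$, and the per-$w$ triangle inequality followed by Minkowski/Cauchy--Schwarz.
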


\begin{definition}
    We use $\madv^\dist_{\varepsilon,\lambda,\eta}(\relation)$ for $\eta<1-\varepsilon$ to represent the largest lower bound on $Q_{\varepsilon}^\dist(\relation)$ obtainable using \cref{prop:mult-adv-query-bound} over all valid $\Gamma$ with these parameters.
\end{definition}

Note that \cite{DBLP:conf/icalp/JefferyZ26} does not prove a reduction from the negative weights adversary method to their version of the multiplicative adversary method nor that their method satisfies a strong direct product property.
This is the version of the multiplicative adversary method that we will be using in this paper, and we show that it satisfies both of these properties.

\section{The universal application of the relational multiplicative adversary method to functions}\label{sec:our-mult-adv-and-reduction}
In this section,
we reprove and extend results in \cite{DBLP:conf/coco/AmbainisMRR11,LeeRolandSQDPT} showing that negative-weights adversary matrices can be converted into multiplicative adversary matrices that give asymptotically matching query lower bounds.
Unlike previous variants, we show that this conversion produces adversary matrices that 
we can use to yield strong direct-product lower bounds for relations.
As we show in \cref{thm:selective-from-multiplicative-adversary}, we can also use them to yield the stronger property of strong selective direct product lower bounds.

Since the negative-weights adversary method is known to be asymptotically tight for query complexity of
functions~\cite{DBLP:conf/focs/Reichardt09,DBLP:conf/soda/Reichardt11a,DBLP:conf/focs/LeeMRSS11}, this conversion will imply that quantum query complexity satisfies a strong selective direct product theorem
for all functions, as we show in~\cref{sec:strong-selective-prod-from-query-bound}.

\begin{theorem}\label{thm:add-to-mult}
    Let $f: \inputspace \to \range$ be a function and $\dist$ be a distribution over $\inputspace \subseteq \domain^n$.
    Suppose that $\addAdvMat$ is a witness to \cref{prop:add-adv-normalized} for computing $f$ on $\dist$ with error $\varepsilon\in [0,1/3]$ that gives a query lower bound of $T^\pm$.
    Let $\multAdvMat = 2 \identity - \addAdvMat$ and define progress measure $\progress^t = \tr[\multAdvMat \rho_\inputreg^t]$.
    Then $\multAdvMat$ is a witness to the multiplicative adversary method given in \cref{prop:mult-adv} with $\lambda = 3 - (\frac{4}{1-\varepsilon})^{1/3}$, $\eta = (\frac{1-\varepsilon}{4})^{1/3}$ for computing $f$ on $\dist$.
    Moreover,
    \begin{enumerate}
        \item\label{item:mult-conv-init} $\progress^0 = 2-\tr[\addAdvMat \rho_\inputreg^0] =1$,
        \item\label{item:mult-conv-prog} $\ln(\progress^{t+1}/\progress^t) \leq \max_{i,p,\rho} \ln \left(\frac{\tr[\multAdvMat \oracle_{i,p}^\dagger\,\rho\, \oracle_{i,p}]}{\tr[\multAdvMat \rho]}\right) \leq  2 \max_i \norm{\addAdvMat \circ D_i}$, and
        \item\label{item:mult-conv-finish} $\ln(1+(\lambda-1)(\sqrt{1-\varepsilon} - \sqrt{\eta})^2) = (1-2\sqrt{\varepsilon(1-\varepsilon)})/\zeta(\varepsilon)$ for
        $\zeta(\varepsilon)=\frac{1-2\sqrt{\varepsilon(1-\varepsilon)}}{\ln \left( 1- (1-(2(1-\varepsilon))^{1/3})^3\right)}<58.$
    \end{enumerate}
    Thus, $\multAdvMat$ gives a query lower bound for
    computing $f$ on distribution $\dist$ with error
    at most $\varepsilon$ of
    \begin{displaymath}
        T \geq \frac{\ln(1 + (\lambda-1)(\sqrt{1-\varepsilon} - \sqrt{\eta})^2)}{\ln (\progress^{t+1}/\progress^t)} \geq \frac{1-2\sqrt{\varepsilon(1-\varepsilon)}}{2\zeta(\varepsilon) \max_{i}\norm{\addAdvMat \circ D_i}} \geq T^\pm/\zeta(\varepsilon)\ge T^\pm / 58
    \end{displaymath}
    via that multiplicative adversary method.
\end{theorem}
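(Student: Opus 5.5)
The plan is to verify the three hypotheses of \cref{prop:mult-adv} for $\multAdvMat = 2\identity - \addAdvMat$ one at a time, and then do the arithmetic relating the two bounds. The pieces are the spectral properties of $\multAdvMat$, a Hadamard-product bound on the one-query change of the potential, and an eigenspace argument for the $(\lambda,\eta)$ condition.

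\textbf{Spectral properties and initial value.} Since $\norm{\addAdvMat}=1$ and $\addAdvMat$ is Hermitian, the eigenvalues of $\multAdvMat$ lie in $[1,3]$. So $\multAdvMat$ is positive definite with smallest eigenvalue at least $1$. Because $\addAdvMat\ket{\dist}=\ket{\dist}$, we get $\multAdvMat\ket{\dist}=\ket{\dist}$, so the smallest eigenvalue is exactly $1$ and $\progress^0=\bra{\dist}\multAdvMat\ket{\dist}=1$. This gives item~\ref{item:mult-conv-init}.

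\textbf{Progress bound.} By \cref{rem:matnorm-trace-equal-cond} it suffices to bound $\tr[\multAdvMat\oracle_{i,p}^\dagger\rho\oracle_{i,p}]/\tr[\multAdvMat\rho]$. Entrywise, $(\oracle_{i,p}\multAdvMat\oracle_{i,p}^\dagger)_{x,x'} = \multAdvMat_{x,x'}\,\omega_d^{p(x_i-x'_i)}$. Hence
\[
\oracle_{i,p}\multAdvMat\oracle_{i,p}^\dagger-\multAdvMat = -\bigl(\addAdvMat\circ D_i\bigr)\circ(uu^\dagger - J), \qquad u_x=\omega_d^{px_i},
\]
where we used that entries with $x_i=x'_i$ cancel, so only the $D_i$ part survives. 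Taking a Hadamard product with the rank-one matrix $uu^\dagger$ is conjugation by the diagonal unitary $\mathrm{diag}(u)$, which preserves the norm. Therefore this difference has norm at most $2\norm{\addAdvMat\circ D_i}$.

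It follows that
\[
\tr[\multAdvMat\oracle^\dagger\rho\oracle]\le \tr[\multAdvMat\rho]+2\norm{\addAdvMat\circ D_i}\tr[\rho]\le (1+2\norm{\addAdvMat\circ D_i})\tr[\multAdvMat\rho],
\]
using $\multAdvMat\succeq\identity$ in the second step. Applying $\ln(1+x)\le x$ gives item~\ref{item:mult-conv-prog}.

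\textbf{The $(\lambda,\eta)$ condition.} We have $\Pi_{\multAdvMat,\le\lambda}=\Pi_{\addAdvMat,\ge\mu}$ with $\mu=2-\lambda=(4/(1-\varepsilon))^{1/3}-1$. Take a unit vector $v$ in this eigenspace and split it as $v=\Pi_{f^{-1}(y)}v+w$, with $a=\norm{\Pi_{f^{-1}(y)}v}^2$.

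Because $\addAdvMat$ vanishes on the diagonal block $f^{-1}(y)\times f^{-1}(y)$,
\[
\mu\le\bra v\addAdvMat\ket v=\bra w\addAdvMat\ket w+2\,\mathrm{Re}\bra{\Pi_{f^{-1}(y)} v}\addAdvMat\ket w\le (1-a)+2\sqrt{a(1-a)}.
\]
This inequality cuts out the allowed range of $a$. The claim to check is that its upper endpoint is at most $\eta=((1-\varepsilon)/4)^{1/3}$, i.e.\ that $\lambda,\eta$ were chosen so that this quadratic inequality (in $\sqrt a$) is tight at $a=\eta$.

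\textbf{Main obstacle.} I expect this algebraic verification to be the main obstacle. If the crude bound above is not tight enough, I would sharpen it by using the block-off-diagonal structure more carefully, for example by also using that $\bra w\addAdvMat\ket w$ is itself controlled by cross terms. I would also confirm that $\lambda>1$ and $\eta<1-\varepsilon$ hold on $\varepsilon\in[0,1/3]$.

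\textbf{Terminal value and final bound.} Substituting $\lambda-1=2-(4/(1-\varepsilon))^{1/3}$ and $\eta$ into $1+(\lambda-1)(\sqrt{1-\varepsilon}-\sqrt\eta)^2$ should simplify to $1-(1-(2(1-\varepsilon))^{1/3})^3$. Item~\ref{item:mult-conv-finish} is then just the definition of $\zeta(\varepsilon)$. The bound $\zeta<58$ follows from a one-variable monotonicity or numerical check on $[0,1/3]$.

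Finally, \cref{prop:add-adv-normalized} gives $T^\pm=(1-\addStop(\varepsilon))/(2\max_i\norm{\addAdvMat\circ D_i})$, and $1-\addStop(\varepsilon)\le 1-2\sqrt{\varepsilon(1-\varepsilon)}$. Dividing the terminal bound by the per-step progress bound therefore yields the chain of inequalities ending in $T^\pm/58$.
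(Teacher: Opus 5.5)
Your spectral-property step, the progress bound, the terminal-value algebra and the final chain of inequalities all work. In particular, bounding $\norm{\oracle_{i,p}\multAdvMat\oracle_{i,p}^\dagger-\multAdvMat}\le 2\norm{\addAdvMat\circ D_i}$ is just the dual of the paper's trace-norm argument. The genuine gap is the $(\lambda,\eta)$ condition, and it is not a matter of algebra: the inequality $\mu\le(1-a)+2\sqrt{a(1-a)}$ is too weak to give $\eta$.

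\begin{itemize}
\item At $\varepsilon=0$ you have $\mu=4^{1/3}-1\approx 0.587$. Your inequality then permits $a$ up to about $0.93$, whereas $\eta=4^{-1/3}\approx 0.63$.
\item Your proposed remedy of using the block structure more carefully cannot close this while you only use $\bra{v}\addAdvMat\ket{v}\ge\mu$. Take the two-input identity function with $\addAdvMat=\begin{pmatrix}0&1\\1&0\end{pmatrix}$ and uniform $\dist$. Here $\bra{w}\addAdvMat\ket{w}=0$ already, yet $v=(\cos\theta,\sin\theta)$ has Rayleigh quotient $\sin 2\theta$. So the condition "Rayleigh quotient at least $\mu$" allows $a=\cos^2\theta$ up to $(1+\sqrt{1-\mu^2})/2\approx 0.9$.
\item The true value is $1/2$, because the spectral subspace is spanned by $(1,1)/\sqrt2$. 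Replacing "$v$ lies in a spectral subspace" by "$v$ has large Rayleigh quotient" throws away exactly the information you need.
\end{itemize}

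The fix, which is \cref{prop:lambda-thresh-trace,prop:lambda-thresh-norm} in the paper, is to swap the roles of the two projectors. Let $P=\Pi_{\addAdvMat,\ge\mu}=\Pi_{\multAdvMat,\le\lambda}$.
\begin{itemize}
\item Since $\norm{\Pi_{f^{-1}(y)}P}=\norm{P\,\Pi_{f^{-1}(y)}}$, it suffices to bound $\norm{P\phi}^2$ for unit vectors $\phi$ supported on $f^{-1}(y)$.
\item For such $\phi$, the block structure gives $\bra{\phi}\addAdvMat\ket{\phi}=0$.
\item Because $P$ commutes with $\addAdvMat$, the cross terms between $P\phi$ and $(\identity-P)\phi$ vanish. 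Hence $0=\bra{\phi}\addAdvMat\ket{\phi}\ge \mu\norm{P\phi}^2-\bigl(1-\norm{P\phi}^2\bigr)$.
\item This gives $\norm{P\phi}^2\le 1/(1+\mu)=1/(3-\lambda)=\eta$.
\end{itemize}
In your decomposition, by contrast, you split along $\Pi_{f^{-1}(y)}$, which does not commute with $\addAdvMat$. The cross term $2\,\mathrm{Re}\bra{\Pi_{f^{-1}(y)}v}\addAdvMat\ket{w}$ therefore survives and can only be bounded by Cauchy--Schwarz, which is where the loss comes from.
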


Before diving into the proof of this lemma, we provide a road-map to the argument.
Similar to the approach in \cite{LeeRolandSQDPT}, this transformation involves replacing the negative-weights adversary matrix $\addAdvMat$ with the multiplicative adversary matrix $\multAdvMat = 2 \identity - \addAdvMat$.
The key technical challenges in doing so are showing that
\begin{description}
    \item
[(i)] $\multAdvMat$ satisfies the $(\eta, \lambda)$ condition of the relational  multiplicative adversary method and \item[(ii)] When $\multAdvMat$ is used as a witness in that method, the query lower bound that it derives is at most a constant factor worse than the query lower bound given by \cref{prop:add-adv-normalized} with witness $\addAdvMat$.
\end{description}

Part (i) has to address the issue that, unlike the negative-weights adversary method, the multiplicative adversary method requires specifying a pair $(\lambda, \eta)$ such that for all $y \in \range$, $\norm{\Pi_{f^{-1}(y)} \Pi_{\multAdvMat, \leq \lambda}}^2 \leq \eta$.
The version of the multiplicative adversary method used in \cite{LeeRolandSQDPT} does not feature such a pair and the version in \cite{DBLP:conf/coco/AmbainisMRR11} uses the weaker condition of $\tr[\Pi_{\multAdvMat, <\lambda} \rho^\odot] \leq \eta$ for target state $\rho^\odot$ on the pair than \cref{prop:mult-adv} requires.
Thus we must generalize the method for extracting $\lambda$ and $\eta$ from a negative-weights adversary matrix found in \cite{DBLP:conf/coco/AmbainisMRR11} to the stronger condition on these parameters used in \cref{prop:mult-adv}.
\cref{prop:lambda-thresh-trace} generalizes\footnote{This is a generalization since $\rho^\odot$ can be expressed as a convex combination of these $\sigma_y$.} the trace condition in \cite{DBLP:conf/coco/AmbainisMRR11} that applies to all input states $\sigma_y$ with amplitude only on inputs $x \in f^{-1}(y)$.
More specifically, $\tr[\Pi_{\addAdvMat, \geq \lambda^\pm} \sigma_y] = \tr[\Pi_{\multAdvMat, \leq \lambda} \sigma_y] \leq \frac{1}{3-\lambda}$.
Next we can show that \cref{prop:lambda-thresh-trace} actually implies the definitions of $\lambda$ and $\eta$ from the relational multiplicative adversary in \cref{prop:lambda-thresh-norm}.

Once we have established a pair $(\lambda, \eta)$ for $\multAdvMat$, part (ii)  shows that $\multAdvMat$ is a witness for \cref{prop:mult-adv} that gives a query bound that is at most a factor of $58$ times worse than what would be given by \cref{prop:add-adv-normalized} with witness $\addAdvMat$.
The negative-weights adversary method starts with a progress value of 1 and decrements it by at most $2 \max_i \norm{\addAdvMat \circ D_i}$ per step to\footnote{We use the special case negative-weights lower bound that only applies to Boolean output functions here since it is stronger than the general case.} $2 \sqrt{\varepsilon(1-\varepsilon)}$ while the multiplicative adversary method starts with a progress of 1 and increases it by a multiplicative factor of at most $\frac{\tr[\multAdvMat \oracle_{i,p}^\dagger \,\rho\, \oracle_{i,p}]}{\tr[\multAdvMat \rho]}$ per step to $1+(\lambda-1)(\sqrt{1-\varepsilon} - \sqrt{\eta})^2$.
This means we want to compare the multiplicative query lower bound of
\begin{displaymath}
    \frac{\ln(1+(\lambda-1)(\sqrt{1-\varepsilon} - \sqrt{\eta})^2)}{\ln(\tr[\multAdvMat \oracle_{i,p}^\dagger\, \rho\, \oracle_{i,p}]/\tr[\multAdvMat \rho])}
\end{displaymath}
to the negative-weights query lower bound of 
\begin{displaymath}
    \frac{1-2\sqrt{\varepsilon(1-\varepsilon)}}{2 \max_i \norm{\addAdvMat \circ D_i}},
\end{displaymath}
which is the lower bound in the Boolean case and is at least the lower bound for general $\range$.
We do so by individually bounding $\ln(1+(\lambda-1)(\sqrt{1-\varepsilon} - \sqrt{\eta})) \geq (1-\sqrt{\varepsilon(1-\varepsilon)}) / 58$ and $\ln(\tr[\multAdvMat \oracle_{i,p}^\dagger\, \rho\, \oracle_{i,p}]/\tr[\multAdvMat \rho]) \leq 2 \max_i \norm{\addAdvMat \circ D_i}$.
As one might expect, the main source of the loss in this conversion comes from the introduction of $\lambda$ and $\eta$ when converting from an additive matrix to a multiplicative one.
There is a tradeoff between the allowed values for these parameters in our \cref{prop:lambda-thresh-trace} that can effect the quality of our resulting multiplicative adversary bound.
We use the same choices of $\lambda$ and $\eta$ as in \cite{DBLP:conf/coco/AmbainisMRR11}, which provide the locally optimal constant factor loss in our final bound.

\paragraph{Part (i): Extracting $\lambda$ and $\eta$ from $\addAdvMat$}

We follow some of the ideas of the multiplicative adversary bound in \cite{DBLP:conf/coco/AmbainisMRR11} though that uses a different but related definition of the pair $(\lambda, \eta)$ to the one we use in \cref{prop:mult-adv}.\footnote{For coherently computing a function $f:\inputspace \to \range$, Ambainis et al.~\cite{DBLP:conf/coco/AmbainisMRR11} define the target state:
$\rho^{\odot}=\frac{1}{\abs{\inputspace}}\sum_{\substack{x,x'\\f(x) = f(x')}} \op{x}{x'}$
and then require that $\multAdvMat, \lambda, \eta$ in their multiplicative adversary method together satisfy $\tr[\Pi_{\multAdvMat, \geq \lambda} \rho^{\odot}] \leq \eta.$
They then prove that \eqref{eq:spalek-condition} implies this condition. 
}

In order to convert a negative-weights adversary bound into a multiplicative one, with $\lambda$ and $\eta$
which mimics those of \cite{Spa08,DBLP:conf/icalp/JefferyZ26} by requiring that for every $y \in \range$
\begin{equation}
    \norm{\Pi_{f^{-1}(y)} \Pi_{\multAdvMat, \leq \lambda}}^2 \leq \eta, \label{eq:spalek-condition}
\end{equation}
we define
$$\fixedtargets_y := \curly{ \op{\psi_y}{\psi_y}\ :\ \ \ket{\psi_y} = \sum_{x : f(x) = y} \beta_x \ket {x}}$$
and further define $\fixedtargets_f = \bigcup_{y \in \range} \fixedtargets_y$. 
To prove the property~\ref{eq:spalek-condition} for our future choice of $\Gamma = 2\identity - \addAdvMat$, we prove the following property of $\Gamma^\pm$ on $\fixedtargets_f$.
This is a strengthening of Lemma 18 in \cite{DBLP:journals/corr/abs-1012-2112} (the full arXiv version of \cite{DBLP:conf/coco/AmbainisMRR11}).



\begin{lemma}\label{prop:lambda-thresh-trace}
    Let $\addAdvMat$ be any witness to \cref{prop:add-adv-normalized} for computing $f$ under distribution $\dist$ and let $-1 < \lambda^\pm < 1$.
    Let $\sigma_y \in \fixedtargets_{f}$.
    Then $
        \tr[\Pi_{\addAdvMat, \geq \lambda^\pm} \sigma_y] \leq 1/(1+\lambda^\pm)
    $.
\end{lemma}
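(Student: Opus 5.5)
The plan is to exploit the single structural fact that distinguishes a negative-weights witness: $\addAdvMat_{x,x'}=0$ whenever $f(x)=f(x')$. Consequently $\addAdvMat$ has zero expectation on any state supported inside a single fiber $f^{-1}(y)$. Combining this with the spectral bound $\addAdvMat \succeq -\identity$, which follows from $\norm{\addAdvMat}=1$, gives the bound by a one-line averaging argument. We take $\sigma_y=\op{\psi_y}{\psi_y}$ with $\ket{\psi_y}=\sum_{x:f(x)=y}\beta_x\ket{x}$ normalized, so that $\sigma_y$ is a density matrix. Without normalization the claimed bound would have to be scaled by $\ip{\psi_y}{\psi_y}$, and the argument below goes through verbatim with that factor.

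First we compute $\tr[\addAdvMat\sigma_y]=\bra{\psi_y}\addAdvMat\ket{\psi_y}=\sum_{x,x'\in f^{-1}(y)}\overline{\beta_x}\beta_{x'}\addAdvMat_{x,x'}=0$, since every entry appearing in this sum vanishes. The condition $\addAdvMat\ket{\dist}=\ket{\dist}$ is not needed here.

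Next we split $\ket{\psi_y}=\ket{a}+\ket{b}$, where $\ket{a}=\Pi_{\addAdvMat,\geq\lambda^\pm}\ket{\psi_y}$ and $\ket{b}=\Pi_{\addAdvMat,<\lambda^\pm}\ket{\psi_y}$. Let $p=\norm{\ket a}^2=\tr[\Pi_{\addAdvMat,\geq\lambda^\pm}\sigma_y]$, so that $\norm{\ket b}^2=1-p$. Because $\addAdvMat$ is Hermitian, these two spectral subspaces are orthogonal and $\addAdvMat$-invariant, so the cross terms $\bra{a}\addAdvMat\ket{b}$ and $\bra{b}\addAdvMat\ket{a}$ vanish. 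Therefore
\begin{displaymath}
0=\bra{a}\addAdvMat\ket{a}+\bra{b}\addAdvMat\ket{b}\;\geq\;\lambda^\pm p-(1-p).
\end{displaymath}
Here the first term is at least $\lambda^\pm p$ because $\ket{a}$ lies in the span of eigenvectors with eigenvalue at least $\lambda^\pm$. The second term is at least $-(1-p)$ because every eigenvalue of $\addAdvMat$ is at least $-\norm{\addAdvMat}=-1$.

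Rearranging gives $p(1+\lambda^\pm)\le 1$. Since $\lambda^\pm>-1$, we may divide by $1+\lambda^\pm$ to get $p\le 1/(1+\lambda^\pm)$, which is the claim. No step is a real obstacle. The only points requiring care are, first, confirming that the fiber-diagonal zero pattern applies to all pairs within $f^{-1}(y)$, including the diagonal $x=x'$, which the witness definition guarantees since $f(x)=f(x)$. Second, one must use the crude lower bound $-1$ on the low part rather than $\lambda^\pm$. The translation $\tr[\Pi_{\addAdvMat,\geq\lambda^\pm}\sigma_y]=\tr[\Pi_{\multAdvMat,\leq 2-\lambda^\pm}\sigma_y]$ for $\multAdvMat=2\identity-\addAdvMat$ is immediate and belongs to the later step.
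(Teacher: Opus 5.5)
Your proposal is correct and follows essentially the same argument as the paper: the zero pattern of $\Gamma^{\pm}$ on fibers gives $\tr[\Gamma^{\pm}\sigma_y]=0$, and splitting $\ket{\psi_y}$ along $\Pi_{\Gamma^{\pm},\geq\lambda^{\pm}}$, with the bounds $\lambda^{\pm}$ on the high part and $-1$ on the low part, yields $p(1+\lambda^{\pm})\le 1$. Your explicit normalization of $\ket{\psi_y}$ is an assumption the paper's proof also relies on, though it leaves it implicit.
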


\begin{proof}
By definition $\norm{\Gamma^\pm}=1$.
    Let matrix $F$ be given by $F_{x,x'} = \indicator_{f(x) \neq f(x')}$;  write $\sigma_y = \op{\phi_y}{\phi_y}$ and $\pibad = \Pi_{\addAdvMat, \geq \lambda^\pm}$.
    Since $\addAdvMat_{x,x'} = 0$ on all entries where $f(x) = f(x')$ and $(\sigma_y)_{x,x'} = 0$ on all entries where $f(x) \neq f(x')$, we have that:
    \begin{displaymath}
        \tr[\addAdvMat \sigma_y] = \tr[(\addAdvMat \circ F) \sigma_y] = \tr[\addAdvMat (\sigma_y \circ F)] = 0
    \end{displaymath}

    Thus, using the fact that all eigenvalues of $\addAdvMat$ are at least $-1$,
    \begin{align*}
        0 = \tr[\addAdvMat \sigma_y] & =  \tr[\addAdvMat \op{\phi_y}{\phi_y}]\\
        &= \tr[\addAdvMat (\pibad \op{\phi_y}{\phi_y} \pibad)] + \tr[\addAdvMat ((\identity-\pibad) \op{\phi_y}{\phi_y} (\identity - \pibad))]\\
        &=\bra{\phi_y} \pibad \addAdvMat \pibad \ket{\phi_y} + \bra{\phi_y} (\identity -\pibad) \addAdvMat (\identity - \pibad) \ket{\phi_y}\\
        &\geq \lambda^\pm \bra{\phi_y} \pibad \ket{\phi_y} - \bra{\phi_y} (\identity - \pibad) \ket{\phi_y}\\
        &= (\lambda^\pm + 1)\tr[\pibad \op{\phi_y}{\phi_y}] - 1.
    \end{align*}
    Rearranging this gives
    \begin{math}
        \tr[\Pi_{\addAdvMat, \geq \lambda^\pm} \sigma_y] = \tr[\pibad \op{\phi_y}{\phi_y}] \leq 1/(1+\lambda^\pm)
    \end{math}
    as desired.
\end{proof}

We now show how \cref{prop:lambda-thresh-trace} implies the stronger  condition \eqref{eq:spalek-condition} on $(\lambda,\eta)$ required for our \cref{prop:mult-adv}.
As we will see later in this section, we are interested in choosing a pair $(\lambda,\eta)$ for the matrix $\multAdvMat = 2\identity- \addAdvMat$.

\begin{lemma}\label{prop:lambda-thresh-norm}
    Let $\addAdvMat$ be any witness to \cref{prop:add-adv-normalized} for computing $f$ under distribution $\dist$ and let $1 < \lambda < 3$.
    Let $\multAdvMat = 2\identity-\addAdvMat$.
    Then
    $
        \forall y \in \range, \norm{\Pi_{f^{-1}(y)} \Pi_{\multAdvMat, \leq \lambda}}^2 \leq 1/(3-\lambda)
    $.
\end{lemma}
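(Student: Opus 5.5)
The plan is to reduce the norm condition to the trace condition of \cref{prop:lambda-thresh-trace}, using the relation between the two spectral projections and a variational characterization of the operator norm.

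First, relate the eigenspaces. Since $\multAdvMat = 2\identity - \addAdvMat$, the two matrices share eigenvectors, and an eigenvalue $\mu$ of $\addAdvMat$ corresponds to the eigenvalue $2-\mu$ of $\multAdvMat$. Hence $\Pi_{\multAdvMat,\leq\lambda} = \Pi_{\addAdvMat,\geq \lambda^\pm}$ with $\lambda^\pm = 2-\lambda$. The hypothesis $1<\lambda<3$ gives exactly $-1<\lambda^\pm<1$, which is the range allowed in \cref{prop:lambda-thresh-trace}. Moreover $1/(1+\lambda^\pm) = 1/(3-\lambda)$, so the target bound is precisely the one that lemma supplies.

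Next, rewrite the norm as a maximum over admissible states. Fix $y\in\range$ and set $P=\Pi_{f^{-1}(y)}$ and $Q=\Pi_{\multAdvMat,\leq\lambda}$. Then
\[
\norm{PQ}^2=\norm{QP}^2=\norm{PQP}=\max_{\ket{\phi}} \bra{\phi}PQP\ket{\phi},
\]
where the maximum is over unit vectors. Since $PQP$ is positive semidefinite and vanishes on the complement of the range of $P$, the maximum is attained at a unit vector $\ket{\phi}$ in the range of $P$. Such a vector has the form $\ket{\phi}=\sum_{x: f(x)=y}\beta_x\ket{x}$, so $\op{\phi}{\phi}\in\fixedtargets_y\subseteq\fixedtargets_f$. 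For this $\ket{\phi}$,
\[
\bra{\phi}PQP\ket{\phi}=\bra{\phi}Q\ket{\phi}=\tr[\Pi_{\addAdvMat,\geq\lambda^\pm}\op{\phi}{\phi}]\le \frac{1}{1+\lambda^\pm}=\frac{1}{3-\lambda},
\]
where the inequality is \cref{prop:lambda-thresh-trace}. Taking the maximum over $\ket{\phi}$, and noting that $y$ was arbitrary, gives the claim.

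The argument has no real obstacle. The only care needed is to justify that the norm is attained on a normalized vector supported on $f^{-1}(y)$, and to check that \cref{prop:lambda-thresh-trace} does not require $\sigma_y$ to have unit trace beyond normalization. Its proof uses $\tr[(\identity-\pibad)\sigma_y]\le 1$, which holds for unit vectors.
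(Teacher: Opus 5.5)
Your proposal is correct and follows the paper's proof essentially step for step. You identify $\Pi_{\multAdvMat,\le\lambda}$ with $\Pi_{\addAdvMat,\ge 2-\lambda}$, rewrite the squared operator norm as a maximum of $\tr[\Pi_{\multAdvMat,\le\lambda}\sigma_y]$ over rank-one states supported on $f^{-1}(y)$, and then invoke \cref{prop:lambda-thresh-trace}. Your explicit check that this lemma's proof needs $\|\phi_y\|\le 1$ tidies a point the paper passes over when it treats $\Pi_{f^{-1}(y)}\ket{\psi}$ as an element of $\fixedtargets_f$.
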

\begin{proof}
    By \cref{prop:lambda-thresh-trace} we know that, for any $\lambda^\pm \in (-1,1)$ and $\sigma_y \in \fixedtargets_f$,
    \begin{equation}\label{eqn:lambda-eta-add}
        \tr[\Pi_{\addAdvMat, \geq \lambda^\pm} \sigma_y] \leq \frac{1}{1+\lambda^\pm}.
    \end{equation}
    Since any eigenvector of $\addAdvMat$ with eigenvalue $\lambda^\pm$ is an eigenvector of $\multAdvMat$ with eigenvalue $\lambda = 2-\lambda^\pm$ we can conclude that $\Pi_{\addAdvMat, \geq \lambda^\pm} = \Pi_{\multAdvMat, \leq \lambda}$.
    Thus we can rewrite \eqref{eqn:lambda-eta-add} as
    \begin{equation}\label{eqn:lambda-eta-mult}
        \tr[\Pi_{\multAdvMat, \leq \lambda} \sigma_y] \leq \frac{1}{3-\lambda}.
    \end{equation}
    Now observe that, for any $y \in \range$,
    \begin{align*}
        \norm{\Pi_{f^{-1}(y)} \Pi_{\multAdvMat, \leq \lambda}}^2&= \norm{\Pi_{\multAdvMat, \leq \lambda}\Pi_{f^{-1}(y)}}^2 \quad \textrm{since the norm is preserved under $\dagger$ and all entries are real}\\
        &=\max_{\ket{\psi}} \norm{ \Pi_{\multAdvMat, \leq \lambda} \Pi_{f^{-1}(y)} \ket{\psi}}^2\\
        &= \max_{\ket{\psi}}\tr[ \Pi_{\multAdvMat, \leq \lambda} \Pi_{f^{-1}(y)} \op{\psi}{\psi}  \Pi_{f^{-1}(y)} \Pi_{\multAdvMat, \leq \lambda} ]\\
        &=\max_{\sigma_y \in \fixedtargets_f}\tr[\Pi_{\multAdvMat, \leq \lambda} \sigma_y \Pi_{\multAdvMat, \leq \lambda}] \quad \textrm{since the maximum is attained for some $\op{\psi}{\psi} \in \fixedtargets_f$}\\
        &=\max_{\sigma_y \in \fixedtargets_f}\tr[\Pi_{\multAdvMat, \leq \lambda} \sigma_y]\\
        &\leq \frac{1}{3-\lambda} \qquad \textrm{by \eqref{eqn:lambda-eta-mult}}
    \end{align*}
    as desired.
\end{proof}

\paragraph{Part (ii): Bounding the loss from the conversion}
Now we are ready to prove that negative-weights adversary lower bounds with witness $\addAdvMat$ can be converted to asymptotically matching multiplicative adversary lower bounds with witness $\multAdvMat = 2 \identity - \addAdvMat$.
Variants of this result have been shown before in \cite{DBLP:conf/coco/AmbainisMRR11,LeeRolandSQDPT}, however we require a different version of the reduction than those used in either of these papers.

\begin{proof}[Proof of \cref{thm:add-to-mult}]
    We start by observing that since $\addAdvMat$ is a Hermitian matrix such that $\addAdvMat \ket{\dist} = \ket{\dist}$, it must be the case for $\multAdvMat = 2 \identity - \addAdvMat$ that $\multAdvMat$ is a positive-definite Hermitian matrix with smallest eigenvalue $1$ where $\multAdvMat \ket{\dist} = \ket{\dist}$.
    For part \ref{item:mult-conv-init} of \cref{thm:add-to-mult} observe that $\progress^0 = \tr[\multAdvMat \rho^0_\inputreg] = \tr[(2\identity - \addAdvMat)\rho^0_\inputreg] = 2 - \tr[\addAdvMat \rho^0_\inputreg] = 1$.
    To prove part \ref{item:mult-conv-prog} of \cref{thm:add-to-mult} observe that
    \begin{align*}
        \max_{i,p,\rho} &\ln \left(\frac{\tr[\multAdvMat \oracle_{i,p}^\dagger\, \rho\, \oracle_{i,p}]}{\tr[\multAdvMat \rho]}\right)\\ 
        &=\max_{i,p,\rho} \ln \left(\frac{\tr[(2 \identity - \addAdvMat) \oracle_{i,p}^\dagger\, \rho\, \oracle_{i,p}]}{\tr[(2 \identity - \addAdvMat) \rho]} \right)\\
        &= \max_{i,p,\rho} \ln \left(\frac{2-\tr[\addAdvMat \oracle_{i,p}^\dagger\, \rho\, \oracle_{i,p}]}{2-\tr[\addAdvMat \rho]}\right)\\
        &= \max_{i,p,\rho} \ln \left(\frac{2-\tr[\addAdvMat\rho] +\tr[\addAdvMat (\rho - \oracle_{i,p}^\dagger\, \rho\, \oracle_{i,p})]}{2-\tr[\addAdvMat \rho]}\right)\\
        &= \ln\left(1+ \max_{i,p,\rho} \frac{\tr[\addAdvMat (\rho - \oracle_{i,p}^\dagger\, \rho\, \oracle_{i,p})]}{2-\tr[\addAdvMat \rho]}\right)\\
        &\leq \max_{i,p,\rho} \frac{\tr[\addAdvMat (\rho - \oracle_{i,p}^\dagger\, \rho\, \oracle_{i,p})]}{2-\tr[\addAdvMat \rho]} \qquad \textrm{since $\ln(1+x) \leq x$ for $x > -1$}\\
        &\leq \max_{i,p,\rho} \tr[\addAdvMat (\rho - \oracle_{i,p}^\dagger \,\rho\, \oracle_{i,p})] \qquad \textrm{since $\tr[\addAdvMat \rho] \leq 1$}\\
        &= \max_{i,p,\rho} \tr[\addAdvMat ((\rho - \oracle_{i,p}^\dagger \rho \oracle_{i,p})\circ D_i)] \qquad \textrm{since $\rho_{x,x'} = (\oracle_{i,p}^\dagger\, \rho\, \oracle_{i,p})_{x,x'}$ when $x_i = x'_i$}\\
        &=\max_{i,p,\rho} \tr[(\addAdvMat \circ D_i) (\rho - \oracle_{i,p}^\dagger\, \rho\, \oracle_{i,p})] \qquad \textrm{since $\tr[A(B\circ C)] = \tr[(A\circ C) B]$}\\
        &\leq \max_{i,p,\rho} \norm{\addAdvMat \circ D_i} \norm{\rho - \oracle_{i,p}^\dagger\, \rho\, \oracle_{i,p}}_{tr} \qquad \textrm{since $\tr[AB] \leq \norm{A}\norm{B}_{tr}$}\\
        &\leq \max_{i,p,\rho} \norm{\addAdvMat \circ D_i} (\norm{\rho}_{tr} + \norm{\oracle_{i,p}^\dagger\, \rho\, \oracle_{i,p}}_{tr}) \qquad \textrm{by triangle inequality}\\
        &= \max_i 2\norm{\addAdvMat \circ D_i} \qquad \textrm{since $\rho$ and $\oracle_{i,p}^\dagger\, \rho\, \oracle_{i,p}$ are density matrices.}
    \end{align*}
    
    To prove part \ref{item:mult-conv-finish} of \cref{thm:add-to-mult}, we invoke \cref{prop:lambda-thresh-norm} on $\multAdvMat$ with $\lambda = 3-(4/(1-\varepsilon))^{1/3}$ to get $\eta = ((1-\varepsilon)/4)^{1/3}$ and use these as our $(\lambda, \eta)$ pair.\footnote{We note that this is the same choice as in \cite{DBLP:conf/coco/AmbainisMRR11} and it gives the tightest final bound.}
    Plugging in this choice of $\lambda$ and $\eta$ gives:
    \begin{align*}
        \ln(1+(\lambda-1)(\sqrt{1-\varepsilon} - \sqrt{\eta})^2) &= \ln \bigg[1+\bigg(2-\big(\frac{4}{1-\varepsilon}\big)^{1/3}\bigg)\bigg(\sqrt{1-\varepsilon} - \big({\frac{1-\varepsilon}{4}}\big)^{1/6}\bigg)^2\bigg]\\
        &=\ln \big[ 1- \big(1-(2(1-\varepsilon))^{1/3}\big)^3\big]\\
        &= \bigg(1-2\sqrt{\varepsilon(1-\varepsilon)}\bigg)/ \zeta(\varepsilon)
    \end{align*}
    by definition of $\zeta(\varepsilon)$.
    The function $\zeta$ is plotted in \cref{fig:add-to-mult-loss}.
    For $\varepsilon \in [0,1/3]$ we see that $\zeta$ is convex and that the maximum must be achieved at either $\varepsilon=0$ or $\varepsilon = 1/3$.
    Since $\zeta(1/3)< \zeta(0) < 58$, we can conclude that $\zeta(\varepsilon) < 58$ as desired.

    Since $\multAdvMat$ is a witness to \cref{prop:mult-adv} with $\lambda = 3 - (\frac{4}{1-\varepsilon})^{1/3}$, $\eta = (\frac{1-\varepsilon}{4})^{1/3}$ we can use it to get a query lower bound of:
   {\allowdisplaybreaks[0]
    \begin{align*}
        T &\geq \frac{\ln(1+(\lambda-1)(\sqrt{1-\varepsilon}-\sqrt{\eta})^2)}{\ln(\tr[\multAdvMat \oracle_{i,p}^\dagger\, \rho\, \oracle_{i,p}]/\tr[\multAdvMat \rho])}\\
        &\geq \frac{\ln(1+(\lambda-1)(\sqrt{1-\varepsilon}-\sqrt{\eta})^2)}{2\max_i  \norm{\addAdvMat \circ D_i}}\qquad\textrm{by part \ref{item:mult-conv-prog}}\\
        &=  \frac{ 1-2\sqrt{\varepsilon(1-\varepsilon)}}{2\,\zeta(\varepsilon)\max_i  \norm{\addAdvMat \circ D_i}}\qquad\textrm{by part \ref{item:mult-conv-finish}}\\
        &\geq T^\pm/\zeta(\varepsilon)\ge T^\pm/58
        \qquad\textrm{by \cref{prop:add-adv-normalized} and part \ref{item:mult-conv-finish}.}
    \end{align*}
    Thus the multiplicative loss in converting from a negative-weights to multiplicative adversary bound is at most $\zeta(\varepsilon)<58$.
    }
\end{proof}

\begin{figure}[t]
    \centering
    \begin{minipage}{0.45\textwidth}
        \centering
        \includegraphics[width=\linewidth]{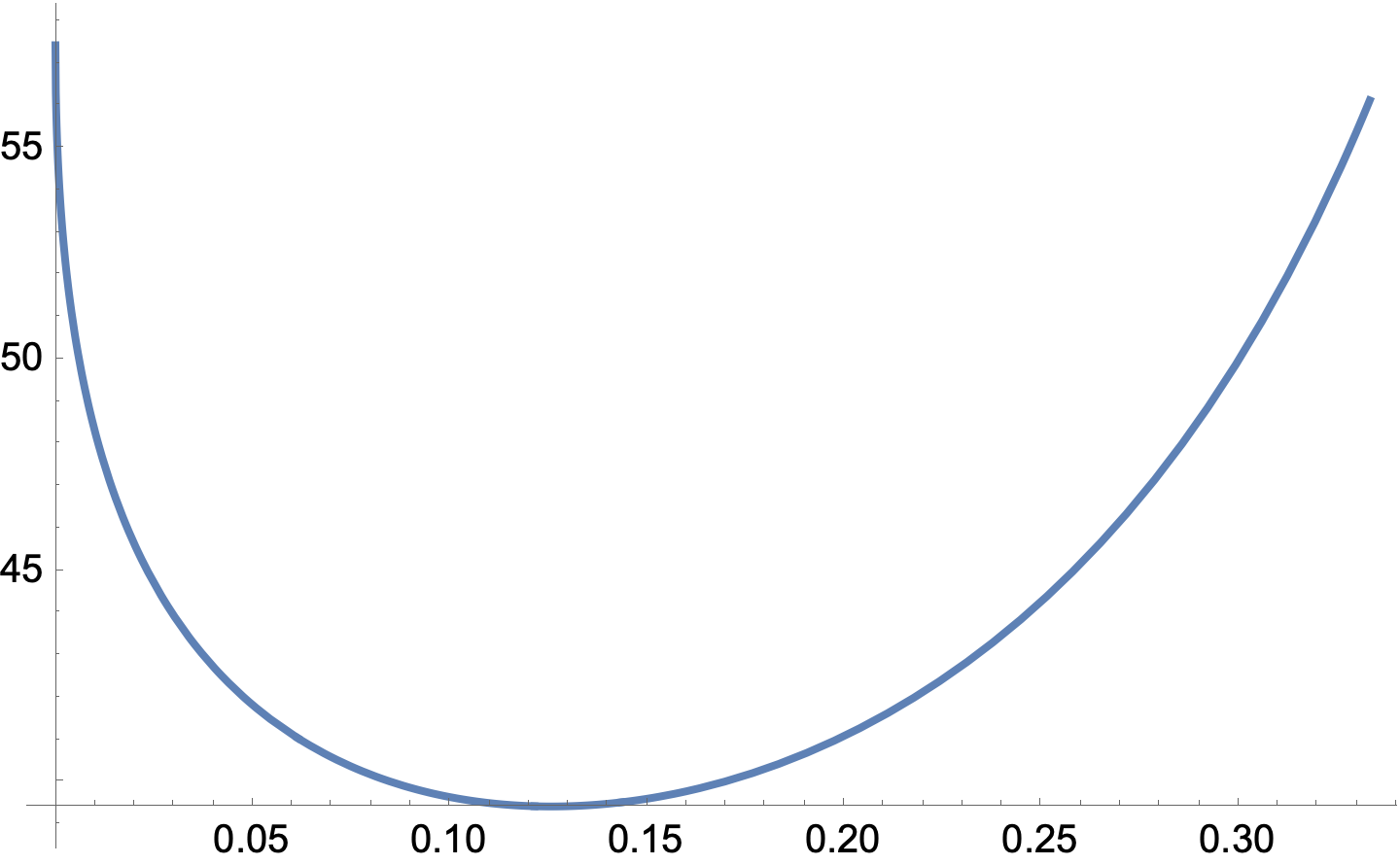}
    \end{minipage}
    \hfill
    \begin{minipage}{0.45\textwidth}
        \centering
        \includegraphics[width=\linewidth]{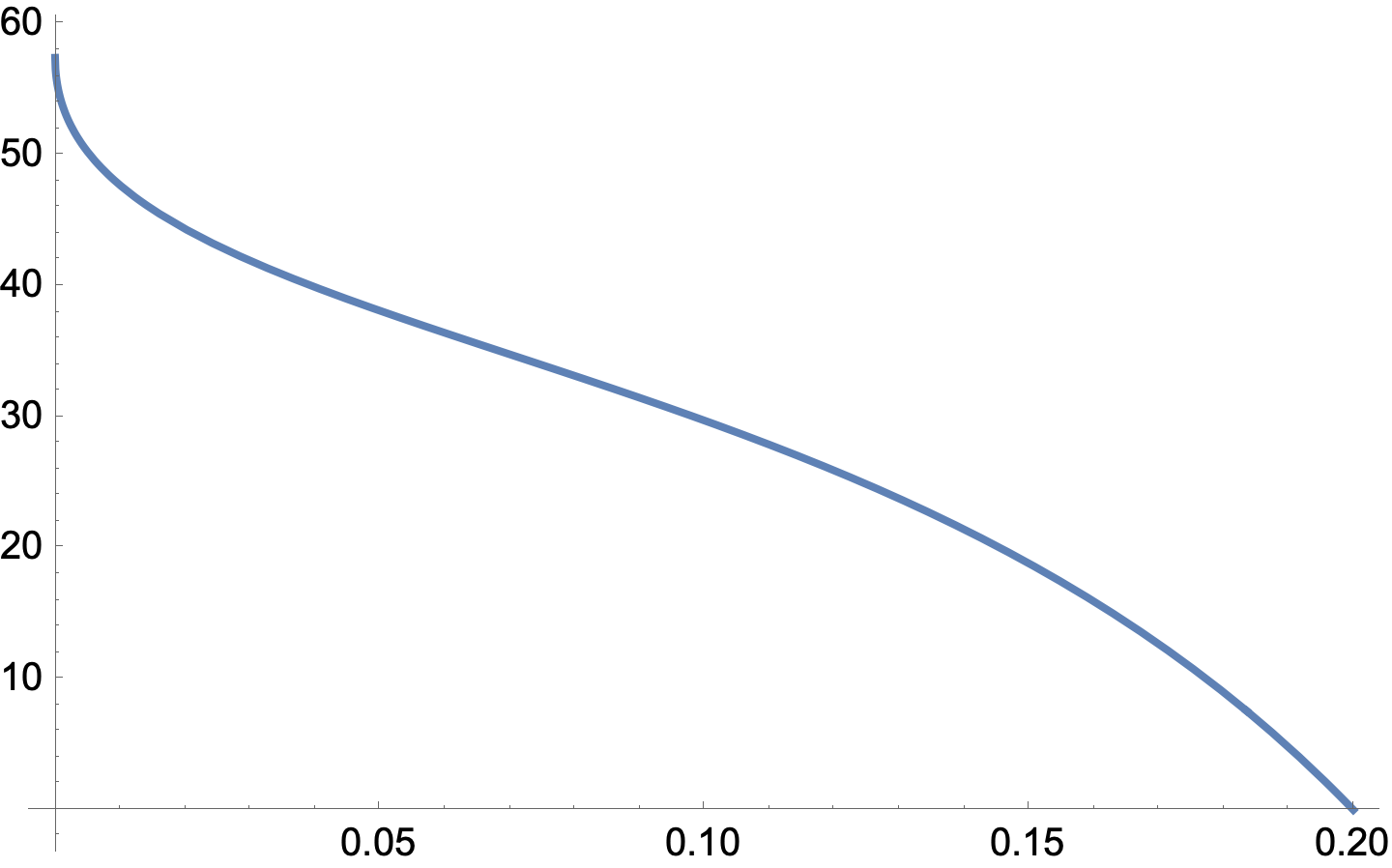}
    \end{minipage}
    \caption{The multiplicative factor loss $\zeta(\varepsilon)$ in our conversion from negative-weights to multiplicative adversary lower bounds as a function of $\varepsilon$,
    which corresponds to the harder-to-meet Boolean output target of the negative-weights adversary method, is given on the left.
    The right is what we would get if we did the same analysis with the weaker non-Boolean output stopping condition; both have the same maximum value at $\varepsilon=0$.
    }\label{fig:add-to-mult-loss}
\end{figure}

\subsection{Improved negative-weights to multiplicative conversion for Boolean-valued functions}
\label{sec:neg-to-mult-bool}

The general idea of this improved conversion is the following fact that negative-weights adversary matrices for Boolean-valued functions
have the following very specific eigenvalue property
that is not available in the general case.

\begin{lemma}\label{lem:bool-add-lambda-eta}
    Let $\addAdvMat$ be a negative-weights adversary matrix for a Boolean valued function $f:\inputspace \to \{0,1\}$ with $\inputspace \subseteq \domain^n$.
    Then for every $\lambda^\pm \in (0,\norm{\addAdvMat}]$,
        $\displaystyle
        \norm{\Pi_{f^{-1}(0)} \Pi_{\addAdvMat, \geq \lambda^\pm}}^2 = \norm{\Pi_{f^{-1}(1)} \Pi_{\addAdvMat, \geq \lambda^\pm}}^2 = 1/2$.
\end{lemma}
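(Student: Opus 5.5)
The plan is to exploit the bipartite structure that a Boolean $f$ forces on $\addAdvMat$. Order the inputs as $f^{-1}(0)$ followed by $f^{-1}(1)$. Since $\addAdvMat_{x,x'}=0$ whenever $f(x)=f(x')$, the matrix has block form
\[
\addAdvMat=\begin{pmatrix}0 & B\\ B^\dagger & 0\end{pmatrix}
\]
for some $\abs{f^{-1}(0)}\times\abs{f^{-1}(1)}$ matrix $B$. Let $Z=\Pi_{f^{-1}(0)}-\Pi_{f^{-1}(1)}$. Then $Z$ anticommutes with $\addAdvMat$: $Z\addAdvMat Z=-\addAdvMat$.

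The first step is to describe the eigenvectors of $\addAdvMat$ with eigenvalue $\mu>0$. Such a vector $v=(a,b)$ satisfies $Bb=\mu a$ and $B^\dagger a=\mu b$. Since $Zv=(a,-b)$ is then an eigenvector with eigenvalue $-\mu\neq 0$, it is orthogonal to $v$. This gives $\norm{a}^2-\norm{b}^2=0$, so $\norm{a}=\norm{b}$.

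Next, use the singular value decomposition of $B$. Take singular triples $(s_j,u_j,w_j)$ with $s_j>0$. Then the eigenspace $\mathrm{range}(\Pi_{\addAdvMat,\geq\lambda^\pm})$, for $\lambda^\pm>0$, has the orthonormal basis
\[
v_j=\tfrac{1}{\sqrt2}(u_j,w_j), \qquad s_j\geq\lambda^\pm.
\]
One could also argue basis-free. Pick an orthonormal eigenbasis $\{v_j\}$ of the positive eigenspaces. Then $\{Zv_j\}$ is an orthonormal basis of the corresponding negative eigenspaces, and these vectors are orthogonal to every $v_k$, because eigenvectors with positive and negative eigenvalues are orthogonal. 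From this, $\bra{v_j}\Pi_{f^{-1}(0)}\ket{v_k}-\bra{v_j}\Pi_{f^{-1}(1)}\ket{v_k}=\bra{v_j}Z\ket{v_k}=0$ for all $j,k$. Adding the identity $\bra{v_j}\Pi_{f^{-1}(0)}\ket{v_k}+\bra{v_j}\Pi_{f^{-1}(1)}\ket{v_k}=\delta_{jk}$ then gives
\[
P\,\Pi_{f^{-1}(0)}\,P=P\,\Pi_{f^{-1}(1)}\,P=\tfrac12 P, \qquad P=\Pi_{\addAdvMat,\geq\lambda^\pm}.
\]

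Finally, compute the norm:
\[
\norm{\Pi_{f^{-1}(y)}P}^2=\norm{P\,\Pi_{f^{-1}(y)}\,P}=\norm{\tfrac12 P}=\tfrac12.
\]
This holds as long as $P\neq 0$, which is guaranteed because $\lambda^\pm\leq\norm{\addAdvMat}$. Since the spectrum of $\addAdvMat$ is symmetric, $\norm{\addAdvMat}$ equals its largest eigenvalue, so $\norm{\addAdvMat}$ is attained as a positive eigenvalue.

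The main subtlety is the orthogonality of $v$ and $Zv$, which needs $\lambda^\pm>0$; it fails for the zero eigenspace, which is why the lemma excludes $\lambda^\pm\le 0$. The rest is routine.
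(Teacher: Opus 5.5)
Your proof is correct and is essentially the paper's argument. The paper also uses the block form, shows each eigenvector with positive eigenvalue puts squared norm $1/2$ on each block, and shows that the $f^{-1}(0)$-parts of distinct eigenvectors are orthogonal via $(\lambda_i+\lambda_j)(v_0^i)^\dagger v_0^j=0$; these are exactly the $j=k$ and $j\neq k$ cases of your identity $\langle v_j|Z|v_k\rangle=0$. Your $P\,\Pi_{f^{-1}(y)}P=\tfrac12 P$ packaging is a tidy operator-level restatement, and your check that $P\neq 0$ via the symmetric spectrum makes explicit a point the paper leaves implicit.
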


\begin{proof}
    Since $\addAdvMat$ is Hermitian and $\addAdvMat_{x,x'} = 0$ whenever $f(x) = f(x')$,
    by permuting the rows and the columns of $\addAdvMat$, we can express it in some basis where:
    \begin{displaymath}
        \addAdvMat = 
        \begin{bmatrix}
        0 & M\\
        M^\dagger & 0
        \end{bmatrix}
    \end{displaymath}
    and the first block of rows and columns correspond to inputs in $f^{-1}(0)$ while the second block of rows and columns correspond to inputs in $f^{-1}(1)$.
    Fix $b \in \{0,1\}$, let $v$ be any unit norm eigenvector of $\addAdvMat$ with eigenvalue $\lambdapmtemp \geq \lambda^\pm>0$.
    Let $\vsub{b}$ be shorthand for $v_{f^{-1}(b)}$, the entries of $v$ indexed by inputs in $f^{-1}(b)$.
    Thus
    \begin{displaymath}
        \addAdvMat v = \begin{bmatrix}
            0 & M \\
            M^\dagger & 0
        \end{bmatrix} \begin{bmatrix}
            \vsub{0} \\ \vsub{1}
        \end{bmatrix} = \lambdapmtemp \begin{bmatrix}
            \vsub{0} \\ \vsub{1}
        \end{bmatrix}.
    \end{displaymath}
    This implies that $M \vsub{1} = \lambdapmtemp \vsub{0}$ and $M^\dagger \vsub{0} = \lambdapmtemp \vsub{1}$.
    Then
    \begin{displaymath}
        \lambdapmtemp \norm{\vsub{0}}^2 =  \vsub{0}^\dagger (\lambdapmtemp\vsub{0}) = \vsub{0}^\dagger M \vsub{1} = (M^\dagger \vsub{0})^\dagger \vsub{1} = (\lambdapmtemp \vsub{1}^\dagger) \vsub{1}  = \lambdapmtemp \norm{\vsub{1}}^2
    \end{displaymath}
    and so $\norm{\vsub{0}}^2 = \norm{\vsub{1}}^2$
    since $\lambdapmtemp \neq 0$.
    Since $1 = \norm{v}^2 = \norm{\vsub{0}}^2 + \norm{\vsub{1}}^2$, we can conclude that $\norm{\vsub{0}}^2 = \norm{\vsub{1}}^2 = 1/2$.

    Now let $V$ be the span of all eigenspaces of $\addAdvMat$ with eigenvalues at least $\lambda^\pm > 0$.
    We will show that any orthogonal eigenbasis $v^1, \ldots, v^\ell$ of $V$ with eigenvalues $\lambda_1, \ldots, \lambda_\ell$ has the property that $\vsub{0}^i$ and $\vsub{0}^j$ are orthogonal for all $i \neq j$.
    Due to the orthogonality of $v^i$ and $v^j$, it must be the case that $(\vsub{0}^i)^\dagger \vsub{0}^j = - (\vsub{1}^i)^\dagger \vsub{1}^j$.
    Since $M^\dagger \vsub{0}^i = \lambda_i \vsub{1}^i$ and $M \vsub{1}^j = \lambda_j \vsub{0}^j$, we know that
    \begin{displaymath}
        \lambda_j (\vsub{0}^i)^\dagger \vsub{0}^j = (\vsub{0}^i)^\dagger M \vsub{1}^j = (M^\dagger \vsub{0}^i)^\dagger \vsub{1}^j = \lambda_i (\vsub{1}^i)^\dagger \vsub{1}^j =  - \lambda_i (\vsub{0}^i)^\dagger \vsub{0}^j.
    \end{displaymath}
    Therefore $(\lambda_i+\lambda_j)(\vsub{0}^i)^\dagger \vsub{0}^j=0$
    and since $\lambda_i,\lambda_j\ge\lambda^\pm > 0$, the vectors $\vsub{0}^i$ and $\vsub{0}^j$ must be orthogonal.

    We prove $\norm{\Pi_{f^{-1}(0)} \Pi_{\addAdvMat \geq \lambda^\pm}}^2 = 1/2$ and note that the other case follows by a symmetric argument.
    Let $u \in V$ be a unit norm vector that maximizes $\norm{\Pi_{f^{-1}(0)} u}$.
    We can express $u = \sum_i \alpha_i v^i$ where the $v^i$ are the orthonormal eigenbasis of $V$ and $\sum_i \abs{\alpha_i}^2 = 1$.
    Then
    \begin{align*}
        \norm{\Pi_{f^{-1}(0)} \Pi_{\addAdvMat, \geq \lambda^\pm}}^2 &= \norm{\Pi_{f^{-1}(0)} u}^2\\
        &=\norm{\Pi_{f^{-1}(0)} \sum_{i \in [\ell]} \alpha_i v^i}^2\\
        &= \norm{\sum_{i \in [\ell]} \alpha_i \vsub{0}^i}^2 \qquad \textrm{since $\Pi_{f^{-1}(0)}$ keeps exactly $\vsub{0}^i$}\\
        &=\sum_{i \in \ell} \abs{\alpha_i}^2 \norm{\vsub{0}^i}^2 \qquad \textrm{since the $\vsub{0}^i$ are mutually orthogonal}\\
        &= \sum_{i \in [\ell]} \abs{\alpha_i}^2 /2 \qquad \textrm{since each $\norm{\vsub{0}^i}^2 = 1/2$}\\
        &= 1/2.\qedhere
    \end{align*}
\end{proof}

\begin{lemma}\label{lem:boolean-mult-lambda-eta}
Let $\multAdvMat = 2 \identity - \addAdvMat$ where $\addAdvMat$ is a negative-weight adversary matrix for a Boolean valued function $f: \inputspace \to \{0,1\}$ with $\inputspace \subseteq \domain^n$.
Then for every $\lambda < 2$, $\multAdvMat$ obeys the conditions of \cref{prop:mult-adv} with $\eta = 1/2$.
\end{lemma}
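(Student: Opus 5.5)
We need to verify the hypotheses of \cref{prop:mult-adv} for $\multAdvMat = 2\identity - \addAdvMat$ and $\eta = 1/2$: $\multAdvMat$ must be positive definite with smallest eigenvalue $1$ and satisfy $\multAdvMat\ket{\dist}=\ket{\dist}$, and $\norm{\Pi_{f^{-1}(y)}\Pi_{\multAdvMat,\le\lambda}}^2\le 1/2$ must hold for $y\in\{0,1\}$. The first group of conditions is exactly as in the proof of \cref{thm:add-to-mult}. Since $\norm{\addAdvMat}=1$, every eigenvalue of $\multAdvMat$ lies in $[1,3]$. Since $\addAdvMat\ket{\dist}=\ket{\dist}$, we get $\multAdvMat\ket{\dist}=\ket{\dist}$, so the value $1$ is attained. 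These facts need no Boolean assumption.

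For the spectral condition, translate eigenspaces as in \cref{prop:lambda-thresh-norm}. An eigenvector of $\addAdvMat$ with eigenvalue $\mu$ is an eigenvector of $\multAdvMat$ with eigenvalue $2-\mu$, so $\Pi_{\multAdvMat,\le\lambda}=\Pi_{\addAdvMat,\ge 2-\lambda}$. Set $\lambda^\pm := 2-\lambda$.

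Given $\lambda<2$, we have $\lambda^\pm>0$, and two cases arise:
\begin{itemize}
\item If $\lambda^\pm\le\norm{\addAdvMat}=1$, then \cref{lem:bool-add-lambda-eta} applies directly and gives $\norm{\Pi_{f^{-1}(y)}\Pi_{\addAdvMat,\ge\lambda^\pm}}^2=1/2$ for both $y$.
\item If $\lambda^\pm>1$, which happens when $\lambda<1$, the projector $\Pi_{\addAdvMat,\ge\lambda^\pm}$ is zero. The norm is then $0\le 1/2$.
\end{itemize}
Note that \cref{prop:mult-adv} formally takes $\lambda\in(1,\norm{\multAdvMat}]$. For the relevant range $1<\lambda<2$ we have $\lambda^\pm\in(0,1)$, so \cref{lem:bool-add-lambda-eta} applies, and the degenerate range is handled trivially anyway.

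There is no real obstacle; the substantive work is already in \cref{lem:bool-add-lambda-eta}. The only points needing care are the direction of the eigenvalue flip and the requirement $\lambda<2$. That requirement is exactly what keeps $\lambda^\pm$ strictly positive, which \cref{lem:bool-add-lambda-eta} needs: the zero eigenspace of $\addAdvMat$ may be unbalanced between $f^{-1}(0)$ and $f^{-1}(1)$. Finally, the condition $1-\varepsilon\ge\eta$ becomes $\varepsilon\le 1/2$. This is a requirement on the algorithm when the result is later applied, not on $\multAdvMat$, so the lemma's claim concerns only the $(\lambda,\eta)$ witness conditions.
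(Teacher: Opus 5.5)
Your proof is correct and matches the paper's: set $\lambda^\pm = 2-\lambda$, use $\Pi_{\multAdvMat,\le\lambda} = \Pi_{\addAdvMat,\ge\lambda^\pm}$, and invoke \cref{lem:bool-add-lambda-eta}. Your extra checks are fine and only make explicit what the paper leaves to the proof of \cref{thm:add-to-mult}: positive definiteness with smallest eigenvalue $1$, $\multAdvMat\ket{\dist}=\ket{\dist}$, and the degenerate case $\lambda<1$.
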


\begin{proof}
    Let $\lambda^\pm = 2-\lambda$.
    Since $\Pi_{\addAdvMat, \geq \lambda^\pm} = \Pi_{\multAdvMat, \leq \lambda}$, this is an immediate consequence of \cref{lem:bool-add-lambda-eta}.
\end{proof}

We can use this very simple property to replace
the use of \cref{prop:lambda-thresh-trace,prop:lambda-thresh-norm} in the
argument that proved~\cref{thm:add-to-mult}.
In the following statement, the choice of 
$\lambda=19/10$ is somewhat arbitrary:  We need
$\lambda<2$ and the bound grows monotonically, though only marginally, as $\lambda$ approaches 2.

\begin{theorem}\label{lem:bool-add-to-mult}
    Let $f: \inputspace \to \{0,1\}$ be a function and $\dist$ be a distribution over $\inputspace \subseteq \domain^n$.
    Suppose that $\addAdvMat$ is a witness to \cref{prop:add-adv-normalized} for computing $f$ on $\dist$ with error $\varepsilon \in [0,1/2)$ that gives a query lower bound of $T^\pm$.
    Let $\multAdvMat = 2 \identity - \addAdvMat$ and define progress measures $\progress_{\addAdvMat}^t = \tr[\addAdvMat \rho_\inputreg^t]$ and $\progress^t = \tr[\multAdvMat \rho_\inputreg^t]$.
    Then $\multAdvMat$ is a witness to the multiplicative adversary method given in \cref{prop:mult-adv} with $\lambda = 19/10$, $\eta = 1/2$ for computing $f$ on $\dist$.
    Moreover,
    \begin{enumerate}
        \item\label{item:bool-mult-conv-init} $\progress^0 = \progress_{\addAdvMat}^0 =1$
        \item\label{item:bool-mult-conv-prog} $\ln(\progress^{t+1}/\progress^t) \leq \max_{i,p,\rho} \ln \left(\frac{\tr[\multAdvMat \oracle_{i,p}^\dagger\, \rho\, \oracle_{i,p}]}{\tr[\multAdvMat \rho]}\right) \leq  2 \max_i \norm{\addAdvMat \circ D_i}$
        \item\label{item:bool-mult-conv-finish} $\ln(1+(\lambda-1)(\sqrt{1-\varepsilon} - \sqrt{\eta})^2)=\ln(1+9(\sqrt{1-\varepsilon} - 1/\sqrt{2})^2/10) = (1-2\sqrt{\varepsilon(1-\varepsilon)})/\zeta_2(\varepsilon)$ where
        $\zeta_2(\varepsilon)=\frac{1-2\sqrt{\varepsilon(1-\varepsilon)}}{\ln(1+9(\sqrt{1-\varepsilon} - 1/\sqrt{2})^2/10)}<14$.
    \end{enumerate}
    Thus, $\multAdvMat$ gives a query lower bound for
    computing $f$ on distribution $\dist$ with error
    at most $\varepsilon$ of
    \begin{displaymath}
        T \geq \frac{\ln(1 + 9(\sqrt{1-\varepsilon} - 1/\sqrt{2})^2/10)}{\ln (\progress^{t+1}/\progress^t)} \geq \frac{1-2\sqrt{\varepsilon(1-\varepsilon)}}{2\,\zeta_2(\varepsilon) \max_{i}\norm{\addAdvMat \circ D_i}} \geq T^\pm/\zeta_2(\varepsilon)\ge T^\pm / 14
    \end{displaymath}
    via that multiplicative adversary method.
\end{theorem}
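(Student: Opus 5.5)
The plan is to follow the proof of \cref{thm:add-to-mult} step by step, using \cref{lem:boolean-mult-lambda-eta} wherever that proof used \cref{prop:lambda-thresh-trace,prop:lambda-thresh-norm}.

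\emph{Validity of the witness.} Since $\addAdvMat$ is Hermitian with $\norm{\addAdvMat}=1$ and $\addAdvMat\ket{\dist}=\ket{\dist}$, the matrix $\multAdvMat=2\identity-\addAdvMat$ is Hermitian with spectrum in $[1,3]$. Its smallest eigenvalue is $1$, attained at $\ket{\dist}$, so $\multAdvMat\ket{\dist}=\ket{\dist}$.

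\emph{The pair $(\lambda,\eta)$.} \cref{lem:boolean-mult-lambda-eta} gives $\norm{\Pi_{f^{-1}(y)}\Pi_{\multAdvMat,\le\lambda}}^2\le 1/2$ for $y\in\{0,1\}$ whenever $\lambda<2$. In particular this holds for $\lambda=19/10$.

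One technicality must be checked: \cref{prop:mult-adv} requires $\lambda\le\norm{\multAdvMat}$. Here $\norm{\multAdvMat}=2-\lambda_{\min}(\addAdvMat)$. In the bipartite block form used in \cref{lem:bool-add-lambda-eta}, the spectrum of $\addAdvMat$ is symmetric about $0$. Since $1$ is an eigenvalue, so is $-1$, hence $\norm{\multAdvMat}=3$ and $19/10$ is admissible. We also need $\varepsilon<1-\eta=1/2$, which is exactly the hypothesis $\varepsilon\in[0,1/2)$.

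\emph{Initial value and per-step progress.} Item \ref{item:bool-mult-conv-init} is immediate: $\progress^0=2-\progress_{\addAdvMat}^0$, and $\progress_{\addAdvMat}^0=\bra{\dist}\addAdvMat\ket{\dist}=1$.

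Item \ref{item:bool-mult-conv-prog} repeats the chain of inequalities from part \ref{item:mult-conv-prog} of \cref{thm:add-to-mult} verbatim, since that argument never used the function structure:
\begin{enumerate}
\item rewrite the ratio as $1+\tr[\addAdvMat(\rho-\oracle^\dagger_{i,p}\rho\,\oracle_{i,p})]/(2-\tr[\addAdvMat\rho])$;
\item use $\ln(1+x)\le x$;
\item use $2-\tr[\addAdvMat\rho]\ge1$;
\item Hadamard-multiply by $D_i$, since $\rho$ and $\oracle^\dagger_{i,p}\rho\,\oracle_{i,p}$ agree wherever $x_i=x'_i$;
\item apply Hölder's inequality and the triangle inequality in trace norm.
\end{enumerate}
This gives the bound $2\max_i\norm{\addAdvMat\circ D_i}$.

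\emph{Terminal value.} Item \ref{item:bool-mult-conv-finish} is a substitution: $(\lambda-1)=9/10$ and $\sqrt\eta=1/\sqrt2$, and $\zeta_2$ is defined to make the identity hold.

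The remaining work, which I expect to be the main obstacle, is a numerical bound $\zeta_2(\varepsilon)<14$ on $[0,1/2)$.
\begin{itemize}
\item At $\varepsilon=0$: $\zeta_2(0)=1/\ln(1+0.9(0.2929)^2)\approx1/\ln(1.0772)\approx13.4$.
\item As $\varepsilon\to1/2$: both numerator and denominator vanish, so I would compare leading orders. Set $\delta=1/2-\varepsilon$. Then the numerator is $1-\sqrt{1-4\delta^2}\approx2\delta^2$. The denominator is $\approx0.9(\sqrt{1/2+\delta}-\sqrt{1/2})^2\approx0.9\cdot\delta^2/2$. So $\zeta_2\to2/0.45\approx4.4$.
\end{itemize}
I would then argue monotonicity in $\varepsilon$, either directly or via convexity plus a plot as the paper does for $\zeta$, to conclude that the maximum is at $\varepsilon=0$ and $\zeta_2<14$.

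\emph{Final bound.} Chain these as in \cref{thm:add-to-mult}. In the Boolean case \cref{prop:add-adv-normalized} gives $T^\pm=(1-2\sqrt{\varepsilon(1-\varepsilon)})/(2\max_i\norm{\addAdvMat\circ D_i})$, which yields $T\ge T^\pm/\zeta_2(\varepsilon)\ge T^\pm/14$.
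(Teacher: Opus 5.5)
Your proposal is correct and follows the paper's proof essentially step for step. \cref{lem:boolean-mult-lambda-eta} supplies $(\lambda,\eta)=(19/10,1/2)$, items~1--2 are inherited from the proof of \cref{thm:add-to-mult}, and item~3 is a substitution. The bound $\zeta_2<14$ rests on $\zeta_2$ being decreasing with $\zeta_2(0)\approx 13.4$, which the paper, like you, takes from a plot rather than proving.

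You also add two small points the paper omits. First, you check that $\norm{\multAdvMat}=3$ via the $\pm$ spectral symmetry of the bipartite block form, so $\lambda=19/10$ is admissible in \cref{prop:mult-adv}. Second, you compute the endpoint limit $\zeta_2\to 2/0.45$ as $\varepsilon\to 1/2$.
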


\begin{figure}[t]
    \centering
    \includegraphics[width=0.45\linewidth]{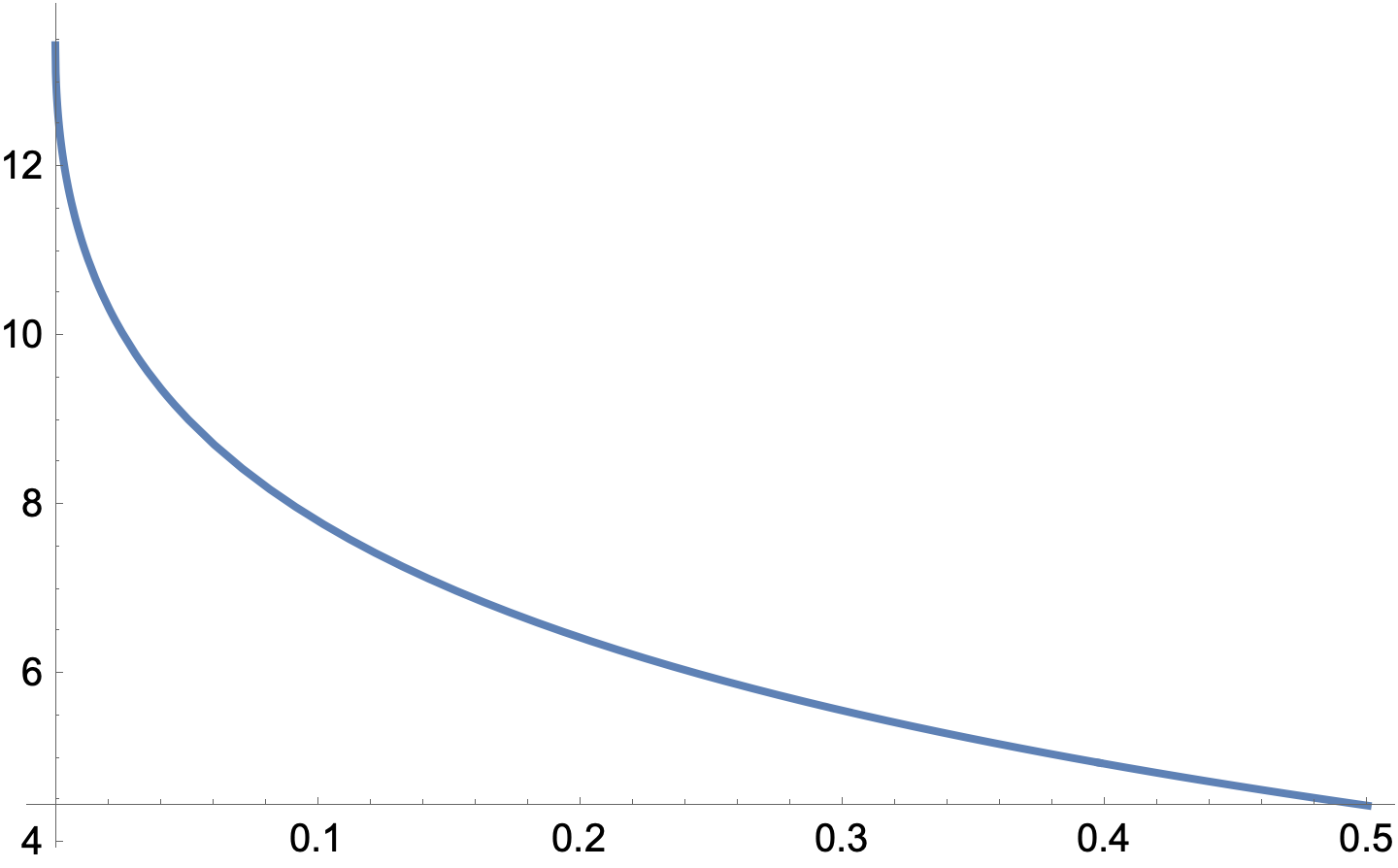}
    \caption{The multiplicative factor loss $\zeta_2(\varepsilon)$ as a function of $\varepsilon$ in our modified conversion from negative-weights to multiplicative adversary lower bounds with $\eta = 1/2$.}
    \label{fig:better-bool-loss}
\end{figure}

\begin{proof}
    By \cref{lem:boolean-mult-lambda-eta} we know that $\multAdvMat$ satisfies the conditions 
    of \cref{prop:mult-adv} for these values of $\lambda$ and $\eta$.
    Parts \ref{item:bool-mult-conv-init} and \ref{item:bool-mult-conv-prog} of this theorem follow by the exact same arguments as in the proof of \cref{thm:add-to-mult}.
    To prove part \ref{item:bool-mult-conv-finish}, we plug in our choice of $\lambda = 19/10$ and $\eta=1/2$ to get:
    \begin{align*}
        \ln(1+(\lambda-1)(\sqrt{1-\varepsilon} - \sqrt{\eta})^2) &= \ln \left(1+9(\sqrt{1-\varepsilon} -1/\sqrt{2})^2/10\right)\\
        &=(1-2\sqrt{\varepsilon(1-\varepsilon}))/\zeta_2(\varepsilon),
    \end{align*}
    by definition of $\zeta_2$.
     The function $\zeta_2(\varepsilon)$ is plotted in \cref{fig:better-bool-loss}.
    In particular, $\zeta_2(\varepsilon)$ is a decreasing function of $\varepsilon$ with maximum value $\zeta_2(0) < 14$ and hence
    part \ref{item:bool-mult-conv-finish} follows.

    Since $\multAdvMat$ is a witness to \cref{prop:mult-adv} with values $\lambda=19/10$ and $\eta=1/2$, as in the proof of \cref{thm:add-to-mult}, applying that multiplicative adversary bound we get a query lower bound of:
    \begin{align*}
        T &\geq \frac{\ln(1+(\lambda-1)(\sqrt{1-\varepsilon}-\sqrt{\eta})^2)}{\ln(\tr[\multAdvMat \oracle_{i,p}^\dagger\, \rho\, \oracle_{i,p}]/\tr[\multAdvMat \rho])}\\
        &\geq \frac{\ln(1+9(\sqrt{1-\varepsilon}-1/\sqrt{2})^2/10)}{2\max_i  \norm{\addAdvMat \circ D_i}}\qquad\textrm{by part \ref{item:bool-mult-conv-prog}}\\
        &=  \frac{ 1-2\sqrt{\varepsilon(1-\varepsilon)}}{2\,\zeta_2(\varepsilon)\max_i  \norm{\addAdvMat \circ D_i}}\qquad\textrm{by definition of $\zeta_2$}\\
        &\ge T^\pm/\zeta_2(\varepsilon)\ge T^\pm/14\qquad\textrm{by \cref{prop:add-adv-normalized} and part \ref{item:bool-mult-conv-finish}.}\qedhere
    \end{align*}
\end{proof}

\subsection{Universality of the relational multiplicative adversary for functions}\label{sec:universal-mult}

We now combine the various pieces of the argument, beginning with the optimality of the negative weight adversary method shown in \cite{DBLP:conf/focs/LeeMRSS11}.

\negWeightTight*

As an immediate consequence of \cref{thm:add-to-mult} we have the following connection between the multiplicative adversary method and quantum query complexity. 

\begin{sloppypar}
\begin{theorem}
\label{thm:mult-optimal}
There is a constant $C>0$ such that the following holds:
    For any function $f:\inputspace \rightarrow \range$ where $\inputspace \subseteq \domain^n$ 
    there is a distribution $\dist$ on $\inputspace$ such that 
     $Q_{1/10}(f)$ is at most $C\cdot\madv^\dist_{\varepsilon,\lambda,\eta}(f)$
     for $\varepsilon=1/10$, 
    $\lambda = 3 - (\frac{4}{1-\varepsilon})^{1/3}\approx 1.356$, and
    $\eta = (\frac{1-\varepsilon}{4})^{1/3}\approx 0.608$.
\end{theorem}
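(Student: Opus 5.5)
The plan is to chain \cref{prop:reichardt-adv-upperbound} with \cref{thm:add-to-mult}. First, by \cref{prop:reichardt-adv-upperbound} there is an absolute constant $C_1$ with $Q_{1/10}(f)\le C_1\cdot\negAdvBound_{1/10}(f)$. By definition $\negAdvBound_{1/10}(f)=\max_{\dist}\negAdvBound_{1/10,\dist}(f)$, and the set of distributions is compact, so we can fix a distribution $\dist$ attaining (or coming within a factor $2$ of) the maximum. For that $\dist$, we then pick a witness $\addAdvMat$ satisfying the hypotheses of \cref{prop:add-adv-normalized} whose lower bound $T^\pm$ is at least half of $\negAdvBound_{1/10,\dist}(f)$.

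Next we apply \cref{thm:add-to-mult} to this $\addAdvMat$ with $\varepsilon=1/10\in[0,1/3]$. It says that $\multAdvMat=2\identity-\addAdvMat$ is positive definite, has smallest eigenvalue $1$, and satisfies $\multAdvMat\ket{\dist}=\ket{\dist}$. It also says that $\multAdvMat$ satisfies the $(\lambda,\eta)$ condition of \cref{prop:mult-adv} with $\lambda=3-(4/(1-\varepsilon))^{1/3}$ and $\eta=((1-\varepsilon)/4)^{1/3}$. We must check that these values are legitimate for \cref{prop:mult-adv-query-bound}:
\begin{itemize}
\item $\lambda\approx1.356>1$;
\item $\eta\approx0.608<1-\varepsilon=0.9$;
\item $\lambda\le\norm{\multAdvMat}$.
\end{itemize}
The last point is the only one needing thought. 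If $\norm{\multAdvMat}<\lambda$, then $\Pi_{\multAdvMat,\le\lambda}=\identity$, and \cref{prop:lambda-thresh-norm} would give $1=\norm{\Pi_{f^{-1}(y)}}^2\le 1/(3-\lambda)<1$, a contradiction. This argument requires some $y$ with $f^{-1}(y)$ nonempty, which always holds. If $f$ is constant, then $Q_{1/10}(f)=0$ and the statement is trivial, so we can dispose of that case separately.

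Then \cref{thm:add-to-mult} yields a multiplicative adversary bound of at least $T^\pm/58$ for these parameters, so $\madv^\dist_{\varepsilon,\lambda,\eta}(f)\ge T^\pm/58\ge \negAdvBound_{1/10}(f)/(4\cdot 58)$. Combining the two steps gives $Q_{1/10}(f)\le 232\,C_1\,\madv^\dist_{\varepsilon,\lambda,\eta}(f)$, so we can take $C=232\,C_1$.

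The main obstacle is purely bookkeeping. We have to make sure that the version of $\negAdvBound$ in \cref{prop:reichardt-adv-upperbound}, including its normalization and error $1/10$, is the same quantity whose witnesses feed into \cref{thm:add-to-mult}; the remark after that proposition asserts they agree up to constants. We also have to handle attainment of the maxima, which the factor-of-$2$ slack above absorbs.
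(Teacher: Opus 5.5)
Your proposal is correct and follows the paper's proof: combine \cref{prop:reichardt-adv-upperbound} with \cref{thm:add-to-mult} for a maximizing distribution $\dist$, which gives $C=58C'$. Your $232\,C_1$ differs only because you allow factor-$2$ slack for attainment of the suprema. Your extra checks fill in details the paper skips, and they are sound: the constant-$f$ case, and the use of \cref{prop:lambda-thresh-norm} to verify $\lambda\le\norm{\multAdvMat}$ as \cref{prop:mult-adv-query-bound} requires.
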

\end{sloppypar}

\begin{proof}
    Let $\dist$ be the distribution such that
    $\negAdvBound_{1/10}(f)=\negAdvBound_{1/10,\dist}(f)$.
    By \cref{prop:reichardt-adv-upperbound}, there is a constant $C'$ such that $$Q_{1/10}(f) \leq C' \cdot\negAdvBound_{1/10}(f) =C' \cdot\negAdvBound_{1/10,\dist}(f).$$
    By \cref{thm:add-to-mult}, with $\varepsilon=1/10$,
    $\lambda = 3 - (\frac{4}{1-\varepsilon})^{1/3}$ and $\eta = (\frac{1-\varepsilon}{4})^{1/3}$ we have
    $\madv^\dist_{\varepsilon,\lambda,\eta}(f)\ge \negAdvBound_{1/10,\dist}(f)/58$.
   Setting $C=58\,C'$ yields the claim.
\end{proof}

With our improved conversion from \cref{sec:neg-to-mult-bool} we obtain the following sharper version which has the main advantage of using $\eta=1/2$ for functions with
Boolean-valued output.

\begin{theorem}\label{thm:bool-mult-bound-is-tight}
    There is a constant $C>0$ such that the following holds:
    For any function $f:\inputspace \rightarrow \{0,1\}$ where $\inputspace \subseteq \domain^n$ 
    there is a distribution $\dist$ on $\inputspace$ such that 
     $Q_{1/10}(f)$ is at most $C\cdot\madv^\dist_{\varepsilon,\lambda,\eta}(f)$
     for $\varepsilon=1/10$, 
    $\lambda = 19/10$, and
    $\eta = 1/2$.
\end{theorem}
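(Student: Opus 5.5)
The plan mirrors the proof of \cref{thm:mult-optimal}, with \cref{lem:bool-add-to-mult} used in place of \cref{thm:add-to-mult}.

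First, fix $f:\inputspace\to\{0,1\}$. Take $\dist$ to be a distribution attaining $\negAdvBound_{1/10}(f)=\negAdvBound_{1/10,\dist}(f)$; since the maximization is over a compact set of distributions, we may take a maximizer, or else an approximate maximizer losing only a constant factor. By \cref{prop:reichardt-adv-upperbound} there is a constant $C'$ with
\[
Q_{1/10}(f)\le C'\cdot \negAdvBound_{1/10}(f)=C'\cdot\negAdvBound_{1/10,\dist}(f).
\]
Let $\addAdvMat$ be an optimal (or near-optimal) witness for \cref{prop:add-adv-normalized} under $\dist$ with $\varepsilon=1/10$, giving the bound $T^\pm=\negAdvBound_{1/10,\dist}(f)$.

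Second, apply \cref{lem:bool-add-to-mult} with $\varepsilon=1/10\in[0,1/2)$. The matrix $\multAdvMat=2\identity-\addAdvMat$ is then a valid witness for \cref{prop:mult-adv}, and hence for \cref{prop:mult-adv-query-bound}, with $\lambda=19/10$ and $\eta=1/2$. Validity of the $(\lambda,\eta)$ pair comes from \cref{lem:boolean-mult-lambda-eta}; that lemma uses $\lambda<2$ and the bipartite block structure of $\addAdvMat$. The side condition $\varepsilon<1-\eta$ holds because $1/10<1/2$. The conclusion of \cref{lem:bool-add-to-mult} gives a multiplicative lower bound of at least $T^\pm/\zeta_2(1/10)\ge T^\pm/14$. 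By the definition of $\madv$, this yields
\[
\madv^\dist_{1/10,19/10,1/2}(f)\ge \negAdvBound_{1/10,\dist}(f)/14.
\]

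Combining the two steps gives $Q_{1/10}(f)\le 14C'\cdot\madv^\dist_{1/10,19/10,1/2}(f)$, so $C=14C'$ works.

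The only point needing care is the requirement $\lambda\le\norm{\multAdvMat}$ in \cref{prop:mult-adv}. Since $\norm{\addAdvMat}=1$ and $\addAdvMat$ has the off-diagonal block form, its spectrum is symmetric about $0$. So $\addAdvMat$ has eigenvalue $-1$, and $\norm{\multAdvMat}=3\ge 19/10$. I do not expect a real obstacle: everything is a direct composition of the earlier results.
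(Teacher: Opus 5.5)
Your proposal is correct and follows the paper's proof essentially line for line: choose $\dist$ attaining $\negAdvBound_{1/10}(f)$, apply \cref{prop:reichardt-adv-upperbound}, then use \cref{lem:bool-add-to-mult} to get $\madv^\dist_{1/10,19/10,1/2}(f)\ge \negAdvBound_{1/10,\dist}(f)/14$, and set $C=14C'$. Your extra check that $\norm{\multAdvMat}=3\ge 19/10$ (from the spectrum of the block off-diagonal $\addAdvMat$ being symmetric about $0$) is also correct, and it fills in a detail the paper leaves implicit.
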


\begin{proof}
    Let $\dist$ be the distribution such that
    $\negAdvBound_{1/10}(f)=\negAdvBound_{1/10,\dist}(f)$.
    Again, by \cref{prop:reichardt-adv-upperbound}, there is a constant $C'$ such that $Q_{1/10}(f) \leq C' \cdot\negAdvBound_{1/10}(f)  =C' \cdot\negAdvBound_{1/10,\dist}(f)$.
    Then by \cref{lem:bool-add-to-mult} with $\varepsilon=1/10$,
    $\lambda = 19/10$, and $\eta = 1/2$, we have
    $\madv^\dist_{\varepsilon,\lambda,\eta}(f)\ge \negAdvBound_{1/10,\dist}(f)/14$.
    Setting $C=14C'$ yields the claim.
\end{proof}

\section{Strong selective direct product theorems for quantum query\\ complexity}\label{sec:strong-select}
\subsection{Strong selective direct product theorems for specific functions}
\label{sec:specific}

The starting point for proving these theorems are simple (apparently new) proofs of quantum query lower bounds for the $\search_n$, $\OR_n$, and $\Parity_n$ functions inspired by the recording query method. 
The basic idea of these single copy lower bounds is the following:
\begin{itemize}
    \item We can decompose the space of
    quantum states for the input registers
    into orthogonal components $V^0$ and
    $V^1$ such that for states in $V^0$, the 
    required output is completely
    unpredictable.
    \item  The initial state, some form of uniform superposition, is in $V^0$.
    \item We can strongly upper bound the 
    total amplitude that gets 
    moved to $V^1$ from states in $V^0$ as a function of the number
    of queries made.
\end{itemize}

Though we explicitly track the states of the input registers as recording query arguments do, we do not use the per-coordinate $\perp$ symbol to track individual un-queried coordinates as an indication of the algorithm's lack of knowledge, 
Instead, we use the size of the projection on the full subspace of states in $V^0$ 
to track the algorithm's lack of knowledge. 

We give the details of these arguments in~\cref{sec:single-copy-search-or}.

\subsubsection{Strong selective direct product theorem for \texorpdfstring{$\search$}{SEARCH}}
\label{sec:search}

Let $\search_n$ be the promise problem defined on the set of binary strings $x\in \{0,1\}^n$
of weight 1.

\selectivesearch*


Following the single copy lower bound in~\cref{sec:single-copy-search-or}, we define $\ket{\chi_0} = \frac{1}{\sqrt{n}}\sum_{j \in [n]} \ket j$ as the state corresponding to the input register, a uniform superposition over inputs of Hamming weight 1 and define the projector $\Pi_{V^0} = \identity_\workreg \otimes (\op{\chi_0}{\chi_0})_\inputreg$ onto $V^0=\spn{\ket{\chi_0}}$.
The proof for a single copy breaks down the (square root of the) success probability of the algorithm as
\begin{align*}
    \norm{\pisucc \ket{\psi_t} }&=  \norm{\pisucc (\identity-\Pi_{V^0}) \ket{\psi_t} + \pisucc  \Pi_{V^0} \ket{\psi_t} }\\
    &\leq \norm{ (\identity-\Pi_{V^0}) \ket{\psi_t}} + \norm{ \pisucc  \Pi_{V^0} \ket{\psi_t} }
\end{align*}
The second term was bounded in a straightforward manner; since the input register is forced to be uniform, the success probability is no better than a random guess. 
Bounding the first term amounts to inductively bounding how much a single quantum query can ``push'' us away from the uniform state in the input register.
We extend this methodology to multiple copies of $\search_n$.

The initial state of the quantum query algorithm for $\easierdp{\search_n}{k}{m}$ is
$\ket{\psi_0}=\ket{(0,0), 0, 0}\ket{\chi_0}^{\otimes m}$.

For any $Q\subseteq [m]$, let $\Pi_Q = \bigotimes_{i \in Q}(\identity - \Pi_{V^0})_i \bigotimes_{i \not \in Q} (\Pi_{V^0})_i$ be the projection onto $V^0=\spn\{\ket{\chi_0}\}$ for those coordinates not in $Q$, and the projection onto the orthogonal complement $V^1$ of $V^0$ otherwise, acting on the input register $\calI$.
Let $\basis_1$ be an orthonormal basis for $V^1$.

Let $\Pi_\ell := \sum_{Q \subseteq [m], |Q| = \ell} \Pi_Q$.
Intuitively, $\Pi_\ell$ is the projection onto states that have learned something about exactly $\ell$ out of the $m$ copies of our $\search_n$ relation.
Note that the $\Pi_Q$ are mutually orthogonal, so the $\Pi_\ell$ are also orthogonal and we have
 $\sum_{\ell = 0}^m \Pi_\ell = \identity$.
Define $\Pi_{\le \ell}:= \sum_{j\le \ell} \Pi_j = \sum_{Q\subset [m], |Q|\le \ell} \Pi_Q$.

 Let $\pisucc$ be as defined in \eqref{eq:output-k}. 
        Specifically, given $\ket{i,j,p,w}_\calW$, there is an input-independent answer $q(w) = (K(w),\tau(w))$ with $\tau(w)\in [n]^{K(w)}$ specifying the purported values of $\search_n(x_j)$ for each $j\in K(w)$; $\pisucc = \sum_w \Pi_w\otimes \Pi_{K(w),\tau(w)}$ projects onto those states where these answers are correct.
We will bound our success probability as
\begin{align}
    \norm{\pisucc \ket{\psi_T}}^2 
    &= \norm{\pisucc  \sum_{\ell} \Pi_\ell \ket{\psi_T}}^2
    \leq\bigg(\norm{\pisucc \Pi_{\le \floor{\beta k}} \ket {\psi_T}}+ \sum_{\ell > \beta k}\norm{ \Pi_\ell \ket {\psi_T}} \bigg)^2\label{eq:success-beta-bound}
\end{align}
for  $\beta$ satisfying the conditions of \cref{thm:search-sqdpt}.
Bounding the first term will be relatively straightforward, so we do that first.
To bound the second term, we will argue inductively that a single quantum query cannot push us too far away from the uniform input distribution on too many coordinates.

\begin{lemma}
    \label{lemma:low-succ-small-ell}
    \begin{math}
        \norm{\pisucc \Pi_{\le \ell} \ket{\psi_T}} \leq \binom{k}{\leq \ell} (1/\sqrt{n})^{k-\ell}.
    \end{math}
\end{lemma}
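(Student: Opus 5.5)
The plan is to fix the worktape basis state $w$ first, bound each piece uniformly, and then use orthogonality across the different $w$. Since $\pisucc=\sum_w \Pi_w\otimes \Pi^{\search_n,m}_{K(w),\tau(w)}$ and the $\Pi_w$ are orthogonal projections commuting with every $\Pi_Q$ (those act only on $\inputreg$), we have
\begin{equation*}
\norm{\pisucc\Pi_{\le\ell}\ket{\psi_T}}^2=\sum_w \norm{\Pi^{\search_n,m}_{K(w),\tau(w)}\,\Pi_{\le \ell}\,(\Pi_w\ket{\psi_T})}^2 .
\end{equation*}
It therefore suffices to prove the following, for every fixed $K\in\binom{[m]}{k}$, $\tau\in[n]^K$ and vector $v$:
\begin{equation*}
\norm{\Pi^{\search_n,m}_{K,\tau}\Pi_{\le\ell}v}\le \binom{k}{\le\ell}n^{-(k-\ell)/2}\norm{v}.
\end{equation*}
Summing the squares over $w$ and using $\sum_w\norm{\Pi_w\ket{\psi_T}}^2=1$ then gives the lemma.

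For fixed $(K,\tau)$, group the sets $Q$ with $|Q|\le\ell$ by $S=Q\cap K$, which forces $|S|\le\ell$. For each such $S$ define
\begin{equation*}
A_S:=\sum_{Q:\,Q\cap K=S,\ |Q|\le \ell}\Pi_Q=\bigotimes_{i\in S}(\identity-\Pi_{V^0})_i\otimes\bigotimes_{i\in K\setminus S}(\Pi_{V^0})_i\otimes P_S .
\end{equation*}
Here $P_S$ is the sum of the tensor products over the coordinates outside $K$ with at most $\ell-|S|$ factors equal to $\identity-\Pi_{V^0}$. Since the $\Pi_Q$ are mutually orthogonal, $P_S$ is an orthogonal projection. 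This gives $\Pi_{\le\ell}=\sum_{S\subseteq K,|S|\le\ell}A_S$.

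Now $\Pi^{\search_n,m}_{K,\tau}$ is also a tensor product: it is $\Pi_{\{x:x_{\tau_i}=1\}}=\op{e_{\tau_i}}{e_{\tau_i}}$ on each coordinate $i\in K$, where $e_{\tau_i}$ is the weight-one string with its $1$ at position $\tau_i$, and the identity elsewhere. The product $\Pi^{\search_n,m}_{K,\tau}A_S$ therefore factors coordinatewise, and its operator norm is the product of the factor norms. On $i\in K\setminus S$ the factor is $\op{e_{\tau_i}}{e_{\tau_i}}\op{\chi_0}{\chi_0}$, which has norm $|\ip{e_{\tau_i}}{\chi_0}|=1/\sqrt n$. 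Every other factor is a product of projections and has norm at most $1$. Hence $\norm{\Pi^{\search_n,m}_{K,\tau}A_S}\le n^{-(k-|S|)/2}$.

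By the triangle inequality,
\begin{equation*}
\norm{\Pi^{\search_n,m}_{K,\tau}\Pi_{\le\ell}v}\le\sum_{s=0}^{\ell}\binom{k}{s}n^{-(k-s)/2}\norm v\le\binom{k}{\le\ell}n^{-(k-\ell)/2}\norm v,
\end{equation*}
which is the required bound.

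The only step needing care is the regrouping of $\Pi_{\le\ell}$ into the operators $A_S$. The point is to avoid summing over all $\binom{m}{\le \ell}$ sets $Q$, which would give a factor depending on $m$ rather than $k$. The regrouping avoids this because the projection onto coordinates outside $K$ collapses into the single projection $P_S$, so only the $\binom{k}{\le\ell}$ choices of $S$ remain.
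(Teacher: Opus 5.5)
Your proof is correct and follows essentially the same route as the paper. Both arguments split over worktape states by orthogonality, group the sets $Q$ by their intersection with $K(w)$ (the paper indexes the groups by $A=K(w)\setminus Q$), apply the triangle inequality over the at most $\binom{k}{\le\ell}$ groups, and extract a $1/\sqrt{n}$ factor from each coordinate of $K(w)\setminus Q$, where the input register is $\ket{\chi_0}$. Your tensor-factorization operator-norm bound is a cleaner packaging of the paper's explicit amplitude bookkeeping, including its within-group orthogonality step.
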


\begin{proof}
    We have 
    \begin{align*}
        \norm{\pisucc \Pi_{\le \ell}& \ket \psi_T}^2 \\&= \norm{\pisucc \sum_{|Q| \le \ell} \Pi_Q \ket {\psi_T}}^2 \\
        &= \bignorm{\sum_{i,j,p,w} \sum_{|Q| \le \ell}  \sum_{v\in \basis_1^Q} \alpha_{i,j,p,w,v} \pisucc  \ket{i,j,p,w}\ket{v}_Q\ket{\chi_0}_{\overline{Q}} }^2\\
        \noalign{\textrm{for some $\alpha_{i,j,p,w,v}$ with $\sum_{i,j,p,w,v}|\alpha_{i,j,p,w,v}|^2= 1$,}}
        &\leq \bignorm{\sum_{i,j,p,w} \ \sum_{\substack{A \subseteq K(w) \\ |A| \geq k-\ell}} \sum_{\substack{Q \subseteq [m] \\ K(w) \setminus Q = A}}\  \sum_{v\in \basis_1^Q} \alpha_{i,j,p,w,v} \ket{i,j,p,w} \Pi_{K(w), \tau(w)} \ket{v}_Q\ket{\chi_0}_{\overline{Q}} }^2,\\
        \noalign{\text{since $\Pi_{succ}=\sum_w \Pi_w\otimes \Pi_{K(w),\tau(w)}$ and we have only increased the number of allowed $Q$,}}
        &
        \leq \sum_{i,j,p,w}\bignorm{ \sum_{\substack{A \subseteq K(w) \\ |A| \geq k-\ell}}\,\sum_{\substack{Q\subseteq [m]\\  K(w)\setminus Q = A} } \sum_{v\in \basis_1^Q} \alpha_{i, j, p, w, v} \ket{i,j,p,w}\Pi_{K(w),\tau(w)}\ket{v}_Q\ket{\chi_0}_{\overline{Q}} }^2, \\
        \noalign{\text{since states for distinct $i,j,p,w$ are orthogonal,}}
        &\leq \sum_{i,j,p,w}\bignorm{ \sum_{\substack{A \subseteq K(w) \\ |A| \geq k-\ell}}\,\sum_{\substack{Q\subseteq [m]\\  K(w)\setminus Q = A} }  \sum_{v\in \basis_1^Q} \alpha_{i, j, p, w, x_Q} \ket{i,j,p,w}\Pi_{A,\tau(w)_{A}}\ket{v}_Q\ket{\chi_0}_{\overline{Q}} }^2, \\
        \noalign{\text{since $\Pi_{A,\tau(w)_{A}}$ requires success only on a subset of the coordinates in $K(w)$,}}
        &\leq \sum_{i,j,p,w}\binom{k}{\leq \ell}^2 \max_{\substack{A \subseteq K(w) \\ |A|\geq k-\ell}}\,\bignorm{  \sum_{\substack{Q\subseteq [m]\\  K(w)\setminus Q = A} } \sum_{v\in \basis_1^Q} \alpha_{i, j, p, w, v} \ket{i,j,p,w}\ket{v}_Q\Pi_{A, \tau(w)_{A}}\ket{\chi_0}_{\overline{Q}} }^2\\
         \noalign{\text{and, choosing $A_w^*$ to be the maximizing value of $A$, this is}}
        &\leq\binom{k}{\leq \ell}^2 \sum_{i,j,p,w} (1/n)^{k-\ell} \sum_{Q: K(w)\setminus Q = A_w^*} \sum_{v\in \basis_1^Q} |\alpha_{i,j,p,w,v}|^2\\
        &\leq \binom{k}{\leq \ell}^2 (1/n)^{k-\ell}.\qedhere
    \end{align*} 
\end{proof}
Next, we come to the core of the argument, which bounds $\Delta_{t, \ell} := \norm{\Pi_\ell \ket{\psi_t}}$.
We prove the following recursive lemma on $\Delta_{t, \ell}$.
It bounds how much progress can be made (or, counterintuitively, lost) by a single query,
generalizing the argument for the single copy case.

\begin{lemma}
    \label{lem:delta-recurrence}
    $\Delta_{t,\ell} := \norm{ \Pi_\ell \ket {\psi_t}} \leq \Delta_{t-1, \ell} + \frac{2}{\sqrt{n}} \Delta_{t-1,\ell-1} + \frac{2}{
    \sqrt n} \Delta_{t-1,\ell + 1}.$
\end{lemma}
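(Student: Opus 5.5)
Write $\ket{\psi_t} = U_t \oracle \ket{\psi_{t-1}}$. Since $U_t$ acts only on $\workreg$ and every $\Pi_\ell$ acts only on $\inputreg$, the two commute and $U_t$ is unitary. Hence $\Delta_{t,\ell} = \norm{\Pi_\ell \oracle \ket{\psi_{t-1}}}$, and the task reduces to understanding how one oracle call moves amplitude between the $\Pi_{\ell'}$ subspaces.

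Insert $\identity = \sum_{\ell'} \Pi_{\ell'}$ before $\oracle$ and use the triangle inequality:
\[
\Delta_{t,\ell} \le \sum_{\ell'} \norm{\Pi_\ell \oracle \Pi_{\ell'} \ket{\psi_{t-1}}}.
\]
The oracle is block diagonal in the workspace basis $\ket{i,j,p,w}$. On a block it acts as $\oracle_{i,j,p}$, which is the single-copy phase oracle $\sum_{x\in\inputspace}(-1)^{p x_j}\op{x}{x}$ on copy $i$ tensored with the identity on the other $m-1$ copies. On the $p=0$ blocks it is the identity. Each block therefore changes the $V^0/V^1$ status of copy $i$ at most.

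Consequently, for $Q \subseteq [m]$, $\Pi_{Q'}\,\oracle_{i,j,p}\,\Pi_Q = 0$ unless $Q' \triangle Q \subseteq \{i\}$. This gives $\Pi_\ell \oracle \Pi_{\ell'} = 0$ unless $|\ell - \ell'| \le 1$, leaving three terms.

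For $\ell' = \ell$, bound the term by $\norm{\Pi_\ell \ket{\psi_{t-1}}} = \Delta_{t-1,\ell}$, since $\oracle$ is unitary and $\norm{\Pi_\ell}\le 1$.

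For $\ell' = \ell \pm 1$, define the single-copy quantities $a := \norm{(\identity - \Pi_{V^0})\,O_j\,\Pi_{V^0}}$ and $b := \norm{\Pi_{V^0}\,O_j\,(\identity - \Pi_{V^0})}$, where $O_j$ is the phase flip on $x_j$ for a weight-one input. Acting on $\ket{\chi_0}$, $O_j$ flips the sign of the $\ket{j}$ component, so $O_j\ket{\chi_0} - \ket{\chi_0} = -\tfrac{2}{\sqrt n}\ket{j}$. Thus $a \le 2/\sqrt n$, and $b = a$ by taking adjoints ($O_j$ is Hermitian). The single-copy analysis in \cref{sec:single-copy-search-or} presumably records exactly this bound.

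Now fix a block $(i,j,p,w)$ with $p=1$ and decompose $\Pi_{\ell-1}\ket{\psi_{t-1}}$ into its $\Pi_Q$ components with $|Q|=\ell-1$. Only components with $i\notin Q$ can move into $\Pi_\ell$, and they land in $\Pi_{Q\cup\{i\}}$. These images are mutually orthogonal for distinct $Q$, since the $\Pi_{Q\cup\{i\}}$ are orthogonal projections. Each component's norm is multiplied by at most $a$, because the operator is $(\identity-\Pi_{V^0})O_j\Pi_{V^0}$ on copy $i$ tensored with an isometry on the rest.

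Summing the squared norms over $Q$ and over the orthogonal workspace blocks gives
\[
\norm{\Pi_\ell \oracle \Pi_{\ell-1}\ket{\psi_{t-1}}} \le \tfrac{2}{\sqrt n}\,\Delta_{t-1,\ell-1},
\]
and the symmetric argument with $b$ gives $\tfrac{2}{\sqrt n}\Delta_{t-1,\ell+1}$ for the $\ell'=\ell+1$ term.

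The main obstacle is this orthogonality bookkeeping in the $\ell\pm1$ terms. A naive triangle inequality over the sets $Q$ or over the indices $i$ would lose factors of $m$ or $\binom{m}{\ell}$. Avoiding that requires carefully using that each workspace block touches only one copy $i$, and that the map $Q\mapsto Q\cup\{i\}$ (respectively $Q\setminus\{i\}$) is injective within a block. The $V^1\to V^1$ part of the query, which stays within $\Pi_\ell$, causes no trouble because it is absorbed into the $\ell'=\ell$ term.
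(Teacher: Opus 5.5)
Your proposal is correct and follows essentially the same route as the paper: commute out $U_t$, use locality of the oracle to reduce to the three terms $\ell'\in\{\ell-1,\ell,\ell+1\}$, and bound the off-diagonal terms using orthogonality across workspace blocks together with the injectivity of $Q\mapsto Q\cup\{i\}$ (resp.\ $Q'\mapsto Q'\setminus\{i\}$) within each block. The one small difference is the $\ell+1\to\ell$ term: you observe that $\Pi_{V^0}O_j(\identity-\Pi_{V^0})$ is the adjoint of $(\identity-\Pi_{V^0})O_j\Pi_{V^0}$, whereas the paper expands $V^1$ vectors in the standard basis and applies Cauchy--Schwarz; your version is slightly cleaner and gives the same $2/\sqrt{n}$ factor.
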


\begin{proof}
     First observe that we have
    \begin{align*}
        \Delta_{t,\ell} &= \norm{\Pi_\ell \calO \sum_\ell \Pi_\ell \ket{\psi_{t-1}}} \\
        &\leq \norm{\Pi_\ell \calO \Pi_{\ell-1} \ket{\psi_{t-1}}} + \norm{\Pi_\ell \calO \Pi_{\ell} \ket{\psi_{t-1}}} + \norm{\Pi_\ell \calO \Pi_{\ell+1} \ket{\psi_{t-1}}},
    \end{align*}
    where we have dropped projectors outside of $\{\Pi_{\ell-1}, \Pi_{\ell}, \Pi_{\ell-1}\}$, since $\calO$ only acts on one coordinate at a time, which implies that $\Pi_\ell \calO \Pi_{\ell'} \ket{\psi} = 0$ for all other $\ell' \not\in \{\ell -1, \ell ,\ell +1\}$.
    The middle term is bounded by $\Delta_{t-1, \ell}$ since $\Pi_\ell\calO$ can only reduce the norm.

    Consider the first term. 
    For each $(i,j)\in [m]\times [n]$, let $\Pi_{i,j}$ project onto states where the query register is $(i,j)$ so $\sum_{i,j}\Pi_{i,j}=\identity$.
    Observe that for $|Q'|=\ell$ and $|Q|=\ell-1$, we have
    $\Pi_{Q'} \calO \Pi_{i,j}\Pi_{Q}=0$ unless $Q'=Q\cup\set{i}$.
    Therefore, 
    $$\Pi_\ell \calO \Pi_{\ell-1}=\sum_{|Q|=\ell-1}\  \sum_{(i,j):\ i\notin Q}\Pi_{Q\cup\, i} \calO \Pi_{i,j} \Pi_{Q}.$$
    Also observe that 
    after applying $\Pi_{Q}$, the $i$'th input register is
    spanned by $\ket{\chi_0}_i$ and that 
    $\calO\Pi_{i,j}$ maps $\ket{\chi_0}_i$ to  $-2/\sqrt{n}\ket{j}_i + \ket{\chi_0}_i$, leaving the rest
    of the state unchanged.

    Therefore, we have
    \begin{align*}
        &\norm{\Pi_\ell \calO \Pi_{\ell-1}\ket{\psi_{t-1}}}^2\\ &= \norm{\sum_{|Q|=\ell-1}\  \sum_{(i,j):\  i\notin Q}\Pi_{Q\cup\, i} \calO \Pi_{i,j} \Pi_{Q}\ket{\psi_{t-1}}}^2 \\
        &=\norm{\sum_{|Q| = \ell-1} \ \sum_{i,j,p,w:\
         i\notin Q}\Pi_{Q\cup\, i}\calO \sum_{v\in \basis_1^Q} \alpha_{i,j,p,w, v} \ket{i,j,p,w} \ket{\chi_0}_i\ket{\chi_0}_{\overline {Q\cup\, i}}\ket{v}_Q}^2\\
        &\qquad \textrm{for some amplitudes $\alpha_{i,j,p,w,v}$ with $\sum_{i,j,p,w,v}|\alpha_{i,j,p,w,v}|^2=\norm{\Pi_{Q} \ket{\psi_{t-1}}}^2 $}\\
        &= \norm{ \sum_{|Q| = \ell-1}\ \sum_{i,j,p,w:\
         i\notin Q}\  \Pi_{Q\cup\, i} \sum_{v\in \basis_1^Q} \alpha_{i,j,p,w, v}\ket{i,j,p,w}\big( \frac{-2}{\sqrt{n}} \ket{j}_i + \ket{\chi_0}_i\big)\ket{\chi_0}_{\overline {Q\cup\, i}}\ket{v}_Q}^2\\
        &= \norm{ \sum_{|Q| = \ell-1}\ \sum_{i,j,p,w:\
         i\notin Q}\ \sum_{v\in \basis_1^Q}   \frac{-2\alpha_{i,j,p,w, v}}{\sqrt{n}}\Pi_{Q\cup\, i}\ket{i,j,p,w} \ket{j}_i \ket{\chi_0}_{\overline {Q\cup\, i}}\ket{v}_Q}^2,\\
         &\qquad \textrm{since $\Pi_{Q \cup i}$ projects out $\ket{\chi_0}_i$}\\
        &\le \frac{4}{n}\sum_{|Q| = \ell-1} \norm{\Pi_{Q} \ket{\psi_{t-1}}}^2\\
        &=\frac{4}{n}\norm{\Pi_{\ell-1}\ket{\psi_{t-1}}}^2=\frac{4}{n}\Delta^2_{t-1,\ell-1}.
    \end{align*}

    We now perform a similar analysis for the last term. 
   Similar to the first term, for $|Q'|=\ell+1$ and $|Q|=\ell$, 
    we have $\Pi_Q \calO\Pi_{i,j}\Pi_{Q'}=0$ unless
    $Q=Q'\setminus \set{i}$.
    Therefore
    $$\Pi_\ell\calO\Pi_{\ell+1}=\sum_{|Q'|=\ell+1}\ \sum_{i,j:\ i\in Q'}\Pi_{Q'\setminus i}\calO \Pi_{i,j}\Pi_{Q'}.$$
    Writing any state $\ket{v}_i \in V^1$ in the standard basis yields
    $\sum_{j}\beta_j \ket{j}_i$ where  $\beta_j=\braket{v|j}$ so 
    applying query $(i,j)$ yields $\ket{v}_i-2\beta_j \ket{j}_i$ (i.e. a reflection around $\ket{j}_i$); 
    projection onto
    $V^0=\spn(\ket{\chi_0})$ in coordinate $i$ then yields $\frac{-2 \beta_j}{\sqrt{n}} \ket{\chi_0}$.
    Therefore,  we have
    \begin{align*}
        &\norm{\Pi_\ell \calO \Pi_{\ell+1}\ket{\psi_{t-1}}}^2\\ &=\norm{\sum_{|Q'|=\ell+1}\  \sum_{(i,j):\  i\in Q'}\Pi_{Q'\setminus i} \calO \Pi_{i,j} \Pi_{Q'}\ket{\psi}}^2 \\
        &=\norm{\sum_{|Q'| = \ell+1}\  \sum_{i,j,p,w:\
         i\in Q'}\Pi_{Q'\setminus i}\calO \sum_{v\in \basis_1^{Q'}} \alpha_{i,j,p,w,v} \ket{i,j,p,w} \ket{v}_{Q'}\ket{\chi_0}_{\overline {Q'}}}^2\\
        &\qquad \textrm{for some amplitudes $\alpha_{i,j,p,w,v}$ with $\sum_{i,j,p,w,v}|\alpha_{i,j,p,w,v}|^2=\norm{\Pi_{Q'} \ket{\psi_{t-1}}}^2 $,}\\
        &=\norm{\sum_{|Q'| = \ell+1}\  \sum_{i,j,p,w:\
         i\in Q'}\Pi_{Q'\setminus i}\calO \sum_{v\in \basis_1^{Q'}} \alpha_{i,j,p,w,v} \ket{i,j,p,w}\ket{v_i}_i \ket{v_{Q'\setminus i}}_{Q'\setminus i}\ket{\chi_0}_{\overline {Q'}}}^2\\
        &= \norm{ \sum_{|Q'| = \ell+1}\ \sum_{i,j,p,w:\
         i\in Q'}\ \Pi_{Q'\setminus i} \sum_{v\in \basis_1^{Q'}} \alpha_{i,j,p,w, v} \ket{i,j,p,w}\big( -2\beta^{x_i}_j \ket{j}_i + \ket{v_i}_i\big)\ket{v_{Q'\setminus i}}_{Q'\setminus i}\ket{\chi_0}_{\overline {Q'}}}^2\\
         &\qquad\textrm{for $\beta^{v_i}_j=\braket{v_i|j}$}\\
        &=\norm{ \sum_{|Q'| = \ell+1}\ \sum_{i,j,p,w:\
         i\in Q'}\ \sum_{v\in \basis_1^{Q'}}   \frac{-2\alpha_{i,j,p,w, x_{Q'}}\beta^{x_i}_j}{\sqrt{n}}\ket{i,j,p,w} \ket{\chi_0}_i \ket{v_{Q'\setminus i}}_{Q'\setminus i}\ket{\chi_0}_{\overline {Q'}}}^2,\\
         &\qquad \textrm{since $\Pi_{Q' \setminus i}$ projects out $\ket{v_i}_i \in V^1$}\\
         &=\sum_{|Q'| = \ell+1}\ \sum_{i,j,p,w:\
         i\in Q'}\ \sum_{v_{Q'\setminus i}\in \basis_1^{Q'\setminus i}} \left | \sum_{v_i\in \basis_1}\frac{2}{\sqrt n}|\alpha_{i,j,p,w,v}| \cdot\beta_j^{v_i}\right |^2\\
             &\le \frac{4}{n}\sum_{|Q'| = \ell+1}\ \sum_{i,j,p,w:\
         i\in Q'}\ \sum_{v\in \basis_1^{Q'}} |\alpha_{i,j,p,w,v}|^2 \qquad\textrm{by Cauchy-Schwarz since $\sum_{v_i\in \basis_1}|\beta^{v_i}_j|^2=\sum_{v_i\in \basis_1}\braket{v_i|j}^2\le 1$ }\\
        &\le \frac{4}{n}\sum_{|Q'| = \ell+1} \norm{\Pi_{Q'} \ket{\psi_{t-1}}}^2 \\
        &=\frac{4}{n}\norm{\Pi_{\ell+1}\ket{\psi_{t-1}}}^2=\frac{4}{n}\Delta^2_{t-1,\ell+1}.\qedhere
    \end{align*}
\end{proof}

\begin{corollary}
    \label{lem:delta-bound}
    \begin{math}\Delta_{t, \ell} := \norm{\Pi_\ell \ket{\psi_t}} \leq \binom{t }{\ell} (\frac{4}{\sqrt{n}})^\ell\end{math} 
    for $t \geq \ell$ 
    and $\Delta_{t,\ell}=0$ for $t<\ell$. 
\end{corollary}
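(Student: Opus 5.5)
We argue by induction on $t$, using the recurrence of \cref{lem:delta-recurrence} together with the initial conditions. At $t=0$ the state is $\ket{\psi_0}=\ket{(0,0),0,0}\ket{\chi_0}^{\otimes m}$, which lies entirely in the range of $\Pi_\emptyset=\Pi_0$. Hence $\Delta_{0,0}=1=\binom{0}{0}(4/\sqrt n)^0$ and $\Delta_{0,\ell}=0$ for $\ell\ge 1$, which covers both the $t\ge\ell$ case and the $t<\ell$ case at $t=0$. The vanishing part for general $t$ is also immediate from the structure used in the proof of \cref{lem:delta-recurrence}. Each query changes $\ell$ by at most one, since $\Pi_\ell\calO\Pi_{\ell'}=0$ for $|\ell-\ell'|>1$, and the unitaries $U_t$ act only on $\calW$ and so commute with every $\Pi_Q$. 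Therefore $\Delta_{t,\ell}=0$ whenever $\ell>t$. Throughout, we use the convention $\Delta_{t,-1}=0$.

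For the inductive step with $t\ge 1$ and $\ell\le t$, apply \cref{lem:delta-recurrence} and substitute the inductive hypothesis for each of the three terms:
\begin{align*}
\Delta_{t,\ell}&\le \binom{t-1}{\ell}\Big(\tfrac{4}{\sqrt n}\Big)^{\ell}+\tfrac{2}{\sqrt n}\binom{t-1}{\ell-1}\Big(\tfrac{4}{\sqrt n}\Big)^{\ell-1}+\tfrac{2}{\sqrt n}\binom{t-1}{\ell+1}\Big(\tfrac{4}{\sqrt n}\Big)^{\ell+1}.
\end{align*}
Binomial coefficients with out-of-range indices are interpreted as $0$, which agrees with the vanishing statement. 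The first two terms together are at most $\big(\binom{t-1}{\ell}+\tfrac12\binom{t-1}{\ell-1}\big)(4/\sqrt n)^\ell$. So it suffices to show that the third term is at most $\tfrac12\binom{t-1}{\ell-1}(4/\sqrt n)^\ell$; Pascal's rule then gives the total bound $\binom{t}{\ell}(4/\sqrt n)^\ell$.

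The main obstacle is the third, ``backward'' term coming from $\Delta_{t-1,\ell+1}$. It carries an extra factor $\tfrac{2}{\sqrt n}\cdot\tfrac{16}{n}$ relative to the $\ell$-level scale, but its binomial coefficient $\binom{t-1}{\ell+1}$ can be much larger than $\binom{t-1}{\ell-1}$ when $t\gg\ell$. Naively, the required comparison reads $\binom{t-1}{\ell+1}\cdot 8/n\le\tfrac12\binom{t-1}{\ell-1}$, i.e.\ roughly $t^2\lesssim \ell^2 n$, which fails for small $\ell$ and large $t$.

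The first repair I would try is to strengthen the inductive hypothesis to a form whose constant absorbs the backward flow. One option is to prove $\Delta_{t,\ell}\le\binom{t}{\ell}a^\ell$ with $a=4/\sqrt n$ by viewing the recurrence as a walk in which each step either stays (weight $1$), goes up (weight $2/\sqrt n$), or goes down (weight $2/\sqrt n$). In that picture $\Delta_{t,\ell}$ is bounded by a sum over lattice paths, and one must show that this sum is at most $\binom{t}{\ell}(4/\sqrt n)^\ell$.

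For the path count, a path ending at level $\ell$ with $u$ up-steps and $d=u-\ell$ down-steps has weight $(2/\sqrt n)^{\ell+2d}$. Choosing the positions of these steps gives at most $\binom{t}{\ell+2d}\binom{\ell+2d}{d}$ paths. If the counting closes in the regime $t\lesssim\sqrt n$, the stated bound follows directly there, since then $\binom{t}{\ell+2d}\le\binom{t}{\ell}(t/\sqrt n)^{2d}$ up to constants. Since the theorem only uses $T\le\beta k\sqrt n/24$ with $t$ compared to $\ell$ at the scale $\sqrt n$, I expect that a constant-factor slack (from $2$ to $4$) is exactly what is needed to dominate the down-step contributions via $\binom{\ell+2d}{d}\le 2^{\ell+2d}$.

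If this does not close uniformly in $t$, I would state and use the bound in the regime $t\le \sqrt n$ per block, or adjust the constant. Either change is harmless for the application to \eqref{eq:success-beta-bound}.
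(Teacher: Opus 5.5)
Your induction is the one the paper uses, and you correctly reduced the step to the backward-term inequality $\frac{8}{n}\binom{t-1}{\ell+1}\le\frac12\binom{t-1}{\ell-1}$, i.e.\ $16(t-\ell)(t-\ell-1)\le n\,\ell(\ell+1)$. However, the proposal never closes. Having seen that this inequality fails for small $\ell$ and large $t$, you turn to an unverified path-counting plan and to fallbacks that would change the statement.

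The missing observation is that the failing regime is exactly where the corollary is vacuous. Because $\Pi_\ell$ is a projection and $\ket{\psi_t}$ is a unit vector, $\Delta_{t,\ell}\le 1$ always. For $\ell=0$ the claimed bound is $1$. For $\ell\ge 1$ and $t>\ell\sqrt n/4$ it is $\binom{t}{\ell}(4/\sqrt n)^\ell\ge\bigl(4t/(\ell\sqrt n)\bigr)^\ell>1$. In the complementary regime $t\le\ell\sqrt n/4$, your own ratio $\binom{t-1}{\ell+1}/\binom{t-1}{\ell-1}=\frac{(t-\ell)(t-\ell-1)}{\ell(\ell+1)}\le t^2/\ell^2\le n/16$ gives precisely the needed inequality. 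The induction is on $t$ for all $\ell$ simultaneously, so the hypothesis at $(t-1,\ell\pm1)$ holds whichever regime those pairs fall in. The step therefore closes for every $t$ with the constant $4$ and no strengthened hypothesis. This two-case split is the paper's entire proof.

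The path-counting route cannot replace that observation, because unrolling the recurrence discards the a priori bound $\Delta_{t,\ell}\le 1$. At $\ell=0$ the path sum is $\sum_{d\ge 0}\binom{t}{2d}C_d(4/n)^d$, with $C_d$ the Catalan numbers. This exceeds the target value $1$ for every $t\ge 2$, so the inequality you propose to prove for the path sum is false there. Also, your estimate $\binom{\ell+2d}{d}\le 2^{\ell+2d}$ already turns the $d=0$ term into the full target $\binom{t}{\ell}(4/\sqrt n)^\ell$, leaving no room for the $d\ge 1$ terms.

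The fallbacks fail too. With $a/\sqrt n$ in place of $4/\sqrt n$, the backward comparison at $\ell=1$ reads $\frac{a}{n}(t-1)(t-2)\le 1-\frac{2}{a}$. Since $\frac1a\bigl(1-\frac2a\bigr)\le\frac18$, this fails for every $a$ as soon as $(t-1)(t-2)>n/8$. So neither changing the constant nor restricting to $t\le\sqrt n$ closes the induction without the trivial regime, and a restricted bound would not be the stated corollary.
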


\begin{proof}
    When $t > \ell \sqrt{n}/4$ the bound is trivial, so we can assume that $t \leq \ell \sqrt{n}/4$.
    By induction and \cref{lem:delta-recurrence}, we have that for $t\ge \ell$,
    \begin{align*}
        \Delta_{t,\ell} &\leq \binom{t-1}{\ell} (4/\sqrt{n})^{\ell} + \frac{2}{\sqrt{n}} \bigparen{\binom{t-1}{\ell-1} (4/\sqrt{n})^{\ell-1} + \binom{t-1}{\ell + 1} (4/\sqrt{n})^{\ell+1}} \\
        &= \binom{t}{\ell} \bigparen{\frac{4}{\sqrt{n}}}^\ell \bigparen{ \frac{t - \ell}{t} + \frac{\ell}{t}\cdot \frac{2}{4} + \frac{2\cdot 4 (t-\ell) (t-\ell - 1)}{n \cdot t (\ell + 1)}} \\
        &\leq  \binom{t}{\ell} \bigparen{\frac{4}{\sqrt{n}}}^\ell \bigparen{1 - \frac{1}{2} \cdot \frac{\ell }{t} + \frac{8t^2}{nt\ell}} \\
        &= \binom{t}{\ell} \bigparen{\frac{4}{\sqrt{n}}}^\ell \bigparen{1 + \frac{\ell}{t} (\frac{8t^2}{n\ell^2} - 1/2)} \\ 
        &\leq \binom{t}{\ell} \bigparen{\frac{4}{\sqrt{n}}}^\ell\qquad\textrm{since $t \leq \ell\sqrt{n} /4$.}\qedhere
    \end{align*}
\end{proof}

We now have everything we need in order to prove \cref{thm:search-sqdpt}

\begin{proof}[Proof of \cref{thm:search-sqdpt}]
    Let $\beta\in (0,1/2]$ satisfy
    $2^{2(\beta+H_2(\beta))/(1-\beta)}\le n$.
    Suppose that the quantum algorithm for $\easierdp{\search_n}{k}{m}$ uses at most $ T =(\beta/24) k \sqrt{n}$ queries.
 Then
    \begin{align*}
        \norm{\pisucc \ket{\psi_T}} &\leq  \bignorm{ \pisucc\Pi_{\le \floor{\beta k}} \ket{\psi_T}}+ \sum_{\ell > \beta k} \bignorm{\Pi_\ell \ket{\psi_T}} \\
        &\leq  \binom{k}{\leq \floor{\beta k}} \bigg(\frac1{\sqrt{n}}\bigg)^{k-\floor{\beta k}}+ \sum_{\ell > \beta k} \Delta_{T, \ell} \qquad\text{by \cref{lemma:low-succ-small-ell} and the definition of $\Delta_{T,\ell}$}\\
        &\leq   2^{H_2(\beta)k} \bigg(\frac1{\sqrt{n}}\bigg)^{k-\beta k} +\sum_{\ell > \beta k} \Delta_{T, \ell}  \qquad \text{by \cref{prop:binom-tail} since $\beta\le 1/2$}\\
        &\leq \bigg ( \frac{2^{H_2(\beta)}}{n^{(1-\beta)/2}}\bigg )^{k}+\sum_{\ell > \beta k } \binom{T}{\ell} \bigg(\frac{4}{\sqrt{n}}\bigg)^\ell     \qquad \text{by \cref{lem:delta-bound}}\\
        &\leq   \bigg ( \frac{2^{H_2(\beta)}}{n^{(1-\beta)/2}}\bigg )^{k}+2 \cdot 2^{-\beta k} \qquad \text{by our assumption on $T$} \\
        &\leq 2^{-\beta k}+2 \cdot 2^{-\beta k}  = 3 \cdot 2^{-\beta k}\qquad \text{by our assumption on $\beta$}.
    \end{align*}
    Therefore the probability of the algorithm's success $ \norm{\pisucc \ket{\psi_T}}^2\le
    9\cdot 2^{-2\beta k}$.
\end{proof}

\subsubsection{Strong selective direct product theorem for \texorpdfstring{$\OR$}{OR}}
\label{sec:or}

The hard distribution $\calD$ for $\OR_n$ takes on the all zeroes string with probability $1/2$, and each hamming weight one string with probability $1/(2n)$.
This corresponds to a mixed state.
The most natural purification of this mixed state would be the quantum pure state $\frac{1}{\sqrt{2n}}\sum_{j \in [n]} \ket j + \frac{1}{\sqrt{2}}\ket{0^n} \in \C^{n+1}$.
For ease of analysis, we instead choose the purification that initializes the input register in the superposition $\sqrt{\frac{1}{2n}}\sum_{i \in [2n]}\ket i \in \C^{2n}$, where we have identified $\ket{0^n}$ with the superposition $\frac{1}{\sqrt{n}}\sum_{j=n+1}^{2n} \ket j$.
This will allow us to directly port over much of our analysis from the proof of \cref{thm:search-sqdpt}.
Note that the algorithm will only ever query superpositions of coordinates $1,\ldots, n$ and thus cannot distinguish these choices for how to handle no instances.

\selectiveOR*

The ideas behind the proof are exactly the same as those behind \cref{thm:search-sqdpt}. 
For the remainder of this section, $\ket {\chi_0}$ (where we have dropped the $2n$ superscript for ease of notation) equals $\frac{1}{\sqrt{2n}}\sum_{j \in[2n]} \ket{j}$
We use the same case split as in \eqref{eq:success-beta-bound} with our different constraint on $\beta$.

Bounding the second term is similarly straightforward as before.
\begin{lemma}
    \label{lemma:low-succ-small-ell-or}
    \begin{math}
        \norm{\pisucc \Pi_{\le \ell}\ket{\psi_T}} \leq \binom{k}{\leq \ell} (1/\sqrt{2})^{k-\ell}.
    \end{math}
\end{lemma}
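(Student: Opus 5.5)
The proof follows the template of \cref{lemma:low-succ-small-ell} almost verbatim. The only change is the single-coordinate guessing probability: $1/2$ here instead of $1/n$.

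First, expand $\Pi_{\le \ell}\ket{\psi_T}=\sum_{|Q|\le\ell}\Pi_Q\ket{\psi_T}$. Write each $\Pi_Q\ket{\psi_T}$ as a combination $\sum \alpha_{i,j,p,w,v}\ket{i,j,p,w}\ket{v}_Q\ket{\chi_0}_{\overline Q}$ with $v\in\basis_1^Q$. Here $\basis_1$ is now an orthonormal basis of the complement of $\spn\{\ket{\chi_0}\}$ in $\C^{2n}$, and the total squared amplitude is at most $1$. Since $\pisucc=\sum_w\Pi_w\otimes\Pi_{K(w),\tau(w)}$ acts block-diagonally in $w$, and distinct work-register basis states are orthogonal, the squared norm splits as a sum over $(i,j,p,w)$.

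For a fixed $w$, every $Q$ with $|Q|\le \ell$ leaves a set $A=K(w)\setminus Q$ of size at least $k-\ell$. Group the terms by $A$; there are at most $\binom{k}{\le \ell}$ groups. Weaken $\Pi_{K(w),\tau(w)}$ to $\Pi_{A,\tau(w)_A}$, which only requires correctness on the coordinates of $A$ and therefore only enlarges the projector. Apply the triangle inequality together with Cauchy–Schwarz over the groups to pick up the factor $\binom{k}{\le\ell}^2$, and bound by the maximizing group $A^*_w$.

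Within that group, each coordinate $a\in A$ is in state $\ket{\chi_0}_a$, which is untouched by anything else in the tensor product. So the norm factors, and each such coordinate contributes $\norm{\Pi_{\OR^{-1}(\tau_a)}\ket{\chi_0}}^2$. Under the chosen purification, $\OR^{-1}(1)$ corresponds to the basis states $\ket{1},\dots,\ket{n}$ and $\OR^{-1}(0)$ to $\ket{n+1},\dots,\ket{2n}$. Hence this quantity is exactly $1/2$ for either guessed value. This is the only real point to check, and it is precisely where the $2n$-dimensional purification pays off: it makes $\Pi_{\OR^{-1}(y)}$ a coordinate projector on $\C^{2n}$.

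The contribution of the group is therefore at most $(1/2)^{k-\ell}$ times the squared amplitudes in it. Summing over $(i,j,p,w,v)$ with total squared amplitude at most $1$ gives $\norm{\pisucc\Pi_{\le\ell}\ket{\psi_T}}^2\le\binom{k}{\le\ell}^2(1/2)^{k-\ell}$. Taking square roots yields the claim.

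The one step I expect to require care is the factorization within a fixed $A$. Different $Q$ sharing the same $A$ must still give orthogonal contributions after the projection, or at least be bounded jointly. As in the $\search$ case, this works because $\Pi_{A,\tau_A}$ acts only on the coordinates in $A\subseteq\overline Q$, where every term carries $\ket{\chi_0}$. Consequently $\Pi_{A,\tau_A}$ commutes with the decomposition over $Q\setminus K(w)$ and over $v$, and contributes the scalar factor $(1/2)^{|A|}\le (1/2)^{k-\ell}$ in squared norm.
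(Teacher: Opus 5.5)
Your proposal is correct and follows the paper's proof. The paper's sketch simply reruns the argument of \cref{lemma:low-succ-small-ell} and replaces the per-coordinate guessing probability $1/n$ with $\norm{\Pi_{\OR^{-1}(y)}\ket{\chi_0}}^2=1/2$, which the $2n$-dimensional purification makes exact, just as you identify. One minor note: if you genuinely apply Cauchy--Schwarz over the at most $\binom{k}{\le\ell}$ groups instead of the triangle inequality plus a maximum, the squared norm picks up only a single factor of $\binom{k}{\le\ell}$, since the groups are orthogonal and partition the total amplitude, so you would obtain a slightly stronger bound than the one claimed.
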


\begin{proof}[Proof Sketch]
    The proof is identical to the proof of~\cref{lemma:low-succ-small-ell}, except in the last two lines where the probability of
    a fixed answer being correct on $\ket{\chi_0}$ is $1/2$ for $\OR_n$ instead of $1/n$ for
    $\search_n$.
\end{proof}

We have already done all the work we needed to do to bound the first term.
In particular, the input distribution is exactly the same for $\OR_n$ as for $\search_{2n}$.
We therefore have the following immediate corollary.
\begin{corollary}
    \label{cor:delta-bound-OR}
    For $\ell \leq t \leq \ell \sqrt{2n}/4$, we have
    $\norm{\Pi_\ell \ket{\psi_t}}\leq \binom{t}{\ell} \big(\frac{4}{\sqrt{2n}}\big)^\ell$.
\end{corollary}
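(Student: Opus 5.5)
The plan is to observe that, under the chosen purification, the input-register dynamics for $\OR_n$ are literally those of $\search_{2n}$ restricted to queries on indices in $[n]$. Then I would invoke \cref{lem:delta-recurrence} and \cref{lem:delta-bound} with $n$ replaced by $2n$.

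Concretely, each copy's input register lives in $\C^{2n}$ and starts in $\ket{\chi_0}=\frac{1}{\sqrt{2n}}\sum_{j\in[2n]}\ket j$. The phase oracle for query $(i,j)$ with $j\in[n]$ acts on copy $i$ as the reflection $\identity-2\op{j}{j}$ (for nonzero phase; for zero phase it is trivial). This is exactly the action of the $\search_{2n}$ oracle on basis state $\ket j$. Indices $n+1,\dots,2n$, which encode $0^n$, are never flipped.

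I would define $V^0=\spn\{\ket{\chi_0}\}\subset\C^{2n}$, together with $\Pi_Q$, $\Pi_\ell$ and $\Delta_{t,\ell}=\norm{\Pi_\ell\ket{\psi_t}}$, exactly as in the $\search$ section. Then I would rerun the proof of \cref{lem:delta-recurrence}. The two estimates used there still go through:
\begin{itemize}
\item[(a)] $\calO\Pi_{i,j}$ maps $\ket{\chi_0}_i$ to $\ket{\chi_0}_i-\frac{2}{\sqrt{2n}}\ket j_i$.
\item[(b)] For $\ket v_i\in V^1$, projecting $\calO\Pi_{i,j}\ket v_i$ onto $V^0$ gives $-\frac{2\beta_j}{\sqrt{2n}}\ket{\chi_0}_i$, with $\sum_{v_i\in\basis_1}|\beta^{v_i}_j|^2\le 1$.
\end{itemize}
Both only use the fact that the query reflects about a single basis vector of a $2n$-dimensional space with uniform $\ket{\chi_0}$. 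This yields
\[\Delta_{t,\ell}\le\Delta_{t-1,\ell}+\tfrac{2}{\sqrt{2n}}\Delta_{t-1,\ell-1}+\tfrac{2}{\sqrt{2n}}\Delta_{t-1,\ell+1}.\]
One checks that the zero-phase (identity) queries only help, since for them the off-diagonal terms vanish.

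Next I would redo the induction of \cref{lem:delta-bound} with $4/\sqrt{n}$ replaced by $4/\sqrt{2n}$. The base cases are $\Delta_{0,0}=1$ and $\Delta_{t,\ell}=0$ for $t<\ell$, using that each query changes $\ell$ by at most one. The inductive step needs only $\frac{8t^2}{2n\ell^2}\le\frac12$, i.e.\ $t\le\ell\sqrt{2n}/4$, which is precisely the hypothesis. This gives $\Delta_{t,\ell}\le\binom{t}{\ell}(4/\sqrt{2n})^\ell$.

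The only point needing care is that the induction uses the bound for $(t-1,\ell-1)$ and $(t-1,\ell+1)$, whose own hypotheses $t-1\le(\ell\pm1)\sqrt{2n}/4$ may fail at $\ell-1$. The original proof handles this by noting that the bound is trivial (it exceeds $1\ge\Delta$) outside the range. I would follow the same device here: prove the bound for all $t\ge\ell$, using triviality of the bound whenever $t>\ell\sqrt{2n}/4$. That way every inductive appeal is valid, which is how \cref{lem:delta-bound} is stated.
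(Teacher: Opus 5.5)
Your proposal is correct and follows essentially the paper's argument. The paper likewise observes that, under the chosen $2n$-dimensional purification, the $\OR_n$ input dynamics coincide with those of $\search_{2n}$ (with queries confined to indices in $[n]$), and it obtains the bound as an immediate corollary of \cref{lem:delta-recurrence} and \cref{lem:delta-bound} with $n$ replaced by $2n$; your re-verification of the single-query estimates and of the induction range spells out that same reduction in more detail.
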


We now have what we need to prove our selective direct product theorem for $\OR_n$. 

\begin{proof}[Proof of \cref{thm:selectiveOR}]
    Suppose that $\beta>0$ satisfies $3\beta+2H_2(\beta)\le 1$ (and hence $\beta< 1/2$) and that
    the quantum algorithm for $\easierdp{\OR_n}{k}{m}$ uses at most $ T =\frac{\beta}{16} k \sqrt{n}$ queries.
  Then we have  
    \begin{align*}\norm{\pisucc \ket{\psi_T}} &\leq  \bignorm{ \pisucc\Pi_{\le \floor{\beta k}} \ket{\psi_T}}+ \sum_{\ell > \beta k} \bignorm{\Pi_\ell \ket{\psi_T}} \\
        &\leq  \binom{k}{\leq \floor{\beta k}} \bigg(\frac1{\sqrt{2}}\bigg)^{k-\floor{\beta k}}+ \sum_{\ell > \beta k} \Delta_{T, \ell} \qquad\text{by \cref{lemma:low-succ-small-ell-or} and the definition of $\Delta_{T,\ell}$}\\
        &\leq \bigg ( \frac{2^{H_2(\beta)}}{2^{(1-\beta)/2}}\bigg )^{k}+\sum_{\ell > \beta k } \binom{T}{\ell} \bigparen{\frac{4}{\sqrt{2n}}}^\ell \qquad\text{by \cref{prop:binom-tail} and \cref{cor:delta-bound-OR}}\\
        &\le 2^{-\beta k} + \sum_{\ell > \beta k} \bigparen{\frac{4eT}{\ell \sqrt{2n}}}^\ell \qquad \text{since $3\beta+2 H_2(\beta)\le 1$} \\
        &\le 2^{-\beta k} + \sum_{\ell > \beta k} 2^{-\ell}
        \qquad\text{since $T \leq \beta k \sqrt{n}/16$}\\
        &\leq 2^{-\beta k} +2 \cdot 2^{-\beta k}  \leq 3 \cdot 2^{-\beta k} 
    \end{align*}
    so the probability of the algorithm's success $\norm{\pisucc \ket{\psi_T}}^2\le 9 \cdot 2^{-2\beta k}$.
\end{proof}

\subsubsection{Strong selective direct product theorem for \texorpdfstring{$\Parity$}{OR}}
\label{sec:parity}


In order to prove \cref{thm:selectiveparity}, the following proposition, which generalizes \cref{claim:parity-key}, will be key.

\begin{lemma}
    \label{prop:superposition-parity-dp}
    Let $\ket \psi$ be a quantum state on $ k\geq \ell$ registers, each containing $n$ qubits, such that when written in the Hadamard $\{ \ket{+}, \ket -\}$ basis, $\ket{\psi}$ is a superposition over quantum states with strictly fewer than $n/2$ $\ket{-}$'s in each of the first $\ell$ registers.
    For $\tau \in \bits^\ell$, let $\Pi_\tau$ be the projection operator that projects onto standard basis states $x^{1},\ldots,x^{\ell}$ in the first $\ell$ registers such that $\Parity_n(x^{j}) = \tau_j$ for all $j \in \ell$.
    Then we have
    $\norm{\Pi_\tau \ket{\psi}}^2 = 2^{-\ell}$.
\end{lemma}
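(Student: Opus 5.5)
The plan is to write the parity projectors explicitly in terms of the operator $Z^{\otimes n}$ and then use the Hadamard-weight hypothesis to kill every cross term. Assume $\ket\psi$ is normalized; otherwise the conclusion reads $2^{-\ell}\norm{\ket\psi}^2$.

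For a single $n$-qubit register, $Z^{\otimes n}\ket{x} = (-1)^{\Parity_n(x)}\ket{x}$ on standard basis states. Hence the projector onto parity $b$ is $\tfrac12\bigl(\identity + (-1)^b Z^{\otimes n}\bigr)$. Writing $Z_j$ for $Z^{\otimes n}$ acting on register $j$, this gives
$$\Pi_\tau = \bigotimes_{j=1}^{\ell}\tfrac12\bigl(\identity + (-1)^{\tau_j} Z_j\bigr)\otimes \identity_{\text{rest}} = 2^{-\ell}\sum_{S\subseteq[\ell]} (-1)^{\sum_{j\in S}\tau_j}\, Z_S, \qquad Z_S:=\prod_{j\in S} Z_j .$$
Since $\Pi_\tau$ is a projector, $\norm{\Pi_\tau\ket\psi}^2 = \bra\psi \Pi_\tau\ket\psi = 2^{-\ell}\sum_{S}(-1)^{\sum_{j\in S}\tau_j}\bra\psi Z_S\ket\psi$.

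The key step is to show that $\bra\psi Z_S\ket\psi = 0$ for every nonempty $S$. Because $Z\ket{+}=\ket{-}$ and $Z\ket{-}=\ket{+}$, the operator $Z_j$ maps a Hadamard basis state with $w$ minuses in register $j$ to the Hadamard basis state with $n-w$ minuses in register $j$, complementing that register's pattern. By hypothesis, $\ket\psi$ is a superposition of Hadamard basis states with $w<n/2$ in each of registers $1,\dots,\ell$. Take $S\neq\emptyset$ and any $j\in S$. Then $Z_S\ket\psi$ is supported on Hadamard basis states with $n-w>n/2$ minuses in register $j$. These basis states are disjoint from the support of $\ket\psi$, so $Z_S\ket\psi\perp\ket\psi$.

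Only the $S=\emptyset$ term survives, and it gives $\norm{\Pi_\tau\ket\psi}^2 = 2^{-\ell}\ip{\psi}{\psi} = 2^{-\ell}$. The only delicate point is the strictness of $w<n/2$. It is exactly what guarantees $n-w\neq w$, so the complemented support cannot overlap the original one (for even $n$, $w=n/2$ would be its own complement). Everything else is bookkeeping, and the registers beyond the first $\ell$, together with any workspace, play no role.
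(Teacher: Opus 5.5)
Your proof is correct and is essentially the paper's own argument: the paper also expands $\Pi_\tau = 2^{-\ell}\sum_{y\in\bits^\ell}(-1)^{\tau\cdot y}\bigotimes_j (Z^{\otimes n})_j^{y_j}$. It then kills every $y\neq 0$ term because $Z$ swaps $\ket{+}$ and $\ket{-}$, so complementing any register with fewer than $n/2$ minuses yields a state orthogonal to $\ket\psi$.
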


\begin{proof}
    Recall the Pauli matrix
    $
    Z = \begin{bmatrix}
        1 & 0 \\
        0 & -1
    \end{bmatrix}$.
    Observe that for a single register with $\tau_j \in \bits$, the operator $\frac{1}{2}\bigparen{ \identity + (-1)^{\tau_j} Z^{\otimes n}}$
    projects onto states $\ket{x}$ such that $\Parity_n(x) = \tau_j$.
    Thus, we can write 
    $$\Pi_\tau = 2^{-\ell} \bigotimes_{j = 1}^\ell \big(\identity + (-1)^{\tau_j}Z^{\otimes n}\big)_j \bigotimes_{j = \ell+1}^k \identity_j = 2^{-\ell}\sum_{y \in \bits^\ell} (-1)^{\tau \cdot y} \bigotimes_{j = 1}^\ell \big(Z^{\otimes n}\big)^{y_j}_j \bigotimes_{j = \ell+1}^k \identity_j.$$
     Therefore, the quantity we wish to bound is
    $$\norm{\Pi_\tau \ket{\psi}}^2 = 2^{-\ell}\sum_{y \in \bits^\ell} \bra{\psi}(-1)^{\tau \cdot y} \bigotimes_{j = 1}^\ell \big(Z^{\otimes n}\big)^{y_j}_j \bigotimes_{j = \ell+1}^k \identity_j \ket \psi.$$

    Observe that when $y = 0$, the summand is exactly equal to $2^{-\ell}$. 
    Otherwise, when $y \neq 0$, we claim that the summand is 0. 
    This is because, in the Hadamard basis, $Z$ acts as a swap operator: $Z \ket{+} = \ket -$ and $Z \ket{-} = \ket{+}$.
    Therefore, using the assumption that $\ket{\psi}$ is a superposition over states with strictly fewer than $n/2$ $\ket-$'s in the first $\ell$ registers, after applying $Z^{\otimes n}$ to at least one of those registers, we are left with a state that is only supported on basis states with at least $n/2$ $\ket{-}$'s, which is orthogonal to $\psi$ by assumption.
    Thus, the above sum is exactly $2^{-\ell}$, as desired.
\end{proof}

We are now ready to prove \cref{thm:selectiveparity}, which we restate here for convenience.

\selectiveparity*

\begin{proof}[Proof of \cref{thm:selectiveparity}]
    Our hard input distribution will be $\ket{+}^{\otimes ([m]\times [n])}$, which corresponds to a uniformly random
    choice of input vectors where coordinate $(i,j)$ corresponds to the $j$-th coordinate of the $i$-copy.
    Then, our quantum algorithm after $T$ quantum queries looks like 
    $$\ket{\psi_T} = \sum_{i,j,p,w,|S| \leq T} \alpha_{i,j,p,w,S} \ket{i,j,p,w}\ket{+}_{\overline S} \ket{-}_S,$$
    for $S \subseteq [m] \times [n]$, and we have the natural partition $S = S_1 \sqcup \cdots \sqcup S_m$, where $S_i := S \cap \{i\}\times [n]$.
    For any $S$, define $Q(S):= \{ i \in [m]: |S_i| \geq n/2\}$, and note that $|Q(S)| \leq 2T/n$ after $T$ queries.
    This implies that for all $S$ after $T$ queries, there exists at least one $A \subseteq [m]$ where $|A|\geq |K(w)| - \floor{2T/n}$ such that $ K(w)\setminus Q(S) = A$. 
    Let $\pisucc$ be as defined in \eqref{eq:output-k}. 
        Specifically, given $\ket{i,j,p,w}_\calW$ there is an input-independent answer $q(w) = (K(w),\tau(w))$ with $\tau(w)\in \bits^{K(w)}$ specifying the purported values of $\Parity_n(x_j)$ for each $j\in K(w)$, and $\pisucc = \sum_w \Pi_w\otimes \Pi_{K(w),\tau(w)}$ projects onto those states where this answer is correct.

    Then our success probability $\sigma$ can be bounded as 
    \begin{align*}
         &\norm{\pisucc \ket{\psi_T}}^2\\ &= \bignorm{ \sum_{i,j,p,w, |S| \leq T} \alpha_{i,j,p,w,S} \pisucc \ket{i,j,p,w} \ket{+}_{\overline{S'}}\ket{-}_{S'}}^2  \\
         \noalign{\textrm{for some $\alpha_{i,j,p,w,S}$ with $\sum_{i,j,p,w, |S|\leq T}|\alpha_{i,j,p,w,S}|^2=1$}}
        &= \sum_{i,j,p,w} \bignorm{\sum_{|S|\leq T}  \alpha_{i,j,p,w,S} \pisucc \ket{i,j,p,w} \ket{+}_{\overline{S}}\ket{-}_S}^2 ,\\
        \noalign{\text{since $\pisucc$ preserves orthogonality for distinct $i,j,p,w$,}}
        &\leq \sum_{i,j,p,w} \bignorm{\sum_{\substack{A \subseteq K(w)\\ |A| \geq k-\floor{2T/n}}} \sum_{S: K(w)\setminus Q(S) = A} \alpha_{i,j,p,w,S} \ket{i,j,p,w}\Pi_{K(w),\tau(w)}\ket{-}_S\ket{+}_{\overline S}}^2 \\
        &\leq \sum_{i,j,p,w} \bignorm{\sum_{\substack{A \subseteq K(w)\\ |A| \geq k-\floor{2T/n}}} \sum_{S: K(w)\setminus Q(S) = A} \alpha_{i,j,p,w,S} \ket{i,j,p,w}\Pi_{A,\tau(w)_{A}}\ket{-}_S\ket{+}_{\overline S}}^2 ,\\
        \noalign{\text{since $\Pi_{A,\tau(w)_{A}}$ requires success only on a subset of the coordinates in $K(w)$,}}
        &\leq \sum_{i,j,p,w} \binom{k}{\leq \floor{2T/n}}^2\max_{\substack{A \subseteq K(w)\\ |A| \geq k-\floor{2T/n}}}\bignorm{\sum_{S: K(w)\setminus Q(S) = A} \alpha_{i,j,p,w,S} \ket{i,j,p,w}\Pi_{A,\tau(w)_A}\ket{-}_S\ket{+}_{\overline S}}^2  \\
        \noalign{\text{and, choosing $A_w^*$ to be the maximizing value of $A$, since by definition of 
        $Q(S)$ for every }}
        \noalign{\text{$i\in A_w^*$ there are $<n/2$ elements $(i,j)\in S$, by \cref{prop:superposition-parity-dp}, this is}}
        &\leq \sum_{i,j,p,w}\binom{k}{\leq \floor{2T/n}}^2 2^{-k + \floor{2T/n}} \cdot \sum_{S : K(w)\setminus Q(S) = A_w^*} |\alpha_{i,j,p,w,S}|^2 \\
        &\leq \binom{k}{\leq \floor{2T/n}}^2 2^{-k + \floor{2T/n}}\\
        \noalign{\text{and by \cref{prop:binom-tail} this is}}
        &\leq 2^{2k H_2(2T/(kn)) - k + \floor{2T/n}} \leq 2^{[ 2H_2(2T/(kn)) - 1 +   2T/(kn)]\, k} .
    \end{align*}

    Lastly, if $T \leq kn/32$ then $H_2(2T/(kn)) \leq H_2(1/16) \leq 0.34$, so the success probability is at most $2^{k(0.68 - 1 + 1/16)} \leq 2^{-k/4}$.
\end{proof}

\subsection{Strong selective direct products from multiplicative adversaries}\label{sec:selective-from-mult-adv}

We prove that the multiplicative adversary method in \cref{prop:mult-adv} satisfies a strong selective direct product property. 
This generalizes and refines the arguments of \v{S}palek~\cite{Spa08} and Ambainis et al.~\cite{DBLP:conf/coco/AmbainisMRR11} 
showing ordinary strong direct products for their versions of the multiplicative adversary method.

\begin{theorem}
\label{thm:selective-from-multiplicative-adversary}
    Let $\relation \subseteq \inputspace \times \outputspace$ for $\inputspace\subseteq \domain^n$ be a relation and $\dist$ be a distribution over $\inputspace$.
    Let $0<\delta\le 1/8$, $\lambda>1$ and $\eta > 0$ satisfy $\eta\le 2^{-H_2(\delta)/(7/8-\delta)}$ where
  $H_2(\cdot)$ is the binary entropy function.
    For $ 1/2- \eta -\sqrt{2\eta}\le \varepsilon <1-\eta$,
     there is a $k_0=k_0(\lambda,\delta)\ge 1$ such that for any integers $m\ge k\ge k_0$, we have
    \begin{equation*}
        Q_{1-\kappa^k}^{\dist^m}(\easierdp{\relation}{k}{m})\ge  \delta k\cdot \madv^\dist_{\varepsilon,\lambda,\eta}(\relation)   \end{equation*}
        for some $\kappa=\kappa(\lambda,\delta)<1$.
\end{theorem}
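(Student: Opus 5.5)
We follow the template of \cref{sec:specific}, with the subspace $V^0$ replaced by the low-eigenvalue part of $\multAdvMat$. Fix a witness $\multAdvMat$ with parameters $(\lambda,\eta)$ achieving $M:=\madv^\dist_{\varepsilon,\lambda,\eta}(\relation)$, and let $c$ be its per-query progress factor, so that $M\le \log_c(1+(\lambda-1)(\sqrt{1-\varepsilon}-\sqrt\eta)^2)$. Since $\varepsilon\ge 1/2-\eta-\sqrt{2\eta}$, we have $\sqrt{1-\varepsilon}-\sqrt\eta$ bounded away from $1$, so $\log c\ge \Omega(\log\lambda)/M$ up to constants depending only on $\lambda$.

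On $m$ copies we use the product matrix $\multAdvMat^{\otimes m}$. It fixes $\ket{\dist}^{\otimes m}$, has smallest eigenvalue $1$, and its per-query progress factor is still $c$, because a query $\oracle_{(i,j),p}$ acts on a single copy and $\norm{(A\otimes I)(B\otimes I)}=\norm{AB}$. Hence $W^t=\tr[\multAdvMat^{\otimes m}\rho^t_\inputreg]\le c^t$, and after $T=\delta kM$ queries $W^T\le c^{\delta k M}\le \Lambda^{\delta k}$ for $\Lambda:=1+(\lambda-1)(\sqrt{1-\varepsilon}-\sqrt\eta)^2$. Call a copy \emph{high} if it lies in the eigenspace $\Pi_{\multAdvMat,>\lambda}$ and \emph{low} if it lies in $\Pi_{\multAdvMat,\le\lambda}$. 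For $Q\subseteq[m]$, let $\Pi_Q$ be the projector that is high exactly on the coordinates in $Q$, and let $\Pi_\ell=\sum_{|Q|=\ell}\Pi_Q$. Since $\Pi_Q$ commutes with $\multAdvMat^{\otimes m}$ and $\multAdvMat^{\otimes m}\succeq\lambda^{|Q|}$ on its range, Markov gives $\norm{\Pi_\ell\ket{\psi_T}}^2\le W^T/\lambda^\ell$. Summing over $\ell>\beta k$ bounds the high part by $O(\Lambda^{\delta k}\lambda^{-\beta k})$. We choose $\beta$ so that $\lambda^{\beta}>\Lambda^{\delta}$ with room to spare. I would try $\beta\approx 7/8-$ small, because $\Lambda\le\lambda$ and $\delta\le1/8$ leave a gap of $\lambda^{-(\beta-\delta)k}$; this is where $k_0(\lambda,\delta)$ is needed to absorb constants.

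The low part mirrors \cref{lemma:low-succ-small-ell}. For a fixed output $(K(w),\tau(w))$ and a set $Q$ with $|Q|\le\beta k$, the set $A=K(w)\setminus Q$ has size at least $(1-\beta)k$, and every coordinate in $A$ is projected by $\Pi_{\multAdvMat,\le\lambda}$. On those coordinates the success projector is $\bigotimes_{i\in A}\Pi_{\relation^{-1}(\tau_i)}$, and $\norm{\bigotimes_{i\in A}\Pi_{\relation^{-1}(\tau_i)}\Pi_{\multAdvMat,\le\lambda}}^2\le\eta^{|A|}$ by tensoring the hypothesis of \cref{prop:mult-adv}. Grouping by $A$ and applying Cauchy--Schwarz costs a factor $\binom{k}{\le\beta k}^2$, which gives a bound of $2^{2H_2(\beta)k}\eta^{(1-\beta)k}$ on the squared norm. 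We then take $\beta=1-\delta-1/8$, or match the stated constraint directly; the condition $\eta\le 2^{-H_2(\delta)/(7/8-\delta)}$ is exactly what makes this exponentially small. The indexing of the entropy term against the $(7/8-\delta)$ exponent should be re-checked against the choice of $\beta$.

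The main obstacle is the mismatch between the two parts. The Markov step bounds mass on high eigenspaces using $W^T\le\Lambda^{\delta k}$, but $\Lambda$ arises from the error-dependent endpoint. A naive bound through $\lambda$ might not leave enough gap when $\lambda$ is close to $1$. I would handle this by making the number of queries a $\delta$ fraction of $kM$, writing $c^{T}=\Lambda^{\delta k}$, and requiring $\beta\log\lambda>\delta\log\Lambda+\Omega(1)$; if this fails, I would split $\Pi_{>\lambda}$ at a higher threshold $\lambda'$ and set $\kappa$ accordingly. The final step combines the two parts, $\norm{\pisucc\ket{\psi_T}}\le\norm{\pisucc\Pi_{\le\beta k}\ket{\psi_T}}+\sum_{\ell>\beta k}\norm{\Pi_\ell\ket{\psi_T}}$, and then squares.
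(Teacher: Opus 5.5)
Your overall strategy (the tensor-power matrix $\Gamma^{\otimes m}$, a Markov bound on the number of ``high'' coordinates, and a counting bound on the ``low'' part) is sound. It is essentially an unrolled version of the paper's argument. However, the parameters you commit to make the low part fail, so the proof does not go through as written.

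First, with $\beta\approx 7/8-\delta$ you have $\beta>1/2$, where the estimate $\binom{k}{\le\beta k}\le 2^{H_2(\beta)k}$ is false. In fact $\binom{k}{\le\beta k}\ge 2^{k-1}$, so your low-part bound is at least $2^{2k-2}\eta^{(1/8+\delta)k}$. This is useless for the values of $\eta$ the hypothesis allows. For example, $\delta=0.123$ permits $\eta\approx 0.608$, the value used in \cref{thm:selective-allfuncs-precise}, and there even $2\eta^{1/8+\delta}>1$.

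Second, even at the right threshold, the squared factor $\binom{k}{\le\delta k}^2$ from ``triangle inequality plus max'' is too expensive. The hypothesis $\eta\le 2^{-H_2(\delta)/(7/8-\delta)}$ is equivalent to $2^{H_2(\delta)}\eta^{1-\delta}\le\eta^{1/8}$. That pays for one factor of $2^{H_2(\delta)k}$, not two. With $\delta=0.123$ and $\eta=0.608$ one gets $2^{2H_2(\delta)}\eta^{1-\delta}\approx 1.36$.

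The missing idea is that the gap in the high part does not come from taking $\beta>\delta$. It comes from $\Lambda$ being uniformly smaller than $\lambda$. Split at $\ell=\lceil\delta k\rceil$ high coordinates; this is exactly the paper's $\lambda'=\lambda^{\lceil\delta k\rceil}$. Markov then gives $\norm{\Pi_{\ge\ell}\ket{\psi_T}}^2\le W^T/\lambda^{\ell}\le(\Lambda/\lambda)^{\delta k}$. Moreover $\Lambda/\lambda=1-(\lambda-1)(1-\zeta)/\lambda\le 1-(\lambda-1)/(2\lambda)$, because $\zeta=(\sqrt{1-\varepsilon}-\sqrt{\eta})^2\le 1/2$. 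This is precisely where the lower bound $\varepsilon\ge 1/2-\eta-\sqrt{2\eta}$ is used, and it makes your fallback of splitting at a higher threshold unnecessary.

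For the low part, fix a workspace basis state $w$. The pieces $x_A=\sum_{Q:\,K(w)\setminus Q=A}\Pi_Q\ket{\psi_w}$ (over $|Q|<\ell$) are mutually orthogonal, and each lies in the range of $\Pi_{\Gamma,\le\lambda}$ on every coordinate of $A$. Cauchy--Schwarz then gives $\norm{\sum_A\Pi_{A,\tau_A}x_A}^2\le N\sum_A\eta^{|A|}\norm{x_A}^2$ with $N\le\binom{k}{\le\ell-1}\le 2^{H_2(\delta)k}$. So the squared low part is at most $2^{H_2(\delta)k}\eta^{(1-\delta)k}\le\eta^{k/8}$.

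Combining the two parts, the success probability is at most $\bigl((\Lambda/\lambda)^{\delta k/2}+\eta^{k/16}\bigr)^2$. This is exactly the paper's $\kappa(k)^k$, and monotonicity in $k$ gives $k_0$ and $\kappa$.

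With these corrections your direct argument is valid. The paper packages the same split into the black box \cref{prop:mult-adv-query-bound}, applied to $\Gamma^{\otimes m}$ with $\lambda'=\lambda^{\lceil\delta k\rceil}$ and $\eta'=\eta^{k/8}$. It obtains $\eta'$ by the same single-binomial orthogonality count on operator norms. Your route is more self-contained, since it uses only the per-query bound $W^T\le c^T$.
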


\begin{remark}
    The lower bound this statement requires on $\varepsilon$ in terms of $\eta$ is somewhat arbitrary, chosen with the sole goal of making the choice of $k_0$ independent of how close $1-\varepsilon-\eta$ is to 1.  The lower bound is only positive for $\eta<1.5-\sqrt{2}\approx 0.085786438
$.  
    Further,
    the values of $k_0$ and $\kappa$ only increase with $\lambda$ as a function of $1/(\lambda-1)$ so for typical examples this dependence is not particularly relevant.
\end{remark}

\begin{proof}
We first sketch the main ideas of the proof.
The multiplicative adversary method involves three
objects:  adversary matrix $\Gamma$, an eigenvalue bound $\lambda\ge 1$ and a corresponding correctness probability $\eta$, such that any input state lying in the span of eigenspaces
of the adversary matrix $\Gamma$ with eigenvalues at most $\lambda$ has squared projection at most $\eta$ on inputs with any fixed output value.

Given these parameters for $\relation$, the key to the argument will be to define corresponding parameters $\Gamma'$, $\lambda'$, and $\eta'$ useful for deriving the bound for $\easierdp{\relation}{k}{m}$ using the multiplicative
adversary method with these new parameters.

Before going ahead, we first sketch the overall ideas:
The input space is an $m$-fold tensor product of the inputs so it is natural for the adversary matrix $\Gamma'$ for $\easierdp{\relation}{k}{m}$ to be the $m$-fold tensor product of the one for $\relation$.
This is the same choice of $\Gamma'$ used in the previous strong direct product proofs of \cite{Spa08,DBLP:conf/coco/AmbainisMRR11}.
With this, in analogy with prior results for other progress bounds~\cite{Spa08,DBLP:conf/coco/AmbainisMRR11}, we will show that 
\begin{description}
    \item[(i)]
the per-query increase in the
progress measure for $\easierdp{\relation}{k}{m}$
has the same upper bound as the one for the per-query increase for the single copy $\relation$.
\end{description}

The eigenvalue upper bound $\lambda'$ we use will be of the form
$\lambda^\ell$ for $\ell$ that is some small constant fraction of
$k$.  
Because of the product structure of
$\Gamma'$, its eigenspaces are products of eigenspaces 
for individual copies of $\relation$ and hence for any output of $\easierdp{\relation}{k}{m}$, at most $\ell$ of
the $k$ claimed output values have projections on the input with squared norm larger than $\eta$, independently.   
We show that
\begin{description} 
\item[(ii)] for any output, squared norms of projections (maximum correctness probability of that output) are at most $\eta'$ 
 that is polynomial in $\eta^k$ on subspaces with eigenvalues at most $\lambda'$, even when accounting for all the possible ways that these $\ell$ coordinates intersect the $k$ coordinates output.   
\end{description} 
The final part of the argument shows how to argue
bounds based on the target
success probability.
\begin{description}
    \item[(iii)]  using the parameters $\Gamma', \lambda', \eta'$ and $\varepsilon>0$, there is some desired success probability $\kappa^k$ that is only exponentially small in $k$ such that the progress target derived from the adversary method is an 
    $\ell$-th power of the original progress target for $\relation$ and success probability $1-\varepsilon$; this yields
    a query lower bound that is $\ell=\Omega(k)$ times as large as the single copy case.
\end{description}
This last portion has some subtlety in the choice of the base of exponential decay as a function in $k$; it depends critically on how close the original 
$\lambda$ is to 1.

\medskip
We now proceed with the proof details:
We begin by instantiating the objects that yield the optimum value of $\madv^\dist_{\varepsilon,\lambda,\eta}(\relation)$.
That is, suppose that $\Gamma$ is a $\domainsize \times \domainsize$ positive-definite matrix with smallest eigenvalue $1$ such that $\Gamma \ket{\dist} = \ket{\dist}$, 
    $\lambda >1$ and $\eta > 0$ for $\eta\le \min\{1-\varepsilon,2^{-H_2(\delta)/(7/8-\delta)}\} $ satisfy $\norm{\Pi_{\relation^{-1}(y)} \Pi_{\Gamma,\leq \lambda}}^2 \leq \eta$ for all $y \in \outputspace$ and 
    \begin{equation*}
    \madv^\dist_{\varepsilon,\lambda,\eta}(\relation)=\log_c(1+(\lambda-1)(\sqrt{1-\varepsilon}-\sqrt{\eta})^2) 
    \end{equation*}
    for $c=\max_{i\in[n],p} \norm{\oracle^\dagger_{i,p} \Gamma^{1/2} \oracle_{i,p} \Gamma^{-1/2}}^2$.

    Define positive-definite matrix $\Gamma' = \Gamma^{\otimes m}$.
    We will show that we can choose $\lambda'>1$ and $\eta'>0$ and apply
    the properties of the multiplicative adversary method to yield our claimed bound.
    Observe that $\Gamma'$ has smallest eigenvalue 1 and
    \begin{equation*}
    \Gamma' \ket{\dist^m}=\Gamma^{\otimes m} \ket{\dist}^{\otimes m}=(\Gamma \ket{\dist})^{\otimes m}=\ket{\dist}^{\otimes m}=\ket{\dist^m}.
    \end{equation*}

    \paragraph{Part (i):}
    We next show that
    \begin{displaymath}
        \max_{i' \in [mn],p} \norm{\oracle^\dagger_{i',p} (\Gamma')^{1/2} \oracle_{i',p} (\Gamma')^{-1/2}}^2 = \max_{i\in[n],p} \norm{\oracle^\dagger_{i,p} \Gamma^{1/2} \oracle_{i,p} \Gamma^{-1/2}}^2
    \end{displaymath}
    as follows:
    Write $i' = (j-1)n + i$ for unique $i\in [n]$ and $j\in [m]$. 
    Since $\Gamma$ must be invertible and $\oracle_{i',p} = \identity^{\otimes {(j-1)}} \otimes \oracle_{i,p} \otimes \identity^{\otimes (m - j)}$ we have
    \begin{align*}
        \max_{i' \in [mn],p}& \norm{\oracle^\dagger_{i,p} (\Gamma')^{1/2} \oracle_{i,p} (\Gamma')^{-1/2}}^2\\
        &= \max_{i' \in [mn],p} \norm{(\identity^{\otimes {(j-1)}} \otimes \oracle^\dagger_{i,p} \otimes \identity^{\otimes (m - j)}) (\Gamma')^{1/2} (\identity^{\otimes {(j-1)}} \otimes \oracle_{i,p} \otimes \identity^{\otimes (m - j)}) (\Gamma')^{-1/2}}^2\\
        &=\max_{i' \in [mn],p} \norm{\identity^{\otimes {(j-1)}} \otimes (\oracle^\dagger_{i,p} \Gamma^{1/2} \oracle_{i,p} \Gamma^{-1/2}) \otimes \identity^{\otimes (m - j)}}^2\qquad \textrm{since $(\Gamma^{\otimes m})^{\pm 1/2} = (\Gamma^{\pm 1/2})^{\otimes m}$}\\
        &= \max_{i\in[n],p} \norm{\oracle^\dagger_{i,p} \Gamma^{1/2} \oracle_{i,p} \Gamma^{-1/2}}^2
    \end{align*}
    as desired.
  
    To prove our claimed bound, we will apply \cref{prop:mult-adv-query-bound} with some $\lambda'$ and $\eta'$ with a target $\varepsilon'=1-\kappa^k$.
    Since the base of the logarithm in the bound from  \cref{prop:mult-adv-query-bound} is still $c$, to do this it is necessary
    and sufficient to find some $\lambda'$ and $\eta'<\kappa^k$ such that for every output $y'$ of $\easierdp{\relation}{k}{m}$, $\norm{\Pi_{(\easierdp{\relation}{k}{m})^{-1}(y')} \Pi_{\Gamma',\leq \lambda'}}^2 \leq \eta'$  
    and for some fixed $\delta>0$,
    \begin{equation}
        (1+(\lambda'-1) (\sqrt{\kappa^k} - \sqrt{\eta'})^2))\ge (1+(\lambda-1) (\sqrt{1-\varepsilon} - \sqrt{\eta})^2)^{\delta k}
    \end{equation}
    for some $\kappa<1$ to be determined later.

\paragraph{Part (ii):}
    Define $\lambda'=\lambda^{\ell}$
    for some $\ell=\lceil \delta k\rceil$ for $\delta>0$.
    Let $V^0$ be the span of all eigenspaces of $\Gamma$ corresponding to
    eigenvalues $\le \lambda$ and $V^1$ be its orthogonal complement which, by definition, is the span of all
    eigenspaces of $\Gamma$ corresponding to eigenvalues $> \lambda$.
    We define the corresponding
    orthogonal projectors $\Pi_{V^0}= \Pi_{\Gamma, \leq \lambda}$
    and $\Pi_{V^1}= \Pi_{\Gamma, > \lambda}$ onto $V^0$ and $V^1$ respectively.
    By hypothesis, every $y\in \inputspace$ satisfies $\norm{\Pi_{\relation^{-1}(y)} \Pi_{\Gamma, \leq \lambda}}^2\le \eta$. Thus,
    \begin{equation}\label{eqn:selective-eta-def-single-copy}
        \norm{\Pi_{\relation^{-1}(y)} \Pi_{V^0}}^2 \leq \eta.
    \end{equation}

    By definition, every output $y'$ of $\easierdp{\relation}{k}{m}$ is
    of the form $y'=((i,y_i))_{i\in K_{y'}}$ for some $K_{y'}\subset [m]$ with $|K_{y'}|=k$. 
    This output 
    is correct for $x\in \inputspace^m$ iff $(x_i,y_i)\in \relation$ for all
    $i\in K_{y'}$.
    Therefore we can write $\Pi_{(\easierdp{\relation}{k}{m})^{-1}(y')}=\identity_{[m]\setminus K_{y'}}\otimes\bigotimes_{i\in K_{y'}} \Pi_{\relation^{-1}(y_i)}$.
    
    We will also need to use the product structure of $\Gamma'$ 
    to let us choose $\delta$ and bound $\eta'$ as a result.
    The eigenspace $V'$ of $\Gamma'$ with eigenvalue $\lambda_{V'}$ is the sum of all vector spaces $V_1\otimes \ldots \otimes V_m$ where $V_1, \ldots, V_m$ are eigenspaces of $\Gamma$ with eigenvalues $\lambda_{V_1}, \ldots, \lambda_{V_m}$ such that $\prod_i \lambda_{V_i} = \lambda_{V'}$.
    Since every $\lambda_{V_i}\ge 1$, if $\lambda_{V'}\le \lambda' = \lambda^\ell$
    then there must be fewer than $\ell$ values $i\in [m]$ where
    $\lambda_i>\lambda$ for any term in the sum.
    Therefore, in particular, for any set $K\subset [m]$ there must be strictly fewer than $\ell$ values $i\in K$ for any term in the sum such that 
    $V_i\subseteq V^1$.
    Therefore, for any set $K\subseteq [m]$,
    \begin{equation}\label{eqn:selective-bad-space-subset-of-sum-of-eigenspaces}
        V' \subseteq (\mathbb{C}^\inputspace)^{[m]\setminus K}\otimes\bigoplus_{\substack{u \in \{0,1\}^K\\ \abs{u} < \ell}} \bigotimes_{i \in K} V^{u_i}
    \end{equation}
    and hence
    \begin{equation}\label{eqn:selective-space-proj-ineq}
        \Pi_{\Gamma', \leq \lambda'} \preceq \identity_{[m]\setminus K}\otimes \sum_{\substack{u \in \{0,1\}^K \\ \abs{u} < \ell}} \bigotimes_{i \in K} \Pi_{V^{u_i}}.
    \end{equation}
    Choosing $K=K_{y'}$, we compute the value we can use to define $\eta'$:
\begin{align*}
    \norm{\Pi_{(\easierdp{\relation}{k}{m})^{-1}(y')} \Pi_{\Gamma',\le \lambda'}}^2 &= \norm{\big(\identity_{[m]\setminus K}\otimes\bigotimes_{i\in K} \Pi_{\relation^{-1}(y_i)}\big)\, \Pi_{\Gamma',\le \lambda'}}^2\\
    &\leq \norm{\big(\bigotimes_{i\in K} \Pi_{\relation^{-1}(y_i)}\big)\, \sum_{\substack{u \in \{0,1\}^K \\ \abs{u} < \ell}} \bigotimes_{i \in K} \Pi_{V^{u_i}}}^2 \qquad \textrm{by \eqref{eqn:selective-space-proj-ineq}}\\
    &=\norm{\sum_{\substack{u \in \{0,1\}^K \\ \abs{u} < \ell}} \bigotimes_{i \in K} \Pi_{\relation^{-1}(y'_i)} \Pi_{V^{u_i}}}^2\\
    &\leq \sum_{\substack{u \in \{0,1\}^K \\ \abs{u} < \ell}} \norm{\bigotimes_{i \in K} \Pi_{\relation^{-1}(y'_i)} \Pi_{V^{u_i}}}^2\qquad \textrm{by orthogonality}\\
    &= \sum_{\substack{u \in \{0,1\}^K \\ \abs{u} < \ell}} \prod_{i \in K}\norm{ \Pi_{\relation^{-1}(y'_i)} \Pi_{V^{u_i}}}^2\\
    &\leq \sum_{\substack{u \in \{0,1\}^K \\ \abs{u} < \ell}} \eta^{k-\ell + 1} \qquad \textrm{by \eqref{eqn:selective-eta-def-single-copy}}\\
    &< 2^{H_2((\ell-1)/k)k}\ \eta^{k-\ell+1}\qquad\text{by \cref{prop:binom-tail}}\\
     &\le (2^{H_2(\delta)}\eta^{1-\delta})^k\\
     &\le \eta^{k/8}\qquad\textrm{since $\eta\le 2^{-H_2(\delta)/(7/8-\delta)}$.}
\end{align*}

Defining $\eta'=\eta^{k/8}$ we have $\norm{\Pi_{(\easierdp{\relation}{k}{m})^{-1}(y')} \Pi_{\Gamma',\le \lambda'}\ket{\phi'}}^2\le \eta'$ as required.

\paragraph{Part (iii):} 

Set $\zeta=(\sqrt{1-\varepsilon} -\sqrt{\eta})^2$.
Since $\varepsilon\ge 1/2-\eta-\sqrt{2\eta}$,
$1-\varepsilon\le 1/2+\sqrt{2\eta}+\eta=(\sqrt{\eta}+1/\sqrt{2})^2$ and hence $\zeta\le 1/2$.
Now set
\begin{equation*}\alpha=\left(\frac{1+(\lambda-1)\zeta}{\lambda}\right)^\delta
=\left(1-\frac{(\lambda-1)(1-\zeta)}{\lambda}\right)^\delta\le e^{-\delta (1-1/\lambda)/2}<1
\end{equation*}
since $\zeta\le 1/2$.
We also have
$\eta\le 2^{-H_2(\delta)/(7/8-\delta)}<1$.
Define $\kappa(z)=(\alpha^{z/2}+\sqrt{\eta'})^{2/z}
=(\alpha^{z/2}+\eta^{z/16})^{2/z}$
for variable $z\ge 1$.
Observe that $\kappa(z)$ is strictly decreasing in $z$.
Further, since 
$$\alpha^{z/2}+\eta^{z/16}\le e^{-\delta (1-1/\lambda)z/4}+2^{-z H_2(\delta)/(16(7/8-\delta))}$$ approaches 0 as $z$ goes to
infinity, we can choose an integer $k_0=k_0(\lambda,\delta)$ such that $\kappa(z)<1$ for all $z\ge k_0$.
Set $\kappa=\kappa(k_0)$ which, therefore, is 
only a function of $\lambda$ and $\delta$.

Since $\kappa(z)$ is decreasing in $z$ we have
$\kappa=\kappa(k_0)\ge (\alpha^{k/2}+\eta^{k/16})^{2/k}$ for all $k\ge k_0$.
Therefore, for $k\ge k_0$ we have
\begin{align*}
1+(\lambda'-1) (\sqrt{\kappa^k} - \sqrt{\eta'})^2
&=1+(\lambda'-1) (\kappa^{k/2} - \eta^{k/16})^2\\
&\ge 1+(\lambda'-1)\alpha^{k}\\
&=1+(\lambda^{\delta k}-1) \left(\frac{1+(\lambda-1)\zeta}{\lambda}\right)^{\delta k}\\
&=1+(1-\lambda^{-\delta k})(1+(\lambda-1)\zeta)^{\delta k}\\
&\ge (1+(\lambda-1)\zeta)^{\delta k}.
\end{align*}
Therefore, by \cref{prop:mult-adv-query-bound},
\begin{align*}
Q_{1-\kappa^k}^{\dist^m}(\easierdp{\relation}{k}{m})&\ge
\log_c ((1+(\lambda'-1) (\sqrt{\kappa^k} - \sqrt{\eta'})^2)\\
&\ge \log_c (1+(\lambda-1)\zeta)^{\delta k}\\
&= \delta k\cdot \log_c (1+(\lambda-1)\zeta)\\
&= \delta k\cdot \log_c (1+(\lambda-1)(\sqrt{1-\varepsilon}-\sqrt{\eta})^2)\\
&= \delta k\cdot \madv^\dist_{\varepsilon,\lambda,\eta}(\relation)  
\end{align*}
as claimed.
\end{proof}
\subsection{A strong selective direct product theorem for quantum query complexity of all functions}\label{sec:strong-selective-prod-from-query-bound}

By applying
\cref{thm:selective-from-multiplicative-adversary}, we derive our generic quantum strong selective direct product theorem for function computation.

\begin{theorem}
    \label{thm:selective-allfuncs-precise}
Fix $\varepsilon = 1/10$.
There are constants $\alpha, \gamma>0$ and an integer $k_0\ge 1$  such that 
    for $f : \inputspace \to \range$ where $\inputspace \subseteq \domain^n$, and for all integers $k$ with $k_0\le k\le m$ we have  
      $Q_{1-(1-\varepsilon)^{\gamma k}}(\easierdp{f}{k}{m})\ge \alpha k\cdot Q_{\varepsilon}(f)$.
\end{theorem}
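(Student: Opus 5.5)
The plan is to chain \cref{thm:mult-optimal} with (a re-parametrized form of) \cref{thm:selective-from-multiplicative-adversary}. By \cref{thm:mult-optimal} there is a distribution $\dist$ on $\inputspace$ with $Q_{1/10}(f)\le C\cdot \madv^\dist_{\varepsilon,\lambda,\eta}(f)$ for $\varepsilon=1/10$, $\lambda=3-(4/(1-\varepsilon))^{1/3}\approx 1.356$ and $\eta=((1-\varepsilon)/4)^{1/3}\approx 0.608$. Worst-case complexity dominates distributional complexity, so $Q_{1-(1-\varepsilon)^{\gamma k}}(\easierdp{f}{k}{m})\ge Q^{\dist^m}_{1-(1-\varepsilon)^{\gamma k}}(\easierdp{f}{k}{m})$. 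It therefore suffices to lower bound the distributional selective complexity under $\dist^m$ by $\Omega(k)\cdot\madv^\dist_{\varepsilon,\lambda,\eta}(f)$ at a success probability $\kappa^k$ with $\kappa<1$ an absolute constant. We then choose $\gamma>0$ so that $(1-\varepsilon)^{\gamma}\le\kappa$, giving $(1-\varepsilon)^{\gamma k}\le \kappa^k$. Raising the allowed error only lowers the complexity, so the bound transfers.

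The obstacle is that \cref{thm:selective-from-multiplicative-adversary}, as stated, needs $\eta\le 2^{-H_2(\delta)/(7/8-\delta)}$ and $\varepsilon\ge 1/2-\eta-\sqrt{2\eta}$. These only make sense for $\eta<0.086$, while the conversion gives $\eta\approx 0.608$. I would not try to amplify $\eta$ (for example by blocking copies into $f^{\otimes r}$), because selected coordinates need not align with blocks. Instead I would rerun the proof of \cref{thm:selective-from-multiplicative-adversary} with the same witness $\Gamma'=\Gamma^{\otimes m}$ and $\lambda'=\lambda^{\ell}$, $\ell=\lceil\delta k\rceil$, adjusting only the constants.

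Part (i) is unchanged: the per-query base $c$ is the same. In Part (ii) the chain of inequalities gives $\norm{\Pi_{(\easierdp{f}{k}{m})^{-1}(y')}\Pi_{\Gamma',\le\lambda'}}^2\le 2^{H_2(\delta)k}\eta^{(1-\delta)k}$. Since $\log_2(1/0.608)\approx 0.718$, a small fixed $\delta$ (say $\delta=1/20$, where $H_2(\delta)\approx 0.286<0.718\cdot 0.95$) makes this at most $\eta'=2^{-\theta k}$ for an absolute constant $\theta>0$. That exponential smallness in $k$ is all that Part (iii) uses.

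In Part (iii), the hypothesis $\varepsilon\ge 1/2-\eta-\sqrt{2\eta}$ served only to ensure $\zeta=(\sqrt{1-\varepsilon}-\sqrt\eta)^2$ is bounded away from $1$, so that $\alpha=((1+(\lambda-1)\zeta)/\lambda)^\delta<1$. Here $\zeta=(\sqrt{0.9}-\sqrt{0.608})^2\approx 0.0255$, so $\alpha<1$ holds directly. Defining $\kappa(z)=(\alpha^{z/2}+2^{-\theta z/2})^{2/z}$, I would pick an absolute $k_0$ with $\kappa(k_0)<1$ and set $\kappa=\kappa(k_0)$. The identical computation then gives $1+(\lambda'-1)(\kappa^{k/2}-\sqrt{\eta'})^2\ge(1+(\lambda-1)\zeta)^{\delta k}$ for all $k\ge k_0$. \cref{prop:mult-adv-query-bound} then yields $Q^{\dist^m}_{1-\kappa^k}(\easierdp{f}{k}{m})\ge\delta k\cdot\madv^\dist_{\varepsilon,\lambda,\eta}(f)\ge (\delta/C)\,k\,Q_{1/10}(f)$. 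This proves the theorem with $\alpha=\delta/C$.

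All of $\lambda,\eta,\varepsilon,\delta$ are absolute constants here, so $k_0$, $\kappa$ and $\gamma$ are absolute, as required. The step needing the most care is verifying that every use of the small-$\eta$ hypothesis in the earlier proof is replaced by these explicit numerics: the entropy-versus-$\log(1/\eta)$ tradeoff in Part (ii), and $\alpha<1$ in Part (iii). If the margin in Part (ii) turns out too thin, I would shrink $\delta$ further, since $H_2(\delta)\to 0$ while $(1-\delta)\log(1/\eta)\to 0.718$.
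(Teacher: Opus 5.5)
Your overall route is the paper's. You combine \cref{thm:mult-optimal} with the tensor-power argument of \cref{thm:selective-from-multiplicative-adversary}, using the same $\Gamma'=\Gamma^{\otimes m}$ and $\lambda'=\lambda^{\lceil\delta k\rceil}$, and then pass from distributional to worst-case complexity. The obstacle you identify, however, is not real. For $\eta\approx 0.608$ the quantity $1/2-\eta-\sqrt{2\eta}$ is negative, so the lower-bound hypothesis on $\varepsilon$ is vacuous; the remark about positivity says it is a nontrivial constraint only for small $\eta$. Also, $2^{-H_2(\delta)/(7/8-\delta)}\to 1$ as $\delta\to 0$, so $\eta\le 2^{-H_2(\delta)/(7/8-\delta)}$ holds for small enough $\delta\le 1/8$. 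The paper takes $\delta=0.123$, where this threshold is about $0.609$. So the theorem can be invoked directly with $\varepsilon=1/10$, $\lambda\approx 1.356$, $\eta\approx 0.608$. Your re-run of Parts (ii) and (iii) with $\delta=1/20$ and $\eta'=2^{-\theta k}$ is correct, just redundant.

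The one genuine error is in your last step: the monotonicity is backwards. Your bound is at success probability $\kappa^k$, i.e.\ at error $1-\kappa^k$. Since $Q_{\varepsilon'}$ is non-increasing in $\varepsilon'$, that bound transfers only to errors at most $1-\kappa^k$, i.e.\ to success targets $(1-\varepsilon)^{\gamma k}\ge\kappa^k$. Choosing $(1-\varepsilon)^\gamma<\kappa$ asks for a lower bound at a larger error than you proved, and that does not follow. ``Raising the allowed error only lowers the complexity'' is exactly why the transfer fails in that direction. Take $\gamma=\ln\kappa/\ln(1-\varepsilon)>0$ (equality, as the paper does) or any smaller positive $\gamma$. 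With that fix the argument is complete, with $\alpha=\delta/C$.
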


\begin{proof}
    Choose $\varepsilon=1/10$, $\lambda\approx 1.356$, and $\eta\approx 0.608$ as in the statement of
    \cref{thm:mult-optimal}, and let $C$ be the constant factor and $\dist$ be the distribution that results.
    We apply \cref{thm:selective-from-multiplicative-adversary}(iii) to yield an absolute constant $\kappa<1$ such that for $\delta=0.123$, $2^{-H_2(\delta)/(7/8-\delta)}\ge \eta$ and all $k\ge k_0(\delta)$ we have
    $$Q^{\dist^m}_{1-\kappa^{\delta k}}(\easierdp{f}{k}{m}) \geq \delta k \madv^\dist_{1/10, \lambda, \eta}(f) \geq \delta k Q_{1/10}(f)/C.$$
    We set $\gamma$ such that $\kappa^\delta=(1-\varepsilon)^\gamma$.
    Now $Q_\varepsilon(f)\ge Q^{\dist^*}_\varepsilon(f)$ for any distribution $\dist^*$ by
    Yao's Lemma and we can choose $\alpha=\delta/C$ to obtain the claim.
\end{proof}
\subsection{On $(1-\delta)$-selectivity of direct product problems}
\label{sec:threshold-selective}

Here we prove a relationship between the complexity of computing $(1-\delta)$-selective direct products, where up to a $\delta$ fraction of the $k$ answers
for $\easierdp{f}{k}{m}$ may be incorrect and selective direct products where
all such answers must be correct.
Essentially, this is a melding of a strong selective
direct product theorem and the idea behind threshold direct products.   

\begin{definition}
   Let $f$ be a relation and $1\le k\le m$.
    We say that an algorithm $A$ is \emph{$(1-\delta)$-correct at 
    computing $\easierdp{f}{k}{m}$ on input $x\in \domain^m$} iff  there is an 
    $\ell\le k$, such that $A$ produces a set of pairs
    $(i_1,y_1),\ldots, (i_\ell,y_\ell)$ and 
    $(x_{i_j},y_j)\in f$ for at least $(1-\delta)k$ choices of $j\in [\ell]$.
\end{definition}

\begin{theorem}
\label{thm:1-delta-selective}
Let $f$ be a relation.  Suppose that there is an algorithm that makes at most $T$ queries and for each $x\in \domain^m$ is $(1-\delta)$-correct at computing
$\easierdp{f}{k}{m}$ on $x$ with probability at least $q_k$.
Then there is an algorithm making at most $T$ queries that correctly
computes $\easierdp{f}{k'}{m}$ for $k'=\lceil (1-\delta)k\rceil$ on each input
$x\in \domain^m$ with
probability at least $q_k/\binom{k}{k'}\ge q_k\cdot 2^{-H_2(\delta)k}$ where
$H_2(\cdot)$ is the binary entropy function.
\end{theorem}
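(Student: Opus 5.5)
The plan is a random-subsampling reduction. Let $A$ be the given algorithm. The new algorithm $A'$ runs $A$ unchanged and so makes the same $T$ queries. It measures the output to obtain a list of pairs $(i_1,y_1),\ldots,(i_\ell,y_\ell)$ with $\ell\le k$. Using fresh classical randomness, which costs no queries, it then picks a uniformly random subset $J\subseteq[\ell]$ of size $k'=\lceil(1-\delta)k\rceil$ and outputs the pairs $(i_j,y_j)$ for $j\in J$. If $\ell<k'$, it outputs an arbitrary valid list, which can only help. The coordinates $i_j$ are distinct because $A$'s output already has distinct coordinates, so this is a syntactically valid output for $\easierdp{f}{k'}{m}$. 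All post-processing is input independent, so it can be folded into the function $q$ (or into the final unitary $U_T$ together with an ancilla of random bits).

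Next comes the success analysis for a fixed input $x$. Condition on $A$ being $(1-\delta)$-correct, which happens with probability at least $q_k$. Then at least $(1-\delta)k$ of the $\ell$ pairs are correct, so there are at least $k'$ correct pairs (the count is an integer, so we may round up to the ceiling) and in particular $\ell\ge k'$. The subset $J$ consists entirely of correct pairs with probability at least $\binom{\#\mathrm{correct}}{k'}/\binom{\ell}{k'}\ge 1/\binom{\ell}{k'}\ge 1/\binom{k}{k'}$, using $\ell\le k$ and the monotonicity of $\binom{\ell}{k'}$ in $\ell$. Averaging over $A$'s measurement outcome gives overall success probability at least $q_k/\binom{k}{k'}$ on every input $x$.

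Finally, the entropy bound $\binom{k}{k'}=\binom{k}{k-k'}$ follows with $k-k'=\lfloor \delta k\rfloor\le \delta k$. For $\delta\le 1/2$, \cref{prop:binom-tail} gives $\binom{k}{\le \lfloor\delta k\rfloor}\le 2^{H_2(\lfloor\delta k\rfloor/k)k}\le 2^{H_2(\delta)k}$, since $H_2$ is increasing on $[0,1/2]$. There is no real obstacle here. The only points needing care are the integrality and ceiling bookkeeping, and the implicit restriction $\delta\le1/2$ for the entropy form. If $\delta>1/2$ is allowed, I would note that the $2^{H_2(\delta)k}$ bound may fail and state the result for $\delta\le 1/2$, where the binomial form is still exact.
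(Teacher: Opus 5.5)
Your proposal is correct and follows essentially the paper's argument. The paper pads $A$'s output to exactly $k$ pairs and keeps a uniformly random $k'$-subset, whereas you subsample directly from the $\ell\le k$ pairs and use monotonicity of $\binom{\ell}{k'}$ in $\ell$, which amounts to the same thing. Your caveat that the entropy form $\binom{k}{k'}\le 2^{H_2(\delta)k}$ needs $\delta\le 1/2$ is well taken: the paper uses this restriction implicitly, and it only applies the theorem with $\delta<1/2$.
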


\begin{proof}
We can assume without loss of generality that $A$ produces exactly $k$ outputs
on every input by producing additional dummy answers that it has no reason to
believe are correct.   We then modify algorithm $A$ to yield an algorithm $A'$ as follows:  

On input $x\in \domain^m$, choose a subset $S\subseteq [k]$ of size $k'$ uniformly
at random.
Run algorithm $A$, keeping track of the answers
that it would have produced and denote them $(i_1,y_1),\ldots, (i_k,y_k)$ 
where $i_1<i_2<\cdots< i_k$.
Then only output the set of pairs $(i_j,y_j)$ such that $j\in S$.

Algorithm $A'$ makes exactly the same number of queries as $A$ does.  The set of pairs output by $A$ contains at least $k'$ values
$j\in [k]$ such that $f(x_{i_j})=y_j$ with probability at least $q_k$. 
Whenever this happens, at least
$1/\binom{k}{k'}$, all pairs $(i_j,y_j)$ for $j\in S$ will satisfy $(x_{i_j},y_j)\in f$.
Therefore $A'$ computes $\easierdp{f}{k'}{m}$ with probability at least
$q_k/\binom{k}{k'}$.  
Finally $\binom{k}{k'}=\binom{k}{k-k'}\le 2^{H_2((k-k')/k) k}\le 2^{H_2(\delta)k}$.
\end{proof}

This immediately yields the following corollary, which shows that any
strong selective direct product theorem implies an associated  strong $(1-\delta)$-selective direct product theorem for sufficiently small
constant $\delta>0$.

\begin{corollary}\label{cor:frac-selective-direct-products}
    Suppose that there is a constant $\gamma>0$ and an integer $T$ such that, for every integer
    $k\in [1,m]$,  any algorithm using $kT$ queries to compute 
    $\easierdp{f}{k}{m}$ is correct with probability at most $2^{-\gamma k}$ then, for
     $2\cdot H_2(\delta)/(1-2\delta)\le \gamma$,
    any algorithm  for $\easierdp{f}{k}{m}$ running
    in time at most $(1-\delta)kT$ is $(1-\delta)$-correct with 
    probability at most $2^{-\gamma k/2}$.
\end{corollary}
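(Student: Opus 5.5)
The plan is to argue by contradiction, using \cref{thm:1-delta-selective} as the reduction. Suppose some algorithm $A$ makes at most $(1-\delta)kT$ queries and is $(1-\delta)$-correct for $\easierdp{f}{k}{m}$ with probability $q_k > 2^{-\gamma k/2}$. Applying \cref{thm:1-delta-selective} to $A$ gives an algorithm $A'$ with the same number of queries that correctly computes $\easierdp{f}{k'}{m}$, where $k'=\lceil(1-\delta)k\rceil$, with probability at least $q_k\cdot 2^{-H_2(\delta)k}$.

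Next, invoke the hypothesis at $k'$. The query count of $A'$ is at most $(1-\delta)kT\le k'T$, and $1\le k'\le k\le m$, so the hypothesis says $A'$ succeeds with probability at most $2^{-\gamma k'}\le 2^{-\gamma(1-\delta)k}$. Combining the two bounds gives
\begin{equation*}
q_k \le 2^{-\gamma(1-\delta)k + H_2(\delta)k}.
\end{equation*}

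It then remains to check that the exponent is at most $-\gamma k/2$. That is, we need $\gamma(1-\delta)-H_2(\delta)\ge\gamma/2$, which is equivalent to $\gamma(1-2\delta)/2\ge H_2(\delta)$. This is exactly the stated condition $2H_2(\delta)/(1-2\delta)\le\gamma$, assuming $\delta<1/2$, which is implicit for the condition to make sense. This yields the contradiction, or, stated directly, the bound $q_k\le 2^{-\gamma k/2}$.

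I expect the main obstacle to be the small bookkeeping in the reduction rather than any analytic step. First, \cref{thm:1-delta-selective} treats "running time" as query count, and its padding to exactly $k$ outputs changes no queries. Second, we need $k'\ge1$ so that the hypothesis applies; if $(1-\delta)k<1$, the claim can be handled trivially or excluded. Third, the worst-case versus per-input probability formulations must match between the hypothesis and \cref{thm:1-delta-selective}; both are stated for every input $x$, so they match.
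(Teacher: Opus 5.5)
Your proof is correct and is essentially the paper's argument: apply \cref{thm:1-delta-selective} to pass to $\easierdp{f}{k'}{m}$ with success probability at least $q_k 2^{-H_2(\delta)k}$, then invoke the hypothesis at $k'$ (using $(1-\delta)kT\le k'T$) together with $H_2(\delta)\le\gamma(1-2\delta)/2$. You state that last inequality in the correct direction; the paper's write-up reverses it in an intermediate line, although its conclusion relies on the correct direction. Your remark that $\delta<1/2$ is implicitly required is apt. Your worry about $k'\ge 1$ is unnecessary, since $k\ge 1$ and $\delta<1/2$ already force $\lceil(1-\delta)k\rceil\ge 1$.
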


\begin{proof}
    Let $\delta>0$ satisfy $2\cdot H_2(\delta)/(1-2\delta)\le \gamma$.
    In particular, this implies that $\gamma (1/2-\delta)\le H_2(\delta)$ and
    hence $-\gamma/2 - H_2(\delta)\le -\gamma (1-\delta)$.
    
    Suppose that there is an
    algorithm $A$ for $\easierdp{f}{k}{m}$ using at most $T'\le (1-\delta)kT$ steps that is
    $(1-\delta)$-correct with probability  $>2^{-\gamma k/2}$.
    Applying \cref{thm:1-delta-selective} to the algorithm $A$
    yields an algorithm $A'$ for $\easierdp{f}{k'}{m}$ for
    $k'=(1-\delta)k$ that runs in time $T'$ that is correct
    with probability $> 2^{-\gamma k/2}/2^{H_2(\delta)k}$.
    By the relationship between $\delta$ and $\gamma$, this probability is $>2^{-\gamma (1-\delta)k}$.
    
    However, by the assumption, since $T'=\lfloor(1-\delta)kT\rfloor \le k'T$ algorithm $A'$ can have success probability at most $2^{-\gamma k'}\le 2^{-\gamma (1-\delta)k}$, which is a contradiction.
\end{proof}

The function $g(\delta)=2\cdot H_2(\delta)/(1-2\delta)\le 2/(1-2\delta)$ is continuous, increasing, concave for
$0<\delta<0.0730367...$ where it evaluates to $0.883343$,
and convex for $0.0730367...<\delta<1/2$.

\subsection{Quantum time-space tradeoff lower bound for matrix mapping}\label{sec:time-space-tradeoffs}

\begin{definition}
    \label{def:fn-matrix}
   We say that a matrix $F \in \bits^{n \times n}$ is a \emph{function matrix} iff every row has exactly one $1$.
   A function matrix therefore defines a function $f_F:[n]\rightarrow [n]$ given by
   $f_F(i)=\search_n(F_{i})$ where $F_{i}$ is the $i$-th row of $F$.
    Let $\calF_n\subseteq \bits^{n\times n}$ be the set of all $n\times n$ function matrices, and $F\sim \calF_n$ to denote sampling a uniformly randomly function matrix of dimension $n$. 
\end{definition}

\begin{definition}
    \label{def:F1AF2}
    For any fixed matrix $A \in \Z^{n\times n}$, we define the \emph{matrix mapping problem} $g_A: \calF_n^2 \to \Z^{n\times n}$ as $g_A(F,\tilde F):= F A \tilde F^T$ for two function matrices $F,\, \tilde F$.
\end{definition}

We observe that this definition implies that 
\begin{equation*}g_A(F,\tilde F)_{i,j}=(FA\tilde F^T)_{i,j}=A_{f_{F}(i),f_{\tilde F}(j)}=A_{\search_n(F_i),\search_n(\tilde F_j)}.
\end{equation*}

This matrix mapping problem is a generalization of a problem considered by Abrahamson~\cite{DBLP:journals/jcss/Abrahamson91} who showed that space-bounded classical algorithms require time that is $\Omega(n^3/S)$.

\begin{proposition}
For any $n \times n$ matrix $A$ and $n\times n$ function matrices $F$, $\tilde F$, there is a quantum algorithms with space
$S$ can compute $g_A(F,\tilde F)$ using $\tilde O(n^{2.5}/S)$ queries.
\end{proposition}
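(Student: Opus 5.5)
We need an algorithm with space $S$ computing all $n^2$ entries $A_{f_F(i),f_{\tilde F}(j)}$ using $\tilde O(n^{2.5}/S)$ queries. The plan is to process the rows $i$ of $F$ in blocks of size about $S/\log n$. For each block we learn $f_F(i)$ for every $i$ in the block, store these indices, and then sweep over all $j$ to output the corresponding entries.

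\textbf{Step 1 (learning a block of $F$).} Fix a block $B\subseteq[n]$ with $|B|=b=\Theta(S/\log n)$. For each $i\in B$, run Grover search on the row $F_i$ to find the unique $1$. Each search uses $O(\sqrt n)$ queries. We repeat each search $O(\log n)$ times and verify the result with one classical query per candidate, so the per-row failure probability is $1/\mathrm{poly}(n)$. We store $f_F(i)\in[n]$ classically in $O(\log n)$ bits, measuring and reusing the workspace. The whole block then costs $O(b\sqrt n\log n)$ queries and $O(b\log n)=O(S)$ space.

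\textbf{Step 2 (sweeping over columns).} For each $j\in[n]$ in turn, find $f_{\tilde F}(j)$ by Grover search with $O(\sqrt n\log n)$ queries, again amplified and verified. We then output $(i,j,A_{f_F(i),f_{\tilde F}(j)})$ for all $i\in B$. Since $A$ is fixed and known, the algorithm can write its entries into the output without any queries. The space used is $O(S)$ for the block plus $O(\log n)$ for the current $j$.

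\textbf{Step 3 (accounting).} There are $n/b$ blocks. Each block costs $O(b\sqrt n\log n)$ queries for Step 1 and $O(n\sqrt n\log n)$ queries for Step 2. The total is therefore $O\bigl((n/b)(b+n)\sqrt n\log n\bigr)=O\bigl(n^{1.5}\log n+n^{2.5}\log^2 n/S\bigr)$. This is $\tilde O(n^{2.5}/S)$ whenever $S\le n$, and for larger $S$ we can simply cap $b$ at $n$. A union bound over the $O(n^2/b+n)$ searches gives overall success probability $1-o(1)$.

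\textbf{Step 4 (space within Grover).} Grover's algorithm itself needs only $O(\log n)$ qubits, consisting of the index register and one ancilla. Intermediate measurements together with classical storage of the found indices fit the general model, and the output pairs are emitted through the flag register. So the space is $O(S+\log n)$.

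The only step needing care is Step 4: we must check that repeated searches, the verification queries, and the classical bookkeeping all fit within $O(S)$ qubits in the layered model. In particular, storing $b$ indices requires $b\log n\le S$, which is exactly why the block size is $S/\log n$, and this choice costs only the stated logarithmic factors.
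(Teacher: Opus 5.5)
Your proof is correct and is essentially the paper's argument: use Grover search to find and store $\Theta(S/\log n)$ of the hidden indices, emit every entry of $FA\tilde F^T$ they determine, and repeat. The paper tiles the output into $k\times k$ squares (storing $k$ indices from each of $F$ and $\tilde F$, for $(n/k)^2$ rounds of $O(k\sqrt n)$ queries), while you tile it into $b\times n$ strips and stream the rows of $\tilde F$; this gives the same $\tilde O(n^{2.5}/S)$ total and is more careful about error reduction and space.

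The one remark to drop is that capping $b$ at $n$ handles $S>n$. That only gives $O(n^{1.5}\log n)$ queries, which is not $\tilde O(n^{2.5}/S)$ once $S\gg n\,\mathrm{polylog}(n)$. Moreover, for $A$ with distinct entries, $\Omega(n^{1.5})$ queries are unavoidable, since the output solves all $2n$ search problems. So the statement, like the paper's proof, should be read for $S$ at most about $n$.
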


\begin{proof}
Using Grover's search, with space $S = O(k \log n)$ and $O(k \sqrt{n})$ queries we can find and store the location of $2k$ 1's of our $F$ and $\tilde F$ function matrices, and thus correctly output $k^2$ entries.
Since $g_A$ needs to output $n^2$ entries total, we can repeat this $n^2/k^2$ times and so obtain the claimed bound.
\end{proof}

We prove a time-space tradeoff lower bound showing that this algorithm for matrix mapping is essentially optimal.

\begin{restatable}{theorem}{FAF}
    \label{thm:F1AF2-cor}
    There exists $C>0$ such that for any $A\in \Z^{n\times n}$ with unique entries the following holds. 
    Any quantum query algorithm computing $g_A(F,\tilde F)$ with space $S < n/16 $ and success probability at least $2^{-S}$ requires $T \geq C n^{2.5}/S$ quantum queries.
\end{restatable}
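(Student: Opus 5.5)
We follow the Borodin--Cook method described in the discussion of space-bounded computation. Given a quantum algorithm with space $S$ and $T$ queries computing $g_A(F,\tilde F)$, we cut it into consecutive segments of $L = c\, S\sqrt n$ layers each, for a small constant $c$ to be fixed. Since all $n^2$ outputs must be produced, if $T < C n^{2.5}/S$ there are fewer than $n^2/(k)$ segments for $k = \Theta(S)$, so on a correct run some segment must produce at least $k$ correct outputs (pairs $((i,j),A_{f_F(i),f_{\tilde F}(j)})$). Then we show that for every segment and every initial state, the probability of producing at least $k$ correct outputs is at most $2^{-2S}$, say. By \cref{cor:quant-union-oracle-reg}, replacing the $S$-qubit state passed into the segment by $\maxmix$ costs at most a $2^{S}$ factor. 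So it suffices to bound the success probability of a segment started from the maximally mixed work state by $2^{-3S}$. A union bound over the at most $T/L$ segments then gives total success probability below $2^{-S}$, which is a contradiction.

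The central step reduces a segment to a selective direct product of $\search_n$. Because $A$ has distinct entries, each correct output value $A_{f_F(i),f_{\tilde F}(j)}$ determines both $f_F(i)=\search_n(F_i)$ and $f_{\tilde F}(j)=\search_n(\tilde F_j)$. Suppose the segment produces $k$ correct outputs at positions $(i,j)$. Let $R$ be the set of distinct rows $i$ and $R'$ the set of distinct columns $j$ among them. Then $|R|\cdot|R'|\ge k$, so $|R|+|R'|\ge 2\sqrt k$. Post-processing the output pairs thus yields correct answers to at least $k'=\lceil 2\sqrt k\rceil$ distinct copies among the $m=2n$ independent $\search_n$ instances (the rows of $F$ and $\tilde F$). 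The indices are chosen adaptively, which is exactly the selective relation $\easierdp{\search_n}{k'}{2n}$. Since the output order in the general model is input dependent, the selective formulation is essential here.

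However, $\sqrt k$ is too weak: with $k'\approx \sqrt S$ answers we only get the bound $L\lesssim \sqrt{S}\sqrt n$, not $S\sqrt n$. The main obstacle is therefore choosing parameters so the counting works. I would instead take segments of length $L = c\,\sqrt{S}\sqrt n$ and set $k' = \Theta(\sqrt{S})$... but then the success bound is only $2^{-\Theta(\sqrt S)}$, which cannot beat the $2^S$ amplification.

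The better fix is to use output count $k = \Theta(S^2)$ per segment. Producing $k=\alpha S^2$ correct outputs forces $|R|+|R'|\ge 2\sqrt{\alpha} S$ solved search instances. Applying \cref{thm:search-sqdpt} with $k' = 2\sqrt\alpha\,S$ and a fixed $\beta$ bounds the success probability by $9\cdot 2^{-2\beta k'}\le 2^{-3S}$ once $\alpha$ is a large enough constant, provided $L\le (\beta/24)k'\sqrt n = \Theta(S\sqrt n)$. The hypothesis $S<n/16$ keeps $k'\le 2n=m$, and $n$ large enough makes the condition on $\beta$ hold. Segments of length $\Theta(S\sqrt n)$ each yield fewer than $\alpha S^2$ outputs, so $T \ge (n^2/(\alpha S^2))\cdot \Theta(S\sqrt n) = \Omega(n^{2.5}/S)$.

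Remaining details to check are the measurement-at-boundary subtlety, handled by deferred measurement within a segment as the paper explains, and that dummy or incorrect outputs do not help. The latter holds because we only need the correct ones, selected via the output pairs.
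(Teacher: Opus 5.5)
Once you discard the $k=\Theta(S)$ and $L=\Theta(\sqrt{S}\sqrt{n})$ false starts, your final argument is essentially the paper's. Both use segments of $\Theta(S\sqrt n)$ queries and a threshold of $\Theta(S^2)$ outputs per segment; the paper takes $(16S)^2$, which is where $S<n/16$ enters. Both use the distinct entries of $A$ plus AM--GM to get $2\sqrt{k}=\Theta(S)$ solved instances of $\easierdp{\search_n}{k'}{2n}$, then apply \cref{thm:search-sqdpt} with $\beta=1/4$. Both pay a $2^{S}$ loss for replacing the passed state by $\maxmix$, and finish with a union bound over segments.

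The step you assert without justification is that union bound. With a per-segment bound of $2^{-2S}$ you need $(T/L)\cdot 2^{-2S}<2^{-S}$, i.e.\ fewer than $2^{S}$ segments. Under the contradiction hypothesis $T<Cn^{2.5}/S$, however, there can be $\Theta(n^2/S^2)$ segments. So the conclusion fails when $S$ is small compared with $\log n$. Increasing the constant $\alpha$ only rescales the exponent, so some condition like $S=\Omega(\log n)$ is unavoidable.

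The paper closes this by observing that $S\ge 2\log n$ in this model. One may then assume $T\le Cn^{2.5}/(2\log n)\le n^4\le 2^{2S}$. With that, the per-segment bound $2^{S}\cdot 2^{-4S}=2^{-3S}\le 2^{-S}/T$ makes the union bound over at most $T$ segments go through. You need to add this observation or an equivalent one.

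A smaller point: the reduction to the selective relation cannot pick out ``the correct ones'' among the segment's outputs. Instead, bound the event that the first $k$ outputs of the segment are all correct. That event holds whenever the whole run is correct and the segment emits at least $k$ outputs, and it is exactly what the selective theorem controls.
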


We prove this using our strong selective direct product theorem for the $\search_n$ function applied to inputs with
exactly one 1.   
The main idea of the connection of matrix mapping to $\search_n$ is the following:

\begin{proposition}
\label{prop:matrix-to-search}
    Fix matrix $A\in \Z^{n\times n}$ with distinct entries and let  
   $F$ and $\tilde F$ be function matrices.  
   The values $(F A \tilde F^T)_{i_1, j_1},\ldots ,(F A \tilde F^T)_{i_k, j_k}$ for distinct $(i_\ell,j_\ell)$ pairs determine the values 
   \begin{equation*}\{\search_n(F_{i_1}),\ldots,\search_n(F_{i_k})\}\cup\{\search_n(\tilde F_{j_1}),\ldots,\search_n(\tilde F_{j_k})\}.
   \end{equation*}
   In particular, they determine the outputs of at least $2\sqrt{k}$ independent $\search_n$ problems on inputs with exactly one 1.
\end{proposition}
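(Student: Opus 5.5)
The first claim is a direct reading of the entries. Since $(FA\tilde F^T)_{i,j}=A_{\search_n(F_i),\search_n(\tilde F_j)}$ and $A$ has distinct entries, the value $A_{a,b}$ uniquely determines the pair $(a,b)$. So for each $\ell\in[k]$ the entry $(FA\tilde F^T)_{i_\ell,j_\ell}$ determines both $\search_n(F_{i_\ell})$ and $\search_n(\tilde F_{j_\ell})$. These are, respectively, the output of the $\search_n$ instance given by row $i_\ell$ of $F$ and the one given by row $j_\ell$ of $\tilde F$.

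Next I will count how many distinct instances are determined. Let $R=\{i_1,\ldots,i_k\}$ be the set of distinct row indices of $F$ that appear, and let $C=\{j_1,\ldots,j_k\}$ be the set of distinct row indices of $\tilde F$ that appear. The pairs $(i_\ell,j_\ell)$ are distinct and all lie in $R\times C$, so $k\le |R|\cdot|C|$. By AM--GM, $|R|+|C|\ge 2\sqrt{|R||C|}\ge 2\sqrt{k}$.

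The rows of $F$ and $\tilde F$ are all distinct input coordinates of the combined input $(F,\tilde F)$. The rows $F_i$ for $i\in R$ together with the rows $\tilde F_j$ for $j\in C$ therefore form $|R|+|C|\ge 2\sqrt{k}$ independent $\search_n$ problems. Under the uniform distribution on $\calF_n^2$, each row is an independent uniformly random weight-one string, and the entries determine the outputs of all of them. Taking $\lceil 2\sqrt{k}\,\rceil$ of them if an integer count is wanted completes the proof.

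The only subtle step is making sure that "independent" refers to disjoint rows, which requires counting distinct indices rather than pairs. That is exactly what the $|R||C|\ge k$ bound handles.
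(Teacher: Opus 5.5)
Your proof is correct and follows essentially the same route as the paper's: distinctness of the entries of $A$ recovers both $\search_n(F_{i_\ell})$ and $\search_n(\tilde F_{j_\ell})$ from each entry, and the numbers of distinct row indices $s=|\{i_1,\ldots,i_k\}|$ and $t=|\{j_1,\ldots,j_k\}|$ satisfy $st\ge k$, so AM--GM gives $s+t\ge 2\sqrt{k}$ independent instances. Your explicit justification of $st\ge k$ (the distinct pairs all lie in $R\times C$) spells out a step the paper only asserts.
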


\begin{proof}
    Since the matrix $A$ has distinct entries, each output value $(F A \tilde F^T)_{i,j}=A_{i',j'}$ for
    a unique pair $i'$ and $j'$.
    By our observation above, we must have $i'=\search_n(F_i)$ and $j'=\search_n(\tilde F_j)$.
    Let $s=|\{i_1,\ldots,i_k\}|$ and $t=|\{j_1,\ldots, j_k\}|$.
    Since each row of $F, \tilde F$ corresponds to an independent search problem, these are the outputs of a total of $s+t$ independent $\search_n$ problems.
    Since $st\ge k$, we must have $s+t\ge 2\sqrt{k}$ by the arithmetic-geometry mean inequality.
\end{proof}

We first indicate how the ordinary direct product theorem for $\search_n$ would yield a time space tradeoff lower bound for output-oblivious algorithms.  This is the analogue of the method of~\cite{KSdW07} using the Borodin-Cook framework discussed in \cref{sec:prelims}.   

\begin{proof}[Sketch of proof for output-oblivious special case]
There are $n^2$ outputs of $g_A$ to be produced using $T$ queries and space $S$.
We break up the $T$ queries into consecutive segments of on the order of $S\sqrt{n}$ queries so there are on the order of
$T/(S\sqrt{n})$ segments.   
Since the quantum algorithm is output-oblivious there is some consecutive segment and
fixed sequence $(i_1,j_1),\ldots,(i_k,j_k)$ of output indices of $g_A$ produced during this segment, where
$k$ is at least on the order of 
\begin{equation}
\label{eq:mat-output-oblivious}
n^2 /[T/(S\sqrt{n})]=n^{2.5} S/T. 
\end{equation}
These outputs must be correct with at least the correctness probability of the algorithm as a whole.
By \cref{prop:matrix-to-search}, these $k$ outputs correspond to the answers to $2\sqrt{k}$ independent $\search_n$ problems.
This segment begins with some state given by $S$ qubits so, by \cref{prop:quant-union}, if we replaced this state by the maximally mixed state the segment will succeed in producing the outputs of these $\ell=2\sqrt{k}$ independent $\search_n$ problems with probability at least
$2^{-S}$ times its actual success probability on the segment with the correct initial state.
By the strong direct product property for $\search_n$, which says that quantum algorithms cannot solve $\ell$ independent search problems with probability better than $c^{-\ell}$ in time at most some constant times $\ell\sqrt{n}$, unless $\ell$ is at most some constant times $S$ (and hence $k$ is at most some constant times
$S^2$) we get a contradiction.
Using
\eqref{eq:mat-output-oblivious}, we get that $S^2$ must be at least on the order of $n^{2.5} S/T$ and 
hence $T$ is $\Omega(n^{2.5}/S)$.
\end{proof}

This proof fails for general quantum algorithms because the choice of output indices produced during a
segment may depend on the value of the input and hence there is no sense in which a consecutive segment completely solves any of the search problems given by the rows of the function matrices.  Even the number of values of output coordinates
produced in a given segment could be very different for different inputs.

The following lemma encapsulates our use of~\cref{thm:search-sqdpt}, our strong selective direct product theorem for $\search_n$.

\begin{lemma}
    \label{lem:FAF-lemma}
    There exists $C' >0$ such that the following holds. Let $\alpha \leq 0.02$, $F,\tilde F \sim \calF_n$ be uniformly random function matrices, $A \in \Z^{n \times n}$ have distinct entries, and $C'\leq k\leq n^2$. 
    Then the probability that any quantum algorithm making at most $\alpha  \sqrt{kn}$ queries to $F, \tilde F$ produces $k$ correct output coordinates of $g_A(F,\tilde F)$ is at most $2^{- \sqrt{k}/4}$.
\end{lemma}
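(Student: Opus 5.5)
We reduce to \cref{thm:search-sqdpt}: any algorithm that outputs $k$ correct entries of $g_A(F,\tilde F)$ can be turned into one that solves a selective direct product of $\search_n$ with parameter about $2\sqrt{k}$, with little loss in success probability. The $2n$ rows of $F$ and $\tilde F$ are $m=2n$ independent instances of $\search_n$, each drawn uniformly from the weight-one strings. This is exactly the hard distribution $\ket{\chi_0}^{\otimes m}$ used in the proof of \cref{thm:search-sqdpt}. That proof bounds $\norm{\pisucc\ket{\psi_T}}^2$ under this input state, so it applies to random function matrices directly.

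The reduction works as follows. Given an algorithm $\calA$ with $T\le\alpha\sqrt{kn}$ queries, build $\calA'$ that runs $\calA$ and measures its output, a list of $k$ pairs $((i_\ell,j_\ell),v_\ell)$. Since $A$ has distinct entries, each value $v_\ell$ equals $A_{i',j'}$ for a unique pair $(i',j')$. $\calA'$ then declares $\search_n(F_{i_\ell})=i'$ and $\search_n(\tilde F_{j_\ell})=j'$. By \cref{prop:matrix-to-search}, whenever all $k$ outputs are correct, these declarations are consistent and correct. They name the answers of $s+t\ge 2\sqrt{k}$ distinct search instances.

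Set $k'=\lceil 2\sqrt{k}\rceil$. $\calA'$ outputs the first $k'$ of these answers, in some fixed order, as its selective answer. This post-processing is input-independent: it is a function $q$ of the measured work register. Hence $\calA'$ is a valid algorithm for $\easierdp{\search_n}{k'}{2n}$ with the same $T$ queries and success probability at least that of $\calA$. If $s+t$ falls slightly short of $k'$ because of rounding, take $k'=\lfloor 2\sqrt{k}\rfloor$ instead; this only affects constants.

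Now apply \cref{thm:search-sqdpt} with $\beta=1/2$. The side condition $2^{2(\beta+H_2(\beta))/(1-\beta)}=2^{6}\le n$ holds for $n\ge 64$. Smaller $n$ can be absorbed by $C'$ and $k\le n^2$, or handled by using a smaller $\beta$. The theorem requires $T\le \frac{1}{48}k'\sqrt n$. Since $k'\ge 2\sqrt{k}$, it suffices that $\alpha\sqrt{kn}\le \frac{2}{48}\sqrt{k}\sqrt{n}$, i.e. $\alpha\le 1/24\approx0.0417$, and $\alpha\le0.02$ satisfies this. The theorem then bounds the success probability by $9\cdot 2^{-k'}\le 9\cdot 2^{-2\sqrt{k}}$. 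We need $9\cdot 2^{-2\sqrt k}\le 2^{-\sqrt k/4}$, which holds once $\frac{7}{4}\sqrt{k}\ge\log 9$, i.e. for $k\ge C'$ with a small absolute constant $C'$. The slack also covers the floor in $k'$.

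The main obstacle is formal rather than computational. The selective answer set must be a function of the measured output alone, with all $k'$ indices distinct, and every input-dependent choice of which rows are reported must be absorbed into this classical post-processing. This is exactly where selectivity is used: output-oblivious theorems would not suffice, since the set of rows touched depends on the outputs $\calA$ chooses. A second point to check is that the theorem's bound holds for the uniform input distribution rather than worst case. This is immediate, because its proof is carried out on $\ket{\chi_0}^{\otimes m}$.
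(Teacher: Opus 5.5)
Your proposal is correct and takes the same route as the paper. \cref{prop:matrix-to-search} turns $k$ correct entries into correct answers for at least $2\sqrt{k}$ of the $2n$ independent row/column $\search_n$ instances, and then \cref{thm:search-sqdpt} is applied to $\easierdp{\search_n}{2\sqrt{k}}{2n}$ under the uniform product distribution.

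The only difference is the parameter choice. You take $\beta=1/2$, which needs $n\ge 64$; you correctly absorb this into $C'$ via $k\le n^2$. The paper takes $\beta=1/4$, which gives the weaker but still sufficient bound $9\cdot 2^{-\sqrt{k}}$ below the threshold $\frac{1}{48}\sqrt{kn}$. Your rounding worry is also unnecessary: $s+t$ is an integer at least $2\sqrt{k}$, so it is at least $\lceil 2\sqrt{k}\rceil$.
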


\begin{proof}
    By \cref{prop:matrix-to-search}, on any input pair $F,\tilde F$, implies that producing those $k$ output coordinates of $g_A(F,\tilde F)$ on an input, involves
    producing at least $2\sqrt{k}$ answers on that input.
    Since the pair of input matrices $F,\tilde F$ together have $2n$ independent $\search_n$ problems between them, this means solving the selective direct product problem
    $\easierdp{\search_n}{2\sqrt{k}}{2n}$ on those inputs.

    Choosing $\beta = 1/4$ satisfies the assumptions needed to apply \cref{thm:search-sqdpt}. 
    Then we have that if our algorithm makes fewer than $\frac{2\beta}{24} \sqrt{k}\sqrt{n} \leq 0.02 \sqrt{kn}$ queries, we successfully output $k$ entries of $f_A$ with probability at most $9 \cdot 2^{-\sqrt{k}}\leq 2^{-\sqrt{k}/4}$ for sufficiently large $k \geq C'$.
\end{proof}

We are now prove the full version of \cref{thm:F1AF2-cor}, keeping more detailed track of parameters.

\begin{proof}[Proof of \cref{thm:F1AF2-cor}]
    Suppose we have a quantum circuit $\calC$ using $T$ queries and space $S < n/16$ that computes $f_A$ with probability greater than $2^{-S}$.
    We know that $S \geq 2\log n$, so we may assume $T \leq C n^{2.5}/(2\log n)$, since otherwise we are done.
    In particular, note that we have $2^{2S} \geq n^4 \geq T$.

    Let $\alpha = 0.02$.
    We must have $T \geq \alpha \cdot 16 S \sqrt{n}$, since otherwise by \cref{lem:FAF-lemma} (and our assumption that $S < n/16$) we have probability at most $2^{-S}$ of producing $16^2 S^2 < n^2$ outputs.

    We split the circuit $\calC$ into segments with at most $16\alpha S\sqrt{n}$ queries, so that there are $\lceil T/(16\alpha S \sqrt n)\rceil $ segments. 
    For the purposes of analysis, we instead analyze the behavior of the segment when the work space is $\maxmix$ on $S$ qubits.
    By \cref{cor:quant-union-oracle-reg}, this increases the probability of any output in this segment by at most $2^S$, and also returns the input register to the initial uniform superposition state, allowing us to apply \cref{lem:FAF-lemma}.
    
    By \cref{lem:FAF-lemma} combined with \cref{prop:quant-union}, this subcircuit correctly computes at least $(16S)^2$ entries of $f_A(F_1,F_2)$ with probability at most $2^S \cdot 2^{-16S/4} \leq 2^{-3S}\leq 2^{-S}/T$. 

    Taking a union bound over subcircuits, this tells us that the probability of any of them producing $(16S)^2$ outputs is at most $2^{-S}$.
    Since $f_A$ has $n^2$ outputs, this tells us that
    $$\lceil T/(16\alpha S \sqrt n)\rceil (16S)^2\geq n^2. $$
    Since $T\geq \alpha 16 S\sqrt n$, we have that
    $$2\cdot 16 \cdot T  \geq \alpha n^{2.5}/S,$$
    or in other words
    $$T \geq \frac{\alpha}{32} n^{2.5}/S,$$
    as desired. 
\end{proof}

Using the general transformation of~\cite{DBLP:journals/toct/BeameK25} we obtain
the following:

\begin{corollary}
     For any fixed matrix $A \in \Z^{n\times n}$ with distinct entries, any quantum query algorithm computing the matrix mapping function $g_A$ requires cumulative memory $\Omega(n^{2.5})$.
\end{corollary}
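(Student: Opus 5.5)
The statement to prove is the corollary just stated: cumulative memory $\Omega(n^{2.5})$ for $g_A$. I would obtain it from the general transformation of~\cite{DBLP:journals/toct/BeameK25}. That result converts a Borodin--Cook style time-space tradeoff argument into a cumulative memory lower bound, provided the argument has a specific form. The form needed is a per-segment statement: any segment of at most roughly $S\sqrt{n}$ queries (up to constants) that starts from an arbitrary $S$-qubit state produces more than $O(S^2)$ correct outputs only with probability at most $2^{-\Omega(S)}$. Crucially, this must hold for every possible space value $S$, not just for a single fixed space bound of the algorithm.

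So the first step is to isolate that per-segment property from the proof of \cref{thm:F1AF2-cor}. \Cref{lem:FAF-lemma} says that any algorithm making at most $\alpha\sqrt{kn}$ queries produces $k$ correct outputs of $g_A$ with probability at most $2^{-\sqrt{k}/4}$. This holds for every $k\ge C'$, and it is selective: the choice of which outputs are produced may depend on the input. Now combine it with \cref{cor:quant-union-oracle-reg}, which replaces the incoming state by the maximally mixed state at a cost of a factor $2^S$. Setting $k=(16S)^2$ then shows that a segment with $16\alpha S\sqrt{n}$ queries whose incoming state has $S$ qubits outputs $(16S)^2$ correct entries with probability at most $2^{-3S}$. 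Because the lemma is selective, this holds with no output-obliviousness assumption, and it holds simultaneously for every $S$ with $C'\le (16S)^2\le n^2$. This is exactly the hypothesis form required by the transformation of~\cite{DBLP:journals/toct/BeameK25}.

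Next, the transformation takes the resulting tradeoff, the instance of $T\cdot S=\Omega(n^{2.5})$ in which at most $S^2$ outputs are produced per $S\sqrt{n}$ queries, and adaptively partitions the computation into segments. Each segment's length is tied to the space at its boundary, so each segment contributes cumulative memory of order its number of outputs times $\sqrt{n}$. Summing over segments, the contributions add up to $n^2$ outputs times $\sqrt{n}$, which gives $\Omega(n^{2.5})$.

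The main obstacle is matching the hypotheses of~\cite{DBLP:journals/toct/BeameK25} exactly, and there are two points. First, the output count per segment scales as $S^2$, not linearly in $S$, so I would check that their theorem permits a general output function $h(S)$ and a query length $\propto S\sqrt{n}$. Second, the regimes of very small $S$ and of $S$ near $n$ must be handled: for small $S$ I would use $k\ge C'$ and pad $S$ up to a constant, and for large $S$ the space is at least of order $n$ throughout, which directly gives cumulative memory at least $n\cdot T\ge n^{2.5}$ because $T=\Omega(n^{1.5})$ is needed just to output the answers. Constant success probability is assumed, as in the standard setting of that transformation.
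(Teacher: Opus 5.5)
Your proposal is correct and follows the paper. The paper obtains the corollary in one line by feeding the selective per-segment bound from the proof of \cref{thm:F1AF2-cor} (\cref{lem:FAF-lemma} together with \cref{cor:quant-union-oracle-reg}, which needs no output-obliviousness) into the general transformation of \cite{DBLP:journals/toct/BeameK25}, and your observation that segment length times boundary space divided by outputs per segment is $\Theta(\sqrt{n})$ uniformly in $S$ is exactly why the bound comes out as $n^2\cdot\sqrt{n}$.

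One small imprecision is your separate large-space argument, which treats the space as large ``throughout''. That case split is unnecessary: any stretch of $\alpha n^{1.5}$ queries in which the space stays at least $n/16$ already contributes $\Omega(n^{2.5})$ cumulative memory. Otherwise the transformation only needs the per-segment bound at cut points of space below $n/16$.
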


\section{Strong list-decoding direct product theorems for quantum query\\ complexity}
\label{sec:list-dec}

\emph{List-decoding direct products} are a further generalization of selective direct products where an algorithm succeeds by outputting any list of $\abs{\range}^{m-k}$ potential outputs for all $m$ problems that contains the true output \cite{DBLP:conf/focs/BenDavidB25}.

\begin{definition}\label{def:list-decoding-dp}
    Given a relation $\relation \subseteq \inputspace \times \range$,
    we use $\easierlistdp{\relation}{k}{m}$ to denote the list decoding direct product relation on $\inputspace^m \times \binom{\range^m}{\abs{\range}^{m-k}}$ such that
    \begin{math}
        ((x_1, \ldots, x_m), \listofout) \in \easierlistdp{\relation}{k}{m}
    \end{math}
    iff there exists an $(y_1,\ldots,y_m) \in \listofout$ such that $(x_i,y_i) \in \relation$ for all $i\in [m]$.
\end{definition}

A selective direct product output that gives correct values $\tau_K$ for some set $K$ of indices of size $k$ can be seen as a special case of a list-decoding direct product where the output list $\listofout$ contains all $\abs{\range}^{m-k}$ outputs consistent with
$\tau_K$ on $K$.

Very recently Ben-David and Blais proved a classical strong list-decoding direct product theorem for classical randomized query complexity of Boolean functions:

\begin{theorem}\cite{DBLP:conf/focs/BenDavidB25}
 There is a constant $c\in (1/2,1)$ such that for any $f:\bits^n\rightarrow \bits$, any positive integers $k\le m$ and every $\gamma\in [c,1)$,
 any classical randomized query algorithm computing
 $\easierlistdp{f}{k}{m}$ with success probability
at least $\gamma^k$ requires $\Omega(k)$ times the number of queries for a classical randomized query algorithm to compute $f$ with success probability $2/3$.
\end{theorem}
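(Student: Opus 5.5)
The plan is a classical analogue of the second-moment strategy this paper uses in the quantum setting. Reduce to deterministic decision trees via Yao's principle under a product distribution $\mu^{\otimes m}$. Bound the list-decoding success at each leaf by a $2$-norm quantity. Then control how fast that quantity can grow with a potential function whose increase is charged to queries, coordinate by coordinate.

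First I fix the single-copy distribution. By minimax applied to $R_{1/3}(f)$, together with a symmetrization between $f^{-1}(0)$ and $f^{-1}(1)$, I would obtain a distribution $\mu$ with two properties. It gives weight exactly $1/2$ to each of $f^{-1}(0)$ and $f^{-1}(1)$; this zero-bias property is essential, since otherwise list-decoding is trivially easy (this is the same phenomenon as the $\eta=1/\abs{\range}$ requirement discussed above). And every deterministic tree of depth $d=c_0R(f)$ has small advantage, in the sense that the expected squared posterior bias at its leaves satisfies $\mathbb E[(2p-1)^2]\le \epsilon_0$ for a small constant $\epsilon_0$, where $p$ is the posterior probability that $f(x)=1$ at the leaf. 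Obtaining this squared-bias form, rather than a bound on success probability, may require hardness-amplifying $\mu$ slightly, e.g.\ by running a constant-error amplification argument in reverse.

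Next I bound success at a single leaf. At a leaf $v$ of a deterministic tree for $\easierlistdp{f}{k}{m}$, the posterior on $(f(x_1),\dots,f(x_m))$ is a product of Bernoulli$(p_i)$ distributions, because the inputs are independent and queries touch single coordinates. The best list of size $L=2^{m-k}$ captures the $L$ heaviest strings. By Cauchy--Schwarz this mass is at most
\[
\sqrt{L}\Bigl(\prod_{i=1}^m \bigl(p_i^2+(1-p_i)^2\bigr)\Bigr)^{1/2}=2^{-k/2}\Bigl(\prod_{i=1}^m \bigl(1+(2p_i-1)^2\bigr)\Bigr)^{1/2}.
\]
Averaging over leaves and applying Jensen's inequality, the success probability is at most $2^{-k/2}\,\bigl(\mathbb E_v\prod_i(1+b_i^2)\bigr)^{1/2}$ with $b_i=2p_i-1$. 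The whole theorem then reduces to showing $\Phi:=\mathbb E_v\prod_i(1+b_i^2)\le 2^{(1-\theta)k}$ whenever the tree makes at most $c_1 k\,R(f)$ queries, for a suitable constant $\theta>0$. This gives success $2^{-\theta k/2}$, and since $2^{-\theta k/2}<\gamma^k$ once $\gamma>2^{-\theta/2}$, any algorithm with success $\gamma^k$ must use more queries; this pins down the constant $c$.

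The main obstacle is bounding $\Phi$ for adaptive trees that interleave queries across coordinates. I would expand $\prod_i(1+b_i^2)=\sum_{S\subseteq[m]}\prod_{i\in S}b_i^2$. For a fixed $S$, each factor $b_i^2$ can be large only if coordinate $i$ has effectively received about $d$ queries. The difficulty is that the single-copy guarantee holds only on average. I would therefore first prove a strengthened single-copy lemma. It should say that, for a tree making $t$ queries on a single copy, $\mathbb E[b^2]\le \epsilon_0^{\lfloor t/d\rfloor}$-type decay fails gracefully, i.e.\ $\mathbb E[b^2]\le \epsilon_0\cdot 2^{t/d}$, with a sub-multiplicative form that survives conditioning on the other coordinates' transcripts. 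Independence of the coordinates under $\mu^{\otimes m}$ then lets me condition coordinate by coordinate, much as in the per-query recurrence of \cref{lem:delta-recurrence}. This should give
\[
\mathbb E_v\prod_{i\in S}b_i^2\le \mathbb E\prod_{i\in S}\epsilon_0 2^{t_i/d},
\]
where $t_i$ is the random number of queries to coordinate $i$ and $\sum_i t_i\le c_1kd$. Summing over $S$ and using $\sum_i t_i\le c_1 kd$ bounds $\Phi\le \mathbb E\prod_i(1+\epsilon_0 2^{t_i/d})$. A convexity argument shows this is maximized by concentrating the queries on $O(c_1k)$ coordinates, which yields $\Phi\le 2^{O(c_1 k)+O(\epsilon_0 m)}$. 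The stray $\epsilon_0 m$ term is the real danger when $m\gg k$. To kill it, I would need the single-copy lemma in the sharper form $\mathbb E[b^2]\le \epsilon_0\,(t/d)$ with $\mathbb E[b^2]=0$ exactly when $t=0$, which zero bias provides. Making this linear-in-$t$ form hold for all $t\le d$ is the step I expect to require the careful charging argument, and it is where I would spend most of the effort.
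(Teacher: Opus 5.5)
The paper only quotes this theorem from \cite{DBLP:conf/focs/BenDavidB25} and gives no proof, so I am judging your argument on its own. Your front end is correct. Leaves of a deterministic tree are subcubes, so under $\mu^{\otimes m}$ the posterior on $(f(x_1),\dots,f(x_m))$ is a product of Bernoullis, and Cauchy--Schwarz plus Jensen give success at most $2^{-k/2}\Phi^{1/2}$; this is the classical counterpart of the 4-norm step in \cref{sec:list-dec}.

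\textbf{The composition step is false as formulated.} The termwise bound $\mathbb{E}_v\prod_{i\in S}b_i^2\le\mathbb{E}\prod_{i\in S}g(t_i)$ fails for adaptive trees once $g$ has the linear form you need. Take $f=\OR_n$ with its balanced hard distribution and $S=\{1,2\}$. Query one bit of copy~1; if it is a $1$ (probability $1/(2n)$), read all of copy~2, otherwise stop. Then $\mathbb{E}[b_1^2b_2^2]=1/(2n)$. But $\mathbb{E}[g(t_1)g(t_2)]=g(1)g(n)/(2n)=O(1/n^2)$ for $g(t)=\epsilon_0 t/d$ with $d=\Theta(n)$. Adaptivity correlates the budget spent on one copy with the bias already obtained on another. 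Also, later queries to a copy start from a posterior rather than from $\mu$. So a single-copy bound stated under $\mu$ cannot be applied ``coordinate by coordinate''.

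What does work is to bound $\Phi$ as a whole with a multiplicative potential on single-copy posteriors, $\psi(\nu)=\sup_A\mathbb{E}_\nu[(1+b_A^2)e^{-\alpha T_A}]$. It satisfies $\psi(\nu)\ge 1+b_\nu^2$ (empty tree). By dynamic programming, $e^{-\alpha}\,\mathbb{E}[\psi(\nu')]\le\psi(\nu)$ for any query. A query to copy $i$ changes only $\nu_i$, so $e^{-\alpha T}\prod_i\psi(\nu_i)$ is a supermartingale along the $m$-copy tree. Hence $\Phi\le e^{\alpha Q}\psi(\mu)^m$ for a $Q$-query tree. This is the classical analogue of the progress measure $\tr[\Gamma^{\otimes m}\rho]$.

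\textbf{The single-copy lemma is missing, as you note yourself.} Everything now hinges on $\psi(\mu)=1$ \emph{exactly}; otherwise $\psi(\mu)^m$ brings back your stray $e^{\Theta(m)}$ factor. Up to constants, via a truncation argument, this is equivalent to a single distribution $\mu$ with $\mathbb{E}_\mu[b_A^2]\le C\,\mathbb{E}_\mu[T_A]/R(f)$ for \emph{every} variable-depth tree $A$. Here $T_A$ is the random number of queries, and the empty tree then forces zero bias automatically.

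Your fixed-depth version ``$\mathbb{E}[b^2]\le\epsilon_0 t/d$'' is not enough, because the number of queries a copy receives is random. Yao's minimax plus amplification ``in reverse'' cannot produce such a $\mu$ either. Minimax gives a hard distribution for one error level at a time, and these can differ across levels. You need one $\mu$ that is hard at all bias levels simultaneously, with a linear score-to-cost tradeoff. That requires a minimax theorem for ratios of score to expected cost; Ben-David and Blais proved one of this type in earlier work.

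Given such a $\mu$, choosing $\alpha=3C/R(f)$ yields $\psi(\mu)\le 1$. Hence $\Phi\le e^{\alpha Q}$, and success is at most $2^{-k/4}$ whenever $Q\le \frac{\ln 2}{6C}\,k\,R(f)$. Without this ingredient, which you defer to ``careful charging'', your argument does not yet establish the theorem.
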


We are able to prove such theorems for every Boolean output function.
We first observe that there are some fundamental limitations to obtaining fully general strong list decoding direct product theorems, namely that
there is no such theorem for total relations that are not functions, and the quantum multiplicative adversary method cannot be used to obtain the quantum analogue of such a theorem except under strict circumstances.

Suppose that there is some $y^*\in \range$ such that
$\Pr_\dist[x\in \relation^{-1}(y^*)]=\eta>1/\abs{\range}$.
Define the list $\listofout^m_{y^*}$ to be the set of
all vectors in $Y^m$ that have at least
a $\eta*=(1/\abs{Y}+\eta)/2$ coordinates and define
$\delta>0$ such that $\eta*=(1-\delta)\eta$.
Then we also have $\eta*>(1+\delta)/\abs{Y}$.

Observe that $|\listofout^m_{y^*}|$ is equal to $\abs{\range}^m$ times the probability that the binomial distribution
$Bin(m,\frac{1}{\abs{\range}})$ has a value at least $\eta^* m\ge (1+\delta)m/\abs{\range}$.
By a Chernoff bound this probability is at most
$e^{-2\delta^2 m}$.
For $m$ sufficiently in terms of $\abs{\range}$ and $k$, this is smaller than
$\abs{\range}^{-k}$, so $\listofout^m_{y^*}$ has size at 
most $\abs{\range}^{m-k}$ and so is an allowable
output for the list-decoding direct product problem.

On the other hand, since for each $i$ $\Pr_{\dist}[x_i\in \relation^{-1}(y^*)]\ge \eta$,
$\Pr_{\dist^m}[x\notin \listofout^m_{y^*}]$ is at most
the probability that the binomial distribution 
$Bin(m,\eta)$ has a value less than $\eta^* m\le (1-\delta)\eta m$.
By a Chernoff lower bound, this is at most 
$e^{-\delta^2 m/2}$ which is always at most a small positive constant and will actually be near 0.
Therefore the answer $\listofout^m_{y^*}$ will almost
surely contain the correct output for $\relation^{\otimes m}$ at a computational cost of 0.

Therefore, we have the following:
\begin{description}
    \item{(1)} Any strong list-decoding direct product
theorem can only apply to functions, rather than arbitrary total relations, so in the following we assume that we have a function $f$ rather than a relation $\relation$.
     \item{(2)} The multiplicative adversary method with parameter $\eta$ assigns a cost of 0 to algorithms with success probability up to
     $\eta$, since this is the assumed upper bound on the success probability for algorithm states in the space spanned by eigenspaces of the adversary matrix $\Gamma$ with eigenvalue at most $\lambda$. 
     Therefore the method cannot admit a strong
     list-decoding direct product theorem if
     $\eta>1/\abs{\range}$.
     \item{(3)} Strong list-decoding direct product theorems for any non-Boolean output function cannot follow from our argument that converts the negative-weights additive adversary
     lower bounds to multiplicative ones, since our non-Boolean output argument necessarily yields values of $\eta>1/\abs{\range}$.
\end{description}

Nonetheless, we are able to prove
strong list-decoding direct product theorems that generalize our strong selective direct product theorems for every function with an asymptotically optimal multiplicative adversary argument with parameter $\eta = 1/\abs{\range}$.
For Boolean-valued functions, strengthened universality results for multiplicative adversaries in
 \cref{thm:bool-mult-bound-is-tight} means that we have multiplicative adversaries with this precise property and hence
 full strong list-decoding theorems for all such
 functions.

\begin{sloppypar}
In all of our strong selective direct product theorem
arguments, there is a subspace $V^0$ of $\mathbb{C}^\inputspace$ and its orthogonal complement
$V^1$ in $\mathbb{C}^\inputspace$ such that for any
$\phi\in V^0$,
$\norm{\Pi_{f^{-1}(y)}\phi}^2=\norm{\phi}^2/\abs{\range}$.
The notation for these spaces is explicit in the proof of \cref{thm:selective-from-multiplicative-adversary} with $\eta=1/\abs{\range}$,
but we also have them for the concrete problems of $\search_n$ and $\OR_n$, where
$V^0=\spn(\ket{\chi_0})$, and
$\Parity_n$ where $V^0=\spn(\
\set{\ket{-}_S\ket{+}_{\overline S}\ :\ \abs{S}<n/2})$.
\end{sloppypar}

The lower bounds then follow by analyzing a subspace $V'$
in $(\mathbb{C}^\inputspace)^m$
that is contained in the span of direct products of $V^0$ and
$V^1$ that contain at most $\ell$ copies of $V^1$ in some subset
$K$ of $[m]$ of size $k$ that corresponds to the set of output values
that must be correct in order for the selective direct product output to be correct.
(This is explicitly described in \eqref{eqn:selective-bad-space-subset-of-sum-of-eigenspaces} and implicit in the other proofs.) 
The lower bounds follow for $\ell$ a small fraction of $k$ by taking a union bound
over all $\binom{k}{\le\ell}$ choices of where the
$\le \ell$ copies of $V^1$ can appear.

For list-decoding direct product problems, all outputs are relevant, so the subspace $V'$ to be analyzed
satisfies
    \begin{equation}
    \label{eq:V'}
        V' \subseteq \bigoplus_{\substack{u \in \{0,1\}^m\\ \abs{u} < \ell}} V^u=\bigoplus_{\substack{u \in \{0,1\}^m\\ \abs{u} < \ell}} \bigotimes_{i \in [m]} V^{u_i}
    \end{equation}
Using the same ideas would now involve a union bound
over all $\binom{m}{\le \ell}$ choices of where
the $\ell$ copies of $V^1$ could appear.
This can be done, but would yield a degradation of both
the lower bound on the number of queries and the upper bound on the
success probability of roughly $\log_{\abs{\range}} (m/k)$ and therefore would not yield a full strong list-decoding direct product theorem.

To improve this, we need to augment the first moment calculation of probabilities inherent in the union bound with a second moment 
calculation.
Since the probabilities are themselves related to squared
norms of vectors, this probability second moment
calculation necessarily focuses on the fourth powers of norms of vectors.
This second moment bound is encapsulated in the following
lemma whose proof is given in \cref{sec:list-dec-lemma}.
Its proof is closely related to the inductive proof of Bonami's 4-norm/2-norm (hypercontractive) inequality for low-degree functions~\cite{Bonami1970} given in~\cite{moo:journal,DBLP:books/daglib/0033652}.

\begin{restatable}{lemma}{listdecnolosskeylemma}\label{cor:list-dec-no-loss-key-lemma}
    For any $\phi \in \bigoplus_{u \in \{0,1\}^{m}, \abs{u}\le \ell} V^u$,
    \begin{displaymath}
        \sum_{y' \in \range^{m}} \norm{\Pi_{(f^{\otimes m})^{-1}(y')}\phi}^4 \leq \frac{\max(16,3 \abs{\range})^{\ell}}{\abs{\range}^{m}} \norm{\phi}^4.
    \end{displaymath}
\end{restatable}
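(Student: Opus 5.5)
We prove the lemma by induction on $m$, following the inductive proof of Bonami's inequality. Write $C=\max(16,3\abs{\range})$ and $W^{(m)}_{\le \ell}=\bigoplus_{u\in\bits^m,\abs{u}\le\ell}V^u$. The base case $m=0$ is trivial; for $\ell=0$ the bound is an equality by the computation below. For the inductive step, split off the last coordinate. Every $\phi\in W^{(m)}_{\le\ell}$ decomposes orthogonally as $\phi=\phi_0+\phi_1$ with $\phi_0\in W^{(m-1)}_{\le \ell}\otimes V^0$ and $\phi_1\in W^{(m-1)}_{\le \ell-1}\otimes V^1$, so $\norm{\phi}^2=\norm{\phi_0}^2+\norm{\phi_1}^2$. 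Writing $y'=(z,y)$ with $z\in\range^{m-1}$, $y\in\range$, we have $\Pi_{(f^{\otimes m})^{-1}(y')}=\Pi_z\otimes\Pi_y$. Set $A_{z,y}=(\Pi_z\otimes\Pi_y)\phi_0$ and $B_{z,y}=(\Pi_z\otimes \Pi_y)\phi_1$.

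\emph{Step 1 (the $V^0$ part).} The defining property $\norm{\Pi_{f^{-1}(y)}v}^2=\norm{v}^2/\abs{\range}$ for all $v\in V^0$ gives, by polarization, $\bra{e_i}\Pi_y\ket{e_j}=\delta_{ij}/\abs{\range}$ for any orthonormal basis $\{e_j\}$ of $V^0$. Writing $\phi_0=\sum_j\psi_j\otimes e_j$, this yields $\norm{A_{z,y}}^2=\frac{1}{\abs{\range}}\sum_j\norm{\Pi_z\psi_j}^2$, which is independent of $y$. Hence
\[
\sum_{z,y}\norm{A_{z,y}}^4=\frac{1}{\abs{\range}}\sum_z\Big(\sum_j\norm{\Pi_z\psi_j}^2\Big)^2 .
\]
By Minkowski's inequality in $\ell_2$ over $z$, the square root of the inner sum is at most $\sum_j\big(\sum_z\norm{\Pi_z\psi_j}^4\big)^{1/2}$. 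Applying the inductive hypothesis to each $\psi_j\in W^{(m-1)}_{\le\ell}$ then gives $\sum_{z,y}\norm{A_{z,y}}^4\le \frac{C^\ell}{\abs{\range}^m}\norm{\phi_0}^4$.

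\emph{Step 2 (the $V^1$ part).} Do the same with an orthonormal basis of $V^1$, writing $\phi_1=\sum_j\psi'_j\otimes e'_j$. We only use $\norm{B_{z,y}}^2\le \sum_y\norm{B_{z,y}}^2=\norm{(\Pi_z\otimes\identity)\phi_1}^2=\sum_j\norm{\Pi_z\psi'_j}^2$. This gives
\[
\sum_y\norm{B_{z,y}}^4\le\Big(\sum_j\norm{\Pi_z\psi'_j}^2\Big)^2,
\]
and the same Minkowski-plus-induction step (now at level $\ell-1$) yields $\sum_{z,y}\norm{B_{z,y}}^4\le \frac{\abs{\range}\,C^{\ell-1}}{\abs{\range}^m}\norm{\phi_1}^4$.

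\emph{Step 3 (cross terms; the main obstacle).} Expand
\[
\norm{A+B}^4=\big(\norm{A}^2+\norm{B}^2+2\,\mathrm{Re}\langle A,B\rangle\big)^2 .
\]
The crude bound $\norm{A+B}^4\le 8(\norm A^4+\norm B^4)$ would multiply the $\phi_0$ term by $8$ and destroy the induction, so the expansion must be handled exactly as in Bonami's proof. The squared terms are $\norm{A}^4$ (coefficient $1$) and $\norm{B}^4$. The term $4(\mathrm{Re}\langle A,B\rangle)^2$ is at most $4\norm A^2\norm B^2$, and summing with Cauchy--Schwarz bounds it by $4\sqrt{\sum\norm A^4\sum\norm B^4}$. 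The terms $\norm A^2\cdot\mathrm{Re}\langle A,B\rangle$ and $\norm B^2\cdot\mathrm{Re}\langle A,B\rangle$ need more care. For the first, use that $\norm{A_{z,y}}^2$ does not depend on $y$, so $\sum_y\langle A_{z,y},B_{z,y}\rangle=\langle(\Pi_z\otimes\identity)\phi_0,(\Pi_z\otimes\identity)\phi_1\rangle$. This vanishes because $\identity=\sum_y\Pi_y$ preserves the orthogonality of $V^0$ and $V^1$ in the last coordinate, which kills that cross term entirely. The remaining $\norm B^2\langle A,B\rangle$ term is bounded by Hölder. Collecting the bounds from Steps 1 and 2 gives
\[
\sum_{z,y}\norm{\Pi_{(f^{\otimes m})^{-1}((z,y))}\phi}^4\le\frac{C^{\ell}}{\abs{\range}^m}\Big(a^2+6\sqrt{\abs{\range}/C}\,ab+O(\abs{\range}/C)\,b^2\Big),
\]
where $a=\norm{\phi_0}^2$ and $b=\norm{\phi_1}^2$. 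The final check is that the conditions $C\ge 16$ and $C\ge 3\abs{\range}$ make the parenthesized quantity at most $(a+b)^2=\norm\phi^4$, which closes the induction. I expect this constant-tracking, and confirming that no stray cross term survives, to be the delicate part of the argument.
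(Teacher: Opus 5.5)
Your Steps 1 and 2 are correct. Decomposing over an orthonormal basis of the last factor and then using Minkowski is a valid substitute for the paper's device of leaving the unmeasured coordinates arbitrary in $B_{j,\ell}$. Your cancellation of the $\norm{A}^2\mathrm{Re}\langle A,B\rangle$ term is exactly the paper's.

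\textbf{The gap is in Step 3: the constants do not close for $C=\max(16,3\abs{\range})$.} You bound $\sum_{z,y}\norm{A_{z,y}}^2\norm{B_{z,y}}^2$ by Cauchy--Schwarz jointly over $(z,y)$. Step 2 only gives $\sum_{z,y}\norm{B_{z,y}}^4\le \abs{\range}C^{\ell-1}b^2/\abs{\range}^m$, so this yields $\frac{C^\ell}{\abs{\range}^m}\sqrt{\abs{\range}/C}\,ab$. That is where your coefficient $6\sqrt{\abs{\range}/C}$ on $ab$ comes from. For $b\ll a$ the parenthesized quantity is at most $(a+b)^2$ only if this coefficient is at most $2$, i.e.\ $C\ge 9\abs{\range}$. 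That fails for every $\abs{\range}\ge 2$: the coefficient is $6/\sqrt 8\approx 2.12$ when $\abs{\range}=2$, and $2\sqrt 3\approx 3.46$ when $\abs{\range}\ge 6$. So the claimed final check is false. In addition, the H\"older bound on $\norm{B}^2\mathrm{Re}\langle A,B\rangle$ produces an $a^{1/2}b^{3/2}$ term that is not of your displayed form. As written, your argument proves the lemma only with a larger constant (for instance $C=16\abs{\range}$ works).

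\textbf{The missing idea.} Use the $y$-independence of $\norm{A_{z,y}}^2$ in every term containing $\norm{A}^2$, and sum over the last coordinate before applying Cauchy--Schwarz. Set $A_z=(\Pi_z\otimes\identity)\phi_0$ and $B_z=(\Pi_z\otimes\identity)\phi_1$. Then $\norm{A_{z,y}}^2=\norm{A_z}^2/\abs{\range}$, so
\[
\sum_y\norm{A_{z,y}}^2\norm{B_{z,y}}^2=\norm{A_z}^2\norm{B_z}^2/\abs{\range}.
\]
This factor $1/\abs{\range}$ is exactly what the joint Cauchy--Schwarz throws away. Add the pointwise bound $4\norm{A}\norm{B}^3\le 2\norm{A}^2\norm{B}^2+2\norm{B}^4$, which is the content of \cref{lem:vector-sum-norm-fourth-power-ineq}. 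This gives, for each $z$,
\[
\sum_{y}\norm{A_{z,y}+B_{z,y}}^4\le \frac{\norm{A_z}^4}{\abs{\range}}+\frac{8\norm{A_z}^2\norm{B_z}^2}{\abs{\range}}+3\norm{B_z}^4,
\]
as in \cref{lem:4-norm-single-copy-ineq}. Only now apply Cauchy--Schwarz over $z$, using $\sum_z\norm{A_z}^4\le C^\ell a^2/\abs{\range}^{m-1}$ and $\sum_z\norm{B_z}^4\le C^{\ell-1}b^2/\abs{\range}^{m-1}$; your Minkowski arguments in Steps 1 and 2 already prove both. The normalized bound becomes
\[
a^2+(8/\sqrt C)\,ab+(3\abs{\range}/C)\,b^2\le (a+b)^2,
\]
and this is precisely where $C\ge 16$ and $C\ge 3\abs{\range}$ are used.
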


We first show how these bounds follow for simple
functions (with bounds only slightly weaker than those for the corresponding strong selective direct product theorems) and then apply similar ideas to the multiplicative adversary method for functions
using parameter $\eta=1/\abs{\range}$.

\subsection{Strong list-decoding direct product theorems for specific functions}\label{sec:list-specific}

We begin with $\search_n$.   We recall the following from
\cref{sec:search}:
\begin{itemize}
   \item Oracle queries for $\search_n$ are made to individual bits of the input and $\ket{j}$ is shorthand for the input $\ket{e_j}$ which has exactly one 1 in position $j$.
   For this space of inputs $\inputspace$, $\search_n$ is a function.
    \item 
$\ket{\chi_0} = \frac{1}{\sqrt{n}}\sum_{j \in [n]} \ket j$ is a uniform superposition over the inputs to a single copy of $\search_n$.
\item $\Pi_{V^0}$ projects the input for a single copy of $\search_n$ onto $V^0=\spn\{\ket{\chi_0}\}$.   
The orthogonal complement of $V^0$ is denoted by $V^1$.
\item For any quantum state $\ket{\phi}$ for a single coordinate, $\norm{\pisucc^{\search_n}\Pi_{V^0}\ket{\phi}}^2=1/n$ where $\pisucc^{\search_n}$ is the projector onto states where $\search_n$ has been correctly computed. 
This is equivalent to the statement that
$\norm{\pisucc^{\search_n}\ket{\phi}}^2=1/n$ for $\ket{\phi}\in V^0$.
\item The state is initialized as $\ket{(0,0), 0, 0}\ket{\chi_0}^{\otimes m}$.  Note that
$\ket{\chi_0}^{\otimes m}$ is in $V^u$ for $u=(0,\ldots,0)\in \set{0,1}^m$.
\item
For any $Q\subseteq [m]$,  $\Pi_Q = \bigotimes_{i \in Q}(\identity - \Pi_{V^0})_i \bigotimes_{i \not \in Q} (\Pi_{V^0})_i$.
Observe that $\Pi_Q$ is the projection that leaves the work space alone and projects the input registers onto the space
$V^{u}$ where $u\in \set{0,1}^n$ is the characteristic
vector of the set $Q$.
\item $\Pi_\ell := \sum_{Q \subseteq [m], |Q| = \ell} \Pi_Q$ and $\Pi_{\le \ell}:= \sum_{j\le \ell} \Pi_j=\sum_{Q \subseteq [m], |Q| \le \ell} \Pi_Q$.
Observe that $\Pi_{\le \ell}$ is the projection that acts only on the input registers and projects them
onto the space $\bigoplus_{u \in \{0,1\}^{m}, \abs{u} \le  \ell} V^u$.
\item
The $\Pi_Q$ are mutually orthogonal, as are the $\Pi_\ell$ and 
 $\sum_{\ell = 0}^m \Pi_\ell = \identity$.
 \end{itemize}

 Define $\pisucc^{\search_n^{L_{k,m}}}$ to be analogue of
 $\pisucc^{\search_n}$ for the list-decoding direct product problem for
 $\search_n$.
 Let $\ket{\psi_T}$ be the state of any algorithm for this problem
 after $T$ steps.

 \begin{lemma}
 \label{lem:list-search-small}
    \begin{math}
        \norm{\pisucc^{\search_n^{L_{k,m}}} \Pi_{\le \ell} \ket{\psi_T}} \leq \big(\frac{\max(16,3n)^{\ell}}{n^k}\big)^{1/4}.
    \end{math}
\end{lemma}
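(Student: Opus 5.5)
We want to bound $\norm{\pisucc \Pi_{\le\ell}\ket{\psi_T}}^2$ where, for the list-decoding problem, $\pisucc=\sum_w \Pi_w\otimes \sum_{y'\in\listofout(w)}\Pi_{(\search_n^{\otimes m})^{-1}(y')}$. Here $\listofout(w)$ is the input-independent list of at most $n^{m-k}$ candidate output vectors. Since $\search_n$ is a function, the projectors $\Pi_{(\search_n^{\otimes m})^{-1}(y')}$ for distinct $y'$ are mutually orthogonal. The plan is to decompose by worktape basis states and apply Cauchy–Schwarz over the list to reach fourth powers. Then we invoke \cref{cor:list-dec-no-loss-key-lemma} with $\range=[n]$ and $V^0=\spn\{\ket{\chi_0}\}$. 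These satisfy its hypothesis, since $\norm{\Pi_{\search_n^{-1}(y)}\phi}^2=\norm{\phi}^2/n$ for every $\phi\in V^0$.

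\textbf{Step 1: decompose.} Write $\Pi_{\le\ell}\ket{\psi_T}=\sum_{i,j,p,w}\ket{i,j,p,w}\otimes\ket{\phi_{i,j,p,w}}$. Since $\Pi_{\le\ell}$ acts only on the input registers, each $\ket{\phi_{i,j,p,w}}$ lies in $\bigoplus_{|u|\le\ell}V^u$, and $\sum\norm{\phi_{i,j,p,w}}^2=\norm{\Pi_{\le\ell}\ket{\psi_T}}^2\le 1$. Orthogonality across distinct worktape basis states gives
\[
\norm{\pisucc \Pi_{\le\ell}\ket{\psi_T}}^2=\sum_{i,j,p,w}\sum_{y'\in\listofout(w)}\norm{\Pi_{(\search_n^{\otimes m})^{-1}(y')}\phi_{i,j,p,w}}^2 .
\]

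\textbf{Step 2: Cauchy–Schwarz over the list.} For each fixed $(i,j,p,w)$, the inner sum has at most $n^{m-k}$ terms. Cauchy–Schwarz therefore bounds it by $n^{(m-k)/2}\bigl(\sum_{y'\in[n]^m}\norm{\Pi_{(\search_n^{\otimes m})^{-1}(y')}\phi}^4\bigr)^{1/2}$, where we have extended the sum to all $y'$. By \cref{cor:list-dec-no-loss-key-lemma} this is at most $n^{(m-k)/2}\bigl(\max(16,3n)^\ell/n^m\bigr)^{1/2}\norm{\phi}^2=\bigl(\max(16,3n)^\ell/n^k\bigr)^{1/2}\norm{\phi}^2$.

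\textbf{Step 3: combine.} Summing over $(i,j,p,w)$ and using $\sum\norm{\phi_{i,j,p,w}}^2\le 1$ gives $\norm{\pisucc\Pi_{\le\ell}\ket{\psi_T}}^2\le(\max(16,3n)^\ell/n^k)^{1/2}$. Taking square roots yields the claimed fourth-root bound.

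The only genuinely nontrivial ingredient is the fourth-moment \cref{cor:list-dec-no-loss-key-lemma}, which we are allowed to assume. Beyond that, the points needing care are both routine. First, the list projector must be a sum of orthogonal projectors; this is where we need $\search_n$ to be a function on its promise domain, so that distinct $y'$ give disjoint preimages. Second, applying Cauchy–Schwarz per worktape state, rather than to the whole state, is what lets the $\norm{\phi}^2$ factors sum linearly to at most $1$.
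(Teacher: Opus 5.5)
Your proposal is correct and follows the paper's proof essentially step for step. Both decompose $\Pi_{\le\ell}\ket{\psi_T}$ over worktape basis states and write $\Pi_{\listofout(w)}$ as an orthogonal sum over $y'\in\listofout(w)$. Both then apply Cauchy--Schwarz over the list of size at most $n^{m-k}$, extend to all $y'\in[n]^m$, and invoke \cref{cor:list-dec-no-loss-key-lemma}. The only cosmetic difference is that you keep the components $\phi_{i,j,p,w}$ unnormalized and sum their squared norms, whereas the paper bounds by a maximum over normalized components.
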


\begin{proof}
    For any $w$ we write $\listofout(w)$ for the list of $n^{m-k}$ possible outputs in $[n]^m$ produced by the list-decoding algorithm
    for $\search_n$. 
    For $y\in [n]^m$, we write $\Pi_y$ as a shorthand for $\Pi_{(\search^{\otimes m}_n)^{-1}(y)}$, the projection
    onto those inputs for which $y$ is the correct output of
    $\search^{\otimes m}_n$.  (Given our compact representation of the space of inputs we consider, this, paradoxically, is just the input state $\ket{y}$.) 
    For a list $\listofout$, we write $\Pi_{\listofout}$
    for the projection onto those inputs for which some $y\in \listofout$
    is a correct output of $\search^{\otimes m}_n$.
    In particular, the $\Pi_y$ are orthogonal and $\Pi_\listofout=\sum_{y\in \listofout} \Pi_y$.
    Then $\pisucc^{\search_n^{L_{k,m}}}$ projects the states with
    working registers $w$ using $\Pi_{\listofout(w)}$ and
    we have 
    \begin{align*}
        \norm{\pisucc^{\search_n^{L_{k,m}}} \Pi_{\le \ell} \ket \psi_T}^2
        &= \bignorm{\sum_{i,j,p,w} \alpha_{i,j,p,w}  \pisucc^{\search_n^{L_{k,m}}}  \ket{i,j,p,w}\ket{\phi_{i,j,p,w}}}^2\\
        \noalign{\text{where $|\alpha_{i,j,p,w}|^2=1$ and $\phi_{i,j,p,w}\in \bigoplus_{u \in \{0,1\}^{m}, \abs{u} \le  \ell} V^u$}}
        &\le \sum_{i,j,p,w} \bignorm{\alpha_{i,j,p,w} \ket{i,j,p,w} \Pi_{\listofout(w)}  \ket{\phi_{i,j,p,w}}}^2\\
        \noalign{\textrm{by definition of 
        $\pisucc^{\search_n^{L_{k,m}}}$ and the fact that it preserves orthogonality for different $i,j,p,w$}}
        &\le \max_{i,j,p,w} \bignorm{\Pi_{\listofout(w)}  \ket{\phi_{i,j,p,w}}}^2\\
        &= \max_{i,j,p,w} \bignorm{\sum_{y'\in \listofout(w)} \Pi_{y'}  \ket{\phi_{i,j,p,w}}}^2\\
        &\le \max_{i,j,p,w}\sum_{y'\in \listofout(w)}  \norm{\Pi_{y'}  \ket{\phi_{i,j,p,w}}}^2\\
        \noalign{\textrm{because the $\Pi_{y'}$ are orthogonal since $\search_n$ is a function on $\inputspace$}}
        &\le \max_{i,j,p,w} \abs{\listofout(w)}^{1/2} \big(\sum_{y'\in \range^m}\norm{\Pi_{y'} \ket{\phi_{i,j,p,w}}}^4\big)^{1/2}\\
        \noalign{\textrm{by Cauchy-Schwarz}}
        &\le n^{(m-k)/2}\cdot \bigg(\frac{\max(16,3n)^{\ell}}{n^{m}}\bigg)^{1/2}\\
        \noalign{\textrm{by \cref{cor:list-dec-no-loss-key-lemma} and our bound on $\abs{\listofout(w)}$}}
        &= \bigg(\frac{\max(16,3n)^{\ell}}{n^k}\bigg)^{1/2}.\qedhere
    \end{align*}  
\end{proof}

This allows us to obtain the following quantum strong list-decoding direct
product theorem for $\search_n$.

\begin{theorem}
\label{thm:list-search}
    Any quantum algorithm for $\search_n^{L_{k,m}}$ with
    at most $k\sqrt{n}/192$ queries has success
    probability at most $9\cdot 2^{-k/4}$.
\end{theorem}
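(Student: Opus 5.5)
The plan is to follow the proof of \cref{thm:search-sqdpt} exactly, replacing \cref{lemma:low-succ-small-ell} by \cref{lem:list-search-small}. The progress bound for $\search_n$ (\cref{lem:delta-recurrence} and \cref{lem:delta-bound}) is a statement only about the input registers and the oracle. It does not depend on what is output, so it carries over unchanged to the list-decoding setting.

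\textbf{Step 1 (splitting the success amplitude).} Fix $\ell=\lfloor k/4\rfloor$, i.e.\ $\beta=1/4$. As in \eqref{eq:success-beta-bound}, write $\identity=\Pi_{\le \ell}+\sum_{\ell'>\ell}\Pi_{\ell'}$ and apply the triangle inequality:
\[
\norm{\pisucc^{\search_n^{L_{k,m}}}\ket{\psi_T}}\le \norm{\pisucc^{\search_n^{L_{k,m}}}\Pi_{\le \ell}\ket{\psi_T}}+\sum_{\ell'>\ell}\Delta_{T,\ell'} .
\]

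\textbf{Step 2 (the low-$\ell$ term).} By \cref{lem:list-search-small}, the first term is at most $\bigl(\max(16,3n)^{\ell}/n^{k}\bigr)^{1/4}$.
\begin{itemize}
\item If $3n\ge 16$, this is at most $3^{k/16}n^{-3k/16}$. It is $\le 2^{-k/4}$ as soon as $n^3\ge 48$, which holds here since $3n\ge 16$ forces $n\ge 6$.
\item If $3n<16$, the bound is $2^{k/4}n^{-k/4}$. This is $\le 2^{-k/4}$ once $n\ge 4$.
\end{itemize}
So the first term is at most $2^{-k/4}$ for all $n\ge 4$. The cases $n\le 3$ need separate care: either absorb them into the constant $9$ or note that then $k\sqrt n/192<k$, so very few queries are available and a direct count should suffice. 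I would handle this briefly rather than belabor it.

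\textbf{Step 3 (the high-$\ell$ tail).} Use \cref{lem:delta-bound} together with $\binom{T}{\ell'}\le (eT/\ell')^{\ell'}$:
\[
\Delta_{T,\ell'}\le \Bigl(\frac{4eT}{\ell'\sqrt n}\Bigr)^{\ell'} .
\]
With $T\le k\sqrt n/192$ and $\ell'>k/4$, this gives $\Delta_{T,\ell'}\le (16e/192)^{\ell'}<2^{-2\ell'}$. Summing the geometric series over $\ell'>k/4$ gives at most $2\cdot 2^{-k/4}$ (in fact at most $2\cdot 2^{-k/2}$).

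\textbf{Step 4 (conclusion).} The total amplitude is at most $3\cdot 2^{-k/4}$, so the success probability is at most $9\cdot 2^{-k/2}\le 9\cdot 2^{-k/4}$.

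The main obstacle is Step 2. It depends on \cref{lem:list-search-small}, i.e.\ on the fourth-moment bound \cref{cor:list-dec-no-loss-key-lemma}, which we are assuming. What remains is checking that the constant $\max(16,3n)$ is beaten by the $n^{-k/4}$ factor for the chosen $\beta$, including the boundary case of small $n$. A smaller $\beta$ would make that check easier, but it must be balanced against the constant $192$ so that the tail in Step 3 still decays like $2^{-\Omega(k)}$.
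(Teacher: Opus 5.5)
Your argument is the paper's argument with a different split point. The paper cuts at $\Pi_{\le \ell_0-1}$ with $\ell_0=\lceil k/8\rceil$, whereas you cut at $\lfloor k/4\rfloor$. For $n\ge 4$ your version is correct, and it even yields $9\cdot 2^{-k/2}$.

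However, the small-$n$ case you defer is a genuine hole. For $n=2$, \cref{lem:list-search-small} at $\ell=\lfloor k/4\rfloor$ gives $(16^{\ell}/2^{k})^{1/4}=2^{\lfloor k/4\rfloor-k/4}\ge 1/2$. This does not decay in $k$ at all, so it cannot be ``absorbed into the constant $9$''. For $n=3$ the first term is only $(2/3)^{k/4}$, which is not $\le 2^{-k/4}$ as your Step 2 asserts. It is $\le 2^{-k/8}$, though, which would still suffice after squaring. For $n=1$ the statement is simply false, because the unique admissible list is always correct; so $n\ge 2$ is a standing assumption, as in the paper's ``since $n\ge 2$''.

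The clean fix is the paper's smaller threshold. Since $\max(16,3n)\le n^4$ for every $n\ge 2$, cutting at $\ell_0-1$ with $\ell_0=\lceil k/8\rceil$ makes the first term at most $n^{\ell_0-1-k/4}\le 2^{-k/8}$. For $\ell'\ge \ell_0$ the tail ratio is $4eT/(\ell'\sqrt n)\le 32e/192=e/6<1/2$, so the tail is at most $2\cdot 2^{-k/8}$. Squaring $3\cdot 2^{-k/8}$ gives exactly $9\cdot 2^{-k/4}$, uniformly in $n\ge 2$. That uniformity is what the weaker exponent buys relative to your $2^{-k/2}$ for larger $n$.

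Alternatively, your ``direct count'' idea can be made rigorous. For $n\le 3$ you have $T<k/100$, and \cref{lem:delta-bound} gives $\Delta_{T,\ell'}=0$ for $\ell'>T$. Hence $\ket{\psi_T}=\Pi_{\le T}\ket{\psi_T}$. Applying \cref{lem:list-search-small} with $\ell=T$ then bounds the success probability by $(16^{T}/n^{k})^{1/2}\le 2^{2T-k/2}\le 2^{-k/4}$.
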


\begin{proof}
    Let $\ell_0= \lceil k/8\rceil$ and suppose that
    the quantum algorithm for $\search_n^{L_{k,m}}$ uses at most $T = \sqrt{n}\ell_0/24$ queries.
    Recall that $\Delta_{t, \ell} := \norm{\Pi_\ell \ket{\psi_t}}$.
 Then
    \begin{align*}
        \norm{\pisucc^{\search_n^{L_{k,m}}} \ket{\psi_T}} &\leq 
        \bignorm{\pisucc^{\search_n^{L_{k,m}}}\Pi_{\le \ell_0-1} \ket{\psi_T}} +\sum_{\ell \ge \ell_0} \bignorm{\Pi_\ell \ket{\psi_T}} \\
        &\leq \bigg(\frac{\max(16,3n)^{\ell_0-1}}{n^k}\bigg)^{1/4} +
        \sum_{\ell \ge \ell_0} \Delta_{T, \ell}  \qquad\textrm{by \cref{lem:list-search-small}}\\
        &\leq \bigg(\frac{\max(16,3n)^{\ell_0-1}}{n^k}\bigg)^{1/4} +\sum_{\ell \ge \ell_0} \binom{T}{\ell} \bigg(\frac{4}{\sqrt{n}}\bigg)^\ell  \qquad\textrm{by \cref{lem:delta-bound}}\\
        &\leq \bigg(\frac{\max(16,3n)^{\ell_0-1}}{n^k}\bigg)^{1/4}+\sum_{\ell \ge \ell_0} 
        \bigg(\frac{4e T}{\ell\sqrt{n}}\bigg)^\ell \\
        &\leq n^{\ell_0-1-k/4}+\sum_{\ell \ge \ell_0} 
        \bigg(\frac{4e T}{\ell\sqrt{n}}\bigg)^\ell 
        \qquad\textrm{since $n\ge 2$}\\
        &\leq 2^{-k/8}+ 2 \cdot 2^{-\ell_0}  \le 3 \cdot 2^{-k/8}\qquad \text{by our assumptions on $\ell_0$ and $T$}.
    \end{align*}
    Therefore the probability of the algorithm's success $ \norm{\pisucc^{\search_n^{L_{k,m}}} \ket{\psi_T}}^2\le
    9\cdot 2^{-k/4}$.
\end{proof}

When we apply the same ideas as above to $\OR_n$ we
obtain the following analogue of \cref{lem:list-search-small} which replaces $n$ by $2$:

\begin{lemma}
    \label{lemma:list-or-small}
    \begin{math}
        \norm{\pisucc^{\OR_n^{L_{k,m}}} \Pi_{\le\ell}\ket{\psi_T}} \leq 2^{\ell-k/4}.
    \end{math}
\end{lemma}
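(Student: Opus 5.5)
The plan is to mirror the proof of \cref{lem:list-search-small}, replacing the $\search_n$ input space by the $\OR_n$ purification used in \cref{sec:or}. Recall that there the input register of each copy lives in $\C^{2n}$, initialized to $\ket{\chi_0}=\frac{1}{\sqrt{2n}}\sum_{j\in[2n]}\ket{j}$. The basis states $\ket{n+1},\ldots,\ket{2n}$ all encode the all-zeros input, so $\OR_n$ is a function on this (extended) basis with $\range=\{0,1\}$, and $V^0=\spn\{\ket{\chi_0}\}$. The first thing to check is the hypothesis of \cref{cor:list-dec-no-loss-key-lemma}: for every $\phi\in V^0$ we need $\norm{\Pi_{f^{-1}(y)}\phi}^2=\norm{\phi}^2/\abs{\range}$. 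Here $\Pi_{\OR^{-1}(1)}$ projects onto $\spn\{\ket{1},\ldots,\ket{n}\}$, and $\ket{\chi_0}$ has squared weight exactly $1/2$ there and $1/2$ on its complement. So the condition holds with $\abs{\range}=2$.

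Next I would write $\Pi_{\le\ell}\ket{\psi_T}=\sum_{i,j,p,w}\alpha_{i,j,p,w}\ket{i,j,p,w}\ket{\phi_{i,j,p,w}}$ with $\sum|\alpha_{i,j,p,w}|^2\le 1$ and each $\phi_{i,j,p,w}$ a unit vector in $\bigoplus_{|u|\le\ell}V^u$. The success projector preserves orthogonality across distinct $(i,j,p,w)$, so the squared success norm is at most $\max_w\norm{\Pi_{\listofout(w)}\phi}^2$. The projectors $\Pi_{y'}$ for $y'\in\{0,1\}^m$ are mutually orthogonal because $\OR_n$ is a function on the extended basis. Hence
\[
\norm{\Pi_{\listofout(w)}\phi}^2=\sum_{y'\in\listofout(w)}\norm{\Pi_{y'}\phi}^2\le \abs{\listofout(w)}^{1/2}\Big(\sum_{y'\in\{0,1\}^m}\norm{\Pi_{y'}\phi}^4\Big)^{1/2},
\]
using Cauchy--Schwarz for the inequality.

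Applying \cref{cor:list-dec-no-loss-key-lemma} with $\abs{\range}=2$ gives the constant $\max(16,6)^\ell=16^\ell$. Together with $\abs{\listofout(w)}\le 2^{m-k}$, the squared success norm is at most $2^{(m-k)/2}\cdot(16^\ell/2^m)^{1/2}=2^{2\ell-k/2}$. Taking square roots yields $\norm{\pisucc^{\OR_n^{L_{k,m}}}\Pi_{\le\ell}\ket{\psi_T}}\le 2^{\ell-k/4}$, as claimed.

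The main point requiring care is the validity of applying the key lemma to the purified, enlarged input space rather than to $\{0,1\}^n$. Two facts make this legitimate. First, the lemma only uses the orthogonal decomposition $V^0\oplus V^1$ of each copy together with the exact $1/\abs{\range}$ splitting on $V^0$. Second, the output projectors are diagonal in the computational basis of $\C^{2n}$, so the definition of $\Pi_{y'}$ depends only on which basis states map to which output. No other step changes from the $\search_n$ argument.
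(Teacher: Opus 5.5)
Your proposal is correct and is essentially the paper's proof. The paper reruns the argument of \cref{lem:list-search-small} with $\abs{\range}=2$ in place of $n$, exactly as you do: the $\C^{2n}$ purification from \cref{sec:or} (on which $\ket{\chi_0}$ splits its weight evenly between the two outputs), orthogonality of the $\Pi_{y'}$, Cauchy--Schwarz against $\abs{\listofout(w)}\le 2^{m-k}$, and \cref{cor:list-dec-no-loss-key-lemma} with constant $\max(16,6)^\ell=16^\ell$. Your explicit check that the key lemma's hypothesis holds on the enlarged input space is a detail the paper leaves implicit.
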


This weaker bound is essentially the only version of the bound
in \cref{lem:list-search-small} used in the proof of \cref{thm:list-search}.
The only other difference for $\OR_n$ is that in the application of  
\cref{lem:delta-bound} to $\OR_n$ there $2n$ coordinates rather than $n$ coordinates, so we can
set $T=\ell_0 \sqrt{n}/16$ and obtain that
$4eT/(\ell_0\sqrt{2n})\le 1/2$.
This yields the following strong list-decoding direct product theorem:

\begin{theorem}
   Any quantum algorithm for $\OR_n^{L_{k,m}}$ with
    at most $k\sqrt{n}/144$ queries has success
    probability at most $9\cdot 2^{-k/4}$.
\end{theorem}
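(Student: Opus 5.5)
The proof follows the template of \cref{thm:list-search} essentially verbatim. I will split the final state as
\[
\norm{\pisucc^{\OR_n^{L_{k,m}}}\ket{\psi_T}} \le \norm{\pisucc^{\OR_n^{L_{k,m}}}\Pi_{\le \ell_0-1}\ket{\psi_T}} + \sum_{\ell\ge \ell_0}\Delta_{T,\ell},
\]
with $\ell_0=\lceil k/8\rceil$. Here the $\OR_n$ input register uses the purification $\ket{\chi_0}=\frac{1}{\sqrt{2n}}\sum_{j\in[2n]}\ket{j}$ from \cref{sec:or}. The subspaces are $V^0=\spn\{\ket{\chi_0}\}$, with $V^1$ its complement, and $\Pi_Q,\Pi_\ell,\Pi_{\le\ell}$ are as before.

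\textbf{First term.} I will apply \cref{lemma:list-or-small} with $\ell=\ell_0-1$. If one wants to rederive it, the proof of \cref{lem:list-search-small} goes through unchanged. $\OR_n$ is a function on the purified input space, so the projectors $\Pi_{y'}$ for $y'\in\{0,1\}^m$ are orthogonal. Each $\phi\in V^0$ satisfies $\norm{\Pi_{\OR^{-1}(b)}\phi}^2=\norm{\phi}^2/2$, which is what \cref{cor:list-dec-no-loss-key-lemma} needs with $\abs{\range}=2$. Cauchy--Schwarz against $\abs{\listofout(w)}\le 2^{m-k}$ then gives squared norm at most $(16^{\ell}/2^{k})^{1/2}$. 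Hence the norm is at most $2^{\ell-k/4}$, and at $\ell=\ell_0-1$ this is at most $2^{k/8-k/4}=2^{-k/8}$.

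\textbf{Second term.} I will use \cref{cor:delta-bound-OR}, which gives $\Delta_{T,\ell}\le\binom{T}{\ell}(4/\sqrt{2n})^\ell\le (4eT/(\ell\sqrt{2n}))^\ell$. This requires $\ell\le T\le \ell\sqrt{2n}/4$; when $T<\ell$ we have $\Delta_{T,\ell}=0$, and for larger $\ell$ the hypothesis $T\le \ell\sqrt{2n}/4$ only becomes easier. Set $T=\ell_0\sqrt n/16$, which is at least $k\sqrt n/144$ for $k\ge 1$, since $\ell_0/16\ge k/128$. For $\ell\ge\ell_0$,
\[
\frac{4eT}{\ell\sqrt{2n}}\le \frac{4e}{16\sqrt2}<\frac12,
\]
so the tail is at most $\sum_{\ell\ge\ell_0}2^{-\ell}\le 2\cdot 2^{-\ell_0}\le 2\cdot 2^{-k/8}$.

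\textbf{Combining.} Adding the two terms gives $\norm{\pisucc\ket{\psi_T}}\le 3\cdot 2^{-k/8}$. Squaring yields success probability at most $9\cdot 2^{-k/4}$. Fewer queries cannot help, since an algorithm can always be padded with dummy queries.

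\textbf{Main obstacle.} The only nonroutine point is justifying the first term. One must check that the purification on $2n$ symbols keeps $\OR_n$ a function, with the output determined by whether $j\le n$. One must also check that the hypotheses of \cref{cor:list-dec-no-loss-key-lemma} hold for $V^0=\spn\{\ket{\chi_0}\}$ with $\abs{\range}=2$, so that the constant becomes $\max(16,6)=16$. Everything else is arithmetic identical to the $\search_n$ case.
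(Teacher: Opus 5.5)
Your proposal is correct and follows the paper's own argument. Like the paper, you rerun the proof of \cref{thm:list-search}, using \cref{lemma:list-or-small} (the $\abs{\range}=2$ instance of the 4-norm/Cauchy--Schwarz bound, with constant $\max(16,6)=16$) for the $\Pi_{\le \ell_0-1}$ term. For the tail you use \cref{cor:delta-bound-OR} with $T=\ell_0\sqrt{n}/16$, so that $4eT/(\ell\sqrt{2n})\le e/(4\sqrt2)<1/2$; this gives $3\cdot 2^{-k/8}$ in norm and hence $9\cdot 2^{-k/4}$ in probability.
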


By similar transformations of our arguments for the selective direct product problem, the analogous strong list-decoding direct product theorem holds for $\Parity_n$, with the only difference being the running time bound.

\begin{theorem}
   There are constants $c', c''>0$ such that any quantum algorithm for $\Parity_n^{L_{k,m}}$ with
    at most $c'kn$ queries has success
    probability at most $2^{-c'' k}$.
\end{theorem}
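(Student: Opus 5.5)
The plan is to reuse the proof of \cref{thm:selectiveparity}, replacing the selective union bound with the list-decoding second-moment bound of \cref{cor:list-dec-no-loss-key-lemma}, exactly as \cref{lem:list-search-small} was derived for $\search_n$.

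Take the hard input to be $\ket{+}^{\otimes([m]\times[n])}$. After $T$ queries the state has the form $\sum \alpha_{i,j,p,w,S}\ket{i,j,p,w}\ket{+}_{\overline S}\ket{-}_S$ with $|S|\le T$. Define single-copy spaces $V^0=\spn\{\ket{-}_{S_1}\ket{+}_{\overline{S_1}} : |S_1|<n/2\}$ and $V^1$ its orthogonal complement (sets with $|S_1|\ge n/2$). Then \cref{prop:superposition-parity-dp} with $\ell=1$ shows that every $\phi\in V^0$ satisfies $\norm{\Pi_{\Parity_n^{-1}(b)}\phi}^2=\norm{\phi}^2/2$, which is the property the key lemma needs with $\abs{\range}=2$. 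For the $m$-copy state, each basis vector $\ket{-}_S\ket{+}_{\overline S}$ lies in $V^u$, where $u$ is the characteristic vector of $Q(S)=\{i:|S_i|\ge n/2\}$. Since $|Q(S)|\le 2T/n$, the entire input part of every workspace branch lies in $\bigoplus_{|u|\le \ell}V^u$ with $\ell=\lfloor 2T/n\rfloor$. This is simpler than the $\search$ case: there is no tail term $\sum_{\ell>\ell_0}\Delta_{T,\ell}$, because the locality of the Hadamard basis bounds $|u|$ deterministically.

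Next, follow \cref{lem:list-search-small} verbatim. The projector $\pisucc$ preserves orthogonality across distinct $(i,j,p,w)$, so the success probability is at most $\max_w\norm{\Pi_{\listofout(w)}\phi_w}^2$. Because $\Parity_n$ is a function, the $\Pi_{y'}$ are orthogonal, which gives $\norm{\Pi_{\listofout(w)}\phi_w}^2\le\sum_{y'\in\listofout(w)}\norm{\Pi_{y'}\phi_w}^2$. Applying Cauchy--Schwarz with $|\listofout(w)|\le 2^{m-k}$ and then \cref{cor:list-dec-no-loss-key-lemma} yields
\[
\norm{\pisucc\ket{\psi_T}}^2\le 2^{(m-k)/2}\Bigl(\frac{16^{\ell}}{2^m}\Bigr)^{1/2}=2^{2\ell-k/2}.
\]
With $T\le c'kn$ we have $\ell\le 2c'k$. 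Choosing $c'=1/32$ gives $\ell\le k/16$, so the success probability is at most $2^{-3k/8}$, and we may take $c''=3/8$.

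The main obstacle is checking that the hypotheses of \cref{cor:list-dec-no-loss-key-lemma} really hold in this setting. The lemma assumes a decomposition $V^0\oplus V^1$ with $V^0$ perfectly unbiased. It must also be confirmed that its inductive (Bonami-style) proof works for $V^1$ vectors given by arbitrary states with $|S_1|\ge n/2$, with no further structure. If the lemma's proof turns out to depend on a specific property of the $\search$ spaces, I would re-derive the fourth-moment bound directly from the Pauli-$Z$ expansion used in \cref{prop:superposition-parity-dp}. In that expansion, the cross terms vanish unless the $Z$-pattern is supported on coordinates in $Q(S)\cup Q(S')$, and this gives the $16^\ell/2^m$ count.
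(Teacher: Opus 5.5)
Your proposal is correct and is essentially the transformation the paper has in mind. You take $V^0=\spn\{\ket{-}_{S_1}\ket{+}_{\overline{S_1}}:|S_1|<n/2\}$, observe that every branch lies in $\bigoplus_{|u|\le\lfloor 2T/n\rfloor}V^u$ with no tail term, and then run the Cauchy--Schwarz plus \cref{cor:list-dec-no-loss-key-lemma} argument of \cref{lem:list-search-small} with $\abs{\range}=2$ to get $2^{2\ell-k/2}$. Your worry about the key lemma is unfounded: its proof in \cref{sec:list-dec-lemma} uses only that $V^1$ is the orthogonal complement of $V^0$, that the $\Pi_{f^{-1}(y)}$ are orthogonal and sum to the identity, and that $\Pi_{V^0}\Pi_{f^{-1}(y)}\Pi_{V^0}=\Pi_{V^0}/\abs{\range}$, which you have already verified via \cref{prop:superposition-parity-dp}.
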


\subsection{List-decoding and the multiplicative adversary method}\label{sec:list-dec-mult}

We prove a strong list-decoding direct product theorem for the only cases of the multiplicative adversary lower for which one could expect to prove such a result, namely for function computation where there is an asymptotically tight multiplicative adversary lower bound with $\eta = 1/\abs{\range}$.
As we showed in \cref{sec:neg-to-mult-bool}, this includes all Boolean-valued functions.

\begin{theorem}\label{thm:generic-list-decoding}
    Let $f: \inputspace \to \range$ be a function satisfying \cref{prop:mult-adv} for distribution $\dist$ over $\inputspace \subseteq \domain^n$ with adversary matrix $\multAdvMat$, $\lambda \in (1, \norm{\multAdvMat}]$, $\eta = 1/\abs{\range}$, and with error probability
    $ 1/2- 1/\abs{\range} -\sqrt{2/\abs{\range}}\le \varepsilon<1-1/\abs{\range}$.
    There is an $k_0=k_0(\lambda) \geq 1$ and a $\kappa< 1$ such that for any integers $k \geq k_0$ and $m$:
    \begin{displaymath}
        Q_{1-\kappa^{k}}^{\dist^{\otimes m}}(\easierlistdp{f}{k}{m}) \geq (k/8) \,\madv^\dist_{\varepsilon,\lambda,1/\abs{\range}}(f) .
    \end{displaymath}
\end{theorem}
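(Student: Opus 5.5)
We follow the template of the proof of \cref{thm:selective-from-multiplicative-adversary}, replacing the union bound in Part (ii) by the fourth-moment bound of \cref{cor:list-dec-no-loss-key-lemma}. Take the optimal witness $\Gamma,\lambda,\eta=1/\abs{\range}$ for $\madv^\dist_{\varepsilon,\lambda,1/\abs{\range}}(f)$ and set $\Gamma'=\Gamma^{\otimes m}$. Exactly as in Part (i) of that proof, $\Gamma'$ has smallest eigenvalue $1$, fixes $\ket{\dist}^{\otimes m}$, and has the same per-query progress constant $c$ as $\Gamma$, since each query touches a single copy.

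Set $\lambda'=\lambda^{\ell}$ with $\ell=\lceil k/8\rceil$. As in \eqref{eqn:selective-bad-space-subset-of-sum-of-eigenspaces}, but now with $K=[m]$, the range of $\Pi_{\Gamma',\le\lambda'}$ lies in $\bigoplus_{\abs{u}<\ell}V^u$ as in \eqref{eq:V'}, where $V^0$ is the span of eigenspaces of $\Gamma$ with eigenvalue $\le\lambda$ and $V^1$ is its complement. The condition $\norm{\Pi_{f^{-1}(y)}\Pi_{V^0}}^2\le 1/\abs{\range}$ for all $y$, summed over $y\in\range$, forces $\norm{\Pi_{f^{-1}(y)}\phi}^2=\norm{\phi}^2/\abs{\range}$ for every $\phi\in V^0$. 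This is exactly the hypothesis under which \cref{cor:list-dec-no-loss-key-lemma} applies.

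For any admissible list $\listofout$ with $\abs{\listofout}\le\abs{\range}^{m-k}$ and any unit $\phi$ in that range, the computation in \cref{lem:list-search-small} (orthogonality of the $\Pi_{y'}$ since $f$ is a function, Cauchy--Schwarz, then \cref{cor:list-dec-no-loss-key-lemma}) gives
\[
\norm{\Pi_{\listofout}\phi}^2\le \abs{\range}^{(m-k)/2}\Bigl(\max(16,3\abs{\range})^{\ell}/\abs{\range}^{m}\Bigr)^{1/2}=\Bigl(\max(16,3\abs{\range})^{\ell}/\abs{\range}^{k}\Bigr)^{1/2}.
\]
Hence $\norm{\Pi_{\listofout}\Pi_{\Gamma',\le\lambda'}}^2\le\eta'$ for this $\eta'$. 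With $\ell\approx k/8$, this is at most $\abs{\range}^{-ck}$ for some absolute $c>0$ when $\abs{\range}\ge 3$, and at most $(16^{1/8}/2)^{k/2}$ when $\abs{\range}=2$. In both cases $\eta'$ decays exponentially in $k$. The outputs of $\easierlistdp{f}{k}{m}$ are exactly such lists, so \cref{prop:mult-adv-query-bound} applies to $(\Gamma',\lambda',\eta')$, with the inverse image of an output $\listofout$ being the set of inputs whose $f^{\otimes m}$ value lies in $\listofout$.

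Part (iii) is then copied from the selective proof with $\delta=1/8$. Let $\zeta=(\sqrt{1-\varepsilon}-\sqrt{\eta})^2\le 1/2$, which uses the lower bound on $\varepsilon$, and let $\alpha=((1+(\lambda-1)\zeta)/\lambda)^{1/8}<1$. Choose $k_0(\lambda)$ so that $\kappa(z)=(\alpha^{z/2}+\eta'(z)^{1/2})^{2/z}<1$ for $z\ge k_0$, and set $\kappa=\kappa(k_0)$. This gives
\[
1+(\lambda'-1)(\kappa^{k/2}-\sqrt{\eta'})^2\ge (1+(\lambda-1)\zeta)^{k/8},
\]
so the bound of $(k/8)\madv$ follows from taking $\log_c$.

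The main obstacle is the $\eta'$ step. Its success depends on \cref{cor:list-dec-no-loss-key-lemma} giving a bound with no dependence on $m$. It also needs the $\abs{\range}=2$ case to still decay, which requires $16^{\ell}<2^{k}$, i.e.\ $\ell<k/4$. This explains the factor $k/8$. One must also check that $\eta'$ decreases monotonically in $k$, so that $\kappa(z)$ is decreasing and $k_0$ depends only on $\lambda$ and not on $\abs{\range}$ beyond these absolute bounds.
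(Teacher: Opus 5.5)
Your proposal is correct and follows the paper's proof essentially step for step. It uses the same $\Gamma'=\Gamma^{\otimes m}$ with unchanged progress constant and $\lambda'=\lambda^{\lceil k/8\rceil}$, bounds $\eta'$ by the fourth-moment lemma plus Cauchy--Schwarz over the list, and reuses Part (iii) with exponent $1/8$.

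The monotonicity issue you flag is real for $\eta'$ as you wrote it. The factor $\max(16,3\abs{\range})^{\lceil k/8\rceil}$ jumps at multiples of $8$, so your $\kappa=\kappa(k_0)$ can fail for some larger $k$, and your $\abs{\range}=2$ bound holds only up to a constant. The fix is to use the containment in $\abs{u}\le\ell-1$ you already derived, together with $\max(16,3\abs{\range})\le\abs{\range}^4$. This gives the paper's $\eta'=\abs{\range}^{-k/4}$, so $\kappa(z)=(\alpha^{z/2}+\abs{\range}^{-z/8})^{2/z}$ is the square of an $\ell_z$-norm and hence decreasing in $z$.
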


\begin{proof}[Proof of \cref{thm:generic-list-decoding}]
The basic structure of this proof follows the proof structure for \cref{thm:selective-from-multiplicative-adversary} in that we derive the lower bound based on the
adversary matrix $\Gamma$, eigenvalue bound $\lambda$ and
the probability threshold $\eta$ for the function $f$, which in this case must satisfy $\eta=1/|\range|$.

We set  $\Gamma' = \Gamma^{\otimes m}$ as in the proof
of \cref{thm:selective-from-multiplicative-adversary}.
Part (i) of that proof immediately implies that the 
per-query progress bound $c$ for $\easierlistdp{f}{k}{m}$ based on $\Gamma'$ is exactly
the same as the per-query progress bound for the multiplicative adversary bound for $f$.

Part (ii) of our bound will follow the same general argument as that in the proof of \cref{thm:selective-from-multiplicative-adversary} setting $\lambda'=\lambda^\ell$ for $\ell = \ceil{k/8}$.

Using the same definitions of $V^0$, $V^1$, $\Pi_{V^0}$, $\Pi_{V^1}$, and $V'$ as in the proof of \cref{thm:selective-from-multiplicative-adversary}, by \eqref{eqn:selective-eta-def-single-copy} we have
 \begin{equation}\label{eqn:list-eta-single-copy}
        \norm{\Pi_{f^{-1}(y)} \Pi_{V^0}}^2 \leq \eta=1/|\range|
    \end{equation}
and by \eqref{eqn:selective-bad-space-subset-of-sum-of-eigenspaces} with $K=[m]$, we have
   \begin{equation}\label{eqn:list-space-proj-ineq}
        V' \subseteq \bigoplus_{\substack{u \in \{0,1\}^m\\ \abs{u} \le \ell-1}}\, \bigotimes_{i \in [m]} V^{u_i} := \listbadspace
    \end{equation}
By definition, every output of $\easierlistdp{f}{k}{m}$ is of the form $\listofout\subset \range^m$ with
$|\listofout|\le |\range|^{m-k}$.
This output is correct for
$x\in \inputspace^m$ iff there is some $y'\in \listofout$
such that $y'\in \listofout$ such that $y'\in (f^{\otimes m})^{-1}(x)$,
which is equivalent to
$y'_i\in f^{-1}(x_i)$ for all $i\in [m]$.
Since $f$ is a function, so is $f^{\otimes m}$ and hence the
$\Pi_{(f^{\otimes m})^{-1})(y')}$ for different $y'$ are
orthogonal.
Therefore 
\begin{equation}
\label{eqn:listdecoding-decomp}
  \Pi_{(\easierlistdp{f}{k}{m})^{-1}(\listofout)}= \sum_{y'\in \listofout}\Pi_{(f^{\otimes m})^{-1}(y')}
\end{equation}
By \cref{cor:list-dec-no-loss-key-lemma}, we have that for any $\ket{\phi} \in \listbadspace$:
\begin{equation}\label{eqn:list-dec-no-loss-key-lemma-statement}
    \sum_{y' \in \range^{m}} \norm{\Pi_{(f^{\otimes m})^{-1}(y')} \ket{\phi}}^4 \leq \frac{\max(16,3 \abs{\range})^{\ell-1}}{\abs{\range}^m}.
\end{equation}
It follows that for $\listofout\subset \range^m$ with $|\listofout|\le |\range|^{m-k}$, 
\begin{align*}
    \norm{&\Pi_{(\easierlistdp{f}{k}{m})^{-1}(\listofout)} \Pi_{\Gamma',\le \lambda'}}^2\\
    &\leq \max_{\ket{\phi} \in \listbadspace} \norm{\Pi_{(\easierlistdp{f}{k}{m})^{-1}(\listofout)} \ket{\phi}}^2 \qquad \textrm{by \eqref{eqn:list-space-proj-ineq}}\\
    &\leq \max_{\ket{\phi} \in \listbadspace} \sum_{y' \in \listofout} \norm{ \Pi_{(f^{\otimes m})^{-1}(y')} \ket{\phi}}^2\qquad\textrm{by \eqref{eqn:listdecoding-decomp} and orthogonality of $\Pi_{(f^{\otimes m})^{-1}(y')}$}\\
    &\leq \max_{\ket{\phi} \in \listbadspace} \abs{\listofout}^{1/2} (\sum_{y' \in \range^m} \norm{ (\Pi_{f^{\otimes m})^{-1}(y')} \ket{\phi}}^4)^{1/2}\qquad\textrm{by Cauchy-Schwarz}\\
    &\leq \abs{\listofout}^{1/2} \left(\frac{\max(16,3 \abs{\range})^{\ell -1}}{\abs{\range}^m}\right)^{1/2} \qquad \textrm{by \eqref{eqn:list-dec-no-loss-key-lemma-statement}}\\
    &\leq \abs{\range}^{(m-k)/2} \left(\frac{\max(16,3 \abs{\range})^{\ell -1}}{\abs{\range}^m}\right)^{1/2}\\
    &= \abs{\range}^{-k/2} \max(16,3 \abs{\range})^{(\ell-1)/2}\\
    &\le \abs{\range}^{2(\ell-1)-k/2}\qquad\textrm{since $\abs{Y}\ge 2$}\\
    &\leq \abs{\range}^{-k/4} \qquad \textrm{since $\ell = \ceil{k/8}$.}
\end{align*}
We can therefore set $\eta'=\abs{\range}^{-k/4}$.
It remains to show the analogue of Part (iii) of the proof
of \cref{thm:selective-from-multiplicative-adversary} part (b).
Fix $\zeta = (\sqrt{1-\varepsilon} - \sqrt{1/\abs{\range}})^2$.
Since $1/2- 1/\abs{\range} -\sqrt{2/\abs{\range}}\le \varepsilon<1-1/\abs{\range}$, we have $\zeta \leq 1/2$.
Now define:
\begin{displaymath}
    \alpha = \left(\frac{1 + (\lambda - 1) \zeta}{\lambda}\right)^{1/8} = \left(1-\frac{(\lambda-1)(1-\zeta)}{\lambda}\right)^{1/8}\le e^{-(1-1/\lambda)/16}
\end{displaymath}
since $\zeta\le 1/2$.
Define $\kappa(z)=(\alpha^{z/2}+|\range|^{-z/8})^{2/z}$ for
variable $z\ge 1$. 
Observe that $\kappa(\cdot)$ is a strictly decreasing function of $z$.
Let $k_0\ge 1$ be the minimum value such that $\kappa(k_0)<1$ and set $\kappa=\kappa(k_0)$;
the value of $k_0$ can be bounded depending only on $\lambda$.

Then for $k \ge k_0$, we will focus on success probability $\kappa^{k}$. 
Therefore the multiplicative adversary expression for our progress target is
\begin{align*}
    1+(\lambda' - 1)(\kappa^{k/2} - \sqrt{\eta'})^2 &= 1+(\lambda' -1)(\kappa^{k/2} - 1/\abs{\range}^{k/8})^2\\
    &\geq 1+(\lambda' -1)\, \alpha^k\qquad\text{by definition of $\kappa$ since $k\ge k_0$}\\
    &\geq 1 + (\lambda^{k/8} -1)\left(\frac{1+(\lambda-1)\zeta}{\lambda}\right)^{k/8}\\
    &= 1 + (1-\lambda^{-k/8})(1+(\lambda-1) \zeta)^{k/8}\\
    & > (1+(\lambda -1)\zeta)^{k/8}\\
    &= \big(1+(\lambda-1)(\sqrt{1-\varepsilon}-\sqrt{1/\abs{\range}})^2\big)^{k/8}.
\end{align*}
Since this progress target is the $k/8$-th power of the progress target for $f$ with error $\varepsilon$, and the rate of progress $c$ per query is the 
same as the one for $f$, we obtain that the time required
to achieve success probability $\kappa^{k}$ is at
least $k/8$ times the multiplicative adversary time lower bound
for $f$ with success probability $1-\varepsilon$, as required.
\end{proof}

\subsection{Strong list-decoding direct product theorem for all Boolean-valued functions}\label{sec:list-decoding-for-all-boolean-out}

We proved in \cref{sec:neg-to-mult-bool} that
it is always possible to obtain asymptotically tight multiplicative adversary 
lower bounds 
for functions with Boolean outputs such that $\eta=1/2$.
This implies that \cref{thm:generic-list-decoding} is actually sufficient to get a strong list-decoding direct product theorem for all Boolean valued functions.

\paragraph{A strong list decoding direct product theorem for all Boolean valued functions}
The proof mirrors that for selective direct products in \cref{sec:strong-selective-prod-from-query-bound}.

\begin{theorem}
    \label{thm:list-allbool-precise}
Fix $\varepsilon = 1/10$.
There are constants $\alpha, \gamma>0$ and an integer $k_0\ge 1$  such that 
    for $f : \inputspace \to \{0,1\}$ where $\inputspace \subseteq \domain^n$, and for all integers $k$ with $k_0\le k\le m$ we have  
      $Q_{1-(1-\varepsilon)^{\gamma k}}(\easierlistdp{f}{k}{m})\ge \alpha k\cdot Q_{\varepsilon}(f)$.
\end{theorem}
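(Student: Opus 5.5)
The plan mirrors the proof of \cref{thm:selective-allfuncs-precise}, with \cref{thm:bool-mult-bound-is-tight} in place of \cref{thm:mult-optimal} and \cref{thm:generic-list-decoding} in place of \cref{thm:selective-from-multiplicative-adversary}.

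\textbf{Step 1: a single-copy adversary with $\eta=1/2$.} Fix $\varepsilon=1/10$. By \cref{thm:bool-mult-bound-is-tight} there are an absolute constant $C$ and a distribution $\dist$ on $\inputspace$ such that
\[
Q_{1/10}(f)\le C\cdot \madv^\dist_{1/10,19/10,1/2}(f).
\]
The witness is $\multAdvMat=2\identity-\addAdvMat$ with $\lambda=19/10$ and $\eta=1/2=1/\abs{\range}$. This is the only input the later steps need, and it is exactly the hypothesis $\eta=1/\abs{\range}$ required by \cref{thm:generic-list-decoding}.

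\textbf{Step 2: check the error window.} \Cref{thm:generic-list-decoding} requires
\[
1/2-1/\abs{\range}-\sqrt{2/\abs{\range}}\le \varepsilon<1-1/\abs{\range}.
\]
With $\abs{\range}=2$ this reads $1/2-1/2-1=-1\le 1/10<1/2$, which holds. Since $\lambda=19/10$ is a fixed absolute constant, \cref{thm:generic-list-decoding} returns an absolute $k_0=k_0(19/10)$ and an absolute $\kappa<1$. For all $k_0\le k\le m$ it then gives
\[
Q^{\dist^{\otimes m}}_{1-\kappa^{k}}(\easierlistdp{f}{k}{m})\ge (k/8)\,\madv^\dist_{1/10,19/10,1/2}(f)\ge \frac{k}{8C}\,Q_{1/10}(f).
\]

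\textbf{Step 3: convert to worst-case complexity.} Choose $\gamma>0$ with $(1-\varepsilon)^{\gamma}=\kappa$, that is, $\gamma=\ln\kappa/\ln(9/10)$. This is positive and absolute because $\kappa<1$ is absolute. A worst-case algorithm with error at most $1-(1-\varepsilon)^{\gamma k}=1-\kappa^k$ on every input has at least that success probability on average under $\dist^{\otimes m}$. Hence $Q_{1-\kappa^k}\ge Q^{\dist^{\otimes m}}_{1-\kappa^k}$, the easy direction of Yao's lemma. Setting $\alpha=1/(8C)$ gives the claim.

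\textbf{Main obstacle.} The real content sits in earlier results: \cref{lem:bool-add-lambda-eta} pins $\eta$ at exactly $1/2$, and the 4-norm lemma \cref{cor:list-dec-no-loss-key-lemma} makes the bound independent of $m$. At this level the only care needed is to confirm that $k_0$, $\kappa$ and $\gamma$ are uniform over all Boolean $f$. They are, because $\lambda$ and $\varepsilon$ are fixed constants independent of $f$, and $k_0$ and $\kappa$ in \cref{thm:generic-list-decoding} depend only on $\lambda$ and $\abs{\range}=2$.
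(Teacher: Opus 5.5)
Your proof is correct and follows the paper's argument essentially verbatim. You instantiate \cref{thm:bool-mult-bound-is-tight} with $\lambda=19/10$ and $\eta=1/2$, feed that into \cref{thm:generic-list-decoding}, set $\kappa=(1-\varepsilon)^{\gamma}$ and $\alpha=1/(8C)$, and finish with the easy direction of Yao's lemma. Your explicit check of the error window and your statement of that last step as $Q_{1-\kappa^k}(\easierlistdp{f}{k}{m})\ge Q^{\dist^{\otimes m}}_{1-\kappa^k}(\easierlistdp{f}{k}{m})$ are slightly more precise than the paper's wording, which states the Yao inequality for $f$ itself rather than for the list-decoding problem.
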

\begin{proof}
    Choose $\varepsilon=1/10$, $\lambda=19/10$, and $\eta=1/2$ as in the statement of
    \cref{thm:bool-mult-bound-is-tight}, and let $C$ be the constant factor and $\dist$ be the distribution that results.
    We apply \cref{thm:generic-list-decoding} to yield absolute constants $\kappa<1, k_0$ such that for all $k\ge k_0$ we have
    $$Q^{\dist^m}_{1-\kappa^{k}}(\easierlistdp{f}{k}{m}) \geq  (k/8) \madv^\dist_{1/10, 19/10, 1/2}(f) \geq (k/8) Q_{1/10}(f)/C.$$
    We set $\gamma$ such that $\kappa=(1-\varepsilon)^\gamma$.
    Now $Q_\varepsilon(f)\ge Q^{\dist^*}_\varepsilon(f)$ for any distribution $\dist^*$ by
    Yao's Lemma and we can choose $\alpha=1/(8C)$ to obtain the claim.
\end{proof}

\section{Discussion}

We have shown that the relational multiplicative adversary method is universal for the quantum query complexity of functions up to a constant factor and given a specialized tighter proof of universality in the case of Boolean-valued functions. 

We have given three different recipes for proving strong selective direct product lower bounds for quantum query complexity:
(1) a direct recipe that works for a few specific functions,
(2) a second recipe that works for any relation for
which one can directly prove a quantum query lower bound using our formulation of the multiplicative adversary method, and 
(3)
a third recipe that follows, by a reduction to the second, for any function.

For strong list-decoding direct product theorems for quantum computation of functions, we have a similar direct recipe for specific functions as well as a
general theorem based on the multiplicative adversary method, though the latter has the condition 
that the parameter $\eta$ in that method must be
precisely the inverse of the size of the function's range.
Because we have shown that this multiplicative adversary method with $\eta=1/2$ is universal up to a constant factor for the computation of all partial functions with Boolean outputs, we obtain strong list-decoding direct product theorems for all (partial) Boolean-valued functions, providing a quantum analogue to the classical results of Ben-David and Blais~\cite{DBLP:conf/focs/BenDavidB25}.

The eigenvalue properties of negative-weight adversary matrices that we used to prove the property for $\eta$ in the conversion to the multiplicative adversary method in the Boolean case do not seem to have a nice analogue for larger range sizes.  It is therefore open whether a strong list-decoding direct product theorem holds for quantum computation of all functions with larger range.
(\cite{DBLP:conf/focs/BenDavidB25} does not consider the analogous question for classical computation.)

However, the most interesting open question concerning direct products for quantum query complexity is whether
an ordinary quantum strong direct product lower bound holds for arbitrary relations.   
Drucker~\cite{DBLP:journals/cc/Drucker12}, who proved such a theorem for classical query complexity, asked this question more than a dozen years ago, shortly after the strong direct product theorem of Lee and Roland~\cite{LeeRolandSQDPT} for functions and state generation first appeared.

We also have given an example of how strong selective direct product lower bounds can be used to yield quantum time-space tradeoffs for general quantum algorithms when ordinary strong direct product lower bounds can only yield tradeoff lower bounds for output-oblivious algorithms.

It would be good to give other examples of interesting problems where we can apply our theorems.   
One natural class of problems that we expect will be interesting are lifted problems, where one takes some
multi-output function $g:\outputspace^m\rightarrow Z^\ell$ for which a time-space tradeoff
lower bound is known and substitutes the outputs of $m$ independent copies
of the function $f:X^n\rightarrow \outputspace$, yielding the
function $g\circ f^m$.
One would hope to show that the lower bound grows beyond that of $g$ by a factor that is a function of the quantum query complexity of $f$.
Another natural problem for list decoding would be computing the Gram matrix $G^f$ for a Boolean valued function $f$ where $G^f_{i,j}(x_1, \ldots, x_m) = \indicator_{f(x_i) = f(x_j)}$.
Such a time-space tradeoff likely follows from viewing $k$ partial outputs as linear constraints on the possible assignments of the $f(x_i)$.

We do not expect, however, that any form of direct product theorems
(that use independent inputs) will let us improve the quantum lower bounds for sorting or Boolean matrix multiplication. 
Any lower bound for general quantum algorithms must
use one fixed input distribution. 
The existing lower bounds for both problems in the output-oblivious case vary the distribution based on which block of the computation is being analyzed.

Finally, we note that the matrix mapping problem $FA\tilde F^T$ we consider is a generalization of the 
$PAQ$ problem of Abrahamsson \cite{DBLP:journals/jcss/Abrahamson91} which has the extra condition that the
two input function matrices $F,\tilde F$ are actually permutation matrices $P$ and $Q^T$.
This is the only problem appearing in \cite{DBLP:journals/jcss/Abrahamson91} that does not have an analogous bound in the quantum setting in \cite{bkw:qmatrix-journal} or here. 
Problems such as this, where the input coordinates that can be queried 
are correlated with each other, are not easily dealt with
in the recording query framework.  
There is slight correlation across rows in function matrices but we could handle that correlation with a different recording query encoding; that is not enough to handle permutation matrices which are weakly correlated across both rows and columns.

Rosmanis~\cite{Rosmanis21} introduced a version of the compressed oracles framework 
that extends the recording query encoding to input permutations using the representation theory of the symmetric group in order to yield quantum query lower bounds
for the problem of permutation inversion with limited storage for advice;
these were recently extended to optimal bounds in~\cite{DBLP:journals/corr/AkshimaBCGY2510}. 
However, this method only applies to inputs where the permutation mapping can be directly accessed, not the indirect version given by permutation matrices, which have much weaker correlations than directly-accessed permutations. 

More generally, we do not want to be required to develop
a full representation theory and separate extensions of the recording query encoding for each kind of symmetry/constraint that might appear in input distributions for different problems.   
While that technically would be required to get optimal results, it would be good to develop a notion of approximation for the input distribution that 
lets us get around the issue.  
For example, the classical lower bound
of Abrahamson for the $PAQ$ problem uses an approximation
based on the fact that the matrices $P$ and $Q^T$ look very much like function matrices for much of the computation.

\section{Acknowledgements}
Niels Kornerup's contributions to this work were supported by the Laboratory Directed Research and Development program (project
240650) at Sandia National Laboratories, a multimission laboratory managed and operated by National
Technology and Engineering Solutions of Sandia LLC, a wholly owned subsidiary of Honeywell International
Inc. for the U.S. Department of Energy’s National Nuclear Security Administration under contract DE-
NA0003525.
This paper describes objective technical results and analysis. Any subjective views or opinions that might be expressed in the paper do not necessarily represent the views of the U.S. Department of Energy or the United States Government.
The publisher acknowledges that the U.S. Government retains a non-exclusive, paid-up, irrevocable, world-wide license to publish or reproduce the published form of this written work or allow others to do so, for U.S. Government purposes. The DOE will provide public access to results of federally sponsored research in accordance with the \href{https://www.energy.gov/doe-public-access-plan}{DOE Public Access Plan}.
\subsection{Discussion of LLM use}\label{sec:ai-use}
Prior to the current version of \cref{sec:list-dec}, we had proven explicit list decoding direct product theorems for $\search$, $\OR$, and $\Parity$ as well as a generic result for any function satisfying \cref{prop:mult-adv} with $\eta = 1/\abs{\range}$ that all featured an $O(\log_{\abs{\range}} (m/k))$ factor loss in both the number of queries and the final probability bound.
This is the argument that we termed a union bound or first moment method in our discussion in \cref{sec:list-dec}.


The idea of using 4-norms and the Cauchy-Schwarz inequality that is in the current argument is entirely the result of our interactions with ChatGPT 5.6.  
However, neither the LLM, nor any of the references that it cited describes any arguments involving 4-norms as a second-moment method for quantum computing, something that seems very natural in retrospect. 
The work that is the result of that interaction is encapsulated in the statement and proof of 
\cref{cor:list-dec-no-loss-key-lemma} and the use of the
Cauchy-Schwarz inequality where that lemma is applied in 
\cref{sec:list-dec}.

We had suspected that our existing union bound argument was not tight, and fed it into the LLM to see if it could find a way to tighten the argument, specifically the multiplicative adversary version.
The LLM pointed towards the existence of a ``hypercontractive small-set bound that is independent of $m$'' that
could be slotted into our existing proof as a way to remove the log factor loss.
It claimed a bound that is related to but somewhat incomparable to the bound that we eventually were able to extract after an extended series of interactions.

For reasons that will become clear, we did not attempt to verify whether its precise claim was true or that its method of proving it was sound.
The LLM's stated reasoning focused on ``Efron-Stein degree'' and claimed that one could express the target of our analysis in terms of
Hilbert-valued functions on the range $\range$ that
have low Efron-Stein degree.
Then it claimed that, via Gaussian scalarization, an extension of Bonami's inequality relating 4-norms and 2-norms for low degree functions~(\cite{Bonami1970}, cited only indirectly and not by name by the LLM, though it did query~\cite{moo:journal} and many other references)  holds for Hilbert-valued functions with low Efron-Stein degree.
Finally, the LLM claimed that some kind of application of this 4-norm bound on $\range^m$ would give its claimed bound.

The use of the term ``Efron-Stein degree'', which is a concept that we could not find at the time and did not appear in our subsequent search of any of its cited sources was quite confusing,
 and the need to represent our fairly simple questions as general Hilbert-valued functions seemed overkill, which is why we did not directly try to work through the LLM's claimed argument.

Before discussing how we proceeded from that point, we want to give appropriate credit for the essential ideas
that the LLM seems to have combined.
The LLM arrived at this sketch of a solution while querying many sources.  Subsequent to finishing our work we have spent some time trying to tease out where the essential ideas originated.   
We believe that the terminology of Efron-Stein degree 
originates in a recent paper of Ellis, Kindler, and Lifshitz~\cite{DBLP:conf/stoc/EllisKL23}, which the LLM did not query or cite. That paper, inspired by ideas in~\cite{KeevashLLM2024}, which the LLM did query, showed extensions of hypercontractive inequalities to
linear maps that we think were also the basis for the LLM's formulation in terms of Hilbert-valued functions, though that term is never used in~\cite{DBLP:conf/stoc/EllisKL23}.  We have not tried to find a source for its use of the notion of Gaussian scalarization.

The main question is 
how would an LLM connect hypercontractivity and 4-norms to quantum computing analysis in the first place? 
As noted above, we could not find any mention of considering 4-norms as second moments of probabilities in quantum computing in any of the references queried by the LLM and have not found that described elsewhere either.  We also could not find specific use of hypercontractivity with quantum query complexity in those references or in other works that the LLM did not directly query.
However, the LLM did query a decade-old survey of the applications of hypercontractive inequalities in the complexity of quantum computing~\cite{Mon12}; the closest works cited in that survey are papers on one-way quantum communication complexity and quantum codes~\cite{DBLP:journals/siamcomp/GavinskyKKRW08,DBLP:conf/focs/Ben-AroyaRW08}, neither of which were queried or cited by the LLM.  (The latter paper also proved a form of hypercontractivity for matrix-valued functions that is likely related to the later work
in~\cite{DBLP:conf/stoc/EllisKL23}, though the later work is clearly independent.)

It is relatively straightforward to show that the quantities involved in quantum query algorithms with few queries can be expressed as polynomials with low degree, so once one has a connection to hypercontractivity, and 4-norms specifically,
it becomes natural to try to convert the argument into a version to which one could apply a form of Bonami's hypercontractive inequality.

The LLM queried large numbers of other references, many of which make substantial contributions or are excellent surveys, but the ones mentioned
seem to be the most significant in retrospect.

Returning to our interaction with the LLM, we believed that it would be better to modify the proof and find a version that is more compatible with the intuition we developed working on strong selective direct-product theorems than to try to directly verify and recreate this argument.
In order to keep the argument as self-contained as possible, we prompted the LLM to simplify the proof by not explicitly using hypercontractivity and
instead to use an inductive argument for our problem from first principles based on the structure of our union bound
argument.
%
%
We then had the LLM walk through the modified proof step by step and, for each step, we queried it until we were thoroughly convinced of correctness and our ability to correctly reproduce the argument.
The structure of the resulting argument is similar to the proof of the Bonami lemma in \cite{DBLP:conf/focs/MosselOO05,moo:journal,DBLP:books/daglib/0033652}.

The result is \cref{sec:list-dec-lemma}, which recreates (hand-written and verified) versions of the LLM generated proofs with improved clarity, rigor, and notation that matches what we use in the rest of the paper.

LLMs were not substantially used in any part of our results beyond the impacts we mentioned here that affected \cref{sec:list-dec,sec:list-dec-lemma}.
However, ChatGPT 6, and Claude Fable 5.1 were used to catch issues with an earlier version of this manuscript, including pointing out that our formulation of the multiplicative adversary method using the trace progress measure was equivalent to the formulation with the matrix norm progress measure in \cite{DBLP:conf/icalp/JefferyZ26}.
In this version, these issues have all been addressed by the authors.
All text in this paper (including proofs) were written and verified by the authors, who take full responsibility for the correctness and originality of all contained content.

\bibliographystyle{alpha}
\bibliography{refs}
\newpage
\appendix
\section{Single copy lower bounds for \texorpdfstring{$\search$}{Search}, \texorpdfstring{$\OR$}{OR} and \texorpdfstring{$\Parity$}{Parity}}
\label{sec:single-copy-search-or}

\subsection{The Promise Search problem}

We construct a simple proof of the following lower bound that was previously proven in \cite{DBLP:journals/siamcomp/BennettBBV97}.
\begin{lemma}\label{thm:search}
    Let $\dist$ be the distribution that is uniform over weight one binary strings of length $n$.
    Any quantum query algorithm solving $\search_n$ on the input distribution $\dist$ probability $2/3$ requires $\Omega(\sqrt{n})$ queries.
\end{lemma}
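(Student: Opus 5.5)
We specialize the multi-copy machinery of \cref{sec:search} to a single copy ($m=k=1$), splitting the success amplitude along $V^0=\spn\{\ket{\chi_0}\}$ and its complement $V^1$. The input register starts in $\ket{\chi_0}=\frac{1}{\sqrt n}\sum_{j}\ket{j}$, so the initial state lies in $\identity_\workreg\otimes V^0$. Writing $\Pi_{V^0}=\identity_\workreg\otimes\op{\chi_0}{\chi_0}$, we bound
\[
\norm{\pisucc\ket{\psi_T}}\le \norm{\pisucc\Pi_{V^0}\ket{\psi_T}}+\norm{(\identity-\Pi_{V^0})\ket{\psi_T}} .
\]

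For the first term, write $\Pi_{V^0}\ket{\psi_T}=\sum_{i,p,w}\alpha_{i,p,w}\ket{i,p,w}\ket{\chi_0}$. Then $\pisucc$ maps each summand to $\alpha_{i,p,w}\ket{i,p,w}\frac{1}{\sqrt n}\ket{q(w)}$. These images remain orthogonal across distinct $(i,p,w)$, so the squared norm is at most $1/n$. This is the $\ell=0$, $k=1$ case of \cref{lemma:low-succ-small-ell}.

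For the second term, set $\Delta_t=\norm{(\identity-\Pi_{V^0})\ket{\psi_t}}$, so $\Delta_0=0$. Since the $U_t$ act only on $\workreg$, they commute with $\Pi_{V^0}$, and only the oracle can change $\Delta_t$. Split $\ket{\psi_{t-1}}$ into its $V^0$ and $V^1$ parts. The $V^1$ part contributes at most $\Delta_{t-1}$ to $\Delta_t$, because $\oracle$ is unitary.

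For the $V^0$ part, a query $(i,p)$ with $p\neq 0$ maps $\ket{\chi_0}$ to $\ket{\chi_0}-\frac{2}{\sqrt n}\ket{i}$. This is the computation in the proof of \cref{lem:delta-recurrence}; for a general phase oracle the coefficient is $\abs{1-\omega^p}/\sqrt n\le 2/\sqrt n$. Distinct queries $(i,p,w)$ stay orthogonal in the work register, so the component leaked into $V^1$ has norm at most $\frac{2}{\sqrt n}\norm{\Pi_{V^0}\ket{\psi_{t-1}}}\le \frac{2}{\sqrt n}$.

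This gives the recurrence $\Delta_t\le\Delta_{t-1}+2/\sqrt n$, hence $\Delta_T\le 2T/\sqrt n$. It is the single-copy case of \cref{lem:delta-recurrence} with the downward term dropped, since that term only transfers amplitude back into $V^0$, and we may bound it crudely.

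Combining the two terms, the success probability is at most $\bigl(1/\sqrt n+2T/\sqrt n\bigr)^2$. Success probability $2/3$ therefore forces $T\ge \sqrt{2/3}\,\sqrt n/2-1/2=\Omega(\sqrt n)$.

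The main obstacle is the leakage step. One must check that cross terms between different query branches $(i,p,w)$ cannot add constructively. They cannot, because the query register is untouched by $\oracle$, so the branches are orthogonal and their squared norms add. One must also check that the phase oracle's action on $\ket{\chi_0}$ is exactly a rank-one perturbation of size $O(1/\sqrt n)$.
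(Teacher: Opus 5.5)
Your proof is correct and follows the paper's argument essentially step for step. It uses the same split along $V^0=\spn\{\ket{\chi_0}\}$ and its complement, the same $1/\sqrt n$ bound on $\norm{\pisucc\Pi_{V^0}\ket{\psi_T}}$, and the same recurrence $\Delta_t\le\Delta_{t-1}+2/\sqrt n$, giving success probability at most $(2T+1)^2/n$. One side remark is slightly off: for $m=1$ the term $\frac{2}{\sqrt n}\Delta_{t-1,2}$ in \cref{lem:delta-recurrence} is identically zero, so there is nothing to drop, but your direct derivation of the recurrence does not depend on that remark.
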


\begin{proof}
    The argument proceeds by breaking the input space into orthogonal subspaces: a space $V^0$ containing the initial input state on which the algorithm has no advantage in computing $\search_n$ and its orthogonal complement $V^1$.   
    It then shows that each query can only shift a small amount of the amplitude from $V^0$ to $V^1$.
    
    Let $\ket{j} \in \C^n$ denote the $j$-th standard basis vector, so we can purify $\dist$ as $\ket{\chi_0^n} = \frac{1}{\sqrt{n}}\sum_{j \in [n]} \ket{j}$ (in keeping notation from \cref{sec:specific}, we drop the $n$ superscript henceforth).
    Let $\calA$ be a quantum query algorithm that solves the $\search_n$ problem on input distribution $\dist$ with probability at least $2/3$.
    The final state of $\calA$ before measurement can be written as:
    \begin{displaymath}
        \ket{\psi_T} = U_T \oracle U_{T-1} \oracle \ldots U_1 \oracle U_0 \ket{0,0,0} \ket{\chi_0}
    \end{displaymath}
    In this case, $\pisucc$ projects onto basis states $\ket{i,p,w}\ket{q(w)}$.
    Next we define projector $\Pi_{V^0} = \identity_{\workreg} \otimes (\op{\chi_0}{\chi_0})_\inputreg$ which projects the input portion of the state onto the subspace $V^0=\spn(\chi_0)$.
    The success probability of $\calA$ can then be bounded as
    \begin{align}
        \norm{\pisucc \ket{\psi_T}}^2  &\leq \bigg( \norm{\pisucc (\identity - \Pi_{V^0}) \ket{\psi_T}} + \norm{\pisucc \Pi_{V^0}  \ket{\psi_T}}\bigg)^2 \label{eqn:good-bad-split}\\
        &\leq \bigg( \norm{(\identity - \Pi_{V^0}) \ket{\psi_T}} + \norm{\pisucc \Pi_{V^0} \ket{\psi_T}}\bigg)^2 \nonumber.
    \end{align}
    In general, we know that $\Pi_{V^0} \ket{\psi_T}$ can be expressed as $\sum_{i,p,w} \alpha_{i,p,w} \ket{i,p,w} \otimes \ket{\chi_0}$ for some set of coefficients $\alpha_{i,p,w}$ where $\sum_{i,p,w} |\alpha_{i,p,w}|^2 \leq 1$.
    This implies that
   \begin{align*}
        \norm{\pisucc \Pi_{V^0} \ket{\psi_T}} &= \norm{\pisucc\sum_{i,p,w} \alpha_{i,p,w} \ket{i,p,w} \otimes \sum_{j \in [n]} \frac{1}{\sqrt{n}} \ket{j}}\\
        &= \norm{\sum_{i,p,w} \frac{\alpha_{i,p,w}}{\sqrt{n}} \ket{i,p,w} \ket{q(w)}}\\
        &= \sqrt{\sum_{i,p,w} |\alpha_{i,p,w}|^2 / n} \leq \frac{1}{\sqrt{n}}.
    \end{align*}

    All that remains is to show that $\norm{(\identity - \Pi_{V^0}) \ket{\psi_T}}$, the amplitude of the projection onto the subspace where the input portion is in the orthogonal complement $V^1$ of $V^0$, remains small after $T$ queries.
    We will do this by showing that each query can at most increase the amplitude on this subspace by a factor of $O(1/\sqrt{n})$, implying that $\Omega(\sqrt{n})$ queries are necessary for any quantum query algorithm to have a constant success probability.

    Define $\Delta_t = \norm{(\identity - \Pi_{V^0}) \ket{\psi_t}}$. We will show by induction that $\Delta_t \leq \frac{2t}{\sqrt{n}}$.
    By construction $\Delta_0 = 0$ and
    \begin{align}
        \Delta_t &= \norm{(\identity - \Pi_{V^0}) \ket{\psi_t}} \label{eqn:delta-t}\\
        &\leq \norm{(\identity - \Pi_{V^0}) U_t \oracle \Pi_{V^0}\ket{\psi_{t-1}}} + \norm{(\identity - \Pi_{V^0}) U_t \oracle (\identity - \Pi_{V^0})\ket{\psi_{t-1}}}\nonumber\\
        &\leq \norm{(\identity - \Pi_{V^0}) U_t \oracle \Pi_{V^0}\ket{\psi_{t-1}}} + \norm{(\identity - \Pi_{V^0})\ket{\psi_{t-1}}}\nonumber\\
        &= \norm{(\identity - \Pi_{V^0}) U_t \oracle \Pi_{V^0}\ket{\psi_{t-1}}} + \Delta_{t-1}\nonumber.
    \end{align}
    As before, $\Pi_{V^0} \ket{\psi_t} = \sum_{i,p,w} \alpha_{i,p,w} \ket{i,p,w}\ket{\chi_0}$ for some coefficients with $\sum_{i,p,w} |\alpha_{i,p,w}|^2 \leq 1$.
    Observe also since $U_t$ commutes with $\identity - \Pi_{V^0}$, we can safely drop $U_t$ from \eqref{eqn:delta-t} yielding
    \begin{align}
        \Delta_t &\leq \norm{(\identity - \Pi_{V^0}) \oracle  \sum_{i,p,w} \alpha_{i,p,w} \ket{i,p,w}\ket{\chi_0}} + \Delta_{t-1}\label{eqn:delta-t-2}\\
        &= \norm{(\identity - \Pi_{V^0}) \oracle \big( \sum_{i,p,w} \alpha_{i,p,w} \ket{i,p,w} \otimes \frac{1}{\sqrt{n} } \sum_{j \in [n]} \ket{j} \big)} + \Delta_{t-1}.\nonumber
    \end{align}
    Now we can apply $\oracle$ to \eqref{eqn:delta-t-2} and carefully rearrange the terms to get that the right-hand side equals
    \begin{equation}\label{eqn:delta-t-3}
        \norm{(\identity-\Pi_{V^0}) \bigg(\sum_{i,p,w} \beta_{i,p,w} \ket{i,p,w} \otimes \ket{\chi_0}-\frac{2 \alpha_{i,p,w}}{\sqrt{n}} \ket{i,p,w}\ket{i} \bigg)} + \Delta_{t-1}.
    \end{equation}
    for some complex coefficients $\beta_{i,p,w}$ (depending on the various $\alpha_{i,p,w}$).
    Importantly, $\identity - \Pi_{V^0}$ annihilates the first term inside the norm in \eqref{eqn:delta-t-3} since its input portion is in $V^0$.
    Since $\identity - \Pi_{V^0}$ is a projector and $\sum_{i,p,w} |\alpha_{i,p,w}|^2 \leq 1$, we can finally conclude that
    \begin{equation}
        \Delta_t \leq \bignorm{(\identity-\Pi_{V^0}) \bigg(-\frac{2}{\sqrt{n}} \sum_{i,p,w} \alpha_{i,p,w} \ket{i,p,w}\ket{i} \bigg)} + \Delta_{t-1} \leq \frac{2}{\sqrt{n}} + \Delta_{t-1} \leq \frac{2t}{\sqrt{n}}.
    \end{equation}
    Plugging this back into \eqref{eqn:good-bad-split} gives that the probability that $\calA$ is correct after $T$ queries is at most
    \begin{displaymath}
          \left(\frac{2T}{\sqrt{n}} + \frac{1}{\sqrt{n}}\right)^2 = \frac{(2T+1)^2}{n}.
    \end{displaymath}
    Thus, to obtain a success probability that is at least $2/3$, we must have that $\frac{(2T+1)^2}{n} \geq 2/3$, which implies that $T$ is $\Omega(\sqrt{n})$.
\end{proof}

\subsection{The OR problem}

As was the case in \cref{sec:or}, the input distribution for the search problem we used is not conducive to a lower bound for OR, as there is always a marked item in the input.
As before, we create a new hard distribution $\dist'$ which has probability $1/(2n)$ on each weight one binary string of length $n$, and weight $1/2$ on the all-zeroes string.

\begin{sloppypar}
As in \cref{sec:or}, we choose a specific purification that initializes the input register in the superposition $\sqrt{\frac{1}{2n}}\sum_{i \in [2n]}\ket i \in \C^{2n}$, where we have identified input $\ket{0^n}$ with the superposition $\frac{1}{\sqrt{n}}\sum_{j=n+1}^{2n} \ket j$.
This will allow us to directly port over much of the analysis from the proof of \cref{thm:search}.
Note that the algorithm will only ever query superpositions of coordinates $1,\ldots, n$ so a $1$ in any other index is undetectable to the algorithm.
\end{sloppypar}

\begin{corollary}
    Suppose that $\calA$ is a quantum query algorithm solving $\OR_n$ on input distribution $\dist'$ which has weight $1/(2n)$ on each $x\in \bits^n$ of Hamming weight 1, and weight $1/2$ on the all-zeroes input.
    If $\calA$ succeeds with probability at least $2/3$, then $\calA$ must make at least $\Omega(\sqrt{n})$ quantum queries.
\end{corollary}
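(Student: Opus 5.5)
The plan is to reuse the proof of \cref{thm:search} essentially verbatim, with $\search_n$ on $n$ items replaced by a search on $2n$ items. We purify $\dist'$ as $\ket{\chi_0}=\frac{1}{\sqrt{2n}}\sum_{j\in[2n]}\ket{j}$, where $\ket{j}$ for $j\le n$ is the weight-one string $e_j$ and $\sum_{j=n+1}^{2n}\ket{j}/\sqrt{n}$ plays the role of $\ket{0^n}$. Since the algorithm only queries coordinates in $[n]$, the oracle $\oracle_{i,p}$ for $i\in[n]$ acts on this $2n$-dimensional register exactly as a search oracle: it applies phase $\omega_2^{p}=\pm1$ to $\ket{i}$ and leaves all other $\ket{j}$ fixed. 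The algorithm's reduced state is therefore identical to that under the natural purification, so its success probability is unchanged.

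The first step is to split the success probability as in \eqref{eqn:good-bad-split}, using $V^0=\spn(\ket{\chi_0})$ and $\Pi_{V^0}=\identity_\workreg\otimes\op{\chi_0}{\chi_0}$. For the ``good'' term, $\Pi_{V^0}\ket{\psi_T}=\sum\alpha_{i,p,w}\ket{i,p,w}\ket{\chi_0}$. For each $w$, the output $q(w)\in\{0,1\}$ is correct on exactly half the squared weight of $\ket{\chi_0}$: the indices $j\le n$ carry total weight $1/2$ and correspond to answer $1$, while $j>n$ carry weight $1/2$ and correspond to answer $0$. Hence $\norm{\pisucc\Pi_{V^0}\ket{\psi_T}}\le 1/\sqrt2$.

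The second step bounds $\Delta_t=\norm{(\identity-\Pi_{V^0})\ket{\psi_t}}$. The induction from \cref{thm:search} carries over, giving $\Delta_t\le \Delta_{t-1}+\frac{2}{\sqrt{2n}}$. The only nontrivial case is $p$ odd, where the query maps $\ket{\chi_0}$ to $\ket{\chi_0}-\frac{2}{\sqrt{2n}}\ket{i}$; for $p$ even the query acts trivially. So $\Delta_T\le 2T/\sqrt{2n}$.

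Combining the two bounds, the success probability is at most $\bigl(\frac{2T}{\sqrt{2n}}+\frac{1}{\sqrt2}\bigr)^2$. For this to reach $2/3$ we need $\frac{2T}{\sqrt{2n}}\ge\sqrt{2/3}-1/\sqrt2>0.1$, which gives $T=\Omega(\sqrt n)$. The main point needing care is that the good-term bound is now a constant $1/\sqrt2$ rather than $o(1)$. The argument still works because the threshold $2/3$ strictly exceeds $1/2$, so $(\sqrt{2/3}-\sqrt{1/2})^2>0$ forces $\Delta_T$ to be a constant. The other point to check is that the non-standard purification does not change the algorithm's success probability, which holds because the algorithm's reduced state is the same.
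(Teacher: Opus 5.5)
Your proposal is correct and follows the paper's proof step for step. It uses the same $2n$-dimensional purification of $\dist'$, the same split into a $V^0$ term bounded by $1/\sqrt{2}$ and a $V^1$ term bounded by $2T/\sqrt{2n}$, and the same final inequality $(2T/\sqrt{2n}+1/\sqrt{2})^2\ge 2/3$. Your explicit checks that the non-standard purification leaves the algorithm's reduced state unchanged, and that $\sqrt{2/3}-1/\sqrt{2}>0$ forces $T=\Omega(\sqrt{n})$, fill in details the paper leaves implicit.
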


\begin{proof}
    In this setting, the worktape of the algorithm is measured in the standard basis and interpreted as a single bit $q(w)$. 
    To be very explicit, $\pisucc$ now projects onto states of the form $\ket{i,p,w}\ket{j}$ for $j \in [n]$ when $q(w) = 1$, and otherwise projects onto states of the form $\ket{i,p,w}\ket{j}$ for $j\in\curly{n+1,\ldots, 2n}$.

    As before, we define projectors $\Pi_{V^0} = \identity_{\workreg} \otimes (\op{\chi_0^{2n}}{\chi_0^{2n}})_\inputreg$.
    The success probability of $\calA$ can then be bounded as 
    \begin{displaymath}
         \norm{\pisucc \ket{\psi_T}}^2 \leq \left( \norm{(\identity - \Pi_{V^0}) \ket{\psi_T}} +\norm{\pisucc \Pi_{V^0} \ket{\psi_T}} \right)^2 
    \end{displaymath}
    Next we observe that $\norm{\pisucc \Pi_{V^0} \ket{\psi_T}}$ is at most $1/\sqrt{2}$ by the same reasoning as the $1/\sqrt{n}$ bound in \cref{thm:search}.
    Since we define the subspaces $V^0$ and $V^1$ in the same way as in \cref{thm:search}, using $2n$ coordinates instead of $n$, and the oracle, constrained to the first $n$ bits, behaves in the same way as previously in \cref{thm:search}, we can apply the same analysis to get that $\norm{(\identity - \Pi_{V^0}) \ket{\psi_T}} \leq 2T/\sqrt{2n}$.
    Thus the overall success probability of $\calA$ is bounded by:
    \begin{displaymath}
         \left(\frac{2T}{\sqrt{2n}} + \frac{1}{\sqrt{2}}\right)^2.
    \end{displaymath}
    Which implies that $T$ must be $\Omega(\sqrt{n})$ in order to achieve a success probability of at least $2/3$.
\end{proof}

\subsection{The Parity problem}

We give an alternative simple and direct proof of the lower bound first proven using the polynomial method in~\cite{DBLP:journals/jacm/BealsBCMW01}.

\begin{lemma}
    \label{prop:single-copy-parity}
    Any quantum query algorithm computing $\Parity_n$ with at most $T < n/2$ queries has success probability at most $1/2$.
\end{lemma}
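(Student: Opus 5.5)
The plan is to specialize the machinery of \cref{prop:superposition-parity-dp} to a single register ($k=\ell=1$), following the same two-step template as the proof of \cref{thm:selectiveparity}.

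The hard distribution is uniform on $\bits^n$. Its purification is $\ket{+}^{\otimes n}$ on the input register. The first step is to describe how a phase query acts in the Hadamard basis. A query to coordinate $j$ with $p=1$ multiplies $\ket{x}$ by $(-1)^{x_j}$, which is $Z$ on qubit $j$; with $p=0$ it is the identity. In the Hadamard basis, $Z$ maps $\ket{+}\leftrightarrow\ket{-}$ on that qubit.

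By induction on $t$, I will show the state after $t$ queries has the form
\[
\ket{\psi_t}=\sum_{i,p,w}\sum_{S\subseteq[n],\,|S|\le t}\alpha_{i,p,w,S}\ket{i,p,w}\ket{-}_S\ket{+}_{\overline S}.
\]
The base case is $\ket{\psi_0}=\ket{0,0,0}\ket{+}^{\otimes n}$, which has $S=\emptyset$. For the inductive step, each query flips at most one Hadamard-basis qubit, so $|S|$ grows by at most one. The unitaries $U_t$ act only on $\workreg$ and leave the input-register basis states $\ket{-}_S\ket{+}_{\overline S}$ untouched. Hence after $T<n/2$ queries every component has strictly fewer than $n/2$ copies of $\ket{-}$.

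Next I bound the success probability. For a worktape basis state $w$, the success projector is $\Pi_w\otimes\frac12(\identity+(-1)^{q(w)}Z^{\otimes n})$. Since distinct $\ket{i,p,w}$ are orthogonal and the projector is block diagonal in them,
\[
\norm{\pisucc\ket{\psi_T}}^2=\sum_{i,p,w}\norm{\tfrac12(\identity+(-1)^{q(w)}Z^{\otimes n})\ket{\phi_{i,p,w}}}^2,
\qquad
\ket{\phi_{i,p,w}}=\sum_S\alpha_{i,p,w,S}\ket{-}_S\ket{+}_{\overline S}.
\]
Each $\ket{\phi_{i,p,w}}$ is supported on states with fewer than $n/2$ minuses. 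Applying \cref{prop:superposition-parity-dp} with $k=\ell=1$, each term equals $\frac12\norm{\phi_{i,p,w}}^2$. The reason is the same as in that lemma: $Z^{\otimes n}$ maps the support to states with more than $n/2$ minuses, so the cross term $\bra{\phi}Z^{\otimes n}\ket{\phi}$ vanishes. Summing over $i,p,w$ gives total success probability exactly $1/2$, and in particular at most $1/2$.

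The only real care needed is in the invariant, specifically checking the phase-query convention $\omega_2^{p x_i}$ for general $p$. With $d=2$, $p\in\{0,1\}$, and both cases were handled above. Beyond that the argument is routine.
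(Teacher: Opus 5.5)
Your proposal is correct and follows essentially the same route as the paper: the uniform input $\ket{+}^{\otimes n}$, the invariant that after $t$ queries only Hadamard-basis states $\ket{-}_S\ket{+}_{\overline S}$ with $|S|\le t$ appear, and the observation that $Z^{\otimes n}$ maps these to orthogonal states, so each worktape branch succeeds with probability exactly $1/2$. The differences are cosmetic: the paper invokes the single-register \cref{claim:parity-key} directly, which is exactly your $k=\ell=1$ case of \cref{prop:superposition-parity-dp}, and it only asserts the $|S|\le t$ invariant, which you prove by induction.
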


We first identify a class of input states that gives
no advantage in computing $\Parity_n$.

\begin{proposition}
    \label{claim:parity-key}
    Suppose that $\ket{\psi}$ is a state on $n$ qubits of the form
    $$\ket{\psi} = \sum_{S \subseteq [n]: |S| < n/2} \alpha_S \ket{+}_{\overline{S}}\ket{-}_S,$$
    where $\alpha_S \neq 0$ iff $|S| < n/2$.
    Then $\ket{\psi}$, when written in the standard basis $\ket{\psi} = \sum_{x \in \bits^n} \beta_x \ket{x}$ is such that
    $$\sum_{x: |x| \text{ even}} |\beta_x|^2 = \sum_{x: |x| \text{ odd}} |\beta_x|^2.$$
    Equivalently, the probability of finding an even weight hamming string when measuring $\ket{\psi}$ in the standard basis is exactly $1/2$.
\end{proposition}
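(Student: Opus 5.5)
The plan is to express the probability of even Hamming weight as the expectation of a projector and compute it directly. The key tool is the Pauli $Z$ operator, which acts as a swap in the Hadamard basis. This is the single-register, $\ell=1$ case of the argument in \cref{prop:superposition-parity-dp}, and I would carry it out directly.

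First, note that $Z^{\otimes n}\ket{x}=(-1)^{|x|}\ket{x}$. Hence the projector onto even-weight standard basis states is $\Pi_{\mathrm{even}}=\frac12(\identity+Z^{\otimes n})$, and the projector onto odd-weight states is $\Pi_{\mathrm{odd}}=\frac12(\identity-Z^{\otimes n})$. Therefore
\[
\sum_{x:|x|\text{ even}}|\beta_x|^2-\sum_{x:|x|\text{ odd}}|\beta_x|^2=\bra{\psi}Z^{\otimes n}\ket{\psi},
\]
so it suffices to show that this inner product vanishes. The "equivalently" clause then follows because the two sums add to $\norm{\psi}^2=1$.

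Next, use $Z\ket{+}=\ket{-}$ and $Z\ket{-}=\ket{+}$. Applying this coordinatewise gives
\[
Z^{\otimes n}\ket{+}_{\overline S}\ket{-}_S=\ket{-}_{\overline S}\ket{+}_S,
\]
which is the Hadamard-basis product state whose $\ket{-}$-set is $\overline S$. Thus $Z^{\otimes n}\ket{\psi}=\sum_{|S|<n/2}\alpha_S\ket{+}_S\ket{-}_{\overline S}$. Every term here has $|\overline S|>n/2$ copies of $\ket{-}$. The Hadamard product states indexed by subsets form an orthonormal basis, and $\ket{\psi}$ is supported only on subsets of size $<n/2$. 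So $Z^{\otimes n}\ket{\psi}$ is supported on basis vectors disjoint from the support of $\ket{\psi}$, and therefore $\bra{\psi}Z^{\otimes n}\ket{\psi}=0$.

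No step is a real obstacle. The only point needing care is the support claim: the map $S\mapsto\overline S$ sends sets of size $<n/2$ to sets of size $>n/2$, so the two supports are disjoint. This holds for both even and odd $n$. The hypothesis that $\alpha_S\neq 0$ exactly when $|S|<n/2$ is not needed beyond "only if".
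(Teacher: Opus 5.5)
Your proposal is correct and follows the paper's proof essentially step for step. Both write $p_{\mathrm{even}}-p_{\mathrm{odd}}=\bra{\psi}Z^{\otimes n}\ket{\psi}$, then use that $Z$ swaps $\ket{+}$ and $\ket{-}$, so $Z^{\otimes n}\ket{\psi}$ lies in the span of Hadamard basis states with more than $n/2$ minus signs and is therefore orthogonal to $\ket{\psi}$.
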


\begin{proof}
    Let $p_{even} := \sum_{x: |x| \text{ even}} |\beta_x|^2$, and $p_{odd} := \sum_{x: |x| \text{ odd}} |\beta_x|^2$, and recall the Pauli $Z$ matrix:
    $$
    Z := \begin{bmatrix}
        1 & 0 \\
        0 & -1
    \end{bmatrix}.
    $$

    Observe that $Z \ket 0 = \ket{0}, Z \ket 
    1 = -\ket 1, Z \ket{+} = \ket -,$ and $Z \ket{-} = \ket{+}$.
    Then we have that
    \begin{align*}
        \bra{\psi}Z^{\otimes n} \ket \psi &= \bigg(\sum_{x \in \bits^n} \beta_x^* \bra x\bigg) Z^{\otimes n} \sum_{y \in \bits^n}\beta_y \ket y\\ 
        &= \bigg(\sum_x \beta_x^* \bra x \bigg)\sum_{y}(-1)^{\indicator\{|y| \text{ odd}\}}\beta_y\ket y \\
        &= \sum_{x: |x| \text{ even}} |\beta_x|^2 - \sum_{x: |x| \text{ odd}} |\beta_x|^2 = p_{even} - p_{odd}.
    \end{align*}
    But, since $Z$ swaps $\ket{+} $ and $\ket{-}$, we have that
     \begin{align*}
          p_{even} - p_{odd} &= \bra{\psi}Z^{\otimes n} \ket \psi \\
          &= \sum_{|S| < n/2} \alpha_S^* \bra{+}_{\overline S}\bra{-}_S \sum_{|T| < n/2} \alpha_T \ket{-}_{\overline{T}} \ket{+}_{ T} \\
          &= 0,
     \end{align*}
     since $\{\ket{+}_S\ket{-}_{\overline S}: S \subseteq [n]\}$ form an orthonormal basis.
\end{proof}

\begin{proof}[Proof of \cref{prop:single-copy-parity}]
    Our hard distribution will be the uniform distribution over $n$-bit strings.
    Thus, our oracle register is initialized to $\ket{+}^{\otimes n}$, and as usual our algorithm initializes its index, phase, and worktape register to $\ket{0, 0, 0}$. 
    The state after $t$ quantum queries looks like 
    $$\sum_{i,p,w, |S| \leq t} \alpha_{i,p,w, S}\ket{i, p, w} \ket{-}_S \ket{+}_{\overline{S}} = \sum_{i, p,w}\beta_{i,p,w} \sum_{|S| \leq t} \beta^{i, p,w}_S \ket{i, p,w} \ket{+}_{\overline{S}}\ket{-}_S,$$
    where $\sum |\alpha_{i,p,w,S}|^2 = 1$, and $\alpha_{i,p,w,S} = \beta_{i,p,w} \cdot \beta^{i,p,w}_S $.

    In particular, the algorithm will measure the work tape in the standard basis and apply some input-independent post-processing function $q$ to the result in order to decide what its answer is for the $\Parity_n$ function.
    After this measurement, the quantum state in the oracle register is of the form
    $$\ket{\psi_t}_\calO \sum_{|S| \leq t} \alpha_S\ket{+}_{\overline{S}}\ket{-}_S, $$
    where $\sum_S |\alpha_S|^2 = 1$.
    Thus, by \cref{claim:parity-key} and the assumption that $t < n/2$, no matter what post-processing the quantum algorithm does, the probability of correctly computing the parity of the input is precisely $1/2$, i.e. no better than a random guess.
\end{proof}

\begin{remark}
This proof follows a similar template to the proofs
for $\search_n$ and $\OR_n$, this time with the space
$V^0$ as the span of all states of the form
given by \cref{claim:parity-key}, namely
$V^0=\spn(\set{\ket{+}_S\ket{-}_{\overline{S}}\ :\ 0\le |S|<n/2})$.
In this case $V^0$ is not one-dimensional and instead of queries moving
amplitude from $V^0$ to $V^1$ slowly at every step, the amplitude shift can happen only once the time step $T$ passes $n/2$.
\end{remark}

\section{Proof of \cref{cor:list-dec-no-loss-key-lemma}}\label{sec:list-dec-lemma}

In this section, we will prove the following bound:

\listdecnolosskeylemma*

Before diving into any technical details, we note that the proof of \cref{cor:list-dec-no-loss-key-lemma} is very similar to that of Lemma 3.2 in \cite{DBLP:conf/focs/MosselOO05} if one equates Fourier degree at most $\ell$ with membership in $B_{j,\ell}$.
As pointed out in \cref{sec:ai-use}, a more direct path to proving \cref{cor:list-dec-no-loss-key-lemma} might involve equating vectors in $\bigoplus_{u \in \{0,1\}^{m}, \abs{u}\le \ell} V^u$ with polynomials of degree $\le \ell$ and then directly applying the Bonami Lemma.
We instead derive the desired bound from first principles since it gives a more self-contained proof.

To prove \cref{cor:list-dec-no-loss-key-lemma}, we will need the help of the following key properties:

\begin{lemma}\label{lem:bad-space-output-proj-interaction}
    If $\norm{\Pi_{f^{-1}(y)} \Pi_{V^0}}^2 \leq 1/\abs{\range}$ for every $y \in \range$, then:
    \begin{displaymath}
        \Pi_{V^0}  \Pi_{f^{-1}(y)} \Pi_{V^0} = \Pi_{V^0}/\abs{\range}.
    \end{displaymath}
\end{lemma}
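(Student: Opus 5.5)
The argument is a trace/sum argument that forces every inequality to be tight. Let $P=\Pi_{V^0}$ and $P_y=\Pi_{f^{-1}(y)}$ for $y\in\range$. Because $f$ is a function on $\inputspace$, the projectors $P_y$ are mutually orthogonal and satisfy $\sum_{y\in\range}P_y=\identity$. For each $y$ define the positive semidefinite operator
\[
A_y := P P_y P,
\]
which acts on $V^0$. These operators satisfy $\sum_y A_y = P\big(\sum_y P_y\big)P = P$, so on $V^0$ they sum to the identity of $V^0$.

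The hypothesis bounds each $A_y$ from above. We have $\norm{A_y}=\norm{(P_yP)^\dagger(P_yP)}=\norm{P_yP}^2\le 1/\abs{\range}$. Since $A_y$ is positive semidefinite and supported on $V^0$, this gives the operator inequality $A_y\preceq P/\abs{\range}$.

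Now sum these inequalities over $y$: $P=\sum_y A_y\preceq \sum_y P/\abs{\range}=P$. Each difference $B_y:=P/\abs{\range}-A_y$ is positive semidefinite, and $\sum_y B_y = P-P=0$. A sum of positive semidefinite operators vanishes only if every term vanishes; to see this, take the trace, so that $\sum_y\tr[B_y]=0$ with each $\tr[B_y]\ge 0$, hence each $\tr[B_y]=0$, and a PSD operator with zero trace is zero. Therefore $A_y=P/\abs{\range}$ for every $y$, which is exactly the claim $\Pi_{V^0}\Pi_{f^{-1}(y)}\Pi_{V^0}=\Pi_{V^0}/\abs{\range}$.

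The one point that needs care is the identity $\sum_y P_y=\identity$. It requires that every input in $\inputspace$ has exactly one output, i.e.\ that $f$ is a total function on $\inputspace$. If $V^0$ were allowed to be infinite-dimensional, the trace step would also need attention. Neither issue arises here, since $\inputspace$ is finite and $f$ is a function on it. No other obstacle is expected, because the hypothesis is precisely an operator-norm cap that the summation argument forces to be tight.
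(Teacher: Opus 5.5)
Your proposal is correct and follows essentially the same route as the paper. Both arguments turn the norm hypothesis into $\Pi_{V^0}\Pi_{f^{-1}(y)}\Pi_{V^0}\preceq \Pi_{V^0}/\abs{\range}$ and then use $\sum_y \Pi_{f^{-1}(y)}=\identity$ to force each inequality to be tight. Your trace argument, that a sum of PSD operators equal to zero forces each term to vanish, makes explicit the final step the paper leaves implicit.
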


\begin{proof}
    By the assumption of the lemma, for every $y\in \range$ and every vector $v\in \mathbb{C}^\inputspace$, 
    \begin{displaymath}
        v^\dagger \big(\Pi_{V^0} \Pi_{f^{-1}(y)} \Pi_{V^0}\big) v = \norm{\big(\Pi_{f^{-1}(y)} \Pi_{V^0}\big) v}^2 \leq \norm{
        \big(\Pi_{V^0}\big) v}^2 / \abs{\range} = v^\dagger \Pi_{V^0}\ v / \abs{\range},
    \end{displaymath}
    implying that $\Pi_{V^0} \Pi_{f^{-1}(y)} \Pi_{V^0} \preceq \Pi_{V^0} / \abs{\range}$.
    Since $\sum_{y \in \range} \Pi_{f^{-1}(y)} = \identity$, we have $\sum_{y \in \range} (\Pi_{V^0} \Pi_{f^{-1}(y)} \Pi_{V^0} - \Pi_{V^0}/\abs{\range})= 0$ and we obtain that every $\preceq$ must be equality.
    That is, for every $y \in \range$, $\Pi_{V^0}\Pi_{f^{-1}(y)} \Pi_{V^0} = \Pi_{V^0}/\abs{\range}$.
\end{proof}

The following technical lemma is the key idea that will let
us bound 4-norms in an asymmetric
way.
The asymmetry is important so that terms correctly cancel in our proof of \cref{lem:4-norm-single-copy-ineq} (the obvious symmetric version of this lemma would prevent us from using the fact that $\sum_{y \in \range} \phi_{0,y}^\dagger \phi_{1,y} = 0$ in that proof).

\begin{lemma}\label{lem:vector-sum-norm-fourth-power-ineq}
    Let $\phi$ and $\psi$ be two vectors. Then:
    \begin{displaymath}
        \norm{\phi + \psi}^4 \leq \norm{\phi}^4 + 8 \norm{\phi}^2 \norm{\psi}^2 + 3\norm{\psi}^4 + 4 \norm{\phi}^2 \cdot \text{Real}(\phi^\dagger \psi).
    \end{displaymath}
\end{lemma}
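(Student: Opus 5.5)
Writing $a=\norm{\phi}^2$, $b=\norm{\psi}^2$, $r=\text{Real}(\phi^\dagger\psi)$, I would expand
\[
\norm{\phi+\psi}^2 = a + b + 2r
\]
and square it:
\[
\norm{\phi+\psi}^4 = (a+b+2r)^2 = a^2 + b^2 + 4r^2 + 2ab + 4ar + 4br .
\]
The right-hand side of the claimed inequality is $a^2 + 8ab + 3b^2 + 4ar$. The terms $a^2$ and $4ar$ appear identically on both sides, so they cancel exactly. The cancellation of the $4ar$ term is the asymmetric feature the lemma is designed to keep, so that it can be exploited later when $\sum_y \phi_{0,y}^\dagger\phi_{1,y}=0$. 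It therefore suffices to show
\[
4r^2 + 2ab + 4br \le 8ab + 3b^2 .
\]

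For this I would use Cauchy--Schwarz in the form $r^2 \le \abs{\phi^\dagger\psi}^2 \le ab$, which gives $4r^2 \le 4ab$. For the remaining cross term $4br$, I would apply $\abs{r}\le\sqrt{ab}$ followed by AM--GM:
\[
4b\abs{r} \le 4b\sqrt{ab} = 2\cdot 2\sqrt{ab}\cdot b \le 2ab + 2b^2 .
\]
Adding these estimates gives
\[
4r^2 + 2ab + 4br \le 4ab + 2ab + 2ab + 2b^2 = 8ab + 2b^2 \le 8ab + 3b^2,
\]
which is the desired bound, with some slack in the $b^2$ coefficient.

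No step here is a real obstacle. The only point needing care is to keep the $4ar$ term exact rather than bounding it in absolute value, since that term is what cancels against the right-hand side. The remaining cross term $4br$, weighted by $\norm{\psi}^2$, is then absorbed by the AM--GM step above.
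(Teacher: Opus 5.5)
Your proof is the paper's argument: expand $(a+b+2r)^2$, keep $4ar$ exact, and bound $4r^2\le 4ab$ by Cauchy--Schwarz and $4br\le 4b\sqrt{ab}\le 2ab+2b^2$ by Cauchy--Schwarz plus AM--GM. There is one bookkeeping slip: after cancelling $a^2$ and $4ar$, the $b^2$ term from the expansion is still on the left, so the reduced target is $4r^2+2ab+4br\le 8ab+2b^2$, not $8ab+3b^2$; your estimate meets this exactly, so the proof stands, but there is no slack in the $b^2$ coefficient (the lemma holds with equality at $\psi=\phi$).
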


\begin{proof}
    Observe that:
    \begin{align*}
        \norm{\phi + \psi}^4 &= ((\phi + \psi)^\dagger (\phi+\psi))^2\\
        &=(\phi^\dagger\phi + \phi^\dagger\psi + \psi^\dagger\phi + \psi^\dagger \psi)^2\\
        &=(\phi^\dagger\phi)^2 + 2(\phi^\dagger\phi)(\phi^\dagger\psi) + 2(\phi^\dagger\phi)(\psi^\dagger\phi) + 2(\phi^\dagger\phi)(\psi^\dagger\psi) \\
        & \qquad + (\phi^\dagger\psi)^2 + 2(\phi^\dagger\psi)(\psi^\dagger \phi) + 2(\phi^\dagger\psi)(\psi^\dagger\psi) + (\psi^\dagger\phi)^2 + 2(\psi^\dagger\phi)(\psi^\dagger\psi) + (\psi^\dagger \psi)^2\\
        &= \norm{\phi}^4 + 2\norm{\phi}^2\norm{\psi}^2 + \norm{\psi}^4 + (\phi^\dagger\psi + \psi^\dagger\phi)^2 + 2 (\norm{\phi}^2+\norm{\psi}^2) ((\phi^\dagger \psi) + (\psi^\dagger\phi))\\
        &=\norm{\phi}^4 + 2\norm{\phi}^2\norm{\psi}^2 + \norm{\psi}^4 + 4(\text{Real}(\phi^\dagger \psi))^2 + 4 (\norm{\phi}^2+\norm{\psi}^2) \text{Real}(\phi^\dagger \psi)\\
        &\leq \norm{\phi}^4 + 6\norm{\phi}^2\norm{\psi}^2 + \norm{\psi}^4 +4\norm{\psi}^3\norm{\phi}+ 4 \norm{\phi}^2 \text{Real}(\phi^\dagger \psi) \\
        &\qquad\textrm{since $|\text{Real}(\phi^\dagger \psi)|\le \norm{\phi}\norm{\psi}$ by Cauchy-Schwarz}\\
        &\leq \norm{\phi}^4 + 8\norm{\phi}^2\norm{\psi}^2 + 3 \norm{\psi}^4 + 4 \norm{\phi}^2 \text{Real}(\phi^\dagger \psi) \\
        &\qquad\textrm{since $2\norm{\psi}\norm{\phi}\le \norm{\phi}^2+\norm{\psi}^2$ by the arithmetic-geometric mean inequality.}\qedhere
    \end{align*}
\end{proof}

We now see how to apply \cref{lem:vector-sum-norm-fourth-power-ineq} in the context of the
projection $\Pi_{f^{-1}(y)}$ on the $j$-th coordinate
where the norm of the projection on the subspace
$V^0$ is much smaller than that on $V^1$. 
The last term in the bound will be 0 because the projections of the subspaces
$V^0$ and $V^1$ are orthogonal to each other.

\begin{lemma}\label{lem:4-norm-single-copy-ineq}
    Let $\phi_0 \in (\mathbb{C}^\inputspace)^{[m]\setminus\{j\}} \otimes V^0$ and $\phi_1 \in (\mathbb{C}^\inputspace)^{[m]\setminus\{j\}} \otimes V^1$.
    Then:
    \begin{displaymath}
    \sum_{y \in \range} \norm{(\identity_{[j-1]} \otimes \Pi_{f^{-1}(y)} \otimes \identity_{[m]\setminus [j]}) (\phi_0 +\phi_1)}^4 \leq \norm{\phi_0}^4/ \abs{\range} + 8 \norm{\phi_0}^2\norm{\phi_1}^2 / \abs{\range} + 3 \norm{\phi_1}^4.
    \end{displaymath}
\end{lemma}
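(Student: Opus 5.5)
Write $P_y := \identity_{[j-1]} \otimes \Pi_{f^{-1}(y)} \otimes \identity_{[m]\setminus [j]}$. For each $y$ we apply \cref{lem:vector-sum-norm-fourth-power-ineq} with $\phi = P_y\phi_0$ and $\psi = P_y\phi_1$. This gives
\[
\norm{P_y(\phi_0+\phi_1)}^4 \le \norm{P_y\phi_0}^4 + 8\norm{P_y\phi_0}^2\norm{P_y\phi_1}^2 + 3\norm{P_y\phi_1}^4 + 4\norm{P_y\phi_0}^2\,\mathrm{Real}\bigl(\phi_0^\dagger P_y\phi_1\bigr),
\]
using $P_y^\dagger P_y = P_y$. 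We then sum over $y\in\range$ and bound the four resulting sums separately.

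\textbf{Step 1: the key identity.} Let $P_0 := \identity\otimes\Pi_{V^0}\otimes\identity$ act on coordinate $j$. Since $P_0\phi_0=\phi_0$, \cref{lem:bad-space-output-proj-interaction} tensored with identity gives
\[
\norm{P_y\phi_0}^2 = \phi_0^\dagger P_0 P_y P_0\phi_0 = \norm{\phi_0}^2/\abs{\range}
\]
for every $y$. Here we use the fact that the $\eta=1/\abs{\range}$ hypothesis is in force in this section.

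\textbf{Step 2: the first three sums.} By Step 1, the first sum is $\sum_y \norm{\phi_0}^4/\abs{\range}^2 = \norm{\phi_0}^4/\abs{\range}$. For the cross term, Step 1 lets us factor out $\norm{\phi_0}^2/\abs{\range}$, leaving $\sum_y\norm{P_y\phi_1}^2=\norm{\phi_1}^2$ because the $P_y$ form a resolution of identity. This gives $8\norm{\phi_0}^2\norm{\phi_1}^2/\abs{\range}$. For the last of these, $\norm{P_y\phi_1}^4\le \norm{P_y\phi_1}^2\norm{\phi_1}^2$, and summing over $y$ gives at most $3\norm{\phi_1}^4$.

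\textbf{Step 3: the real-part term.} Again by Step 1, $\norm{P_y\phi_0}^2$ is independent of $y$, so it factors out, and we are left with
\[
\sum_y \phi_0^\dagger P_y\phi_1 = \phi_0^\dagger\phi_1 = 0,
\]
since $\phi_0$ and $\phi_1$ lie in orthogonal subspaces ($V^0\perp V^1$ in coordinate $j$). This is exactly why the asymmetric form of \cref{lem:vector-sum-norm-fourth-power-ineq} was needed.

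The only delicate point is Step 1, namely justifying that the operator identity of \cref{lem:bad-space-output-proj-interaction} lifts to the tensor product with identity on the other coordinates. This is immediate because both sides are of the form $\identity\otimes(\cdot)\otimes\identity$.
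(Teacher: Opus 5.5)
Your proposal is correct and matches the paper's proof essentially step for step: apply \cref{lem:vector-sum-norm-fourth-power-ineq} to $P_y\phi_0$ and $P_y\phi_1$, then use \cref{lem:bad-space-output-proj-interaction} to get $\norm{P_y\phi_0}^2=\norm{\phi_0}^2/\abs{\range}$ for every $y$, which both evaluates the first two sums and lets the real-part term vanish via $\sum_y \phi_0^\dagger P_y\phi_1=\phi_0^\dagger\phi_1=0$. Your bound $\norm{P_y\phi_1}^4\le\norm{P_y\phi_1}^2\norm{\phi_1}^2$ for the last sum is the same as the paper's 1-norm/2-norm step.
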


\begin{proof}
    Let $\phi_{0,y} = (\identity_{[j-1]} \otimes \Pi_{f^{-1}(y)} \otimes \identity_{[m]\setminus [j]})\ \phi_0$ and $\phi_{1,y} = (\identity_{[j-1]}  \otimes \Pi_{f^{-1}(y)} \otimes \identity_{[m]\setminus [j]})\ \phi_1$ respectively.
    Observe that
    \begin{align}
        \norm{\phi_{0,y}}^2 &= \phi_{0}^\dagger\  (\identity_{[j-1]}  \otimes \Pi_{f^{-1}(y)} \otimes \identity_{[m]\setminus [j]})\  \phi_{0}\nonumber \\
        &=\phi_{0}^\dagger\ (\identity_{[j-1]}  \otimes \Pi_{V^0} \Pi_{f^{-1}(y)} \Pi_{V^0} \otimes \identity_{[m]\setminus [j]})\ \phi_{0}\nonumber \\
        &=\phi_{0}^\dagger\ (\identity_{[j-1]}  \otimes \Pi_{V^0} \otimes \identity_{[m]\setminus [j]})\  \phi_{0} / \abs{\range} \qquad\textrm{by \cref{lem:bad-space-output-proj-interaction}} \nonumber\\
        &= \norm{\phi_0}^2 / \abs{\range} \label{eqn:phizero-cut-y}\\
        \noalign{\textrm{and that}}
        \sum_{y \in \range} \norm{\phi_{1,y}}^2 &= \norm{\phi_1}^2\label{eqn:phione-cut-y-2norm}
        \end{align}
since the $\Pi_{f^{-1}(y)}$ for $y\in \range$
are orthogonal and sum to the identity.
Therefore, using the 1-norm 2-norm inequality and
\eqref{eqn:phione-cut-y-2norm}, we have
\begin{equation}
        \sum_{y \in \range} \norm{\phi_{1,y}}^4 
        \leq \bigg(\sum_{y \in \range} \norm{\phi_{1,y}}^2\bigg)^2 
        = \norm{\phi_1}^4 \label{eqn:phione-cut-y-4norm}.
    \end{equation}
Then:
\begin{align*}
        &\sum_{y \in \range} \norm{(\identity_{[j-1]}  \otimes \Pi_{f^{-1}(y)} \otimes \identity_{[m]\setminus [j]}) (\phi_0 +\phi_1)}^4\\
        &= \sum_{y \in \range} \norm{\phi_{0,y} + \phi_{1,y}}^4\\
        &\leq \sum_{y \in \range} \big(\norm{\phi_{0,y}}^4 + 8 \norm{\phi_{0,y}}^2 \norm{\phi_{1,y}}^2 + 3 \norm{\phi_{1,y}}^4 + 4\norm{\phi_{0,y}}^2 \cdot \text{Real}(\phi_{0,y}^\dagger\phi_{1,y})\big) \qquad \textrm{by \cref{lem:vector-sum-norm-fourth-power-ineq}}\\
        &= \sum_{y \in \range}\big( \norm{\phi_{0}}^4/\abs{\range}^2 + 8 \norm{\phi_{0}}^2 \norm{\phi_{1,y}}^2/\abs{\range} + 3 \norm{\phi_{1,y}}^4 + 4\norm{\phi_{0}}^2 \cdot \text{Real}(\phi_{0,y}^\dagger\phi_{1,y})/\abs{\range}\big)\qquad \textrm{by \eqref{eqn:phizero-cut-y}}\\
        &= \sum_{y \in \range} \big(\norm{\phi_{0}}^4/\abs{\range}^2 + 8 \norm{\phi_{0}}^2 \norm{\phi_{1,y}}^2/\abs{\range} + 3 \norm{\phi_{1,y}}^4 \big)\qquad \textrm{since $\sum_{y \in \range} \phi_{0,y}^\dagger\phi_{1,y} = \phi_0^\dagger\phi_1 = 0$}\\
        &\leq \norm{\phi_0}^4 / \abs{\range} + 8 \norm{\phi_0}^2\norm{\phi_1}^2 / \abs{\range} + 3 \norm{\phi_1}^4 \qquad \textrm{by \eqref{eqn:phione-cut-y-2norm} and \eqref{eqn:phione-cut-y-4norm}}
    \end{align*}
    as desired.
\end{proof}

Finally, we can prove the key inductive version that we use to show
\cref{cor:list-dec-no-loss-key-lemma}.
It upper bounds the 4-norm of the projectors that check the values of the first $j$
output coordinates when the input has at most $\ell$ 
coordinates in the space $V^1$.

\begin{lemma}\label{lem:4-norm-ineq}
For any $\phi \in B_{j,\ell} = \bigoplus_{u \in \{0,1\}^{j}, \abs{u} \leq \ell} V^u \otimes (\mathbb{C}^\inputspace)^{[m] \setminus [j]}$,
\begin{equation}
    \sum_{y' \in \range^{[j]}} \norm{\big(\Pi_{(f^{\otimes j})^{-1}(y')} \otimes \identity_{[m]\setminus [j]}\big)\ \phi}^4 \leq \frac{\max(16,3 \abs{\range})^{\ell}}{\abs{\range}^{j}} \norm{\phi}^4.\label{eq:Bjl}
\end{equation}
\end{lemma}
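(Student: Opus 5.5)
We will prove \eqref{eq:Bjl} by induction on $j$, simultaneously for all $\ell\ge 0$, mirroring the inductive proof of the Bonami lemma. Write $P_{y'}$ for $\Pi_{(f^{\otimes j})^{-1}(y')}\otimes\identity_{[m]\setminus[j]}$.

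\emph{Base case.} For $j=0$ there is a single empty output $y'$, and $P_{y'}=\identity$. Both sides then equal $\norm{\phi}^4$, since $\max(16,3\abs{\range})^\ell\ge 1$. This holds for every $\ell\ge 0$.

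\emph{Inductive step.} Take $j\ge 1$ and $\phi\in B_{j,\ell}$. Split on the $j$-th coordinate by applying $\Pi_{V^0}$ and $\Pi_{V^1}$ there, which gives $\phi=\phi_0+\phi_1$. If $\ell=0$, then $\phi_1=0$. Because the spaces $V^u$ are orthogonal direct summands, $\phi_0$ lies in $B_{j-1,\ell}\otimes V^0$ (slot-wise reading) and $\phi_1$ lies in $B_{j-1,\ell-1}\otimes V^1$. For $y'=(y'',y_j)$ with $y''\in\range^{[j-1]}$, the projector factors as
\[
P_{y'}=Q_{y''}\,R_{y_j},\qquad Q_{y''}=\Pi_{(f^{\otimes(j-1)})^{-1}(y'')}\otimes\identity,\quad R_{y_j}=\identity_{[j-1]}\otimes\Pi_{f^{-1}(y_j)}\otimes\identity,
\]
and $Q_{y''}$ commutes with $R_{y_j}$ and with the projections $\Pi_{V^b}$ on coordinate $j$.

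Fix $y''$ and set $\psi_b=Q_{y''}\phi_b$. Then $\psi_0$ still lies in $(\mathbb C^\inputspace)^{[m]\setminus\{j\}}\otimes V^0$ and $\psi_1$ in the corresponding space with $V^1$. Applying \cref{lem:4-norm-single-copy-ineq} to $\psi_0,\psi_1$ gives
\[
\sum_{y_j}\norm{P_{(y'',y_j)}\phi}^4\le \frac{\norm{Q_{y''}\phi_0}^4}{\abs{\range}}+\frac{8\norm{Q_{y''}\phi_0}^2\norm{Q_{y''}\phi_1}^2}{\abs{\range}}+3\norm{Q_{y''}\phi_1}^4 .
\]

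Next sum over $y''$ and handle the three terms separately.

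First term: apply the induction hypothesis for $(j-1,\ell)$ to $\phi_0$. This requires viewing $\phi_0$ as an element of $B_{j-1,\ell}$; coordinate $j$ is simply part of the untouched tail. The result is $\abs{\range}^{-1}\cdot\max(16,3\abs{\range})^{\ell}\abs{\range}^{-(j-1)}\norm{\phi_0}^4$.

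Third term: apply the hypothesis for $(j-1,\ell-1)$ to $\phi_1$, giving $3\max(16,3\abs{\range})^{\ell-1}\abs{\range}^{-(j-1)}\norm{\phi_1}^4$. Since $3\le \max(16,3\abs{\range})/\abs{\range}$, this is at most $\max(\cdot)^\ell\abs{\range}^{-j}\norm{\phi_1}^4$.

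Cross term: use Cauchy--Schwarz over $y''$,
\[
\sum_{y''}\norm{Q\phi_0}^2\norm{Q\phi_1}^2\le\Bigl(\sum_{y''}\norm{Q\phi_0}^4\Bigr)^{1/2}\Bigl(\sum_{y''}\norm{Q\phi_1}^4\Bigr)^{1/2},
\]
and bound each factor by induction. This gives
\[
\frac{8}{\abs{\range}}\,\max(\cdot)^{\ell-1/2}\,\abs{\range}^{-(j-1)}\norm{\phi_0}^2\norm{\phi_1}^2\le \frac{8}{\sqrt{16}}\max(\cdot)^{\ell}\abs{\range}^{-j}\norm{\phi_0}^2\norm{\phi_1}^2 ,
\]
using $\max(\cdot)\ge 16$. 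The coefficient is therefore $2$.

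Adding the three terms, the total is at most
\[
\max(\cdot)^\ell\,\abs{\range}^{-j}\bigl(\norm{\phi_0}^4+2\norm{\phi_0}^2\norm{\phi_1}^2+\norm{\phi_1}^4\bigr)=\max(\cdot)^\ell\,\abs{\range}^{-j}\norm{\phi}^4,
\]
where the last equality uses orthogonality of $\phi_0$ and $\phi_1$. This closes the induction.

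\emph{Main obstacle.} The delicate point is the bookkeeping of subspaces. One must check that $Q_{y''}$ preserves the $V^0$/$V^1$ structure on coordinate $j$, and that $\phi_0$ and $\phi_1$ genuinely lie in $B_{j-1,\ell}$ and $B_{j-1,\ell-1}$ respectively, with coordinates $j,\dots,m$ treated as the free tail. The constants $16$ and $3\abs{\range}$ are tuned precisely so the cross term yields $2$ and the other terms yield $1$; if the Cauchy--Schwarz step gave a worse constant, I would instead use the weaker but sufficient bound $2ab\le a^2+b^2$ with rebalanced weights. Finally, \cref{cor:list-dec-no-loss-key-lemma} follows by taking $j=m$.
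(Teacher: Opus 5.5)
Your proof is correct and matches the paper's argument step for step: the same $V^0/V^1$ split on coordinate $j$ with $\phi_0\in B_{j-1,\ell}$ and $\phi_1\in B_{j-1,\ell-1}$, then \cref{lem:4-norm-single-copy-ineq} for each fixed prefix, Cauchy--Schwarz over prefixes on the cross term, and the inequalities $3\abs{\range}\le\max(16,3\abs{\range})$ and $8/\sqrt{16}=2$ to reassemble $(\norm{\phi_0}^2+\norm{\phi_1}^2)^2$. The only difference is that you start the induction at $j=0$ instead of taking all $0\le j\le\ell$ as base cases, which changes nothing. Your fallback about rebalancing weights is not needed, since the Cauchy--Schwarz step already gives exactly the constant $2$.
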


\begin{proof}
    We prove this by induction on $j$.
    Let $\phi \in B_{j,\ell}$.
    As the base cases, we observe that when $0\le j\le \ell$, \eqref{eq:Bjl} is vacuously true since
    the left side is at most $\norm{\phi}^4$ because the projectors are orthogonal.
    Now we assume by induction that \eqref{eq:Bjl} holds for all $\phi' \in B_{j-1,\ell}$ for all $\ell$ and define $\phi_0, \phi_1$ such that $\phi = \phi_0 + \phi_1$ where $\phi_0$ is the projection of $\phi$ onto $(\mathbb{C}^\inputspace)^{[m]\setminus\{j\}} \otimes V^0$ and $\phi_1$ is the projection of $\phi$ onto $(\mathbb{C}^\inputspace)^{[m]\setminus\{j\}} \otimes V^1$.
    Since $\phi_0$ and $\phi_1$ are orthogonal vectors,
    \begin{equation}\label{eqn:phi1-phi2-triangle}
        \norm{\phi}^2 = \norm{\phi_0}^2 + \norm{\phi_1}^2.
    \end{equation}
    By construction, we know that $\phi_0 \in B_{j-1,\ell}$ and $\phi_1 \in B_{j-1,\ell-1}$.\footnote{When $\ell = 0$ we observe that $\phi_1 = 0$ to avoid the inductive case.}
    We expand
    \begin{align}
    &\sum_{y' \in \range^{[j]}} \norm{\big(\Pi_{(f^{\otimes j})^{-1}(y')}\otimes \identity_{[m]\setminus [j]}\big)\ \phi}^4 \nonumber\\
    &=\sum_{y' \in \range^{[j-1]}}\ \sum_{y_j \in \range} \norm{(\identity_{[j-1]} \otimes \Pi_{f^{-1}(y_j)} \otimes \identity_{[m]\setminus [j]}) \,(\Pi_{(f^{\otimes(j-1)})^{-1} (y')} \otimes  \identity_{[m]\setminus [j-1]})\, (\phi_0 + \phi_1) }^4.\label{eq:expand}
        \end{align}
For each $y'\in \range^{[j-1]}$, define
$\phi^{y'}_0=(\Pi_{(f^{\otimes(j-1)})^{-1} (y')} \otimes  \identity_{[m]\setminus [j-1]})\, \phi_0$
and $\phi^{y'}_1=(\Pi_{(f^{\otimes(j-1)})^{-1} (y')} \otimes  \identity_{[m]\setminus [j-1]})\, \phi_1$.
We observe that $\phi^{y'}_0\in (\mathbb{C}^\inputspace)^{[m]\setminus [j]}\otimes V^0$ and
$\phi^{y'}_1\in (\mathbb{C}^\inputspace)^{[m]\setminus [j]}\otimes V^1$. 
We can therefore rewrite  \eqref{eq:expand} as
\begin{align*}
    &\sum_{y' \in \range^{[j-1]}}\ \sum_{y_j\in \range} \norm{(\identity_{[j-1]} \otimes \Pi_{f^{-1}(y_j)} \otimes \identity_{[m]\setminus [j]}) \,(\phi^{y'}_0+\phi^{y'}_1)}^4\\
    &\le \sum_{y' \in \range^{[j-1]}}\ \big(\norm{\phi^{y'}_0}^4/\abs{\range} + 8\norm{\phi^{y'}_0}^2 \norm{\phi^{y'}_1}^2/\abs{\range} +3\norm{\phi^{y'}_1}^4\big)\qquad\textrm{by \cref{lem:4-norm-single-copy-ineq}}\\
    &\le \big(\sum_{y' \in \range^{[j-1]}} \norm{\phi^{y'}_0}^4\big)/\abs{\range} + 8\big(\sum_{y' \in \range^{[j-1]}}\norm{\phi^{y'}_0}^4\big)^{1/2} \big(\sum_{y' \in \range^{[j-1]}}\norm{\phi^{y'}_1}^4\big)^{1/2}/\abs{\range} +3\sum_{y' \in \range^{[j-1]}}\norm{\phi^{y'}_1}^4
    \end{align*}
by applying Cauchy-Schwarz to vectors $(\norm{\phi^{y'}_0}^2)_{y'}$ and
    $(\norm{\phi^{y'}_1}^2)_{y'}$.
We now apply the inductive hypothesis to bound the
quantities
$\sum_{y' \in \range^{[j-1]}} \norm{\phi^{y'}_0}^4$ and $\sum_{y' \in \range^{[j-1]}} \norm{\phi^{y'}_1}^4$.
Observe that 
\begin{equation*}
\sum_{y' \in \range^{[j-1]}} \norm{\phi^{y'}_0}^4
=\sum_{y' \in \range^{[j-1]}}\norm{(\Pi_{(f^{\otimes(j-1)})^{-1} (y')} \otimes  \identity_{[m]\setminus [j-1]})\, \phi_0}^4
\end{equation*}
where $\phi_0\in B_{j-1,\ell}$, so by the inductive hypothesis, 
\begin{equation*}
\sum_{y' \in \range^{[j-1]}} \norm{\phi^{y'}_0}^4
\leq \frac{\max(16,3 \abs{\range})^{\ell}}{\abs{\range}^{j-1}} \norm{\phi_0}^4.
\end{equation*}
Similarly since $\phi_1\in B_{j-1,\ell-1}$, by the inductive hypothesis, we have
\begin{equation*}
\sum_{y' \in \range^{[j-1]}} \norm{\phi^{y'}_1}^4
\leq \frac{\max(16,3 \abs{\range})^{\ell-1}}{\abs{\range}^{j-1}} \norm{\phi_1}^4.
\end{equation*}
Therefore, combining these bounds on \eqref{eq:expand} we have
\begin{align*}
    &\sum_{y' \in \range^{[j]}} \norm{\big(\Pi_{(f^{\otimes j})^{-1}(y')}\otimes \identity_{[m]\setminus [j]}\big)\ \phi}^4 \nonumber\\
    &\le
    \big( \frac{\max(16,3 \abs{\range})^{\ell}}{\abs{\range}^{j-1}} \norm{\phi_0}^4\big)/\abs{\range}
    +8\,\big( \frac{\max(16,3 \abs{\range})^{\ell}}{\abs{\range}^{j-1}} \norm{\phi_0}^4\big)^{1/2}
    \big( \frac{\max(16,3 \abs{\range})^{\ell-1}}{\abs{\range}^{j-1}} \norm{\phi_1}^4\big)^{1/2}/\abs{\range}\\
    &\qquad
    +3\, \frac{\max(16,3 \abs{\range})^{\ell-1}}{\abs{\range}^{j-1}} \norm{\phi_1}^4\\
    &= \frac{\max(16,3 \abs{\range})^{\ell}}{\abs{\range}^{j}} \norm{\phi_0}^4
    +8  \frac{\max(16,3 \abs{\range})^{\ell-1/2}}{\abs{\range}^j}    \norm{\phi_0}^2 \norm{\phi_1}^2
    +3\, \frac{\max(16,3 \abs{\range})^{\ell-1}}{\abs{\range}^{j-1}} \norm{\phi_1}^4\\
    &\le \frac{\max(16,3 \abs{\range})^{\ell}}{\abs{\range}^{j}} \big(\norm{\phi_0}^4+2 \norm{\phi_0}^2 \norm{\phi_1}^2+ \norm{\phi_1}^4)\\
    &=  \frac{\max(16,3 \abs{\range})^{\ell}}{\abs{\range}^{j}} \big(\norm{\phi_0}^2+\norm{\phi_1}^2 \big)^2\\
    &= \frac{\max(16,3 \abs{\range})^{\ell}}{\abs{\range}^{j}}  \norm{\phi}^4\qquad \textrm{by \eqref{eqn:phi1-phi2-triangle}},
    \end{align*}
    which is exactly what we need to show \eqref{eq:Bjl}
    with value $j$.
    The lemma follows by induction.
\end{proof}

\cref{cor:list-dec-no-loss-key-lemma} is now
an immediate consequence:

\begin{proof}[Proof of \cref{cor:list-dec-no-loss-key-lemma}]
   Observe that in the statement of
   \cref{lem:4-norm-ineq}, we have
   $B_{m,\ell}= \bigoplus_{u \in \{0,1\}^{m}, \abs{u} \leq \ell} V^u $.
   Applying \cref{lem:4-norm-ineq} for  $\phi\in \bigoplus_{u \in \{0,1\}^{m}, \abs{u} \leq \ell} V^u $
   yields
   \begin{equation*}\sum_{y' \in \range^{[m]}} \norm{\big(\Pi_{(f^{\otimes m})^{-1}(y')} \big)\ \phi}^4 \leq \frac{\max(16,3 \abs{\range})^{\ell}}{\abs{\range}^{m}} \norm{\phi}^4
   \end{equation*}
   which is what we need to show.
\end{proof}

\end{document}